\documentclass[11pt]{article}

\usepackage{graphicx} 
\usepackage{enumitem}
\usepackage{amsmath}
\usepackage{fullpage}
\usepackage{amsthm}
\usepackage{amsfonts}
\usepackage{mathrsfs} 
\usepackage{breqn}
\usepackage{bbm}
\usepackage{algorithm}
\usepackage[noend]{algpseudocode}
\usepackage{todonotes}
\usepackage{thmtools} 
\usepackage{hyperref}
\usepackage{subcaption}
\usetikzlibrary{calc,decorations.pathreplacing}

\hypersetup{
    colorlinks=true,
    linkcolor=violet,
    filecolor=magenta,      
    urlcolor=cyan,
    citecolor=blue,
    pdffitwindow=true,
}\usepackage[capitalize, nameinlink]{cleveref}
\theoremstyle{plain}
\newtheorem{theorem}{Theorem}[subsection]
\newtheorem{lemma}[theorem]{Lemma}

\newtheorem{claim}[theorem]{Claim}
\newtheorem{remark}[theorem]{Remark}

\newtheorem{definition}[theorem]{Definition}
\newtheorem{observation}[theorem]{Observation}
\newtheorem{corollary}[theorem]{Corollary}
\newtheorem{fact}[theorem]{Fact}
\usepackage{amsmath,xcolor}

\usepackage{amssymb}
\usepackage{forest}
\usepackage{xcolor}

\newcommand{\E}{\mathop{\bf E\/}}
\newcommand{\R}{\ensuremath{\mathbb R}}

\newcommand{\cQ}{\mathcal{Q}}

\newcommand{\calH}{\mathcal{H}}

\newcommand{\calD}{\mathcal{D}}
\newcommand{\calL}{\mathcal{L}}

\newcommand{\calA}{\mathcal{A}}

\newcommand{\calG}{\mathcal{G}}

\newcommand{\scrE}{\mathscr{E}}
\newcommand{\scrF}{\mathscr{F}}
\newcommand{\scrG}{\mathscr{G}}

\newcommand{\LeafEdges}{\mathsf{leaf\mbox{-}edges}}
\newcommand{\AlgEvent}{\mathsf{ALG}}

\newcommand{\Ext}{\textrm{Ext}}

\newcommand{\Lead}{\textrm{Lead}}

\newcommand{\apex}{\textrm{apex}}

\newcommand{\scov}{\mathsf{covers}}

\newcommand{\events}{\mathsf{events}}
\newcommand{\primeevents}{\mathsf{events}'}
\newcommand{\stars}{\mathsf{stars}}

\newcommand{\subtrees}{\mathsf{subtrees}}
\newcommand{\sm}{\mathsf{small}}

\newcommand{\lrg}{\mathsf{lrg}}

\newcommand{\Restr}{\mathsf{Restr}}

\newcommand{\signif}{\mathsf{signif}}
\newcommand{\restart}{\mathsf{restart}}
\newcommand{\shadows}{\mathsf{shadows}}
\newcommand{\trim}{\mathsf{trim}}

\newcommand{\alge}{\textsc{alg}_{\mid \scrE}}

\renewcommand{\P}[1]{{\Pr}\left[#1\right]}
\newcommand{\PP}[2]{{\Pr}_{#1}\left[#2\right]}

\newcommand{\ola}[1]{\textcolor{blue}{Ola: #1}}

\begin{document}

\title{A Strong Linear Programming Relaxation for Weighted Tree Augmentation}

\author{
  Vincent Cohan-Addad\thanks{Google Research.}
  \and
  Marina Drygala\thanks{EPFL, Switzerland.}
  \and
  Nathan Klein\thanks{Boston University. Supported by the NSF CAREER grant CCF-2442250.}
  \and
  Ola Svensson\thanks{EPFL, Switzerland. Supported by the Swiss State Secretariat for Education, Research and Innovation (SERI) under contract number MB22.00054.}
}
\date{}
\maketitle

\begin{abstract}
The Weighted Tree Augmentation Problem (WTAP) is a fundamental network design problem where the goal is to find a minimum-cost set of additional edges (links) to make an input tree 2-edge-connected. While a 2-approximation is standard and the integrality gap of the classic Cut LP relaxation is known to be at least 1.5, achieving approximation factors significantly below 2 has proven challenging. Recent advances of Traub and Zenklusen using local search culminated in a ratio of $1.5+\epsilon$, establishing the state-of-the-art. 
In this work, we present a randomized approximation algorithm for WTAP with an approximation ratio below 1.49. Our approach is based on designing and rounding a strong linear programming relaxation for WTAP which incorporates variables that represent subsets of edges and the links used to cover them, inspired by lift-and-project methods like Sherali-Adams.
\end{abstract}

\section{Introduction}
Network design problems form a cornerstone of combinatorial optimization and theoretical computer science, addressing fundamental questions about constructing efficient and robust networks. Problems which seek the most cost-effective way to enhance the connectivity of an existing network graph are central within combinatorial optimization. Except for some notable examples such as (weighted) connectivity augmentation~\cite{BGJ20,Nut20,Cecchetto10.1145/3406325.3451086,VZ23}, many such problems 
have proven resistant to approximation factors better than 2, a barrier often achieved through various classic techniques \cite{KT9310.1007/3-540-55719-9_85,KV9410.1145/174652.174654, Goe+94, Jai01}. 
Breaking this barrier for fundamental augmentation problems such as 2-ECSS (the 2-edge-connected spanning subgraph problem) or its directed variant remains a significant research challenge.

 This paper focuses on the Weighted Tree Augmentation Problem (WTAP), a classic and well-studied problem within this class. Given a tree $T = (V, E)$, a set of available \emph{links} $L$ (potential edges not in $E$) with associated costs $w: L \mapsto \R_{\ge 0}$, the goal is to find a minimum-cost subset of links $S \subseteq L$ such that the graph $T' = (V, E \cup S)$ is 2-edge-connected. Equivalently, we seek a minimum-cost set $S$ such that for every edge $e \in E$, its removal does not disconnect the graph, meaning $S$ must cover every cut defined by removing a single tree edge.

WTAP is known to be APX-hard~\cite{KKL04}, inheriting hardness from its unweighted special case, the Tree Augmentation Problem (TAP). TAP has seen a long sequence of improvements \cite{Frederickson1981ApproximationAF,KT9310.1007/3-540-55719-9_85,NAGAMOCHI200383,CKKK08, Eve+09, CN13, KN16, Nut17, CG18a, CG18b, Adj18, FGKS18, GKZ18, KN18}
before the current best factor of 1.393 was achieved ~\cite{Cecchetto10.1145/3406325.3451086}. 
For WTAP, a ratio better than 2 was first achieved by the breakthrough result of Traub and Zenklusen \cite{TZ25} in 2021 via a relative greedy approach. Then, using local search, they improved this to $1.5 + \epsilon$ \cite{traub2022better}. 

Despite this progress, we still have a limited understanding of natural LP relaxations for WTAP, and we do not know any bound better than 2 on the integrality gap of the Cut LP, which is the most natural relaxation for WTAP. There is a lower bound of 1.5 on the integrality gap due to Cheriyan, Karloff, Khandekar, and K\"onemann \cite{CKKK08}. In 2018, 
Fiorini, Groß, K\"onemann, and Sanità introduced the Odd Cut LP~\cite{FGKS18}, which is a stronger relaxation for WTAP and is a key starting point for this work. It is still unknown whether this stronger LP has an integrality gap below 2. 

\subsection{Our Results}
In this paper, we revisit the LP relaxation approach and make progress on the complexity of the WTAP problem. We present a novel approximation algorithm for WTAP with an approximation ratio strictly below 1.5, improving upon the aforementioned recent results and showing that a stronger relaxation can bring the approximation ratio below the integrality gap of the Cut LP. Our algorithm is based on rounding a new, stronger linear programming relaxation. Along the way, we provide a different $(1.5+\epsilon)$-approximation algorithm that is different from the one presented in~\cite{traub2022better}.

\begin{restatable}[
    label={thm:main}
]{theorem}{maintheorem}
\label{thm:main}
There is a randomized $1.49$-approximation algorithm for WTAP. Moreover there is a linear programming relaxation for WTAP with integrality gap at most 1.49. 
\end{restatable}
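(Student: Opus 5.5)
We prove \cref{thm:main} by designing a lifted LP and rounding it in two stages. First we reduce to a structured instance. By standard reductions we may assume every link is a \emph{shadow-complete} up-link or in-link, losing nothing. By a splitting/decomposition argument in the style of Traub--Zenklusen we may also assume, up to a $(1+\epsilon)$ factor, that the tree decomposes into pieces of bounded "complexity." In these pieces an optimal solution restricted to each component (a subtree together with the links covering it) can be enumerated in polynomial time.

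The relaxation is a Sherali--Adams-style lift of the Odd Cut LP of~\cite{FGKS18}. For every "small" subtree $Q$ (of size at most a constant $k=k(\epsilon)$), and every minimal set of links $S$ covering the edges of $Q$, we introduce a variable $y_{Q,S}$. The constraints are:
\begin{itemize}[leftmargin=*]
\item the $y_{Q,\cdot}$ form a probability distribution;
\item the marginals are consistent across overlapping subtrees, i.e.\ for $Q'\subseteq Q$ the distribution induced by $y_{Q,\cdot}$ on $Q'$ equals $y_{Q',\cdot}$;
\item the link marginals $x_\ell$ are consistent with all local distributions and satisfy the odd-cut inequalities.
\end{itemize}
Integral solutions are clearly feasible, so $\mathrm{OPT}_{LP}\le \mathrm{OPT}$, and the LP is solvable in polynomial time for constant $k$.

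For the rounding, we combine two algorithms and output the better one; equivalently, we take a convex combination in the analysis to bound the integrality gap.
\begin{itemize}[leftmargin=*]
\item \textbf{Algorithm A} (a $(1.5+\epsilon)$-type rounding). Partition the tree randomly into small subtrees via a random cut of the depth levels. Sample locally from $y_{Q,\cdot}$ in each piece, independently across pieces. Then fix the edges between pieces, which are covered with probability $\approx 1-1/k$, by adding a cheap 2-approximate patch whose cost is charged to $O(1/k)\cdot \mathrm{LP}$.
\item \textbf{Algorithm B} (a relative greedy / stack-based rounding). Starting from the LP solution, cover the tree using $1.5\cdot x$ on "long" links, and use the local distributions only where they are needed.
\end{itemize}
The key is a charging argument that splits LP cost by link type. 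Let $\alpha$ be the fraction of LP weight on links whose local distributions are "integral-like", meaning they cover their subtree with a single link, and $1-\alpha$ the rest. We show A costs at most $(1+\alpha/2+\epsilon)\,\mathrm{LP}$ and B costs at most $(2-c\alpha)\,\mathrm{LP}$ for a constant $c$. The minimum of the two is then below $1.49$, which gives both the algorithm and, by the convex combination, the integrality gap bound.

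The main obstacle will be the tight case of the $1.5$ analysis. There the Cut LP gap example shows that a link covering two "branches" with value $1/2$ is paid for $1.5$ times. The stronger consistency constraints must certify that such fractional configurations have structure that a second, correlated rounding exploits. I would first try to prove a structural lemma: whenever the local distribution on a star-like subtree is a mixture of pairs, the correlated sampling covers it at cost at most $(1+\tfrac12-\delta)$ times the LP share. This would rely on the odd-cut constraints ruling out the parity obstruction. Quantifying $\delta$ explicitly, so that the final constant lands below $1.49$, is the step I expect to be delicate.
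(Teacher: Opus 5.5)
Your proposal has genuine gaps: the LP you describe is not a valid polynomial-size relaxation, the rounding claims are unproved, and the step below $1.5$ is left open.

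\textbf{The relaxation.} As you specify it, your LP is either not a relaxation or not of polynomial size, so ``integral solutions are clearly feasible'' is exactly the point that fails. Suppose $y_{Q,S}$ exists only for minimal covers $S$. Take any edge $e$ covered by two links $\ell_1,\ell_2$ of an optimal solution $L^*$. With $x_{\ell_1}=x_{\ell_2}=1$, link-marginal consistency forces both links into every set in the support of $y_{\{e\},\cdot}$, but minimal covers of $\{e\}$ are singletons. If you instead allow the whole set $L^*\cap L(Q)$, an optimal solution may route $\Theta(n)$ links through one edge, and the number of variables becomes exponential.

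The paper's Strong LP is built around exactly this problem:
\begin{itemize}
\item An event records $L^*\cap L_e$ only on edges covered by at most $\rho$ links and declares the other edges huge. Auxiliary variables $y_\ell$ carry the link marginals, with $\sum_{\ell\in L_e}y_\ell\ge\rho\, y$ on huge edges.
\item It uses subtrees with boundedly many \emph{leaves} rather than edges, because correlated regions can be long paths.
\item It fixes a canonical extension center, so every event has a unique extension consistent with $L^*$.
\end{itemize}
Even then, a separate $(1+o_\epsilon(1))$-loss reduction is needed before rounding. It contracts edges with $x(L_e)>\epsilon\rho$, eliminates links shared by a parent edge and an uncorrelated child edge, and runs the marking algorithm. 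Your Traub--Zenklusen-style decomposition into enumerable pieces is asserted without argument and is never used afterwards.

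\textbf{The rounding.} None of the load-bearing claims is established. Algorithm A fails as described: a link's path crosses arbitrarily many depth levels. Sampling the pieces independently then buys a link meeting $m$ pieces with probability $1-(1-x_\ell)^m$ rather than $x_\ell$. Indeed, if local sampling cost only the LP and the patch only $O(1/k)\cdot$LP, as your description suggests, A would be a PTAS. APX-hardness rules that out unless $\mathrm{P}=\mathrm{NP}$. The bounds $(1+\alpha/2+\epsilon)$ and $(2-c\alpha)$ are asserted, and $\alpha$ is never connected to either algorithm. Since $\alpha=1$ already gives $1.5$ whenever $c\le 1/2$, your conclusion silently needs $c>51/98$, and the step below $1.5$ is left open.

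What is missing is the right cost split and the actual source of the gain. The paper calls a child edge correlated when it shares LP mass at least $\delta$ with its parent edge. It then combines two complementary roundings:
\begin{itemize}
\item Split correlated links and apply the integrality theorem of \cite{FGKS18} to the resulting link-disjoint cross-/up-link sub-instances, at cost $2c(z(L_C))+c(z(L_U))$.
\item Run top-down conditional sampling driven by the consistency constraints, independent only across uncorrelated children, at cost $c(x(L_C\cup L_{UP}))+2c(x(L_U\setminus L_{UP}))$.
\end{itemize}
The improvement below $1.5$ comes not from odd cuts but from a clean-up phase. Sibling uncorrelated subtrees are sampled independently, yet their distributions on shared links are identical. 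Hence, with probability at least $\gamma/4$ (via random protected/unprotected labels), a sampled copy of an uncorrelated non-up-link is redundant on the part of its path that siblings are likely to cover. The rest of its path is re-covered by up-links. These are cheap because those edges are covered with probability at least $1-2\gamma$ by links internal to the subtree. This yields factor $2-\gamma/2$ on uncorrelated non-up-links, at factor $1+\gamma^2/(1-2\gamma)$ on correlated links and up-links. Mixing with $\gamma=3/20$ and $p=25/53$ gives $789/530<1.49$.
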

Our theorem relies on a new approach that uses a significantly strengthened Linear Programming (LP) relaxation and a new rounding strategy.
A so-called \textit{event} $\scrE$ describes a specific configuration of a small set of links covering the edges within a localized structure of the tree. Each event is assigned a probability $y(\scrE)$.  The crucial strengthening of our LP then comes from consistency constraints that relate these probabilities across overlapping local structures (subtrees). In particular, the LP enforces that the marginal distributions over the shared links induced by certain overlapping events $\scrE_1,\scrE_2$ must agree. This forces a global coherence on the fractional solution, capturing interactions that the standard Odd Cut LP misses, akin to initial rounds of a lift-and-project hierarchy (like Sherali-Adams). This is also reminiscent of other work
on network design problems that leverage configuration LPs, see e.g.~\cite{BGRS13} or, even more related, \cite{Adj18} which uses such an LP for WTAP. However our use of consistency between different configurations is novel and, we believe, provides a recipe for making progress on other problems in combinatorial optimization. 

From there we derive two complementary rounding strategies, whose average yields a $(1.5+\epsilon)$-approximation. Roughly speaking, we first show we can reduce the problem to a structured type of instance somewhat reminiscent of algorithms for TAP or WTAP in the bounded cost setting (see \cite{FGKS18} or the splitting procedure of \cite{Adj18}). On such instances, with cross-links $X$ and in-links\footnote{An in-link $\ell=(u,v)$ is any link that is not a cross-link or an up-link, so that $u$ is not an ancestor of $v$, $v$ is not an ancestor of $u$, and the lowest common ancestor of $u$ and $v$ is not the root.} $I$ (ignoring up-links for notational ease, which we get a factor of 1 for), we show:
(1) an algorithm based purely on the Odd Cut LP and an integrality theorem of \cite{FGKS18} with cost at most $c(X) + 2 c(I)$, and (2) an algorithm that uses the strong LP's structure for a cost of at most $2c(X) + c(I)$. We also lose an $\epsilon$ factor which may be chosen arbitrarily small, with the running time depending on the choice of $\epsilon$. By running each algorithm with probability $\frac{1}{2}$ we obtain a $1.5+\epsilon$ approximation.

 The improvement below 1.5 comes from refining the second algorithm. After the initial independent rounding of subtrees, we introduce a ``clean-up phase.'' This phase identifies certain cross-links chosen by the rounding that are effectively ``redundant'' or ``over-covered'' by links sampled from other subtrees. These extra cross-links can be removed, sometimes requiring the addition of cheaper up-links (internal to subtrees) to maintain feasibility, leading to an expected cost strictly less than $2c(X) + c(I)$, and
 thus an overall approximation strictly less than $1.5$ when averaged with the first algorithm.

 \subsection{Organization of the Paper}
The core of our contribution is a novel, strong Linear Programming relaxation (the \textit{Strong LP}) and a randomized rounding algorithm based on it. However, rounding the Strong LP directly is challenging. We therefore introduce an intermediate formulation, the \textit{Structured LP} and the associated Structural Fractional Solution, which facilitates the analysis and the rounding process. Our presentation is organized to highlight our main contributions in a less notation heavy setting before detailing the Strong LP and the relationship between the Strong LP and Structured Fractional Solutions.

After introducing basic notation and definitions in \cref{sec:prelims}, the structure of the paper is as follows:
\begin{enumerate}
    \item In \cref{sec:structured}, we define the Structured LP (which shares several ideas with the more complex Strong LP) and present a randomized rounding algorithm for Structured Fractional Solutions, achieving a $1.5$-approximation relative to the cost of such solutions.
    \item In \cref{sec:cleanup}, we refine this algorithm with a ``clean-up phase,'' improving the approximation ratio below $1.49$ (relative to the Structured Fractional Solution).
    \item In \cref{sec:strongLP}, we formally introduce the Strong LP, which serves as our primary relaxation for WTAP.
    \item Finally, in \cref{sec:red_always_small}, we establish the connection between the Strong LP and Structured Fractional Solutions. We prove that it is sufficient to round Structured Fractional Solutions up to a loss in approximation ratio of $1+\epsilon$ for any $\epsilon > 0$, demonstrating that our algorithm achieves an overall approximation ratio of $1.49$ for WTAP.
\end{enumerate}

\section{Preliminaries}
\label{sec:prelims}
\subsection{Notation}

As input, we are given a WTAP instance with tree $T$, root $r$, and link set $L$ with cost function $c: L \to \R_{\ge 0}$ so that $c(\ell)$ is the weight of a link $\ell$. 

\paragraph{Ancestry Relations and Subtrees.} We use descendant and ancestor with respect to vertices of the tree in the standard way. A \textit{child} of a vertex is a descendant sharing an edge with it in $T$. If $u$ is the child of $v$, we call $v$ the \textit{parent} of $u$. We use $T_v$ to denote the subtree rooted at $v$. If $v$ has a parent $u$, we use $T_v^+$ to denote $T_v$ together with the edge $\{u,v\}$.

\hypertarget{treeedgesandcuts}{\paragraph{Tree Edges and Cuts.} We will sometimes use $uv$ to denote the edge $\{u,v\} \in E$, where we will always assume that $u$ is the parent of $v$. The \textit{parent edge} of a vertex $v$ with parent $u$ is the edge $uv \in E$. We let $e_v$ be the parent edge of vertex $v$. For convenience, we define a dummy edge $e_r$ we call the parent edge of the root that is not covered by any links. For a set of vertices $S \subseteq V$, we use $\delta_E(S)$ to denote the set of tree edges crossing the cut $(S, \overline{S})$. Similarly, we use $\delta_L(S)$ to denote the set of links crossing the cut $(S,\overline{S})$. For a vertex $v$, we will use $E(v)$ as a shorthand for $\delta_E(\{v\})$ to denote the set of edges incident to $v$.}

\paragraph{Link Types.} A link $\ell=\{u,v\}$ is associated with the unique $u-v$ path in the tree $T$. We say $\ell$ \textit{covers} the set of edges in this unique $u-v$ path. The \textit{apex} of $\ell$, denoted $\apex(\ell)$, is the vertex of $T$ on this path closest to the root, or equivalently, the lowest common ancestor of $u$ and $v$ in $T$.
\begin{itemize}
\item $\ell$ is called an \textit{up-link} if $u$ or $v$ is the apex of $\ell$, or equivalently if $u$ is an ancestor of $v$ or vice versa. We will use $L_{UP}$ to denote the set of up-links. 
\item $\ell$ is called a \textit{cross-link} if the apex of $\ell$ is $r$.
\item Otherwise, $\ell$ is called an \textit{in-link}. 
\end{itemize}

\paragraph{Leading Edges of Links.}
 We define an edge $e \in T$ to be a \emph{leading edge} of $\ell$ if it is the first edge on the path from $\apex(\ell)$ to endpoint $u$ or $v$ (whichever is not the apex). Let $\Lead(\ell)$ denote the set of leading edges of $\ell$. We note that $|\Lead(\ell)| = 1$ if $\ell$ is an up-link, and $|\Lead(\ell)| = 2$ otherwise.

\paragraph{Shadows and Splitting.} A \textit{shadow} of a link $\ell$ is a link which covers a strict subset of the edges that $\ell$ covers. \hypertarget{shadowcomplete}{Similar to other recent works} on WTAP such as \cite{FGKS18}, given an input instance, we make it \textit{shadow complete} by adding all shadows of every link.
with each shadow having the price of the original link. If the resulting graph has multiple links
with the same endpoints, we keep the cheapest one. \hypertarget{splitting}{A common operation is to \textit{split} a cross-link $\ell=\{u,v\}$ or in-link into two of its shadows, both of which are up-links. This involves deleting $\ell$ and replacing it with two up-links $\ell_u =\{u,\apex(\ell)\}$ and $\ell_v=\{v,\apex(\ell)\}$. Note that the union of the edges covered by $\ell_u$ and $\ell_v$ is the set of edges covered by $\ell$.} Often, this operation is done with respect to a link and a fractional vector $x$ so that $0 \le x(\ell) \le 1$. In this case, one assigns $x(\ell_u) = x(\ell_v) = x(\ell)$ and then sets $x(\ell) = 0$. 

\paragraph{Link Coverage.} We use $L_e$ to denote the set of links covering edge $e$. For a vertex $v$, we will also use $L_v$ to denote $L_{e_v}$. For a set of edges $F \subseteq E$, we use $L(F) = \bigcup_{e \in F} L_e$ to denote the set of links that cover any edge in $F$. 
For a link $\ell \in L$, we will use $P_\ell$ to denote the path in $T$ containing all edges $e$ satisfying $\ell \in L_e$.

\hypertarget{smallcov}{\paragraph{Small Link Sets.}} For $F \subseteq E$, we say a collection of links $L$ is \textit{small on $F$} if for every $e \in F$, $|L \cap L_e| \le \rho$ for a fixed constant $\rho \in \mathbb{Z}_{\ge 0}$. We use $\scov(F) \subseteq 2^{L(F)}$ to denote the collection of link sets that are small on $F$ and cover $F$, i.e., for which $1 \le |L \cap L_e| \le \rho$ for all $e \in F$.

\subsection{Two Linear Programs for WTAP}
\label{sec:prelimLPs}

\paragraph{Notation for LPs.} For a vector $x: L \to \R_{\ge 0}$,  we let $x(F) = \sum_{\ell \in F} x(\ell)$ for $F \subseteq L$. We also use $c(F) = \sum_{\ell \in F} c(\ell)$. Finally, we will use $c(x(F))$ (and sometimes just $c(x)$ when $F$ is clear from context) to denote $\sum_{\ell \in F} c(\ell)x(\ell)$. 

\paragraph{The Cut LP.} The classic linear programming relaxation for WTAP is the following, often known as the \hypertarget{cutlp}{Cut LP}:
\begin{align*}
    \min &\sum_{\ell \in L} c(\ell) x(\ell)\\
    \text{s.t.} & \quad x(L_e) \ge 1 && \forall e \in E \\
    & \quad x(\ell) \ge 0 && \forall \ell \in L 
\end{align*}
It is well known that the Cut LP has integral extreme points when $L$ contains only up-links. Therefore, a standard result is that the Cut LP has an integrality gap of at most 2. To prove this, one first \hyperlink{splitting}{splits} all cross-links and in-links into up-links, increasing the cost of $x$ by a factor of at most 2. Then, using the integrality of the Cut LP on the resulting instance, one can find an integral vertex of cost at most $2c(x)$. It is also worth noting here that the best known upper bound on the integrality gap of the cut LP is 2 for WTAP, while the best lower bound is 1.5 \cite{CKKK08}. 

\paragraph{The Odd Cut LP.} \hypertarget{oddcutlp}{The Cut LP can be strengthened in the following way:}
\begin{align*}
    \min & \quad \sum_{\ell \in L} c(\ell) x(\ell)\\
    \text{s.t.} &\quad  x(\delta_L(S)) + \sum_{e \in \delta_E(S)} x(L_e) \ge |\delta_E(S)| + 1 & \forall S \subseteq V, |\delta_E(S)| \text{ odd} \\
    & \quad x(\ell) \geq 0 & \forall e \in L
\end{align*}

This LP, although exponentially large, has a polynomial time separation oracle and therefore can be solved in polynomial time. This LP was introduced by Fiorini, Groß, Könemann, and Sanità \cite{FGKS18}. Moreover, they proved the following:
\begin{theorem}[\cite{FGKS18}]\label{thm:FGKS}
For trees containing only cross-links and up-links, the Odd Cut LP has integral vertices.
\end{theorem}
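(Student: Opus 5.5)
The plan is to reduce the statement to Edmonds' theorem on the integrality of the (capacitated) edge-cover / $b$-matching polytope described by odd-set constraints. I expect the up-links to behave like the network-matrix part, which is totally unimodular. The cross-links carry all the difficulty: each one covers two root-to-vertex paths, so it behaves like a graph edge joining two ``branches'' of the tree.

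\textbf{Step 1 (rewrite the constraints).} For $S \subseteq V$, the left-hand side $x(\delta_L(S)) + \sum_{e\in\delta_E(S)} x(L_e)$ equals $\sum_{\ell} x(\ell)\,\bigl(|P_\ell \cap \delta_E(S)| + [\ell\in\delta_L(S)]\bigr)$. The bracketed coefficient is always even, because the cycle $P_\ell+\ell$ crosses the cut an even number of times.

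\begin{itemize}
\item For $S=V(T_v)$, which has $|\delta_E(S)|=1$, the constraint reads $2x(L_v)\ge 2$. So the cut constraints are implied, and the LP is a ``$\{0,2\}$-system with odd right-hand sides halved''. This is exactly the shape of Edmonds' odd-set inequalities.
\item Concretely, with coefficient $a_S(\ell)=\tfrac12(|P_\ell\cap\delta_E(S)|+[\ell\in\delta_L(S)])\in\mathbb{Z}_{\ge0}$, the constraint becomes $\sum_\ell a_S(\ell)x(\ell) \ge \lceil |\delta_E(S)|/2\rceil$.
\end{itemize}

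\textbf{Step 2 (auxiliary graph).} Build a graph $H$ and a map from links to edges of $H$ (or to paths of a bidirected structure) such that two things hold. First, covering tree edges corresponds to degree constraints. Second, every Odd Cut inequality is the image of an odd-set inequality of a $b$-edge-cover polytope. The intended construction is as follows.

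\begin{itemize}
\item Orient the tree edges towards the root.
\item Replace each tree edge $e$ by a node of $H$ with demand $1$.
\item Represent an up-link by a directed path; the up-link part is then a network/flow structure, which is TU.
\item Represent a cross-link $\{u,v\}$ as an edge joining the two root-to-endpoint chains at the root.
\end{itemize}

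With only up-links and cross-links, every link path is a union of at most two upward chains meeting at the root. This is precisely what should make the incidence structure a ``graph plus network'' (binet) matrix, rather than something wilder. An in-link would have its two chains meeting at a non-root apex, and that is why in-links must be excluded.

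\textbf{Step 3 (conclude).} Argue that a vertex $x$ of the Odd Cut LP is determined by tight constraints, and uncross those constraints into a laminar family. For two tight odd sets $A,B$, either $A\cap B, A\cup B$ or $A\setminus B, B\setminus A$ are both odd. The left-hand side is submodular-like in the usual posimodular/submodular sense, because it counts path crossings, so tightness transfers. Then show that the linear system of a laminar family of tight odd sets has an integral unique solution. I would do this either by matching it to the matching-polytope laminar argument (Cornuéjols / Edmonds style: shrink a minimal tight odd set and induct) or by invoking integrality of Edmonds' polytope through the correspondence of Step 2.

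I expect the main obstacle to be making the correspondence in Step 2 exact. Every odd set $S\subseteq V$ of the tree must map to a valid odd-set inequality and vice versa. Since I do not want to rely on the precise auxiliary graph, I would first try the direct route: uncrossing followed by shrinking a minimal tight odd set and inducting. In that route the delicate points are, first, checking that the uncrossing inequality holds with the coefficients $a_S$, and second, checking that contracting a tight odd set $S$ yields again an instance with only up-links and cross-links.
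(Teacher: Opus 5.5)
The paper does not prove this statement; it is imported from \cite{FGKS18}. Your plan therefore has to stand on its own, and as written it is missing the one step where the hypothesis is used.

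Step~1 is correct. In fact every $F\subseteq E$ equals $\delta_E(S)$ for some $S$, and $[\ell\in\delta_L(S)]$ is the parity of $|P_\ell\cap F|$. Hence $a_S(\ell)=\lceil|P_\ell\cap\delta_E(S)|/2\rceil$, and the Odd Cut LP is exactly the $\{0,\tfrac12\}$-Chv\'atal--Gomory closure of the Cut LP. The uncrossing in Step~3 is also fine: the excess term counts crossings of the cycles $P_\ell+\ell$ between $A\smallsetminus B$ and $B\smallsetminus A$, minus the tree edges there, and it is nonnegative because $x(L_e)\ge 1$.

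The problem is that neither step uses the absence of in-links, and the Odd Cut LP is not integral in general. Otherwise this polynomially solvable LP would solve the APX-hard WTAP exactly, unless $\mathrm{P}=\mathrm{NP}$. So the entire content of the theorem sits in your claim that a laminar family of tight odd-set constraints has an integral solution, and you give no argument for it. The suggested mechanism does not work. In a tree, $|\delta_E(S)|$ equals the number of components of $T[S]$ plus that of $T[V\smallsetminus S]$, minus one. Hence whenever $|\delta_E(S)|\ge 3$, one of $S$ and $V\smallsetminus S$ is disconnected, and contracting it creates cycles. The shrunken instances are then no longer tree instances, let alone up/cross instances, so the induction has nothing to recurse on.

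Step~2 is where the hypothesis should enter, but it is set up on the wrong objects. If tree edges are demand-$1$ nodes and a link is a path through them, each link meets many nodes. You just recover the path--edge incidence matrix, not degree constraints of a graph, so Edmonds' theorem does not apply. The graph appears only after a change of basis, as follows.
\begin{itemize}
\item Take the vertices of $T$ as nodes.
\item Let $B$ consist of the tree-edge columns $\chi_c-\chi_p$ (child $c$, parent $p$) plus a half-edge $h$ with column $\chi_r$.
\item Let $N$ have column $\chi_u-\chi_w$ for an up-link from $u$ to its ancestor $w$, and $\chi_u+\chi_v$ for a cross-link $\{u,v\}$.
\end{itemize}
Then $B$ is unimodular, and $B^{-1}N$ restricted to the tree-edge rows is the Cut LP matrix; its $h$-row is $2$ on cross-links. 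So integral feasible $x$ correspond bijectively to integral $(x,y)$ with $Nx=By$, $x\ge 0$, $y_e\ge 1$ for $e\in E$, and $y_h$ free.

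Since $[N,-B]$ has column absolute sums at most $2$, the Edmonds--Johnson theorem makes the integer hull of this system equal to its $\{0,\tfrac12\}$-closure. Because $y_h$ is free, the admissible row multipliers are $\tfrac12\chi_W$ with $r\notin W$. After substituting $y=B^{-1}Nx$ (flow conservation on $W$), each resulting cut is exactly your inequality $\sum_\ell a_W(\ell)x(\ell)\ge\lceil|\delta_E(W)|/2\rceil$; odd sets containing $r$ are handled by complementation. Hence the Odd Cut polyhedron is the projection of that integer hull, which is the integer hull of the Cut LP, so its vertices are integral. As I recall, the cited proof goes via the binet structure of the constraint matrix, and this is a concrete version of that argument. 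It also pinpoints why in-links are excluded: an in-link would need the column $\chi_u+\chi_v-2\chi_{\mathrm{apex}(\ell)}$, whose absolute sum is $4$.
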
 
Now, using the Odd Cut LP, we only need to split the in-links into up-links to reach an instance with an integrality gap of 1 (instead of both the cross-links and in-links). Also note that the LP does not depend on the choice of the root. So, after solving the LP, one can choose the root before splitting and applying their theorem to maximize the LP value of the set of links that do not need to be split.

The Odd Cut LP and \cref{thm:FGKS} will make up one of the key algorithms in this work. We will also use that the LP is feasible for any choice of root.


\section{Rounding Structured Fractional Solutions}\label{sec:structured}

This section presents our main rounding algorithm.  The core novel contribution of our paper is the combination of this algorithm,  its extension detailed in \cref{sec:cleanup} and the Strong LP relaxation introduced in \cref{sec:strongLP}. 

We first introduce the Structured LP and  Structured Fractional Solutions. The Structured LP captures several key ideas of the more complex Strong LP, making it an instructive starting point for exposition. Our main rounding algorithm will operate on these Structured Fractional Solutions.

We justify this approach by showing that any solution to the Strong LP can be efficiently converted into a Structured Fractional Solution with nearly the same cost. This reduction, proven formally in \cref{sec:red_always_small}, is formalized by the following theorem. It allows us to focus our primary efforts on the subsequent rounding phase.

\begin{restatable}[
        label={thm:reduction_to_structured}
    ]{theorem}{reductiontostructured}
    \label{thm:reduction_to_structure}
    Let $\calA$ be a (randomized) polynomial-time algorithm that, given a Structured Fractional Solution $(x, y, z)$ to a WTAP instance, returns an integral solution of (expected) cost at most $\alpha\cdot c(z(L))$. Then there exists a polynomial-time $(1 + o_\epsilon(1))\alpha$-approximation algorithm for WTAP (with respect to the Strong LP), where $\epsilon > 0$ is a small fixed parameter and we use $o_\epsilon(1)$ to denote a quantity that tends to $0$ as $\epsilon \to 0$.
\end{restatable}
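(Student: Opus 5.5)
We will show that an optimal solution of the Strong LP can be turned, in polynomial time and with only a small loss, into a Structured Fractional Solution $(x,y,z)$ with $c(z(L)) \le (1+o_\epsilon(1))\,\mathrm{OPT}_{\mathrm{Strong}}$; running $\calA$ on it then gives expected cost at most $\alpha(1+o_\epsilon(1))\,\mathrm{OPT}_{\mathrm{Strong}} \le \alpha(1+o_\epsilon(1))\,\mathrm{OPT}$. This bounds both the approximation ratio and the integrality gap. The plan has four steps.

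\textbf{Step 1 (solve the LP).} We fix the parameters $\rho$ (the smallness bound in $\scov(\cdot)$) and the subtree size as functions of $\epsilon$. The Strong LP then has polynomially many event variables $y(\scrE)$, one per small link configuration on a bounded local structure, plus consistency constraints, so it can be solved in polynomial time. The Odd Cut LP part, if included, is handled with its separation oracle.

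\textbf{Step 2 (choose a root and decompose).} I expect the Structured Fractional Solution to require a decomposition of $T$ into vertex-disjoint subtrees hanging below a root. Within each subtree we keep a distribution over small covering link sets, the $y$ part. Cross-links between subtrees carry the $x$ and $z$ values, and in-links and up-links stay internal. To obtain this, I would pick the root, and the cut points separating subtrees, at random from $\Theta(1/\epsilon)$ shifted choices, in the spirit of Baker-style decompositions. The marginal LP mass of links that must be split or duplicated at the boundaries is then only an $\epsilon$ fraction in expectation, so some choice costs at most $(1+\epsilon)$ times the LP value. On a boundary link we pay by splitting it into up-links, at most a factor $2$ on an $\epsilon$ fraction of the cost.

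\textbf{Step 3 (make distributions small and consistent).} For each subtree we take the Strong LP's event distribution as the $y$ variables. Consistency of marginals on overlapping events ensures that the cross-link marginals seen from the two subtrees agree, which is exactly what the Structured LP demands of $z$. Events that are not small, i.e.\ where some edge is covered more than $\rho$ times, have to be dealt with, either by trimming extra links in a way that does not increase cost or by a Markov-type argument showing their mass is $O(1/\rho)$, which is then charged. I would then set $z$ to the resulting link marginals and verify $c(z(L)) \le (1+O(\epsilon))\,c(\text{LP})$.

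\textbf{Main obstacle.} I expect the hardest step to be Step 3: keeping consistency while trimming to small events. Trimming inside one subtree changes its marginals on shared cross-links, which can break agreement with the neighbouring subtree. My first attempt would be to trim only links that are internal to a subtree, which leaves shared marginals untouched. If that fails, I would instead trim canonically, as a function of the restriction to shared links alone, so that both sides apply the same operation and their marginals still coincide.
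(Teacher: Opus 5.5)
Your Step~2 does not produce the structure a Structured Fractional Solution needs, and its cost claim is false. The required structure is local at every vertex. First, $E^*(v)$ may contain only boundedly many children, because the $y$-values on $E^*(v)$ and its one- and two-edge extensions must be extracted from Strong LP events, which exist only on subtrees with at most $\beta+3$ leaves. Second, for every other child edge $e$ of $v$, no link in the support may cover both $e_v$ and $e$. Depth-based Baker shifting controls neither. A high-degree vertex on a non-cut level gets an unbounded star. A link whose tree path has more than $k=\Theta(1/\epsilon)$ edges meets a cut level under \emph{every} shift, and a long path meets many of them. So splitting boundary links at a factor $2$ per split costs far more than an $\epsilon$ fraction.

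The paper instead lets the LP choose the cuts. A child edge $e$ of $v$ is correlated iff $x(L_{e_v}\cap L_e)\ge\delta$, so at most $x(L_{e_v})/\delta$ children are correlated. Each crossing (``dangerous'') link is split at the top of the uncorrelated edge $f$, and its lower part is replaced by a cheapest up-link cover. That cover is paid for by the links local to $f$, which cover the lower part to extent $1-\delta$, not by the split link itself. Less than $\delta$ mass crosses each cut and these local link sets overlap boundedly, so the total loss is $O(\delta)\,c(x)$.

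That degree bound, and everything in your Step~3, needs $x(L_e)$ to be bounded, which an optimal Strong LP solution does not guarantee. Without it your Markov bound $y(e,\bot)\le x(L_e)/\rho$ is useless. The missing first step is to contract every edge with $x(L_e)>\epsilon\rho$. Since $\epsilon\rho\ge 2/\epsilon$, the vector $(\epsilon/2)x$ still covers these edges, so by the factor-$2$ Cut LP bound they can be covered at cost at most $\epsilon\,c(x)$. The Strong LP on the contracted tree, with the added constraints $x(L_e)\le\epsilon\rho$, is no more expensive.

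Even then, each edge keeps up to $\epsilon$ mass on huge events, and trimming cannot remove it, canonical or not. The coverage constraints force every edge's distribution to sum to $1$, and the consistency constraints chain these distributions along the whole tree, so any local change must propagate. The paper handles this by \emph{defining} $y$ as the marking probabilities of a top-down sampling procedure:
\begin{itemize}
\item It conditions along correlated subtrees via the extension constraints.
\item It covers the huge edges of a sampled event $\scrF$ with an up-link cover drawn from the fractional cover $y_\ell(\scrF)/(\rho\,y(\scrF))$, at cost factor $2/\rho$.
\item Below huge edges it performs conditional resets from distributions $\calD_e$, computed from the procedure's own marginals, so that every pair $(e,S)$ is used with probability at most $(1+5\sqrt{\epsilon})\,y(e,S)$.
\item It replaces links by shadows to restore consistency.
\item It builds the events on stars with two uncorrelated children through a separate pairing distribution. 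The two sides are sampled independently, so the agreement you take for granted is not automatic.
\item It sets $z(\ell)=\max(x'(\shadows(\ell)),x(\ell))$, which keeps Odd Cut feasibility and the bound $c(x(L_U\setminus L_{UP}))\le c(z(L_U))$; your plan never addresses the latter.
\end{itemize}
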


The proof of \cref{thm:reduction_to_structure}, while technically involved and requiring detailed calculations, is conceptually less novel than our new LP and the rounding procedure. We have therefore deferred it to \cref{sec:red_always_small}.  We now proceed to define the Structured LP and Structured Fractional Solutions. We then present the details of our main rounding algorithm.

\subsection{Structured LP and Structured Fractional Solutions}\label{subsec:structured_LP_struc_frac}

Our approximation algorithm relies on rounding an intermediate structure derived from the Strong LP (introduced in \cref{sec:strongLP}). This intermediate structure, which we call a Structured Fractional Solution, captures the essential local properties of the Strong LP solution while being more amenable to rounding.

The Strong LP, and consequently the Structured LP defined here, draws inspiration from lift-and-project methods such as the Sherali-Adams hierarchy. In these methods, a base LP is strengthened by introducing new variables representing local configurations and enforcing consistency constraints between overlapping configurations.

Similarly, our approach enhances the basic Cut LP by introducing variables that capture how small subsets of edges (which we call "stars") are covered by localized sets of links. The crucial strengthening comes from enforcing consistency across these local views: if two stars overlap, the distributions over the links covering the shared edges must agree. This forces a global coherence that the basic Cut LP misses.

As we define below, the Structured Fractional Solution is a tuple $(x, y, z)$. The components $(x, y)$ capture this refined local consistency and are used in our primary new rounding algorithm (\cref{subsec:structured-LP-rounding}). The vector $z$ is a feasible solution to the standard Odd Cut LP, used in a complementary rounding algorithm based on Fiorini et al.~\cite{FGKS18} (\cref{sec:general-Splitting-Alg}). We emphasize that the Structured LP defining $(x, y)$ is \textbf{not} itself a relaxation of WTAP, as it imposes structural restrictions (like edges being covered by a small, bounded number of links). Instead, we rely on \cref{thm:reduction_to_structure}, which guarantees that a solution to the Strong LP can be transformed into a Structured Fractional Solution with negligible cost increase.

\subsubsection{Structured LP}\label{subsubsec:structuredLP}

The Structured LP is defined relative to a partition of the tree's structure into correlated and uncorrelated components. It is also parameterized by the parameter $\rho \geq 1$. Recall that $\rho$ defines the notion of a small link set: a set $S \subseteq L$ is small on $F \subseteq E$ if $1 \leq |S \cap L_e|\leq \rho$ for $e\in F$. 

\paragraph{Correlated Nodes and Edges.} The Structured LP  is defined with respect to a list of nodes $V_{cor} \subseteq V$ we will call the correlated nodes. For each $v \in V$, let $E^*(v) \subseteq E(v)$ consist of $e_v$ and the edges going from $v$ to its correlated children, i.e., those in $V_{cor}$. We will call edges in $E^*(v) \smallsetminus \{e_v\}$ for some $v \in V$ \textit{correlated} and the remaining edges \textit{uncorrelated}. We will call links that have a correlated leading edge \textit{correlated} (this set is denoted $L_C$) and the remaining links \textit{uncorrelated} (this set is denoted $L_U$).

\paragraph{Stars.} 
Let $\stars$ be the collection of edge sets $F \subseteq E$ so that for some $v \in V$, either $F=\{e_v\}$ or $F$ consists of $E^*(v)$ together with the addition of at most two edges from $E(v)$. 

\paragraph{Events.} An event is defined as a tuple $(E',L')$ where $E' \subseteq E$ and $L' \subseteq L(E)$. Intuitively, an event $(E',L')$ is in indicator that the link set used to cover $E'$ is exactly $L'$. We will sometimes abuse notation and use $(e,L')$ to denote $(\{e\},L')$. Given an event $\scrE$, let $E(\scrE)$ be the edge set in the tuple and $L(\scrE)$ the link set. Let $\events$ be the set of events such that $E(\scrE) \in \stars$. Let $\events(F)$ be the subset of events $(F',L') \in \events$ such that $F'=F$.

\begin{remark}
    The Structured Fractional Solutions produced in the reduction \cref{thm:reduction_to_structured}  is such that the size of $F$ for each $F\in \stars$ and the parameter $\rho$ are all upper bounded by a constant (which depends on $\epsilon$).  The total number of events in $\events$ (and thus variables of the Structured LP) is thus polynomially bounded in the size of the WTAP instance for any fixed $\epsilon$. Strictly speaking, this is not necessary in this section as our goal is to devise a polynomial time algorithm given a Structural Fractional Solution (i.e., its running time is allowed to depend on the size of the solution), but it is helpful for intuition. 
\end{remark}

\paragraph{Conditional Event Sets.} \hypertarget{conditionalEvents}{Given a pair of edge sets $E_1,E_2 \in \stars$ with $E_1 \subseteq E_2$ and an event $\scrE_1 \in \events(E_1)$, let $\events(E_2)_{\mid \scrE_1}$ be the subset of events $\scrE_2 \in \events(E_2)$ for which $L(\scrE_2) \cap L(E_1) = L(\scrE_1)$. In other words, it is the set of events for $E_2$ that agree with $\scrE_1$ on $E_1$.}\\

We can now build a linear program using variables $ x(\ell)$ for the links and $y(\scrE)$ for the events in $\events$. Intuitively, $x(\ell)$ will indicate whether $\ell$ is in the solution. 
Furthermore we will have non-negativity: $x(\ell) \ge 0$ and $y(\scrE) \ge 0$. In addition, the Structured LP enforces the linear constraints~\cref{eq:constr:small_prob_1:marginals,eq:constr:small_prob_1:coverage,eq:constr:small_prob_1:consistency} that we now introduce. 

\paragraph{Marginal Preserving Constraints.}
For each edge $e \in E$ and  link $\ell \in L_e$ , we will have the constraint:
\begin{equation}\label{eq:constr:small_prob_1:marginals}
    \quad x(\ell) = \sum_{\scrE \in \events(e)}y(\scrE)\mathbb{I}[\ell \in L(\scrE)]
\end{equation}

In other words, if we look an edge $e$ and a link $\ell$ covering $e$, the distribution over link sets covering $e$ should use link $\ell$ with probability $x(\ell)$. 

\paragraph{Coverage Constraints.}  
For each $e \in T$
\begin{equation}\label{eq:constr:small_prob_1:coverage}
    \sum_{\scrE \in \events(e)}y(\scrE) = 1
\end{equation}
This constraint says every edge $e$ gives us a distribution over link sets that cover $e$ and are small on $e$. Note this is not necessarily true of an optimal solution and is, as aforementioned, why this Structured LP \textbf{is not a relaxation of the problem.} The definition of the Strong LP in \cref{sec:strongLP} handles this by introducing more complex events for general subtrees that allow tree edges to also be covered by many links. 

\paragraph{Consistency Constraints.} 
These constraints enforce that the distributions over two overlapping stars must agree on their intersection, forcing local consistency. For every pair of edge sets $E_1,E_2 \in \stars$ with $E_1 \subseteq E_2$, 
add the following constraint for all $\scrE_1 \in \events(E_1)$:
\begin{equation}\label{eq:constr:small_prob_1:consistency}
\sum_{\scrE_2 \in \events(E_2)_{\mid \scrE_1}} y(\scrE_2) = y(\scrE_1)
\end{equation}
This set of constraints is essential because it ensures consistency between different distributions over stars. If $E_1$ is a single edge, for example, it ensures that any two stars $E_2,E_3$ which share edge $e$ have the same distribution on $L_e$. 

\subsubsection{Structured Fractional Solutions}


The tuple $(x,y,z)$ with $x,z \in [0,1]^L$ and $y \in [0,1]^{\events}$ is a \textbf{Structured Fractional Solution} for a set of correlated nodes $V_{cor}$ if:
\begin{itemize}
    \item $(x,y)$ is a feasible solution to the Structured LP, i.e., the marginal preserving constraints \eqref{eq:constr:small_prob_1:marginals}, coverage constraints \eqref{eq:constr:small_prob_1:coverage}, and consistency constraints \eqref{eq:constr:small_prob_1:consistency} are satisfied by $(x,y)$. 
    \item $z$ is a feasible solution to the Odd Cut LP, i.e., 
    \begin{equation}\label{eq:constr:small_prob_1:odd_cut}
        z(\delta_L(S)) + \sum_{e \in \delta_E(S)} z(L_e) \ge |\delta_E(S)| + 1 \hspace*{3mm} \forall S \subseteq V, |\delta_E(S)| \text{ odd.}
    \end{equation}
    \item For each $v \in V$ and every edge $e$ from $v$ to an uncorrelated child of $v$, the supports of $x(L_v)$ and $x(L_{e})$ are disjoint and the supports of $z(L_v)$ and $z(L_e)$ are disjoint.
    \item $c(x(L)) \leq c(z(L))$ and 
    \begin{equation}\label{eq:uncorr_cost}
    c(x(L_U \setminus L_{UP})) \leq c(z(L_U))
    \end{equation}
\end{itemize}

We say that $x, y, z$ is a Structured Fractional Solution if there exists subsets a node set $V_{cor} \subseteq V$ with the above properties such that $(x, y, z)$ is a Structured Fractional Solution for $V_{cor}$. Note that $x$ and $z$ will be obtained from a common, strong relaxation for WTAP.


\subsection{Rounding Overview}\label{sec:overview}

In this section we prove the following theorem using the rounding results of the following two subsections:
\begin{theorem}
    Let $x,y, z$ be a Structured Fractional Solution. Then there is a randomized polynomial time algorithm that produces a WTAP solution of expected cost at most $1.5 \cdot \sum_{\ell \in L} c(\ell)z(\ell)$.
    \label{thm:1.5structured}
\end{theorem}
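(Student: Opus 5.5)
The plan is to run one of two rounding algorithms, each chosen with probability $\tfrac12$, and to bound each against a decomposition of $c(z(L))$. Split the links into three classes:
\begin{itemize}
\item up-links;
\item correlated non-up-links (in $L_C\setminus L_{UP}$);
\item uncorrelated non-up-links (in $L_U\setminus L_{UP}$).
\end{itemize}
The structural property, that for every uncorrelated child edge $e$ of $v$ the supports of $z(L_v)$ and $z(L_e)$ are disjoint, lets us treat each uncorrelated child as the root of an independent subinstance. Concretely, cut the tree at every uncorrelated edge. Each resulting piece, a maximal subtree connected through correlated edges, is handled on its own. Relative to the piece's root, a link whose leading edges lie at that root behaves like a ``cross-link''; every other non-up-link behaves like an ``in-link''.

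\textbf{Algorithm 1 (Odd Cut LP).} Take $z$ and split every link that is not an up-link or a cross-link (with respect to the appropriate root) into two up-links. This at most doubles the cost of the split links. Then apply \cref{thm:FGKS} to obtain an integral solution of cost at most $c(z(X))+2c(z(I))+c(z(\text{up}))$. For this I must check that splitting at apices preserves feasibility of the Odd Cut LP. The plan is the standard argument: a split link covers the same edges, and it contributes at least as much to every odd-cut constraint.

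\textbf{Algorithm 2 (event rounding).} For each vertex $v$, use the consistency constraints \eqref{eq:constr:small_prob_1:consistency} to sample a link set covering the star $E^*(v)$. Do this top-down: conditioned on the event already sampled for $e_v$ from its parent's star, sample the event for $E^*(v)$ from $\events(E^*(v))_{\mid \scrE_1}$. The consistency constraints guarantee that the marginal of every link is exactly $x(\ell)$, and that every correlated edge is covered. Uncorrelated edges are handled by an independent sample from $\events(e)$. Each sampled link $\ell$ is then included only through its leading edges. A correlated in-link is sampled consistently at its apex star, so it is paid once. A link that appears independently in two different stars (through two uncorrelated leading edges, i.e.\ ``cross-like'' links) may be paid twice, or else be split into its up-link halves, each of cost at most $c(\ell)$. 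The expected cost is therefore at most $2c(x(L_U\setminus L_{UP}))+c(x(L_C))+c(x(L_{UP}))$. Using $c(x(L))\le c(z(L))$ and \eqref{eq:uncorr_cost}, this is at most $c(z(L))+c(z(L_U))$, and so at most $2c(z(X))+c(z(I))+c(z(\text{up}))$ once cross-links are identified with the uncorrelated class.

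\textbf{Averaging.} Averaging the two bounds gives $\tfrac32\bigl(c(z(X))+c(z(I))\bigr)+c(z(\text{up}))\le 1.5\,c(z(L))$.

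The main obstacle is aligning the two decompositions. The ``cross-links'' for which Algorithm 1 pays factor 1 must be exactly the links for which Algorithm 2 pays factor 2, namely the uncorrelated ones, and the ``in-links'' must be the correlated ones, for which the consistent star sampling pays factor 1. I would first try to choose the roots of the independent pieces so that each uncorrelated non-up-link of $z$ is a cross-link of its piece; the disjoint-support condition is what should make this possible. I would then verify that feasibility of each piece follows from the coverage constraint \eqref{eq:constr:small_prob_1:coverage} together with the consistency between $e_v$ and $E^*(v)$.
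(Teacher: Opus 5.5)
Your Algorithm~2 does not achieve factor $1$ on all of $L_C$, and this is a genuine gap. A link is correlated as soon as \emph{one} of its leading edges is correlated. So $L_C\setminus L_{UP}$ contains ``mixed'' links $\ell$ with apex $v$ whose leading edges are a correlated edge $\{v,c\}\in E^*(v)$ and an uncorrelated edge $e=\{v,w\}$. The disjoint-support condition does not exclude them: it only concerns links covering $e_v$, so it decouples $e$ from $e_v$ but not from its correlated siblings in $E^*(v)$. Such links may carry most of the cost.

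For a mixed link, your claim that it is ``sampled consistently at its apex star'' fails. Whether $\ell$ is used on the $c$-side is decided by the draw for $E^*(v)$. Whether it is used on the $w$-side is decided by your independent draw from $\events(e)$. Both decisions must be honored, because the draws below $c$ and below $w$ are conditioned on them. So $\ell$ costs about $2x(\ell)$ in Algorithm~2. By your own identification of in-links with the correlated class (as in the paper), Algorithm~1 also pays $2z(\ell)$ on it. On these links the average is $2$, and the $1.5$ bound breaks.

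The missing step, which is what \textsc{Structured-Rounding} does, is to draw each uncorrelated child edge $e$ of $v$ from $\events(E^*(v)\cup\{e\})_{\mid\scrE}$. This draw is conditioned on the event $\scrE$ already drawn for $E^*(v)$, and is independent across the different uncorrelated children. This is why $\stars$ contains $E^*(v)$ plus extra edges, and \eqref{eq:constr:small_prob_1:consistency} makes the conditional law well defined. The conditioning forces $L(\scrE_e)\cap L(E^*(v))=L(\scrE)$, so a mixed link is used on both sides or on neither. Only links with two uncorrelated leading edges then come from two independent draws. \cref{lem:extension_sampling}, applied inductively, shows every edge still sees its law $y$. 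This yields the bound $c(x(L_C\cup L_{UP}))+2c(x(L_U\setminus L_{UP}))$.

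Two smaller points. First, the decomposition in Algorithm~1 as you state it (cutting at every uncorrelated edge) leaves the uncorrelated edges in no piece and separates the two halves of each uncorrelated non-up-link. The alignment you leave open is resolved by taking, for each vertex $i$, the piece formed by $i$, its uncorrelated children, and the maximal correlated components below them. By \cref{fact:leading-uncorr-edges}, a link with $z(\ell)>0$ meets uncorrelated edges only in its leading edges. Hence, after splitting the correlated links, every link lies in exactly one piece, and the uncorrelated non-up-links are exactly the cross-links of the piece of their apex. Your argument that splitting preserves odd-cut feasibility is correct and is needed there. Second, $c(z(L))+c(z(L_U))$ exceeds $2c(z(X))+c(z(I))+c(z(\text{up}))$ by $c(z(L_U\cap L_{UP}))$, so your last identification is off. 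This is harmless. It is cleaner to average $c(z(L))+c(z(L_C))$ for Algorithm~1 with $c(z(L))+c(z(L_U))$ for Algorithm~2 directly.
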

This is a corollary of the following two theorems. In \cref{subsec:structured-LP-rounding}, we show:
\begin{restatable}[label={thm:structured_LP_round}]{theorem}{roundStructured}\label{thm:round_structured}
Given a WTAP instance with a Structured Fractional Solution $(x,y,z)$, there is a randomized polynomial time algorithm that outputs an integral solution with expected cost at most
\[
    c(x(L_C \cup L_{UP})) + 2 \cdot c(x(L_U \setminus L_{UP})).
\]
\end{restatable}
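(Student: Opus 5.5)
Here is the plan. Split every uncorrelated non-up-link into its two up-link shadows, which costs a factor 2 on $c(x(L_U\setminus L_{UP}))$. Then round the correlated structure by sampling events star by star, top-down, using the consistency constraints so that samples agree on shared edges.

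\emph{Decomposition into correlated components.} Consider the forest formed by the correlated edges. Each vertex $v$ together with its correlated children spans a star $E^*(v)$. Two such stars share at most one edge: the parent edge $e_u$ of a correlated child $u$ lies in both $E^*(v)$ and $E^*(u)$. The maximal connected unions of these stars form "correlated components". These components are linked to each other only through uncorrelated edges.

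\emph{Sampling within a component.} Process the stars of a component top-down, starting from the star of the component's top vertex, and sample from $y$ restricted to $\events(E^*(\cdot))$. By the coverage constraint \eqref{eq:constr:small_prob_1:coverage} and consistency \eqref{eq:constr:small_prob_1:consistency} with $E_1=\{e\}$, each star event distribution is a probability distribution. Suppose we reach star $E^*(u)$ after having already chosen links on $e_u$ (an event $\scrE_1\in\events(e_u)$). We then sample $\scrE_2\in\events(E^*(u))_{\mid\scrE_1}$ with probability $y(\scrE_2)/y(\scrE_1)$. By induction and the marginal constraints \eqref{eq:constr:small_prob_1:marginals}, every link $\ell\in L_e$ for $e$ correlated (or a parent edge) is included with probability exactly $x(\ell)$. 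Every such edge is covered because each event's link set covers its edge set. This gives a Markov-chain style sampling along the tree of stars.

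\emph{Charging correlated links.} A correlated link $\ell$ has a correlated leading edge $e$. Whenever $\ell$ is picked, it is picked through the event on some star containing $e$, which happens with probability $x(\ell)$. Picking $\ell$ on one edge of its path also forces it on the others, since consistency forces agreement on shared edges. One must make sure that a link appearing in several components, such as a cross-link whose two leading edges lie in different components, is paid for only once. We handle this by counting a link as bought if it appears in some sampled event. The total probability is then at most the sum over its leading edges, which is not good enough. So instead we run all components with shared randomness.

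This is the main obstacle. A correlated link whose two leading edges lie in distinct components must be treated jointly. I would handle it as follows. If one leading edge of $\ell$ is uncorrelated, then $\ell$ passes through $v$ into an uncorrelated child. By the disjoint-support property, no link covering $e_v$ also covers that child edge. So such links have apex $v$, and the sampled event of the star at $v$ may include that child edge (stars allow at most two extra edges from $E(v)$), which is exactly what these extra edges are for. Thus every link is "owned" by the star at its apex: its two leading edges lie in $E^*(\apex)$ plus at most two extra edges. We sample the star at each vertex augmented by the leading edges of the chosen links, so each link is bought with probability $x(\ell)$ exactly once.

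\emph{Uncorrelated parts.} Edges not covered by the sampled events are uncorrelated edges $e$ into uncorrelated children. By the disjoint-support property, $L_e$ in the support consists of links starting below $e$. For these edges we use the split up-links. Consider the residual instance of uncorrelated edges with up-link vector obtained by splitting $x(L_U\setminus L_{UP})$ and keeping $x(L_{UP}\cap L_U)$. This vector is feasible for the Cut LP on those edges, of cost at most $c(x(L_U\cap L_{UP}))+2c(x(L_U\setminus L_{UP}))$. Since up-link Cut LPs are integral, we can buy an integral cover at most that cost and replace each up-link by its original link. Adding the expected cost $c(x(L_C))$ from the sampling gives the bound.
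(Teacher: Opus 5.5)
Your proposal has a genuine gap where the correlated sampling meets the uncorrelated edges.

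Fix an uncorrelated child edge $g=vw$. By the disjoint-support property, every link in the support covering $g$ has apex $v$. Among these are \emph{correlated} links whose other leading edge lies in $E^*(v)\setminus\{e_v\}$. Your residual vector (split $L_U\setminus L_{UP}$, keep $L_U\cap L_{UP}$) discards exactly these links, so in general it is not feasible for the Cut LP on $g$. For example, if half of $g$'s coverage comes from such links, the residual vector covers $g$ only to extent $1/2$. What $g$ needs is coverage \emph{conditioned} on the outcome already sampled on $E^*(v)$, and the unconditional $x$ does not provide that.

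Second, links covering $g$ continue down into the correlated edges of $w$'s component (Fact~\ref{fact:leading-uncorr-edges}), and the top star $E^*(w)$ contains $g$ itself. Sampling that component without conditioning on the link set of $g$ causes three problems:
\begin{itemize}
\item it buys a correlated link through $g$ a second time;
\item it buys uncorrelated links and up-links that your accounting omits, since you charge the sampling phase only $c(x(L_C))$;
\item it pays for $g$ twice.
\end{itemize}
You cannot condition on your residual up-link cover instead, because that cover is not an event of the Structured LP.

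Your proposed repair is also not available. You suggest sampling one star at $v$ ``augmented by the leading edges of the chosen links'' so that every link is bought once with probability $x(\ell)$. But the chosen links may have many distinct uncorrelated leading edges, while $\stars$ allows only two extra edges. Links between two uncorrelated children appear in no event on $E^*(v)$ at all. And a scheme buying every link with probability $x(\ell)$ would give factor $1$ on $L_U\setminus L_{UP}$, which pairwise consistency does not support.

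The missing idea is the paper's recursion. After drawing $\scrE\in\events(E^*(v))$ conditioned on the link set of $e_v$, cover each uncorrelated child edge $g$ by drawing an event from $\events(E^*(v)\cup\{g\})_{\mid\scrE}$, independently for each $g$. Then recurse at $w$ conditioned on the resulting link set $L(\scrE_g)\cap L_g$. This gives the bound as follows:
\begin{itemize}
\item By Lemma~\ref{lem:extension_sampling}, each event is drawn with probability exactly $y$.
\item A correlated link through $g$ lies in $L(E^*(v))$, so conditioning forces it to agree with $\scrE$, and it is bought with probability $x(\ell)$.
\item Up-links are drawn on their single leading edge with probability $x(\ell)$.
\item A link between two uncorrelated children is drawn once on each side, i.e.\ with probability at most $2x(\ell)$.
\end{itemize}
The factor $2$ on $L_U\setminus L_{UP}$ thus comes from the independent sampling itself; no splitting or separate Cut-LP rounding is needed.
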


In \cref{sec:general-Splitting-Alg}, we show:
\begin{restatable}[label={thm:round_odd_cut}]{theorem}{roundOddCut}

Given a WTAP instance with a Structured Fractional Solution $(x,y,z)$, there is a polynomial time algorithm that outputs an integral solution with cost at most
\[
    2 \cdot c(z(L_C)) + c(z(L_U)).
\]
\end{restatable}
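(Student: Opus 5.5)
The plan is to round $z$ via \cref{thm:FGKS}. Before doing so, we modify the instance and $z$ so that every link in the support of $z$ is either an up-link or a cross-link, and we only pay a factor of $2$ on correlated links. Concretely, we \hyperlink{splitting}{split} every correlated link $\ell\in L_C\setminus L_{UP}$ with $z(\ell)>0$ into its two up-link shadows $\ell_u,\ell_v$, assigning each the value $z(\ell)$. By shadow completeness each shadow costs at most $c(\ell)$, so the new vector $z'$ satisfies
\[
c(z')\le 2c(z(L_C))+c(z(L_U)).
\]
We first check that $z'$ remains Odd Cut LP feasible. The union of the edges covered by $\ell_u,\ell_v$ equals $P_\ell$, so each $z'(L_e)$ is at least $z(L_e)$. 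For any $S$, if $\ell\in\delta_L(S)$ then at least one of $\ell_u,\ell_v$ lies in $\delta_L(S)$. Hence the left-hand side of \eqref{eq:constr:small_prob_1:odd_cut} does not decrease.

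After splitting, the support of $z'$ consists of up-links and uncorrelated non-up-links. The next step is to decompose the tree so that every uncorrelated in-link becomes a cross-link. For each non-root vertex $v$, we detach its uncorrelated children. We create a new vertex $v''$ that is the common parent of all uncorrelated children of $v$, and keep $v$ (with $e_v$ and its correlated children) in the original piece. This yields a forest of trees, each rooted either at $r$ or at some $v''$. Links of the support are assigned to pieces as follows:
\begin{itemize}
\item An uncorrelated in-link with apex $v$ has both leading edges uncorrelated, so it lives in the piece rooted at $v''$ and is a cross-link there.
\item An up-link whose path passes through $v$ from an uncorrelated child edge $e$ up to $e_v$ cannot exist in the support. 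This is exactly the structural property that the supports of $z(L_v)$ and $z(L_e)$ are disjoint, and it is preserved under splitting since shadows cover subsets of the original paths.
\item Up-links ending at $v$ from the uncorrelated side are redirected to end at $v''$, where they remain up-links.
\item A correlated link through $v$ was already split at its apex, so its shadows stay within a single piece.
\end{itemize}
Thus every support link lies in exactly one piece and is an up-link or a cross-link of that piece. Covering all pieces is then equivalent to covering $T$.

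The main obstacle is verifying that $z'$ restricted to each piece is feasible for that piece's Odd Cut LP, so that \cref{thm:FGKS} applies. Given an odd cut $S$ of a piece, I would lift it to a cut $S'$ of $T$ with the same $\delta_E$ and the same crossing support links. To do this, glue each detached vertex $v''$ back onto $v$. Then put every other piece hanging off the current one (and, via its root, the rest of the tree) entirely on the side of the corresponding attachment vertex. If the attachment is the root $v''$, whose copy $v$ lies outside the piece, place all outside vertices on one side, using complementation of $S$ if needed; the constraint is symmetric under $S\mapsto \overline S$. Since no support link crosses between pieces, the link-side terms agree, and the constraint for $S'$ in $T$ gives the one for $S$.

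Finally, each piece's Odd Cut LP is integral by \cref{thm:FGKS}, so we can find, in polynomial time, an integral solution of cost at most $c(z'|_{\text{piece}})$. The union over pieces, with split shadows mapped back to original links if desired (this only lowers cost), is a feasible WTAP solution. Its cost is at most $c(z')\le 2c(z(L_C))+c(z(L_U))$.
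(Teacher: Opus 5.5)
Your proposal is correct and follows essentially the same route as the paper: split the correlated links into up-link shadows (paying a factor $2$ only on $L_C$), cut the tree at the uncorrelated child edges into edge- and link-disjoint pieces containing only up-links and cross-links, and apply \cref{thm:FGKS} to each piece. Your detached copies $v''$ play exactly the role of the paper's overlapping vertex sets $V_i$, and your lifting of odd cuts from a piece back to $T$ proves the per-piece Odd Cut LP feasibility of the restricted solution, which the paper uses without proof.
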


Thus, combining these gives an approximation guarantee of $1.5 \cdot c(z(L))$ in expectation by running each algorithm with probability $\frac{1}{2}$. Indeed, we then have the expected cost, 
\begin{align*}
    \frac{1}{2} \cdot \Big(
        c(x(L_C \cup L_{UP})) 
        + 2 \cdot c(x(L_U \setminus L_{UP}))
    \Big) \\
    + \frac{1}{2} \cdot \Big(
        2 \cdot c(z(L_C)) + c(z(L_U))
    \Big)
\end{align*}
which can be rewritten as
\begin{align*}
    &\frac{1}{2} \cdot \Big(c(x(L_C \cup L_{UP})) + c(x(L_U\setminus L_{UP}))\\ 
    &+ z(L_C) + z(L_U) \Big) + \frac{1}{2} \Big( c(x(L_U \setminus L_{UP})) + c(z(L_C)) \Big)\,.
\end{align*}
To conclude, we use that $(x,y,z)$ is a Structured Fractional Solution. In particular, that it satisfies $c(x(L)) \leq c(z(L))$ and $c(x(L_U \setminus L_{UP})) \leq c(z(L_U))$ (see \cref{eq:uncorr_cost}). We thus have that each term above is upper bounded by $c(z(L))$ and the total expected cost of our algorithm is at most $1.5 \cdot c(z(L))$. This proves \cref{thm:1.5structured} and, 
by \cref{thm:reduction_to_structure}, this leads to a randomized $1.5+\epsilon$ approximation algorithm. In  \cref{sec:cleanup}, we will improve our approximation guarantee to below 1.49 by improving upon \cref{thm:round_structured}.


\subsection{The Structured LP Rounding Algorithm}\label{subsec:structured-LP-rounding}

We will now discuss the rounding algorithm based on a solution $(x,y)$ to the Structured LP. The intuition for this algorithm is fairly simple. We begin by sampling from the distribution over events at each child of the root independently. These are all uncorrelated children, so we pay a factor of 2 on these links as we have a probability of $x(\ell)$ to sample each link on each of its (up to two) leading edges. Given a link set $L'$ coming into an edge $e$, we then sample link sets for its child edges by choosing from a distribution conditioned on the fact that the solution uses exactly the links $L'$ in the set $L_e$. We do so in such a way so that correlated links and up-links are not sampled independently  and therefore pay a factor of 1, and uncorrelated links pay a factor of 2. In the rest of this subsection we formalize this algorithm and show that the constraints of the Structured LP allow it to support the necessary conditioning operations. 

\paragraph{Conditional Sampling.} Given $E_1,E_2 \in \stars$ with $E_1 \subseteq E_2$ and $\scrE_1 \in \events(E_1)$, to sample from $E_2$ \textit{conditioned} on $\scrE_1$ with $y(\scrE_1) > 0$ means the following: pick exactly one event from \hyperlink{conditionalEvents}{$\events(E_2)_{\mid \scrE_1}$} so that $$\P{\scrE_2 \mid \scrE_1} = \frac{y(\scrE_2)}{y(\scrE_1)}$$
for all $\scrE_2 \in \events(E_2)_{\mid \scrE_1}$. This is a probability distribution due the following simple fact:
\begin{fact}\label{lem:conditioning_well_defined}
    Let $E_1,E_2 \in \stars$ with $E_1 \subseteq E_2$ and $\scrE_1 \in \events(E_1)$ with $y(\scrE_1) > 0$. Then, for all $\scrE_2 \in \events(E_2)_{\mid \scrE_1}$, we have
    $$\sum_{\scrE_2 \in \events(E_2)_{\mid \scrE_1}} \frac{y(\scrE_2)}{y(\scrE_{1})} = 1$$
    So, conditional sampling is well defined.
\end{fact}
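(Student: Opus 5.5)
The fact follows directly from the consistency constraint \eqref{eq:constr:small_prob_1:consistency}; the only work is to check that the constraint applies to this exact pair and that the resulting weights form a probability distribution. Since $E_1,E_2 \in \stars$ with $E_1 \subseteq E_2$ and $\scrE_1 \in \events(E_1)$, the Structured LP contains the constraint
\[
\sum_{\scrE_2 \in \events(E_2)_{\mid \scrE_1}} y(\scrE_2) = y(\scrE_1).
\]
Dividing both sides by $y(\scrE_1)$, which is allowed because $y(\scrE_1) > 0$, gives exactly the claimed identity $\sum_{\scrE_2 \in \events(E_2)_{\mid \scrE_1}} y(\scrE_2)/y(\scrE_1) = 1$.

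It remains to confirm that the weights $y(\scrE_2)/y(\scrE_1)$ define a probability distribution on $\events(E_2)_{\mid \scrE_1}$. Each weight is nonnegative because $y \ge 0$ by the nonnegativity constraints of the Structured LP. The index set $\events(E_2)_{\mid \scrE_1}$ is finite, since $E_2$ is a finite edge set and each event uses a subset of the finite link set $L(E_2)$. Nonnegative weights summing to one on a finite set form a probability distribution, so choosing exactly one $\scrE_2$ with $\P{\scrE_2 \mid \scrE_1} = y(\scrE_2)/y(\scrE_1)$ is well defined.

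The only point needing care is that the consistency constraint must hold for this specific pair $(E_1,E_2)$ and event $\scrE_1$. This is immediate: the constraint is imposed for every pair of stars with $E_1 \subseteq E_2$ and every $\scrE_1 \in \events(E_1)$, and that is precisely the hypothesis of the fact.
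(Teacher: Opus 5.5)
Your proof is correct and matches the paper's: both apply the consistency constraint \eqref{eq:constr:small_prob_1:consistency} to the pair $E_1 \subseteq E_2$ and the event $\scrE_1$, then divide by $y(\scrE_1) > 0$. Your remarks on nonnegativity of $y$ and finiteness of $\events(E_2)_{\mid \scrE_1}$ add a little completeness; the paper leaves these points implicit.
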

\begin{proof}
By consistency, we have
$$\sum_{\scrE_2 \in \events(E_2)_{\mid \scrE_1}} y(\scrE_2) = y(\scrE_1)$$
Therefore,
\[
\sum_{\scrE_2 \in \events(E_2)_{\mid \scrE_1}} \frac{y(\scrE_2)}{y(\scrE_1)} = \frac{y(\scrE_1)}{y(\scrE_1)} = 1. \qedhere 
\]
\end{proof}
As a consequence, the law of total probability behaves as you would expect it to, as the following lemma formalizes:
\begin{lemma}\label{lem:extension_sampling}
Let $E_1,E_2 \in \stars$ with $E_1 \subseteq E_2$. Let $\mu_1$ be a distribution over $\scrE_1 \in \events(E_1)$ so that $\PP{\mu_1}{\scrE_1} = y(\scrE_1)$ for all $\scrE_1 \in \events(E_1)$. Let $\mu_2$ be the distribution over $\events(E_2)$ resulting from sampling $\scrE_1$ from $\mu_1$ and then sampling an event $\scrE_2$ from $E_2$ conditioned on $\scrE_1$. Then:
$$\PP{\scrE_2 \sim \mu_2}{\scrE_2} = y(\scrE_2)$$
for all $\scrE_2 \in \events(E_2)$. 
\end{lemma}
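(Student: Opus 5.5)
The plan is to apply the law of total probability, using that each $\scrE_2 \in \events(E_2)$ has exactly one possible ``parent'' event in $\events(E_1)$. Given $\scrE_2 \in \events(E_2)$, set $L_1 := L(\scrE_2) \cap L(E_1)$ and $\scrE_1^* := (E_1, L_1)$. By the definition of the conditional event sets, $\scrE_2 \in \events(E_2)_{\mid \scrE_1}$ holds exactly when $L(\scrE_1) = L_1$, that is, exactly when $\scrE_1 = \scrE_1^*$. Therefore
\[
\PP{\mu_2}{\scrE_2} = \sum_{\scrE_1 \in \events(E_1),\, y(\scrE_1)>0} y(\scrE_1)\cdot \mathbb{I}[\scrE_2 \in \events(E_2)_{\mid \scrE_1}]\cdot \frac{y(\scrE_2)}{y(\scrE_1)} ,
\]
and the sum has at most one nonzero term, namely the one for $\scrE_1^*$. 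Here we use that sampling from $\mu_1$ produces each $\scrE_1$ with probability $y(\scrE_1)$, and that the conditional step is well defined whenever $y(\scrE_1)>0$ by \cref{lem:conditioning_well_defined}.

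The argument then splits into two cases. In the first case, $\scrE_1^* \in \events(E_1)$ and $y(\scrE_1^*) > 0$. The single surviving term equals $y(\scrE_1^*) \cdot y(\scrE_2)/y(\scrE_1^*) = y(\scrE_2)$, which is what we want.

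In the second case, $\scrE_1^*$ is not an event of $\events(E_1)$ with positive weight, so $\PP{\mu_2}{\scrE_2}=0$ and we need $y(\scrE_2)=0$. If $\scrE_1^* \in \events(E_1)$ with $y(\scrE_1^*)=0$, the consistency constraint \eqref{eq:constr:small_prob_1:consistency} gives $\sum_{\scrE_2' \in \events(E_2)_{\mid \scrE_1^*}} y(\scrE_2') = 0$. Since $y \ge 0$, this forces $y(\scrE_2)=0$.

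The main obstacle is the remaining subcase, where $\scrE_1^*$ is not in $\events(E_1)$ at all. Read literally, events are just tuples $(E',L')$ with $E' \in \stars$ and $L' \subseteq L(E)$, so $(E_1, L_1)$ is always a valid event and this subcase cannot occur. If events are instead understood as restricted to link sets in $\scov(E_1)$, I would argue that $L_1$ inherits coverage and smallness from $L(\scrE_2)$: for every $e \in E_1 \subseteq E_2$ we have $L(\scrE_2) \cap L_e = L_1 \cap L_e$, because $L_e \subseteq L(E_1)$. So $(E_1, L_1)$ again lies in $\events(E_1)$. In either reading every $\scrE_2$ has a well-defined parent, and the lemma follows.
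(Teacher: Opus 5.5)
Your proof is correct and follows the paper's argument: apply the law of total probability over \(\scrE_1\sim\mu_1\), using that exactly one \(\scrE_1\in\events(E_1)\) satisfies \(L(\scrE_1)=L(\scrE_2)\cap L(E_1)\). Your separate treatment of the case \(y(\scrE_1^*)=0\), via the consistency constraint and nonnegativity, is a useful addition, since the paper's computation cancels \(y(\scrE_1)\) without checking this case.
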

\begin{proof}
	Let $\scrE_2 \in \events(E_2)$. Then the probability we sample $\scrE_2$ from $\mu_2$ is:
	\begin{align*}
		&= \sum_{\scrE_1 \in \events(E_1)} y(\scrE_1)\P{\scrE_2 \mid \scrE_1} \\
		&= \sum_{\scrE_1 \in \events(E_1)} y(\scrE_2)\mathbb{I}[\scrE_2 \in \events(E_2)_{\mid \scrE_1}] && \text{By \cref{lem:conditioning_well_defined}} \\
		&= y(\scrE_2) \sum_{\scrE_1 \in \events(E_1)} \mathbb{I}[L(\scrE_1) = L(\scrE_2) \cap L(E_1)] \\
		&= y(\scrE_2)
	\end{align*}
\end{proof}
Where the last inequality follows because there is exactly one such event.
This allows us to define  \cref{alg:structured_rounding}, which we will run using \textsc{Structured-Rounding}$(r,\emptyset)$. It will return a WTAP solution.

\begin{algorithm}\label{alg:structured_rounding}
\caption{\textsc{Structured-Rounding}$(v,L')$}
\label{alg:structured_rounding}
\begin{algorithmic}[1]
    \State $\scrE = (e_v,L')$, $O = \emptyset$ \Comment{$O$ will be a multi-set so that links can be added twice.}
    \If{$E^*(v) \not= \{e_v\}$}
        \State Sample an event $\scrE_{cor}$ from $\events(E^*(v))$ conditioned on $\scrE$. 
        \State Set $\scrE = \scrE_{cor}$.
        \State Add $L(\scrE_{cor}) \smallsetminus L_{e_v}$ to $O$.
    \EndIf
    \For{each $e \in E(v) \smallsetminus E^*(v)$}
        \State Sample an event $\scrE_e$ from $\events(E^*(v) \cup \{e\})$ conditioned on $\scrE$. Add $L(\scrE_e) \cap L_e$ to $O$.
    \EndFor
    \Return $O \cup \bigcup_{u \in \text{children}(v)}$ \textsc{Structured-Rounding}$(u,O \cap L_{\{u,v\}})$
\end{algorithmic}
\end{algorithm}
Note conditional sampling on the event $(e_r,\emptyset)$ for the root is equivalent to simply sampling unconditionally. We also note that the fact that $O$ is a multi-set will only be used in \cref{sec:cleanup} when we add a post-processing procedure to the algorithm. Using that the input $L'$ to the algorithm is always $L(\scrE) \cap L_{v}$ for some event $\scrE$ with $y(\scrE) > 0$, the algorithm can always sample an event in lines (3) and (6) by \cref{lem:conditioning_well_defined}, so it is well-defined. 
Furthermore, since it samples an event for each edge, and each link set corresponding to an event covers its corresponding edges, it is clearly a WTAP solution. 

\begin{lemma}\label{lem:simple_structured_rounding_main_technical}
    For each $v \in V$, the following hold:
    \begin{enumerate}[label=(\roman*)]
        \item For each $L' \subseteq L_{v}$, the probability the algorithm calls \textsc{Structured-Rounding}$(v,L')$ is $y((e_v,L'))$.
        \item If $E^*(v) \not= \{e_v\}$,  $\scrE_{cor}$ in $\events(E^*(v))$ is sampled in line (3) with probability $y(\scrE_{cor})$. 
        \item For each $e \in E(v) \smallsetminus E^*(v)$, each event $\scrE_e \in \events(E^*(v) \cup \{e\})$  is sampled in line (7) with probability $y(\scrE_e)$. 
    \end{enumerate}
\end{lemma}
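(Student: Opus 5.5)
We prove the three statements simultaneously by induction on the depth of $v$ in $T$, processing vertices top-down from the root. The key tool is \cref{lem:extension_sampling}: once we know the event fed into a conditional sampling step is distributed according to $y$, the sampled event is also distributed according to $y$.

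\emph{Base case ($v=r$).} The algorithm is called exactly once, as \textsc{Structured-Rounding}$(r,\emptyset)$. Since $e_r$ is a dummy edge not covered by any link, the only event in $\events(e_r)$ is $(e_r,\emptyset)$. By the coverage constraint \eqref{eq:constr:small_prob_1:coverage} it has $y$-value $1$, so (i) holds at the root.

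\emph{Statements (ii) and (iii) from (i).} Fix $v$ and assume (i) holds for $v$. The event $\scrE=(e_v,L')$ passed in is then distributed as $y$ over $\events(\{e_v\})$.
\begin{itemize}
\item[(ii)] If $E^*(v)\neq\{e_v\}$, line (3) samples from $\events(E^*(v))$ conditioned on $\scrE$. Applying \cref{lem:extension_sampling} with $E_1=\{e_v\}$ and $E_2=E^*(v)$ (both in $\stars$) gives that $\scrE_{cor}$ is sampled with probability $y(\scrE_{cor})$.
\item[(iii)] For $e\in E(v)\smallsetminus E^*(v)$, line (7) conditions on the current $\scrE$. This is either $\scrE_{cor}$, distributed as $y$ on $\events(E^*(v))$ by (ii), or $(e_v,L')$ when $E^*(v)=\{e_v\}$. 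In both cases it is an event on $E^*(v)\in\stars$ with $y$-distribution. Since $E^*(v)\cup\{e\}\in\stars$, a second application of \cref{lem:extension_sampling} gives (iii).
\end{itemize}
The calls for different $e$ are each conditioned on the same $\scrE$. Only the marginal of each one is claimed, so their correlation does not matter.

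\emph{Statement (i) for children.} Let $u$ be a child of $v$, with $f=\{v,u\}=e_u$. The call for $u$ receives $O\cap L_f$. We show this set equals $L(\scrE')\cap L_f$ for an event $\scrE'$ on a star containing $f$ that is distributed according to $y$. There are two cases.
\begin{itemize}
\item If $u$ is correlated, $f\in E^*(v)$. Then $O\cap L_f$ consists of $L(\scrE_{cor})\cap L_f$, minus links in $L_{e_v}$, plus links added in line (7).
\item If $u$ is uncorrelated, $O\cap L_f$ consists of $L(\scrE_f)\cap L_f$ together with other added links that happen to cover $f$.
\end{itemize}

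\emph{Main obstacle.} The difficulty is showing that $O\cap L_f$ equals exactly the link set of the relevant event restricted to $L_f$, with no extra or missing links coming from other parts of $O$.
\begin{itemize}
\item Links of $L_{e_v}\cap L_f$ removed in line 5 are already in $L'$. I will need to check that they are reinserted or otherwise accounted for; possibly the intended input is $L(\scrE_{cor})\cap L_f$, which includes them.
\item Links added for another edge $e'\in E(v)\smallsetminus E^*(v)$ cannot lie in $L_f$ when both $e'$ and $f$ are uncorrelated. Such a link would have to cover both $e'$ and $f$, making both of them its leading edges. For consistency I would argue that $\scrE_{e'}$ and $\scrE_f$ agree on $E^*(v)$, so any shared link is determined by $\scrE$.
\item The disjointness property of Structured Fractional Solutions, namely that $x(L_v)$ and $x(L_e)$ have disjoint supports for uncorrelated $e$, rules out overlap with $L_{e_v}$.
\end{itemize}

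With this identification in place, projecting a $y$-distributed event on a star $F\ni f$ to the edge $f$ yields probability $y((f,L''))$. This follows from the consistency constraint \eqref{eq:constr:small_prob_1:consistency} with $E_1=\{f\}$, $E_2=F$, by summing over $\events(F)_{\mid(f,L'')}$. This gives (i) for $u$ and completes the induction.
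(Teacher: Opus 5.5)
Your skeleton matches the paper's proof exactly:
\begin{itemize}
\item top-down induction;
\item (ii) and (iii) derived from (i) by two applications of \cref{lem:extension_sampling};
\item (i) at a child $u$ obtained by summing the consistency constraint over $\events(F)_{\mid (e_u,K)}$.
\end{itemize}

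The gap is in your ``main obstacle'' step. You try to identify $O\cap L_{e_u}$ with the link set of the sampled event on $L_{e_u}$. That identification is false for the literal pseudocode, and your second bullet argues for it incorrectly.

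Take two distinct uncorrelated children $e',f$ of $v$. Links whose two leading edges are $e'$ and $f$ (uncorrelated in-links or cross-links with apex $v$) exist in general. They cover no edge of $E^*(v)$, so conditioning on $\scrE$ does not fix them. They are sampled independently in $\scrE_{e'}$ and $\scrE_f$, which is exactly why such links are paid twice. \cref{lem:uncorr_consistency} gives only equal marginals on $L_{e'}\cap L_f$, not agreement.

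Here is a concrete failure. Take $v=r$ with uncorrelated children $a,b$ and a cross-link $\ell$ with leading edges $e_a,e_b$. Suppose $e_a$ is covered in the support only by $\ell$ and an up-link $u_a$, each with $x$-value $\tfrac12$. Then $y$ on $\events(e_a)$ must put $\tfrac12$ on each of $\{\ell\}$ and $\{u_a\}$, and $0$ on $\{\ell,u_a\}$. Yet with probability $\tfrac14$ we have $\ell\in L(\scrE_{e_b})$ and $L(\scrE_{e_a})\cap L_{e_a}=\{u_a\}$. In that case the literal call at $a$ receives $\{\ell,u_a\}$, which has $y$-value $0$.

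Your first bullet is also a real failure, not a loose end. Line 5 never puts $L'\cap L_{e_u}$ into $O$, so for a correlated child the literal argument is missing exactly the links that pass through $e_v$.

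The fix is not to prove the identification. Instead, read the recursive call as passing the sampled event's link set on the child edge, with $O$ serving only as the output multiset:
\begin{itemize}
\item for a correlated child, pass $L(\scrE_{cor})\cap L_{e_u}$;
\item for an uncorrelated child, pass $L(\scrE_{e_u})\cap L_{e_u}$.
\end{itemize}
This is what the paper's proof does implicitly. It says ``its call is determined by $\scrE_{cor}$,'' and it computes the call probability as that of sampling $\scrE_{e_c}$ with $L(\scrE_{e_c})\cap L_{e_c}=K$. The next lemma relies on the same reading when it counts one copy of an uncorrelated link ``in each independent call.'' Under that reading nothing needs to be checked, and your final paragraph proves (i) for the children directly.
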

\begin{proof}
    Note that in every run of the algorithm, \textsc{Structured-Rounding}$(v,L')$ is called for every $v \in V$ with some set $L' \subseteq L_{v}$ since \textsc{Structured-Rounding}$(v,L')$ calls itself on each child of $v$ and we begin at the root. We will prove (i) by induction starting from the root of the tree. The base case holds because $y((e_r,\emptyset)) = 1$ and we call \textsc{Structured-Rounding}$(r,\emptyset)$ with probability 1.
    
    So, assume (i) holds for a vertex $v$, and we will prove it for all of its children. We will also show (i) implies (ii) and (iii) for $v$ itself, which is sufficient to show the whole lemma. 
    So, fix a child $c$ and some $K \subseteq L_{e_c}$. We need to show that the algorithm calls \textsc{Structured-Rounding}$(c,K)$ in step (7) with probability $y((e_c,K))$. By the inductive hypothesis, \textsc{Structured-Rounding}$(v,L')$ is called for each $L' \subseteq L_{v}$ with probability $y((e_v,L'))$. In other words, there is a distribution $\mu$ over $\events(e_v)$ where $\PP{\mu}{\scrE} = y(\scrE)$ for all $\scrE \in \events(e_v)$. 
    \begin{itemize}
    \item If $c$ is a correlated child, its call is determined by $\scrE_{cor}$. Our distribution over $\scrE_{cor}$ is the result of first sampling $\scrE$ from $\mu$ and then sampling $\scrE_{cor}$ from $\events(E^*(v))$ conditioned on $\scrE$. Since $\{e_v\} \subseteq (E^*(v))$, we can apply \cref{lem:extension_sampling}, which gives that the probability we obtain a particular event $\scrE_{cor} \in \events(E^*(v))$ is exactly $y(\scrE_{cor})$, proving (ii). Given this, we can compute the probability we call \textsc{Structured-Rounding}$(c,K)$. The events $\scrE_{cor}$ with link set $K$ on $L_{e_c}$ are exactly the events in $\events(E^*(v))_{\mid (e_c,K)}$. So, the probability of calling \textsc{Structured-Rounding}$(c,K)$ is given by:
    $$\sum_{\scrE_{cor} \in \events(E^*(v))_{\mid (e_c,K)}} y(\scrE) = y((e_c,K))$$
    as desired. 
	\item Thus, when we reach line (6), the distribution over $\scrE$ is over $\events(E^*(v))$ with $\P{\scrE} = y(\scrE)$ for all $\scrE \in \events(E^*(v))$. (If $E^*(v) = \{e_v\}$, then this is simply the inductive hypothesis.) Now we sample $\scrE_{e_c}$ for child $c$ conditioned on $\scrE$. This induces a distribution over $\events(E^*(v) \cup \{e_c\})$, and we can apply \cref{lem:extension_sampling} similar to above to conclude each event is sampled with probability $y(\scrE_{e_c})$, proving (iii). In this case, the probability of calling \textsc{Structured-Rounding}$(c,K)$ is the probability we sample an event $\scrE_{e_c}$ where $L(\scrE_{e_c}) \cap L_{e_c} = K$, which is:
	$$\sum_{\scrE_{cor} \in \events(E^*(v) \cup \{e_c\})_{\mid (e_c,K)}} y(\scrE) = y((e_c,K))$$
    \end{itemize}
    And so by induction the claim holds.
\end{proof}
We can use this to show the following:
\begin{lemma}
    Every correlated link $\ell$ and every up-link $\ell$ is selected in a solution to\\ $\textsc{Structured-Rounding}$ with probability $x(\ell)$. Every uncorrelated link $\ell$ which is not an up-link is selected with probability $2x(\ell)$.
\end{lemma}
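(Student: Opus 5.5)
The plan is to track where each link can enter the output multiset $O$. In a call at $v$, a link $\ell$ is added in exactly two places. At line (5) it is added if $\ell \in L(\scrE_{cor}) \smallsetminus L_{e_v}$; such a link covers a correlated edge $e\in E^*(v)\smallsetminus\{e_v\}$ but not $e_v$, so $e$ is a leading edge of $\ell$ and $v=\apex(\ell)$. At line (7) it is added if $\ell \in L(\scrE_e)\cap L_e$ for an uncorrelated child edge $e$. The third bullet of the Structured Fractional Solution definition makes the supports of $x(L_v)$ and $x(L_e)$ disjoint, so any such $\ell$ with $x(\ell)>0$ again does not cover $e_v$. Hence $e$ is a leading edge of $\ell$ with apex $v$. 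Links sampled with $x(\ell)=0$ never occur, since by \cref{lem:simple_structured_rounding_main_technical} events of positive probability have $y>0$, and then \eqref{eq:constr:small_prob_1:marginals} gives $x(\ell)>0$ for every link in them. So a link is added to $O$ only in the call at $\apex(\ell)$, and only through its leading edges. Links passed down via $O\cap L_{\{u,v\}}$ are merely conditioned on, not re-added.

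Now fix a link $\ell$ with $v=\apex(\ell)$ and do a case analysis on its leading edges. For each leading edge $e$ of $\ell$, the probability that $\ell$ is sampled through $e$ equals the probability that the relevant star event contains $\ell$. By \cref{lem:simple_structured_rounding_main_technical} (ii)/(iii), that event has the $y$-distribution over $\events(E^*(v))$ or $\events(E^*(v)\cup\{e\})$. Summing the consistency constraints \eqref{eq:constr:small_prob_1:consistency} down to $\events(e)$ and applying \eqref{eq:constr:small_prob_1:marginals} at $e$, this probability is $\sum_{\scrE\in\events(e)} y(\scrE)\mathbb{I}[\ell\in L(\scrE)] = x(\ell)$.

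The cases are as follows.
\begin{itemize}
\item \emph{Up-link.} There is a single leading edge. If it is correlated, $\ell$ comes only from line (5); if it is uncorrelated, $\ell$ comes only from the single loop iteration of line (7) for that edge. Either way the probability is $x(\ell)$.
\item \emph{Correlated link with two leading edges.} At least one leading edge is correlated, so $\ell$ lies in $L(E^*(v))$.
  \begin{itemize}
  \item If both leading edges are correlated, $\ell$ is added once via $\scrE_{cor}$ with probability $x(\ell)$.
  \item If one leading edge $e'$ is uncorrelated, then at line (7) $\scrE_{e'}$ is sampled conditioned on $\scrE_{cor}$, so $L(\scrE_{e'})\cap L(E^*(v)) = L(\scrE_{cor})$. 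Thus $\ell\in L(\scrE_{e'})\cap L_{e'}$ exactly when $\ell\in L(\scrE_{cor})$: the two additions coincide, giving the same single link. I will count this as selecting $\ell$ with probability $x(\ell)$, adding it once to the solution set; the multiset nuance can be handled by defining the added set as $L(\scrE_e)\cap L_e \smallsetminus L(\scrE_{cor})$, or noting that the cost accounting treats it once.
  \end{itemize}
\item \emph{Uncorrelated non-up-link.} Both leading edges $e_1,e_2$ are uncorrelated, so $\ell\notin L(E^*(v))$. The two samples $\scrE_{e_1},\scrE_{e_2}$ are drawn in separate iterations, each conditioned only on $\scrE$. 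Each contains $\ell$ with probability $x(\ell)$, so by linearity the expected number of copies of $\ell$ is $2x(\ell)$, and the probability of selecting it is at most $2x(\ell)$.
\end{itemize}

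The main obstacle is the correlated-link case with one uncorrelated leading edge, where one must argue that the conditioning makes the two potential additions the same event rather than independent ones. I would first verify carefully that $\ell\in L(E^*(v))$ and that the conditional event set $\events(E^*(v)\cup\{e'\})_{\mid\scrE_{cor}}$ forces agreement on all of $L(E^*(v))$. I would also state precisely that "selected with probability $2x(\ell)$" means the expected multiplicity, which is what the cost bound in \cref{thm:round_structured} requires.
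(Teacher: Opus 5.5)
Your proposal is correct and follows essentially the same argument as the paper. In both, each leading edge of $\ell$ contributes $x(\ell)$ through the marginal-preserving constraint: the paper reads this off part (i) of \cref{lem:simple_structured_rounding_main_technical} at the child endpoint, while you use parts (ii)/(iii) plus consistency. In both, the conditioning on $\scrE_{cor}$ makes the two potential additions of a correlated link coincide, and the $2x(\ell)$ for uncorrelated non-up-links is an expected multiplicity. Your explicit check that links enter $O$ only in the call at $\apex(\ell)$, and your observation that a correlated link with one uncorrelated leading edge is literally inserted into the multiset $O$ twice (once in line (5), once in line (7)), make precise two points the paper leaves implicit.
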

\begin{proof}
    Let $\ell$ be a link with apex $v$. If it has one leading edge $e=\{u,v\}$ (equivalently, if it is an up-link), then by \cref{lem:simple_structured_rounding_main_technical} (i) on vertex $u$, we have the link in our solution with probability:
    $$\sum_{\scrE \in \events(e)} y(\scrE)\mathbb{I}[\ell \in L(\scrE)] = x(\ell)$$
    using the marginal preserving constraints. 
    Otherwise, $\ell$ has two leading edges $e=\{u,v\}$ and $f=\{w,v\}$ and it is an uncorrelated link, then the probability it is in the solution is:
    $$\sum_{\scrE \in \events(e)} y(\scrE)\mathbb{I}[\ell \in L(\scrE)]+\sum_{\scrE \in \events(f)} y(\scrE)\mathbb{I}[\ell \in L(\scrE)] = 2x(\ell),$$
    where we get equality because we will allow ourselves to add multiple copies of a link: one in each independent call. (This will become relevant in \cref{sec:cleanup} where we improve the algorithm.) If $\ell$ is correlated, at least one of $e$ and $f$ is in $E^*(v)$. But then, the call to \textsc{Structured-Rounding} for $u$ and the call to \textsc{Structured-Rounding} for $w$ either both use the link $\ell$, or neither of them use $\ell$. If both $u,w$ are correlated, this is clear as the link sets in their calls are determined by $\scrE_{cor}$. If exactly one is correlated, WLOG $u$, then the call to $w$ is conditioned on the event $\scrE_{cor}$. So, by the definition of conditional sampling, any event sampled for $w$ will have $\ell$ if and only if $\ell \in L(\scrE_{cor})$. Therefore, the link is used with probability $x(\ell)$.
\end{proof}
We get our desired theorem as an immediate corollary:
\roundStructured*

\subsection{The Odd Cut LP Rounding Algorithm}
\label{sec:general-Splitting-Alg}

Here we give a simple algorithm for Structured Fractional Solutions $(x,y,z)$ that gets a factor of 2 on the correlated links and a factor of 1 on the uncorrelated links. We will only use that $z$ is a feasible solution to the Odd Cut LP and that for each $v \in V$ and each edge $e$ from $v$ to an uncorrelated child of $v$, the supports of $z(L_{v})$ and $z(L_{e})$ are disjoint.
This algorithm is essentially an application of \cref{thm:FGKS} (the result of \cite{FGKS18}) which demonstrates that the Odd Cut LP has an integrality gap of 1 on instances with only cross-links and up-links. 

We will exploit the following basic property of Structured LP solutions:
\begin{fact}\label{fact:leading-uncorr-edges}
    Consider the path of edges covered by any link $\ell$ with $z_\ell > 0$ in a Structured LP solution. The only uncorrelated edges that may lie on path are leading edges of $\ell$.
\end{fact}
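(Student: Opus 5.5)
The plan is a short argument by contradiction, using only the third defining property of a Structured Fractional Solution: for every $v$ and every edge $e$ from $v$ to an uncorrelated child, the supports of $z(L_v)$ and $z(L_e)$ are disjoint. The idea is that an uncorrelated edge lying in the interior of $P_\ell$ would force $\ell$ to lie in both of these supports at once.

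First I fix the terminology. Every tree edge is $e_w$ for its lower endpoint $w$; call its parent $v$. By definition, $e_w$ is uncorrelated exactly when $w$ is an uncorrelated child of $v$, i.e.\ $w \notin V_{cor}$. So take a link $\ell$ with $z_\ell > 0$, and suppose $e = vw$ is an uncorrelated edge on $P_\ell$ that is not a leading edge of $\ell$.

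Next I locate $\apex(\ell)$ relative to $v$. The leading edges of $\ell$ are precisely the edges of $P_\ell$ incident to $\apex(\ell)$. Since $e$ is on $P_\ell$ and is not leading, $v \neq \apex(\ell)$. The path $P_\ell$ is the tree path between the endpoints of $\ell$, and $v$ is an interior vertex of it that is not its highest vertex.

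The only step needing care is showing that $P_\ell$ also passes through $e_v$. Since $v$ lies on $P_\ell$ but is not the topmost vertex, the path must leave $v$ upward, through the edge $e_v$. Indeed, if both edges of $P_\ell$ at $v$ went to children of $v$, then $v$ would be the lowest common ancestor of the endpoints, i.e.\ the apex. Also, $v$ cannot be an endpoint of $\ell$ with the path continuing only downward, for the same reason.

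Hence $e_v \in P_\ell$, so $\ell \in L_v \cap L_e$ with $z_\ell > 0$. This puts $\ell$ in the support of both $z(L_v)$ and $z(L_e)$, contradicting their disjointness. Therefore every uncorrelated edge on $P_\ell$ is a leading edge of $\ell$.
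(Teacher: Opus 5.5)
Your proof is correct and follows the paper's argument. A non-leading uncorrelated edge $e=vw$ on $P_\ell$ forces the parent edge $e_v$ onto $P_\ell$, which puts $\ell$ in the supports of both $z(L_v)$ and $z(L_e)$ and contradicts the disjointness property of Structured Fractional Solutions; you also justify why $e_v\in P_\ell$ (via $v\neq\apex(\ell)$), a step the paper simply asserts.
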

\begin{proof}
    Suppose there is an uncorrelated edge $e$ covered by $\ell$ which is not a leading edge. Let $f$ be the parent edge of $e$ covered by $\ell$. Then, where $v$ is the vertex shared by $e$ and $f$, $e$ goes to an uncorrelated child of $v$. So, since $z$ is a Structured LP solution, the supports of $z(L_e)$ and $z(L_f) = z(L_v)$ are disjoint, which contradicts the fact that $z_\ell > 0$. 
\end{proof}



 
\begin{figure}[htb!]
\centering
\begin{forest}
for tree={
  circle, draw, minimum size=0.5cm, inner sep=1pt, outer sep=0pt,
  l sep=30pt, s sep=30pt,
}
[{}, name=a, fill=white
  [{}, name=q, fill=white, edge label={node[midway,left]{}},
    [{}, name=u, fill=white, edge label={node[midway,left]{}}, edge={blue, thick}
      [{}, name=v9, fill=white, edge label={node[midway,left]{}},edge={blue, thick}]
      [{}, name=v, fill=white, edge label={node[midway,left]{}},edge={blue, thick}]
    ]
  ]
  [{}, name=q1, fill=white, edge label={node[midway,right]{}},
    [{}, name=u1, fill=white, edge label={node[midway,right]{}}, edge={blue, thick}
      [{}, name=v1, fill=white, edge label={node[midway,right]{}}]
      [{}, name=w1, fill=white, edge label={node[midway,right]{}}]
    ]
  ]
  [{}, name=q2, fill=white, edge label={node[midway,right]{}}
  	  [{}, name=va1, fill=white, edge label={node[midway,right]{}}]
      [{}, name=wa1, fill=white, edge label={node[midway,right]{}}]
  ]
  ]
]
\draw[dashed, thick, red, bend left=20] (v) to (q1);
\draw[dashed, thick, red, bend left=10] (q) to (q1);
\draw[dashed, thick, red, bend left=10] (u1) to (q2);
\draw[dashed, thick, red, bend left=10] (q2) to (a);
\draw[dashed, thick, red, bend left] (v) to (u1);
\draw[dashed, thick, red, bend left] (v1) to (u1);
\draw[dashed, thick, red, bend right] (v1) to (w1);
\draw[dashed, thick, red, bend left] (u1) to (w1);
\draw[dashed, thick, red, bend left] (va1) to (q2);
\draw[dashed, thick, red, bend right] (wa1) to (q2);
\draw[dashed, thick, red, bend right] (va1) to (wa1);
\draw[dashed, thick, red, bend right] (va1) to (wa1);
\draw[dashed, thick, red, bend left] (v9) to (u);
\draw[dashed, thick, red, bend left] (v9) to (q);
\node[draw=red, thick, rounded corners, inner sep=2pt, fit=(q2)(va1)(wa1)] {};
\node[draw=red, thick, rounded corners, inner sep=2pt, fit=(u1)(v1)(w1)] {};
\end{forest}

\caption{Blue edges are correlated, and the remaining edges are not. Here we show a possible split Structured LP solution. The links of $\tilde{z}$ are in red and all have value $\frac{1}{2}$. There will be three sets $V_i$ for which $E(V_i)$ is non-empty. Two of the sets are shown boxed. The third set contains all vertices not in the two displayed partitions, plus the top vertex of each of the displayed boxed sets. Observe that the sets partition the edges and links, while there can be some overlap in the vertex sets.}\label{fig:disjoint_instances}
\end{figure}
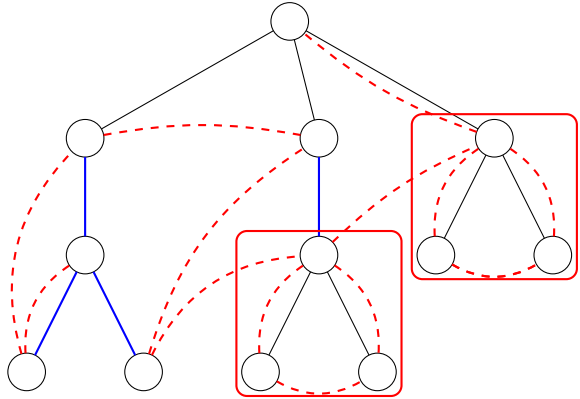

\roundOddCut*
\begin{proof}
    Split all correlated links into up-links (do nothing for correlated links that are already up-links). Let $\tilde{z}$ be the resulting LP solution. Then, $c(\tilde{z}) = 2c(z(L_C)) + c(z(L_U))$. Intuitively, one can perhaps see that now the instance can be split into a collection of edge and link-disjoint instances which consist of only cross-links and up-links as in \cref{fig:disjoint_instances}. At this point we can apply \cref{thm:FGKS} of \cite{FGKS18} to complete the proof. 

    Now we will show the statement formally. For convenience, delete all links $\ell$ with $\tilde{z}(\ell) = 0$. We will show our theorem by creating vertex sets $V_1,\dots,V_n$ of $V$ (one for each vertex) that have the following properties. Let $E(V_i)$ be the set of edges with both endpoints in $V_i$ and $L(V_i)$ the set of links with both endpoints in $V_i$.
    \begin{itemize}
        \item $E(V_1),\dots,E(V_n)$ is a partition of the edge set. 
        \item $L(V_1),\dots,L(V_n)$ is a partition of the set of links and every link in $L(V_i)$ only covers edges in $E(V_i)$.
        \item In each instance $i$ induced by $V_i$, every link in $L(V_i)$ is either a cross-link or an up-link. 
    \end{itemize}
    The theorem follows immediately from these properties and \cref{thm:FGKS} of \cite{FGKS18}, since then $\tilde{z}_{\mid L(V_i)}$ (meaning, $\tilde{z}$ restricted to the links in $L(V_i)$) can be rounded to a solution covering every edge in $E(V_i)$ of cost at most $c(\tilde{z}_{\mid L(V_i)})$. In particular, one can find a vertex solution of the LP in each subproblem to obtain an integral solution of cost at most $c(\tilde{z}_{\mid L(V_i)})$ and return the union at a cost of at most $c(\tilde{z})$. So, in the remainder we will show how to construct this partition and prove it has these three properties.

    To construct $V_1,\dots,V_n$, do the following. For each vertex $i$, add it, all its uncorrelated children $v_1,\dots,v_k$, and all vertices in all maximal paths of correlated edges hanging off of $v_1,\dots,v_k$ to $V_i$. In this way there is a set $V_i$ for each vertex $i$, some of which may only contain vertex $i$. See \cref{fig:disjoint_instances} for an example.
    
    We first show that $E(V_1),\dots,E(V_n)$ is a partition of the edge set (with some $E(V_i)$ possibly equal to the empty set). Let $e=v_iv_j \in E$, where $v_i$ is the parent of $v_j$. If $e$ is uncorrelated, then $e \in E(V_i)$, and this is clearly the unique set containing $e$. If $e$ is correlated, then look at the path from $e$ to the root. Let $e'=v_iv_j$ be the uncorrelated edge that appears first on this path, i.e., is closest to $e$. Such an edge exists because the edges incident to the root are uncorrelated. So, $e \in E(V_i)$, and $e$ cannot be in any other set by definition of $V_1,\dots,V_n$. Thus this is a partition of the edge set.

    Second, we show that $L(V_1),\dots,L(V_n)$ is a partition of the link set of $\tilde{z}$ and every link in $L(V_i)$ only covers edges in $E(V_i)$. Consider a link $\ell=v_iv_j$. If it is uncorrelated, then by \cref{fact:leading-uncorr-edges}, it covers only its leading edge(s) and the maximal paths of correlated edges hanging off of its leading edges. However these edges are all in $E(V_i)$ by definition of $V_i$, so this is the unique set containing $\ell$. Second, consider a correlated up-link $\ell$. Then, by \cref{fact:leading-uncorr-edges}, $\ell$ only covers correlated links. However any path of correlated links is not broken into different sets $V_1,\dots,V_n$ so this link only covers edges in exactly one partition.

    Finally, we notice that for every instance induced by $V_i$, every link is a cross-link or an up-link. The only in-links that could remain are uncorrelated. However, the only uncorrelated links in any partition $E(V_i)$ are those with apex $v_i$, which are thus cross-links. This completes the proof. 
\end{proof}

\section{Cleanup Phase}\label{sec:cleanup}

Here we show how to go below a 1.5 approximation given a Structured LP solution by ensuring that every uncorrelated in-link $\ell$ is used with probability slightly less than $2x(\ell)$. Intuitively, to do so, we would like to positively correlate the distributions over $L_e$ and $L_f$ for every in-link $\ell$ with leading edges $e,f$ so that if $\ell$ is taken on one side, it is more likely to be taken on the other as well (while maintaining the marginals on each side). Unfortunately, we do not know how to do this.\footnote{We are aware of a strategy that gives positive correlation on the order of 1 over the maximum size of the link sets sampled by the algorithm. Unfortunately this is not enough to counteract the error of $\epsilon$ in the cost of the Structured LP solution.} Finding such a strategy would be very interesting. 

Instead, we introduce a \textit{clean-up phase} which runs after the completion of the above algorithm. In this step, we show that each uncorrelated link is made (nearly) redundant with some probability by those sampled from other independent subtrees. 

\subsection{Preliminaries}

For the analysis, we will focus our attention on a vertex $v$ with uncorrelated children $v_1,\dots,v_k$ (for $k \ge 1$). Let the edges $\{v,v_1\},\dots,\{v,v_k\}$ be denoted $e_1,\dots,e_k$.

\paragraph{Subtree Notation.} For each uncorrelated child $v_i$, let $Z_i$ be the edges in the connected component containing $v_i$ in the forest defined by the correlated edges. Let $Z_i^+ = Z_i \cup \{e_{i}\}$. 

\paragraph{Conditioning on $\scrE$.} Given our vertex $v$, let $\scrE$ be the value of $\scrE$ at line (6) in the algorithm in the call \textsc{Structured-Rounding}$(v,L')$ (for some $L'$). 
For every link $\ell$ with apex at or lower than $v$, let $\alge(\ell)$ be the conditional probability that $\ell$ is sampled by \cref{alg:structured_rounding} given the value of $\scrE$ at line (6). 

\paragraph{Subtrees $A^\scrE_i$.}
For each $v_i$, some of the edges in $Z^+_i$ are now likely to be covered by sampling events for \textit{other} uncorrelated children $v_j$ conditioned on $\scrE$. Let $A^\scrE_i \subseteq Z_i^+$ be the collection of edges $e \in Z_i^+$ for which the probability that $e$ is covered by a link in $\bigcup_{j \not= i} L(\scrE_{e_i})$ sampled in line (7) of the algorithm conditioned on $\scrE$ is at least $\gamma > 0$, a constant we will fix later. Also add to $A_i$ all edges in $Z_i^+$ which are covered by a link in $\scrE$.

We will use the following fact, which holds because if $e \in Z_i^+$ is covered by a link $\ell \in \bigcup_{j \not= i} L(\scrE_{e_i})$, $\ell$ covers all ancestors of $e$ in $Z_i^+$ as well.
\begin{fact}\label{fact:Ai-subtrees}
    $A_i^\scrE$ is a sub-tree of $Z_i^+$.
\end{fact}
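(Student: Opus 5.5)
The plan is to verify the two defining properties of a subtree: $A_i^\scrE$ contains the top edge $e_i$ whenever it is nonempty, and it is closed upward within $Z_i^+$. Concretely, if $e \in A_i^\scrE$ and $f \in Z_i^+$ is an ancestor edge of $e$ (so $f$ lies on the path from $e$ up to $v$), then $f \in A_i^\scrE$. Since $Z_i^+$ is a rooted tree hanging from $v$ via $e_i$, upward closure forces every nonempty $A_i^\scrE$ to be a connected subtree containing $e_i$.

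The key observation is geometric. Take a link $\ell$ that covers an edge $e\in Z_i^+$ and lies either in $L(\scrE_{e_j})$ for some $j\neq i$ or in $L(\scrE)$.

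In the first case, $\ell$ is sampled for an edge set containing $e_j$. It covers $e_j$, or at least it has positive $y$-mass on a star containing $e_j$, so its tree path meets the subtree below $e_j$ or passes through $v$. Since the path $P_\ell$ is connected, contains $e\in Z_i^+$, and also reaches outside $T_{v_i}$, it must exit $T_{v_i}$ through $e_i$. It therefore contains the entire path from $e$ up to $v$, which includes every ancestor $f$ of $e$ in $Z_i^+$.

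For links in $L(\scrE)$, recall that $\scrE\in\events(E^*(v))$ and $e_i\notin E^*(v)$ because $v_i$ is uncorrelated. So any link of $\scrE$ covering an edge of $Z_i^+$ also covers some edge of $E^*(v)$, hence edges outside $T_{v_i}$. The same connectivity argument shows it covers all of $e$'s ancestors up to and including $e_i$.

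With this in hand, both membership criteria transfer from $e$ to $f$. For the probabilistic criterion, the event "$e$ is covered by some link in $\bigcup_{j\neq i}L(\scrE_{e_j})$" is contained in the event "$f$ is covered by some such link", because every such link covering $e$ also covers $f$. Hence
\[
\Pr[f \text{ covered}\mid \scrE]\ \ge\ \Pr[e \text{ covered}\mid \scrE]\ \ge\ \gamma .
\]
For the deterministic criterion, a link of $\scrE$ covering $e$ also covers $f$. Either way $f\in A_i^\scrE$, which gives upward closure.

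The main obstacle is justifying, in the first case, that a link $\ell\in L(\scrE_{e_j})$ covering an edge of $Z_i^+$ really leaves $T_{v_i}$. I would handle it as follows. $L(\scrE_{e_j})$ consists of links in $L(E^*(v)\cup\{e_j\})$, so $\ell$ covers $e_j$ or an edge of $E^*(v)$. None of these edges lie in $T_{v_i}^+$ except possibly $e_i$ itself, and if $\ell$ covers $e_i$ the claim about ancestors is immediate. Otherwise $\ell$ covers some edge outside $T_{v_i}^+$, and path connectivity finishes the argument. A link whose path lies entirely inside $T_{v_i}$ cannot appear in these events, so it never affects $A_i^\scrE$.
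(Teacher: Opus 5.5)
Your proof is correct and is essentially the paper's argument. A link in $\bigcup_{j\neq i}L(\scrE_{e_j})$, or in $L(\scrE)$, that covers some $e\in Z_i^+$ must leave $T_{v_i}$ through $e_i$, so it covers every ancestor of $e$ in $Z_i^+$, and hence $A_i^\scrE$ is upward closed. The paper states this in one line; you additionally make explicit the path-connectivity step, the monotonicity of the coverage probability, and the case of edges added because a link of $\scrE$ covers them.
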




\subsection{Cleanup}

For each $e_i$, the set of links in $L_{e_i}$ which may be sampled by other uncorrelated subtrees is $L_{e_i} \smallsetminus L_{UP}$. Call this set $C_i$ as they are the ones which may \textit{cover} edges in $Z_i^+$ from other uncorrelated nodes.

The following is a key technical lemma of this section. It demonstrates that given an event $\scrE$, edges $e \in Z_i^+$ that are \textit{not} in $A_i^\scrE$ are also not likely to be covered by $L(\scrE_i) \smallsetminus L_{UP}$. An important corollary of this is that for such edges, $\alge(L_e \smallsetminus C_i) \ge 1-2\gamma$. In other words, the edges $e \in Z_i^+ \smallsetminus A^\scrE_i$ are likely to be covered by links which have all their leading edges in the tree $Z^+_i$. (Equivalently, this is all links with all their leading edges in $Z_i$ together with the up-links with leading edge $e_i$). 
\begin{lemma}\label{lem:coverage_probability}
    Fix a vertex $v$ and an event $\scrE$ for line (6). Let $e \in Z_i^+ \smallsetminus A_i^\scrE$ for some $1 \le i\le k$ (where recall $v_1,\dots,v_k$ are the uncorrelated children of $v$ with edges $e_1,\dots,e_k$ to $v$). Then, the probability $e$ is covered by a link in $C_i$ from an event sampled from $\events(E^*(v) \cup \{e_i\})_{\mid \scrE}$ is at most $2\gamma$ as long as $\gamma \le \frac{3}{4}$.
\end{lemma}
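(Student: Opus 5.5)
The plan is to decompose the coverage of $e$ by $C_i$-links according to which other child edge the link leaves through. Then I will use the consistency constraints on the three-edge-extended stars $E^*(v)\cup\{e_i,e_j\}$. They transfer each piece to a coverage probability on the independently sampled side $j$, which is small because $e\notin A_i^\scrE$.

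\textbf{Step 1 (classify the relevant links).} A link $\ell\in C_i$ covering $e$ is not an up-link and has $e_i$ as a leading edge. Its other leading edge $f$ lies in $E(v)$ with $f\neq e_i$. If $f\in E^*(v)$ (correlated, or $e_v$), then $\ell\in L(E^*(v))$ and its presence is fixed by $\scrE$. Conditional sampling only produces events agreeing with $\scrE$ on $L(E^*(v))$, so $\ell$ could be sampled only if $\ell\in L(\scrE)$. That would put $e$ in $A_i^\scrE$, since $\ell$ would be a link of $\scrE$ covering $e$, which is a contradiction. Hence every relevant $\ell$ has its other leading edge equal to $e_j$ for some uncorrelated child edge with $j\neq i$. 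Let $C_i^j$ be the set of these links with other leading edge $e_j$. Let $p_j$ be the probability, conditioned on $\scrE$, that the event sampled for $E^*(v)\cup\{e_i\}$ contains a link of $C_i^j$ covering $e$. By a union bound, the quantity to bound is at most $\sum_{j\neq i}p_j$.

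\textbf{Step 2 (transfer to side $j$ via consistency).} The set $E^*(v)\cup\{e_i,e_j\}$ is a star. Apply the consistency constraints \eqref{eq:constr:small_prob_1:consistency} for the inclusions $E^*(v)\subseteq E^*(v)\cup\{e_i\}\subseteq E^*(v)\cup\{e_i,e_j\}$ and $E^*(v)\cup\{e_j\}\subseteq E^*(v)\cup\{e_i,e_j\}$, following the computation in \cref{lem:extension_sampling}. This yields a single conditional distribution on $\events(E^*(v)\cup\{e_i,e_j\})_{\mid\scrE}$ whose marginals on $E^*(v)\cup\{e_i\}$ and on $E^*(v)\cup\{e_j\}$ are exactly the distributions used in line (7) for $e_i$ and for $e_j$. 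In a joint event $\scrE'$, a link $\ell\in C_i^j$ belongs to $L(\scrE')\cap L_{e_i}$ if and only if it belongs to $L(\scrE')\cap L_{e_j}$, because it is simply in $L(\scrE')$ or not. Therefore the event ``the $e_j$-sample contains a link of $C_i^j$ covering $e$'' also has probability exactly $p_j$. In particular, the event sampled for $e_j$ in line (7) covers $e$ with probability at least $p_j$.

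\textbf{Step 3 (independence and the numeric bound).} Conditioned on $\scrE$, the samples in line (7) for different $j$ are independent. Since $e\notin A_i^\scrE$, we get
\[
\gamma > 1-\prod_{j\neq i}(1-p_j)\ \ge\ 1-\exp\Bigl(-\sum_{j\neq i}p_j\Bigr).
\]
Rearranging gives $\sum_j p_j\le -\ln(1-\gamma)$. The function $-\ln(1-\gamma)/\gamma$ is increasing on $(0,1)$, and at $\gamma=3/4$ it equals $\tfrac{4}{3}\ln 4<2$. Hence $-\ln(1-\gamma)\le 2\gamma$ for all $\gamma\le 3/4$, which gives the claimed bound $\sum_j p_j\le 2\gamma$.

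I expect the main obstacle to be Step 2. It requires stating carefully why the line (7) marginals for $e_i$ and for $e_j$, both conditioned on $\scrE$, come from one common distribution on the larger star in which shared links agree. This is exactly where the consistency constraints, and the definition of stars allowing two extra edges, are used.
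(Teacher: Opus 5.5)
Your proof is correct and follows the paper's argument. You partition the relevant $C_i$-links by the other uncorrelated child $v_j$, apply a union bound, transfer each $p_j$ to side $j$ via the consistency constraints on $E^*(v)\cup\{e_i,e_j\}$ (the paper's \cref{lem:uncorr_consistency}), and then use the independence of the line (7) samples. The paper argues by contradiction using $e^{-x}\le 1-x/2$ plus a case split; your direct bound $\sum_j p_j < -\ln(1-\gamma)\le 2\gamma$ reaches the same conclusion slightly more cleanly.
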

\begin{proof}
    We will suppose otherwise and get a contradiction. So, assume the probability $e$ is covered by $L(\scrE') \cap C_i = L(\scrE')\smallsetminus L_{UP}$ for $\scrE' \sim \events(E^*(v) \cup \{e_i\})_{\mid \scrE}$ is greater than $2\gamma$. We will show this would imply that the probability $e$ is covered by links sampled from other uncorrelated children $v_i$ of $v$ is at least $\gamma$, which would contradict $e \not\in A^\scrE_i$. 

    For each $j \in [k]$ with $j \not= i$, let $L_j = L_e \cap L_{e_i} \cap L_{e_j}$ be the links between $v_i$ and $v_j$ that are in $L_e$. 
    Notice that now $\{L_j\}_{j \in [k], j\not= i}$ form a partition of the links $\ell \in L_e \cap C_i$ with $x(\ell) > 0$, as every link in $L_e \cap C_i$ which does not go to an uncorrelated child has already been conditioned on and none have been selected (as $e \not\in A^\scrE_i$). Let $p_j$ be the probability that at least one link in $L_j$ is sampled in $\events(E^*(v) \cup \{e_i\})_{\mid \scrE}$. Then, since we know we cover $e$ with probability at least $2\gamma$ with a link from this set, it must be that: 
    $$\sum_{j \in [k], j \not= i}^k p_j \ge 2\gamma$$
    By \cref{lem:uncorr_consistency} (which we will prove shortly), it is also the case that the probability at least one link in $L_j$ is sampled from $\events(E^*(v) \cup \{e_j\})_{\mid \scrE}$ is $p_j$. This is a natural consequence of the consistency constraints: both distributions should have the same behavior on the link set $L_e \cap L_{e_i} \cap L_{e_j}$. 

    Therefore, since we sample from the distribution $\events(E^*(v) \cup \{e_j\})_{\mid \scrE})$ of each uncorrelated child $v_j$ \textit{independently} conditioned on $\scrE$, the probability we cover $e$ with an edge due to sampling from a correlated node $v_j$ for $i \not= j$ is:
    \[
     1-\prod_{j \in [k], j \not= i} (1-p_i) \ge 1-\exp\left(-\sum_{j \in [k], j \not= i} p_i\right) \ge \frac{1}{2}\sum_{j \in [k], j \not= i} p_i \ge \frac{1}{2} \cdot 2\gamma = \gamma 
    \]
    where we use the inequality $e^{-x} \le 1-\frac{1}{2}x$ for $x \le 1.59$ (if $x \ge 1.59$, the inequality holds for any $\gamma \le \frac{3}{4}$) which is a contradiction.
\end{proof}

We get the following useful fact as a corollary:
\begin{corollary}\label{cor:large-coverage-outside-Ai}
    For every edge $e \in Z_i^+ \smallsetminus A_i^\scrE$, we have 
    $\alge(L_e \smallsetminus C_i) \ge 1-2\gamma$.
\end{corollary}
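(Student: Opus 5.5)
The plan is to read the corollary as a statement about the single sampled event $\scrE_{e_i} \sim \events(E^*(v)\cup\{e_i\})_{\mid \scrE}$ and its consistency with the event sampled below $v_i$. Fix $e \in Z_i^+ \smallsetminus A_i^\scrE$. The event sampled in line (7) for $e_i$, together with the events sampled recursively in $Z_i$, produces a link set covering $e$ with probability $1$. This holds because every edge of $Z_i^+$ is covered by the event sampled for it, and Lemma~\ref{lem:simple_structured_rounding_main_technical} together with the conditional-sampling construction shows that the recursive calls are consistent with $\scrE_{e_i}$ on every shared edge.

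The key steps are as follows.

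\emph{Step 1.} Consider the links of $L_e$ that are sampled as part of the algorithm's handling of $Z_i^+$, i.e.\ by the event chosen for $e$ itself within the recursion rooted at $v_i$ (or $\scrE_{e_i}$ when $e=e_i$). Conditioned on $\scrE$, some link of $L_e$ is sampled there with probability $1$, by the coverage constraint and because the sampled event for $e$ always covers $e$.

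\emph{Step 2.} Split the covering link into two types: links in $L_e \smallsetminus C_i$, and links in $L_e\cap C_i$. A link in $L_e\cap C_i$ is a non-up-link through $e_i$, hence also in $L_{e_i}$. Such a link is in the recursive event for $e$ if and only if it was already in $L(\scrE_{e_i})$, since all descendants' events are sampled conditioned consistently along the path of correlated edges from $e_i$ down to $e$.

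\emph{Step 3.} Use a union bound:
\[
1 \le \Pr[\text{$e$ covered by a sampled link of } L_e\smallsetminus C_i] + \Pr[\text{$e$ covered by } L(\scrE_{e_i})\cap C_i].
\]
By Lemma~\ref{lem:coverage_probability}, the second term is at most $2\gamma$. The first term is at most $\sum_{\ell\in L_e\smallsetminus C_i}\Pr[\ell \text{ sampled}] = \alge(L_e\smallsetminus C_i)$. Rearranging gives the claim.

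The main care point is Step 2. I would need to check that a $C_i$-link appearing in the event sampled for $e$ deeper in $Z_i$ is indeed exactly a link of $L(\scrE_{e_i})\cap C_i$ covering $e$. This follows by induction down the correlated path from $e_i$ to $e$: each conditional sampling step preserves $L(\cdot)\cap L_{f}$ on the parent edge $f$. So the set of links of $L_e\cap L_{e_i}$ that are chosen is determined by $\scrE_{e_i}$, which is exactly what Lemma~\ref{lem:coverage_probability} controls.
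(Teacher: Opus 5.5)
Your proposal is correct and follows essentially the paper's argument: $e$ is covered with probability $1$, \cref{lem:coverage_probability} bounds by $2\gamma$ the probability that the cover comes from $C_i$-links of the event sampled for $e_i$, and the remaining mass is charged to $L_e \smallsetminus C_i$ via a union bound over marginals. Your Step 2 makes explicit a point the paper leaves implicit. That point is that $C_i$-links in the set assigned to $e$ are exactly the $C_i$-links of $\scrE_{e_i}$ covering $e$, by consistency of the conditional sampling along the correlated path from $e_i$ down to $e$.
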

\begin{proof}
         Since $e$ is not covered by $\scrE$ (as $e \not\in A^\scrE_i$), it must be covered in the algorithm by an event $\scrE'$ sampled from $\events(E^*(v) \cup \{e_i\})_{\mid \scrE}$ or by a link with a leading edge in $Z_i$. By \cref{lem:coverage_probability}, it is covered by an edge in $C_i \cap L(\scrE')$ with probability at most $2\gamma$. Therefore, it must be covered by a link in $L_e \cap L_{UP}$ or a link with a leading edge in $Z_i$ with probability at least $1-2\gamma$, which gives the corollary. 
\end{proof}

\begin{lemma}\label{lem:uncorr_consistency}
    Let $v_i,v_j$ be two uncorrelated children of $v$. Let $\mu_i$ be the distribution over link sets sampled from $\events(E^*(v) \cup \{e_i\})_{\mid \scrE}$ and $\mu_j$ the distribution over link sets sampled from $\events(E^*(v) \cup \{e_j\})_{\mid \scrE}$ in step (7) of \cref{alg:structured_rounding}. Then the distribution over link sets in $L_{e_i} \cap L_{e_j}$ is identical in $\mu_i$ and $\mu_j$. 
\end{lemma}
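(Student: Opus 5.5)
The plan is to show that both $\mu_i$ and $\mu_j$, restricted to $L_{e_i}\cap L_{e_j}$, coincide with the distribution $\events(E^*(v)\cup\{e_i,e_j\})_{\mid \scrE}$ induces on that set. The point is that $\stars$ allows $E^*(v)$ plus \emph{two} extra edges of $E(v)$, so $E_3 := E^*(v)\cup\{e_i,e_j\}\in\stars$. Let $E_1 := E^*(v)\cup\{e_i\}$. Then $E^*(v)\subseteq E_1\subseteq E_3$, and symmetrically for $j$. (If $E^*(v)=\{e_v\}$, then $\scrE$ is simply the event on $e_v$, and everything below goes through unchanged.)

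\textbf{Step 1.} For $\scrE_1\in\events(E_1)_{\mid\scrE}$, apply the consistency constraint \eqref{eq:constr:small_prob_1:consistency} to the pair $E_1\subseteq E_3$. This gives $y(\scrE_1)=\sum_{\scrE_3\in\events(E_3)_{\mid\scrE_1}}y(\scrE_3)$.

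\textbf{Step 2.} Check that the conditional sets compose: $\scrE_3\in\events(E_3)_{\mid\scrE}$ if and only if the restriction $\scrE_1$ with $L(\scrE_1)=L(\scrE_3)\cap L(E_1)$ lies in $\events(E_1)_{\mid\scrE}$ and $\scrE_3\in\events(E_3)_{\mid\scrE_1}$. This holds because $L(E^*(v))\subseteq L(E_1)$. Here I must confirm that the restriction is actually an event in $\events(E_1)$, meaning it covers $E_1$ and is small on it. That follows because smallness and coverage are edgewise conditions inherited from $\scrE_3$. Alternatively, events with $y>0$ exist with this link set by consistency, and events of zero mass are irrelevant.

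\textbf{Step 3.} Sum over $\scrE_1$ with a prescribed value $S=L(\scrE_1)\cap L_{e_i}\cap L_{e_j}$. Dividing by $y(\scrE)$, Step 1 and Step 2 give
\[
\Pr_{\mu_i}\bigl[L\cap L_{e_i}\cap L_{e_j}=S\bigr]=\frac{1}{y(\scrE)}\sum_{\scrE_3\in\events(E_3)_{\mid\scrE}}y(\scrE_3)\,\mathbb{I}\bigl[L(\scrE_3)\cap L_{e_i}\cap L_{e_j}=S\bigr].
\]
This uses $L_{e_i}\cap L_{e_j}\subseteq L(E_1)$, so the restriction to $E_1$ preserves the intersection with $L_{e_i}\cap L_{e_j}$. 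The right-hand side is symmetric in $i$ and $j$, so the same computation for $\mu_j$ gives the same value, and the lemma follows.

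The main obstacle is the bookkeeping in Step 2: verifying that $\events(E_3)_{\mid\scrE}$ decomposes as a disjoint union over $\scrE_1\in\events(E_1)_{\mid\scrE}$ of $\events(E_3)_{\mid\scrE_1}$, including the membership of restrictions in $\events$. That membership is exactly where the definitions of $\scov$ and $\stars$ must be used carefully. I would handle it with the observation that each $\scrE_3$ determines a unique restriction, as in the last line of the proof of \cref{lem:extension_sampling}.
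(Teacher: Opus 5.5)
Your proof is correct and takes essentially the paper's route: both arguments pass through the larger star $E^*(v)\cup\{e_i,e_j\}\in\stars$, and both use that the links in $L_{e_i}\cap L_{e_j}$ are already fixed at the level of $E^*(v)\cup\{e_i\}$ (respectively $E^*(v)\cup\{e_j\}$). The paper phrases this as two orders of conditional sampling yielding the same distribution via \cref{lem:extension_sampling}; you instead carry out the computation explicitly with the consistency constraint, which also supplies the conditional-on-$\scrE$ version of that lemma that the paper uses only implicitly.
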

\begin{proof}
    $E^*(v) \cup \{e_i,e_j\}$ is in $\stars$ since it is $E^*(v)$ together with at most two uncorrelated edges. By \cref{lem:extension_sampling}, the distribution $\mu$ over $\events(E^*(v) \cup \{e_i,e_j\})$ can be obtained in two different ways:
    \begin{enumerate}[label=(\roman*)]
        \item First sample $\scrE \sim \events(E^*(v))$. Then, sample $\scrE' \sim \events(E^*(v) \cup \{e_i\})_{\mid \scrE}$. Finally, sample $\scrE'' \sim \events(E^*(v) \cup \{e_i,e_j\})_{\mid \scrE'}$.
        \item First sample $\scrE \sim \events(E^*(v))$. Then, sample $\scrE' \sim \events(E^*(v) \cup \{e_j\})_{\mid \scrE}$. Finally, sample $\scrE'' \sim \events(E^*(v) \cup \{e_i,e_j\})_{\mid \scrE'}$.
    \end{enumerate}
    In other words, it does not matter which order we condition on $i$ and $j$ in; both distributions lead to $\mu$. However, using the definition of conditional sampling, the links in $L_{e_i} \cap L_{e_j}$ are fixed after sampling either from $\events(E^*(v) \cup \{e_i\})_{\mid \scrE}$ in (i) or from $\events(E^*(v) \cup \{e_i\})_{\mid \scrE}$ in (ii). Therefore the distributions must be identical on this link set conditioned on $\scrE$. 
\end{proof}

This means that for every $\scrE$ chosen in line (6), if we fix a link $\ell \in L_{e_i}$, we can look at the path of edges it covers in $Z^+_{i}$, and for each one, either there is a probability of $\gamma$ that it is covered by links sampled for other uncorrelated children or it is likely to be covered by links with all their leading edges in $Z_i^+$. What we want to do now is for each $i$, \textbf{remove} some copies of links for $v_i$  which are not in $L_{UP}$ and have $P_\ell \cap A^\scrE_i$ covered by links sampled in events for other uncorrelated nodes $v_j$ for $j \not= i$. (Recall that $P_\ell$ is the set of edges link $\ell$ covers.) For each link removed in such a way, we will need to cover the edges $e \in P_\ell \smallsetminus A^\scrE_i$. We will do so with links that have all their leading edges in $Z^+_i$.

We need to be slightly careful, as we don't want to have link $\ell_1$ deleted because of the presence of $\ell_2$ and then also delete $\ell_2$ because of $\ell_1$. So, to deal with this in a simple way, we will make each $i \in [k]$ \textit{protected} with probability $\frac{1}{2}$, and we will never remove protected links. We will formalize this now. Let $V^{prot}$ be the set of protected uncorrelated vertices. 

\paragraph{Protected Links and Copies.} Since we may sample each uncorrelated link twice (once for each of its leading edges), as discussed above, we think of each uncorrelated link in $L_{e_i} \cap L_{e_j}$ as having two copies instead, one for $v_i$ and one for $v_j$. We say the copy $\ell$ for $v_i$ is protected if index $i$ is protected.  Otherwise the copy of $\ell$ for $v_i$ is unprotected. Let $L^{prot}$ be the union of all protected links sampled by the algorithm. We will sometimes just call $L^{prot}$ the set of protected links. 

\begin{figure}[htb!]
\centering
\begin{forest}
for tree={
  circle, draw, minimum size=0.5cm, inner sep=1pt, outer sep=0pt,
  l sep=30pt, s sep=30pt,
}
[{$v$}, name=r, fill=white, edge={blue, thick},
    [{$v_i$}, name=a, fill=white, edge={blue, thick},
        [{}, name=q, fill=white, edge label={node[midway,left]{$Q_1$}}
            [{}, name=u, fill=white, edge label={node[midway,left]{$Q_1$}}]
            [{}, name=u2, fill=white, edge label={node[midway,right]{$Q_1$}}]
        ]
        [{}, name=q1, fill=white, edge label={node[midway,right]{}}, edge={blue, thick}
            [{}, name=u1, fill=white, edge label={node[midway,right]{}}, edge={blue, thick},
                [{}, name=v1, fill=white, edge label={node[midway,left]{$Q_2$}}]
                [{}, name=w1, fill=white, edge label={node[midway,right]{$Q_3$}}]
            ]
        ]  
        [{}, name=q2, fill=white, edge label={node[midway,right]{}}, edge={blue, thick}
            [{}, name=va1, fill=white, edge label={node[midway,right]{$Q_4$}}]
        ]
    ]
]
\end{forest}

\caption{Shown is an example of a tree $Z_i^+$, where in bold blue is the subtree $A_i^\scrE$. $\cQ_i$ has four members, $Q_1,\dots,Q_4$, labeled here. Note that $Q_2$ and $Q_3$ are treated as different subtrees.}\label{fig:Qtrees}
\end{figure}
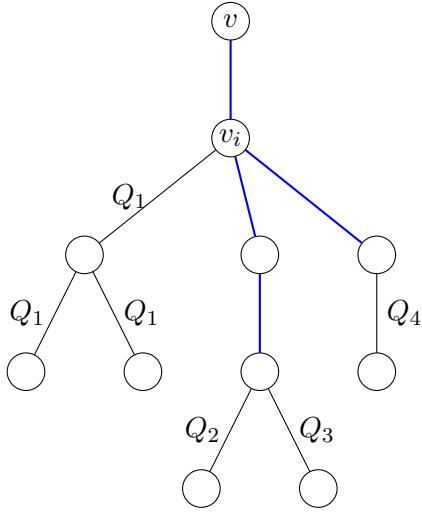

\paragraph{Subtrees $\cQ_i$, Active Subtrees, and $ADD(Q)$.} For each event $\scrE$ and each $v_i$, by \cref{fact:Ai-subtrees} $A^\scrE_i$ is a subtree of $Z_i^+$. So, $Z_i \smallsetminus A^\scrE_i$ is a collection of edge disjoint subtrees. In particular, we will start each subtree at an edge, not a vertex, so that each subtree has a unique highest edge in $Z_i \smallsetminus A_i^\scrE$ (see \cref{fig:Qtrees} for an example). Call this collection $\cQ_i$ (while it depends on $\scrE$, we will drop having this as a superscript for notational ease). For a subtree $Q \in \cQ_i$, we let $ADD(Q)$ be the cheapest up-link solution covering $Q$. Intuitively, $ADD(Q)$ is not too expensive as by \cref{cor:large-coverage-outside-Ai} its edges are all very likely to be covered by links with all their leading edges in $Z_i^+$. We say a subtree $Q \in \cQ_i$ with root $r$ is \hypertarget{active}{\textbf{active}} if $e_r$, the parent edge of $r$, is covered by links in $L^{prot}$, and index $i$ is unprotected. (We use the term active to communicate that in this event, we may need to take action to cover edges in $Q$.)

\begin{fact}\label{fact:subtrees-covered}
    For every $i$ and every $\scrE$, each subtree $Q \in \cQ_i$ is active with probability at least $\frac{1}{4}\gamma$.
\end{fact}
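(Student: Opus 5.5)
The plan is to fix $i$, the event $\scrE$ at line (6), and a subtree $Q\in\cQ_i$ with root $r$, and then lower-bound the probability of the two requirements for activity separately. Since $Q$ is a component of $Z_i\smallsetminus A^\scrE_i$ that starts at its unique highest edge, the parent edge $e_r$ of its root lies in $Z_i^+$ but not in $Z_i\smallsetminus A_i^\scrE$. By \cref{fact:Ai-subtrees}, $A_i^\scrE$ is a subtree of $Z_i^+$ containing the top edge $e_i$, so $e_r\in A^\scrE_i$. The protection coins are flipped independently of the sampling, and independently across indices. So the event ``$i$ is unprotected'' has probability $\frac12$ and is independent of everything concerning the other indices $j\neq i$ and their samples. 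It therefore suffices to show that, conditioned on $\scrE$, $e_r$ is covered by $L^{prot}$ with probability at least $\frac12\gamma$, using only the samples and coins of indices $j\neq i$ (together with $\scrE$ itself).

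By the definition of $A_i^\scrE$, there are two cases for $e_r\in A^\scrE_i$.

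\emph{Case 1: $e_r$ is covered by a link of $\scrE$.} Such links are fixed before line (7) and are not copies attached to an uncorrelated index, so the cleanup never removes them. I would treat them as belonging to $L^{prot}$; this is the reading under which the definition of ``active'' is meant to apply. Then $e_r$ is covered by $L^{prot}$ with probability $1$, and $Q$ is active with probability $\frac12\ge\frac14\gamma$. One point to check is that the formal definition of $L^{prot}$ indeed includes these conditioned links; if it does not, I would slightly widen the definition so that it does, since these links are never removed anyway.

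\emph{Case 2: $e_r$ is covered with probability at least $\gamma$ by links sampled in line (7) for the other uncorrelated children $v_j$, $j\ne i$.} Given $\scrE$, these samples are independent across $j$. Let $J$ be the random set of indices $j\neq i$ whose sampled event $\scrE_{e_j}$ contains a link covering $e_r$; by assumption $\Pr[J\ne\emptyset]\ge\gamma$. The protection coins of the indices $j\neq i$ are independent of $J$. Hence, conditioned on any nonempty realization of $J$, some $j\in J$ is protected with probability $1-2^{-|J|}\ge\frac12$, and the copy of the covering link for that $v_j$ is then in $L^{prot}$. So $e_r$ is covered by $L^{prot}$ with probability at least $\frac12\gamma$. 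Multiplying by the independent factor $\frac12$ for $i$ being unprotected gives $\frac14\gamma$.

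The main obstacle is not the probability calculation but the bookkeeping: confirming that $e_r$ is a genuine edge of $A_i^\scrE$, including the case where $Q$ hangs directly below $v_i$, so that $e_r=e_i$. In that case $e_i$ still lies in $A_i^\scrE$: it is the top edge of the subtree, and since $A_i^\scrE\ni e_r$ for some edge, it contains all ancestors within $Z_i^+$. The other point to verify is that conditioning on $\scrE$ preserves the independence across $j$ and of the coins. Both facts follow directly from how line (7) samples each uncorrelated child independently given $\scrE$.
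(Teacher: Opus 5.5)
Apart from one step, your argument is the paper's. The paper shows $e_r\in A_i^\scrE$, so $e_r$ is hit with probability at least $\gamma$ by links sampled in line (7) for other uncorrelated children, and the independent protection coins supply the factor $\frac14$. Your set $J$ makes precise the paper's loose ``$j$ is protected''. You are also right that the paper's one-line proof ignores the second clause in the definition of $A_i^\scrE$. Under the literal definition of $L^{prot}$ the bound can fail when $e_r$ is covered only by a link of $\scrE$. Counting those links as protected is a sound repair: such a link is correlated (it covers $e_i$ and a correlated child edge of $v$) and covers no edge of any $Q'\in\cQ_i$, so \textsc{Cleanup} never removes it.

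The gap is in the step you yourself single out as the main obstacle. When $Q$ hangs directly below $v_i$, so that $e_r=e_i$, you deduce $e_i\in A_i^\scrE$ from ``$A_i^\scrE\ni e_r$'', which is the claim being proved. \cref{fact:Ai-subtrees} only says that $A_i^\scrE$ is closed under taking ancestors inside $Z_i^+$. Hence $e_i\in A_i^\scrE$ if and only if $A_i^\scrE\neq\emptyset$, and nothing in the definitions forces $A_i^\scrE\neq\emptyset$.

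For a concrete instance, suppose:
\begin{itemize}
\item $v$ has no correlated child;
\item $v_i$ has a correlated child, so $Z_i\neq\emptyset$;
\item no link in the support joins $T_{v_i}$ to any $T_{v_j}$ with $j\neq i$.
\end{itemize}
Then no link of $\scrE$ and no link sampled for another $v_j$ meets $Z_i^+$, so $A_i^\scrE=\emptyset$. Every $Q\in\cQ_i$ has root $v_i$. Every link covering $e_i$ is sampled for $v_i$ itself, so $e_i$ is never covered by $L^{prot}$ while $i$ is unprotected, and $Q$ is never active.

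So this step cannot be proved from the stated definitions, and the statement is false in this degenerate case. The paper's proof makes the same tacit assumption when it simply asserts ``$r\in A_i^\scrE$''. To make your argument complete, state the assumption explicitly. For example, restrict $\cQ_i$ to components of $Z_i\smallsetminus A_i^\scrE$ whose top edge hangs from an edge of $A_i^\scrE$; equivalently, assume $A_i^\scrE\neq\emptyset$ whenever $\cQ_i\neq\emptyset$. The case $A_i^\scrE=\emptyset$ must then be handled separately in the cleanup analysis.
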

\begin{proof}
    Let $r$ be the root of $Q$. Then, $r \in A_i^\scrE$. So, $e_r$, the parent edge of $r$, is covered by a copy of a link $\ell$ for $v_j$ with $\ell \in \bigcup_{j \not= i} L(\scrE_{e_i})$ with probability at least $\gamma$. With probability $\frac{1}{4}$, $i$ is unprotected and $j$ is a protected, in which case $Q$ is active.
\end{proof}

\paragraph{Dominated Links.} We say the copy of an uncorrelated link $\ell$ for $v_i$ which is not an up-link is \textit{dominated} if:
\begin{enumerate}
    \item If $P_\ell \cap Z^+_i \subseteq A_i^\scrE$, we will require that $i$ is unprotected and $A^\scrE_{i} \cap P_\ell$ is covered by links in $L^{prot}$, i.e., for all $e \in A^\scrE_{i} \cap P_\ell$, there exists a link $\ell' \in L^{prot}$ so that $\ell' \in L_e$. 
    \item If $P_\ell \cap Z^+_i \not\subseteq A_i^\scrE$, $\ell$ covers at least one edge in exactly one subtree $Q \in \cQ_i$ (there is only one subtree as $P_{\ell} \cap Z_i$ is a path). For such a copy of a link $\ell$ for $v_i$, let $Q_\ell \in \cQ_i$ be the subtree it covers an edge in. For $\ell$ to be dominated, we required that $Q_\ell$ is \hypertarget{active}{active}.
\end{enumerate}
Two remarks are needed here. First notice that the copy of $\ell$ for $v_i$ may be dominated and not sampled, and in fact the event that it is sampled and the event that it is dominated are independent conditioned on $\scrE$. Second, notice that for links $\ell$ with  $P_\ell \cap (Z_i \smallsetminus A_i^\scrE) \not= \emptyset$, the stated conditions are equivalent. We use the above setup only to make arguments later in this section more natural. 


\begin{lemma}
    For any fixed $i$ and uncorrelated link $\ell \in L_{e_i}$ which is not an up-link, the copy of link $\ell$ for $v_i$ is dominated with probability at least $\frac{1}{4}\gamma$.
\end{lemma}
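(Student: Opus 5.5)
We condition on the event $\scrE$ fixed at line (6) and split according to the two cases in the definition of a dominated copy. Everything is computed over the independent samples for the uncorrelated children $v_j$, $j\neq i$, together with the independent protection coins.

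\textbf{Case 2: $P_\ell \cap Z_i^+ \not\subseteq A_i^\scrE$.} Here the copy of $\ell$ for $v_i$ is dominated exactly when the subtree $Q_\ell \in \cQ_i$ is active. Since $Q_\ell$ is determined by $\ell$ and $\scrE$ alone, \cref{fact:subtrees-covered} gives probability at least $\frac14\gamma$ directly.

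\textbf{Case 1: $P_\ell \cap Z_i^+ \subseteq A_i^\scrE$.} We need two things: $i$ is unprotected, and every edge of $A_i^\scrE \cap P_\ell$ is covered by $L^{prot}$.

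The key observation is that $P_\ell \cap Z_i^+$ is a path starting at $e_i$ and going down. Let $e$ be its lowest edge, which lies in $A_i^\scrE$. Any link $\ell'$ from another uncorrelated subtree (or from $\scrE$) that covers $e$ must enter $Z_i^+$ through $e_i$. It therefore covers every ancestor of $e$ in $Z_i^+$, so it covers all of $P_\ell \cap Z_i^+$. This is the same reasoning as in \cref{fact:Ai-subtrees}. Hence it suffices to cover the single edge $e$ by a protected link.

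We now bound the probability of that.
\begin{itemize}
\item If $e$ is covered by a link in $\scrE$, that link has apex above $v$ or is correlated. It is present deterministically, so we treat it as protected (it is never removed). The event then reduces to ``$i$ unprotected'', which has probability $\frac12 \ge \frac14\gamma$.
\item Otherwise, since $e \in A_i^\scrE$, with probability at least $\gamma$ some link sampled for some $v_j$, $j\neq i$, covers $e$. Condition on the samples. The protection coins are independent of the sampling, so with probability at least $\frac14$ index $i$ is unprotected and the witnessing $j$ is protected. This gives at least $\frac14\gamma$.
\end{itemize}

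\textbf{Remaining steps.} The edges of $P_\ell$ outside $Z_i^+$ lie in the other subtree; the definition only asks about $A_i^\scrE\cap P_\ell$, so they need no treatment. Finally, the bound holds for every $\scrE$, so averaging over $\scrE$ preserves it.

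\textbf{Main obstacle.} The delicate step is justifying that coverage of the deepest edge implies coverage of the whole path. It also requires that the links of $\scrE$ count as ``in $L^{prot}$'', or alternatively that the definition of $A_i^\scrE$ via $\scrE$ is handled separately. I would resolve this by interpreting links of $\scrE$ as never removable, hence protected. Failing that, I would note that such edges are already covered by a fixed link and re-read condition 1 accordingly.
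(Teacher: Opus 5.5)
Your proposal is correct and follows the paper's proof essentially verbatim. It uses the same split on whether $P_\ell \cap Z_i^+ \subseteq A_i^\scrE$: Case~2 is reduced to \cref{fact:subtrees-covered}, and Case~1 goes through the lowest edge $e$, the probability $\gamma$ that some $v_j$ with $j\neq i$ covers $e$, and the independent factor $\frac14$ from the protection coins, followed by averaging over $\scrE$. You additionally make explicit two points the paper's proof leaves implicit: that covering $e$ from outside $Z_i^+$ covers all of $P_\ell \cap Z_i^+$, and the subcase where $e \in A_i^\scrE$ because a link of $\scrE$ covers it; treating such links as protected is a reasonable reading, since the cleanup phase never deletes them.
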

\begin{proof}
    Fix any event $\scrE$ in line (6) of \cref{alg:structured_rounding}. If $P_\ell \cap Z_i^+ \not\subseteq A_i^\scrE$, it is dominated exactly when $Q_\ell \in \cQ_i$ is active. By \cref{fact:subtrees-covered}, this occurs with probability at least $\frac{1}{4}\gamma$.

    Otherwise, $P_\ell \cap Z_i^+ \subseteq A_i^\scrE$. Let $e$ be the lowest edge in $P_\ell \cap A_i^\scrE$. Then, $\ell$ is dominated when $e$ is covered by a sampled, protected link and $i$ is an unprotected index. $e$ is covered by a sampled link in $L(\scrE_{e_j})$ for some $j \not= i$ with probability at least $\gamma$ by definition of $A_i^\scrE$. In this case, $i$ is unprotected and $j$ is protected with probability $\frac{1}{4}$. 

    Since this holds for any choice of $\scrE$, $\ell$ is dominated with probability at least $\frac{1}{4}\gamma$. 
\end{proof}

%

Recall our plan is to remove some sampled copies of uncorrelated links $\ell \not\in L_{UP}$ which are dominated. In particular, our goal is to remove each link with constant probability conditioned on it being sampled. The notion of protected links  resolved the issue of removing $\ell_1$ for $\ell_2$ and vice versa, however, we still cannot remove links freely, as the definition of domination does \textit{not} imply that $P_\ell$ is covered by protected links, just that $P_\ell \cap A_{e_i}^\scrE$ is covered by protected links. Consider a link $\ell$ which we would like to remove with some probability. Since it is not an up-link, it goes between two uncorrelated nodes $v_i,v_j$. By the definition of Structured Fractional Solutions, we have $P_\ell \subseteq Z_i^+ \cup Z_j^+$ by \cref{fact:leading-uncorr-edges}. To remove the copy of a link $\ell$ for $v_i$, we only need to cover $P_\ell \cap Z_i^+$ since we sampled this copy of $\ell$ for $v_i$. So, if we remove this copy of $\ell$,  we must cover the edges in $P_\ell \cap (Z_i^+ \setminus A_i^\scrE)$. This will exactly be the role of $\cQ_i$. 

We are now ready to define the clean-up procedure formally.  
Before invoking it, we will run \cref{alg:structured_rounding}. Let $O$ be the set of links produced. During its run, for each node $v$, record as $\scrE_v$ the value of $\scrE$ in line (6) when \textsc{Structured-Rounding}$(v,L')$ is called for some $L'$. For each uncorrelated node $v_i$ also record as $O_{v_i}$ the new links in $O$ sampled in line (7) of \cref{alg:structured_rounding}, i.e., in $\scrE_{e_i}$. In this way $O_{v_i}$ contains the copies of links sampled for $v_i$.  Finally, after running  \cref{alg:structured_rounding}, mark each uncorrelated node as protected or unprotected with probability $\frac{1}{2}$. Call the set of sampled, protected links $L^{prot} \subseteq O$ and the set of protected vertices $V^{prot}$.

\begin{algorithm}
\caption{\textsc{Cleanup}$(O,V^{prot},L^{prot},\{\scrE_v\}_{v \in V}, \{O_v\}_{v \in V})$}\label{alg:cleanup}
\begin{algorithmic}[1]
    \For{each uncorrelated vertex $v_i \not\in V^{prot}$}
        \State Let $\scrE = \scrE_{v_i}$.  Compute $A^{\scrE}_i$ and $\cQ_i$. 
        \State Remove all dominated links $\ell$ in $O$ for which $P_\ell \cap Z^+_i \subseteq A_i^\scrE$. 
        \State For each $Q \in \cQ_i$, if $Q$ is \hyperlink{active}{active} and if $c(ADD(Q)) < c(O_{v_i} \cap L(Q))$, delete all links in $O_{v_i} \cap L(Q)$ and add $ADD(Q)$ to the solution.
    \EndFor
\end{algorithmic}
\end{algorithm}

Note that \cref{alg:cleanup} can be made to run in polynomial time as the \hyperlink{cutlp}{Cut LP} relaxation for WTAP is integral on instances consisting of only up-links. (Recall our instance is shadow-complete so there exist up-links covering every edge.) Furthermore, we can estimate the probability an edge in $Z_i$ is covered up to negligible error, allowing us to compute $A_i^\scrE$. We will ignore this error in the analysis.  

We will also consider the following algorithm we call \textsc{Cleanup-Analysis}. There are two key differences:
\begin{enumerate}
    \item We will modify the definition of active in the following way. For each $i$ and every subtree $Q \in \cQ_i$, recall that by \cref{fact:subtrees-covered}, $Q$ is active with probability at least $\frac{1}{4}\rho$ (conditioned on $\scrE$). In this modified algorithm, subsample the active event for each subtree $Q \in \cQ_i$ and each $i$ so that it is active with probability exactly $\frac{1}{4}\rho$. If $p_Q$ is the probability $Q$ is active, this can be done by creating a Bernoulli with success probability $\frac{\gamma}{4p_{Q}}$ and calling $Q$ active only if it is active and this Bernoulli is equal to 1.    
    \item Instead of only removing link sets $O_{v_i} \cap L(Q)$ only if they are sufficiently cheap in step (4) of \cref{alg:cleanup}, we will instead remove every dominated link $\ell$ after sub-sampling the active events for subtrees $Q$ according to the above. In this way, every link with $P_\ell \cap Z_i \not\subseteq A_i^\scrE$ is dominated with probability exactly $\frac{1}{4}\gamma$. 
    To cover the edges that are now uncovered, instead of adding $ADD(Q)$ for each subtree $Q \in \cQ_i$ with a dominated link, we will do the following. By \cref{cor:large-coverage-outside-Ai}, for every $e \in Z_i^+ \smallsetminus A_i^\scrE$, we have $\alge(L_{e} \smallsetminus C_i) \ge 1-2\gamma$. 
    Consider the feasible up-link only LP solution $\tilde{x}$ to the sub-problem of covering $Z^+_i \smallsetminus A^\scrE_i$ which results from starting with $\frac{1}{1-2\gamma} \cdot \alge(\ell)$ on all links $\ell$ which have all their leading edges in $Z^+_i$, and then splitting every link which is not an up-link into two up-links (maintaining the original $\alge$ value on each). 
    Now, since the support of $\tilde{x}$ is only on up-links, the vertices of the resulting LP are integral, so we can decompose $\tilde{x}$ into a convex combination of solutions covering $Z_i^+ \smallsetminus A_i^\scrE$ so that the probability link $\ell'$ is used is $\tilde{x}(\ell')$. Sample a solution $S$ from this convex combination. Then, for each $Q \in \cQ_i$ with a dominated link, add $S \cap L(Q)$ to the solution. 
\end{enumerate}

\begin{fact}
    The solution returned by \cref{alg:cleanup} is at most the cost of the solution returned by \textsc{Cleanup-Analysis} (with probability 1). 
\end{fact}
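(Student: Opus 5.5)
The plan is a coupling argument. We run \cref{alg:cleanup} and \textsc{Cleanup-Analysis} on the same randomness: the same output $O$ of \cref{alg:structured_rounding}, the same events $\{\scrE_v\}$, and the same protection marks. The extra randomness of \textsc{Cleanup-Analysis} (the Bernoulli subsampling of active events and the sampled up-link solution $S$) is drawn independently on top. We then compare the two outputs vertex by vertex, and subtree by subtree within each uncorrelated vertex $v_i \notin V^{prot}$. The goal is to show that, for each such piece, the cost \cref{alg:cleanup} leaves is at most the cost \textsc{Cleanup-Analysis} leaves.

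\textbf{Links inside $A_i^\scrE$.} First consider dominated links $\ell$ with $P_\ell \cap Z_i^+ \subseteq A_i^\scrE$. Both algorithms delete exactly these links, since the subsampling only changes the notion of active for subtrees $Q \in \cQ_i$, and this type of domination does not refer to active subtrees. So the two algorithms agree here.

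\textbf{Subtrees $Q \in \cQ_i$.} Fix a subtree $Q$ and split into cases.
\begin{itemize}
\item If $Q$ is not active in the original sense, then it is not active after subsampling either. Neither algorithm touches $O_{v_i} \cap L(Q)$.
\item If $Q$ is active but subsampled away in \textsc{Cleanup-Analysis}, that algorithm keeps $O_{v_i} \cap L(Q)$. \cref{alg:cleanup} pays $\min\{c(O_{v_i}\cap L(Q)),\, c(ADD(Q))\}$, which is at most $c(O_{v_i}\cap L(Q))$.
\item If $Q$ remains active, \textsc{Cleanup-Analysis} deletes the dominated links of $O_{v_i} \cap L(Q)$ and adds $S \cap L(Q)$. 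Since $S$ covers $Z_i^+ \smallsetminus A_i^\scrE \supseteq Q$ using only up-links, and an up-link covering an edge of $Q$ lies in $L(Q)$, the set $S \cap L(Q)$ is an up-link cover of $Q$. Hence $c(S\cap L(Q)) \ge c(ADD(Q))$ by minimality of $ADD(Q)$. \cref{alg:cleanup} pays $\min\{c(O_{v_i}\cap L(Q)),\, c(ADD(Q))\} \le c(ADD(Q))$.
\end{itemize}

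\textbf{Bookkeeping.} The main obstacle is to make sure the per-subtree comparison is legitimate. I would argue that the sets $O_{v_i}\cap L(Q)$ for distinct $Q \in \cQ_i$ are disjoint, because a link's path meets at most one $Q$, as noted in the definition of domination. I would also argue that these sets are disjoint from the links removed in step (3), since links there satisfy $P_\ell\cap Z_i^+\subseteq A_i^\scrE$. One further point needs care: in the third case \textsc{Cleanup-Analysis} removes only the dominated links of $O_{v_i}\cap L(Q)$. I would check that every link of $O_{v_i}\cap L(Q)$ is dominated when $Q$ is active, which holds by the definition of domination in case 2. With this, the two removed link sets coincide, and only the added sets are compared.

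Summing the per-piece inequalities over all $Q$ and all unprotected $v_i$ gives the claimed pointwise inequality (with probability $1$). If ``cost'' is read as the cost of the final link set rather than of the multiset, the inequality only improves for \cref{alg:cleanup}, because union costs are subadditive. In that case I would still do the accounting on the multiset and note the comparison holds there.
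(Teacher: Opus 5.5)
Your argument is correct and is essentially the paper's proof made explicit. You couple both procedures on the same run of \cref{alg:structured_rounding} and the same protection marks, then compare subtree by subtree using $c(S\cap L(Q))\ge c(ADD(Q))$ and the fact that \cref{alg:cleanup} pays $\min\{c(O_{v_i}\cap L(Q)),\,c(ADD(Q))\}$ on an active $Q$.

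One side claim is false, though not load-bearing. $O_{v_i}\cap L(Q)$ can contain up-links of $L_{e_i}$, and by definition these are never dominated. So the removed sets need not coincide, and \textsc{Cleanup-Analysis} may add nothing for an active $Q$. In both situations its cost on that piece is still at least $\min\{c(O_{v_i}\cap L(Q)),\,c(ADD(Q))\}$, because any leftover links only increase its cost.
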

\begin{proof}
    Each up-link solution $S \cap L(Q)$ added in \textsc{Cleanup-Analysis} is no cheaper than $ADD(Q)$, so every time a replacement is done in both algorithms, \cref{alg:cleanup} only does better.  \cref{alg:cleanup} may perform replacements that \textsc{Cleanup-Analysis} does not do because of sub-sampling, but these replacements only decrease the cost of the solution compared to \cref{alg:cleanup}. Finally, \textsc{Cleanup-Analysis} may make some replacements which \cref{alg:cleanup} does not, as they increase the cost of the solution, but of course these only make the solution from \textsc{Cleanup-Analysis} more expensive.
\end{proof}

So we can now analyze the performance of \textsc{Cleanup-Analysis}.
\begin{fact}
    The probability an uncorrelated link $\ell \not\in L_{UP}$ is included in the solution after running \textsc{Cleanup-Analysis} is at most $(2-\gamma/2)x(\ell)$.
\end{fact}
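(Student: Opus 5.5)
The plan is to handle the two copies of $\ell$ separately. Let $\ell \notin L_{UP}$ be uncorrelated with leading edges $e_i,e_j$, where $v_i,v_j$ are uncorrelated children of $v=\apex(\ell)$. By the lemma showing each uncorrelated non-up-link is selected with probability $2x(\ell)$, the copy of $\ell$ for $v_i$ is sampled in line (7) with probability exactly $x(\ell)$, and the same holds for the copy for $v_j$. By linearity of expectation it suffices to show that each copy survives \textsc{Cleanup-Analysis} with probability at most $(1-\gamma/4)x(\ell)$. Summing over the two copies then gives $(2-\gamma/2)x(\ell)$, where this bound counts the expected number of copies and hence also upper-bounds the probability that $\ell$ appears. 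There is one subtlety to check: the link $\ell$ could be reintroduced through $S\cap L(Q)$. This cannot happen, because $S$ consists only of up-links and $\ell\notin L_{UP}$, so additions never bring $\ell$ back.

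Fix $i$ and condition on the event $\scrE=\scrE_v$ at line (6). Under this conditioning, the sampling of $\scrE_{e_i}$ (which decides whether the copy of $\ell$ for $v_i$ is sampled) uses independent randomness from the samplings of $\scrE_{e_j}$ for $j\neq i$, and also from the protection coins. The next step is to argue that whether the copy is dominated depends only on these other sources of randomness. Concretely, domination is determined by the protected status of $i$, by $L^{prot}$, which involves only links sampled for protected indices $j\neq i$ (index $i$ itself must be unprotected), and by the sub-sampling Bernoullis. The set $A^\scrE_i$ and the family $\cQ_i$ are deterministic given $\scrE$. The paper's remark that sampling and domination are independent conditioned on $\scrE$ is exactly this claim, and I would verify it explicitly. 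Note that the descendant calls inside $Z_i$ sample further links, but these do not affect $L^{prot}$ restricted to the relevant edges. Strictly, they could contribute to coverage of $A_i$; I would either use the definition that counts only line-(7) samples of other children, or observe that extra coverage only helps, so a lower bound on the domination probability is all that is needed.

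Given this independence, the conditional probability that the copy is sampled and dominated equals $\alge(\text{copy}) \cdot \Pr[\text{dominated}\mid\scrE]$. The domination probability is at least $\gamma/4$ by the preceding lemma. In \textsc{Cleanup-Analysis} it is exactly $\gamma/4$ in case 2 after sub-sampling, and at least $\gamma/4$ in case 1. Every dominated sampled copy is then removed: in case 1 by line (3), and in case 2 by the modified step, which removes all of $O_{v_i}\cap L(Q_\ell)$. So the copy survives with conditional probability at most $(1-\gamma/4)\cdot\Pr[\text{copy sampled}\mid\scrE]$. Averaging over $\scrE$ and using $\E[\Pr[\text{copy sampled}\mid\scrE]]=x(\ell)$ finishes the argument.

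I expect the main obstacle to be the independence step. Specifically, one must confirm that the lower bound $\gamma/4$ from \cref{fact:subtrees-covered} and the domination lemma holds conditioned on $\scrE$ alone, without further conditioning on the copy for $v_i$ being sampled. One must also confirm that the copy of $\ell$ for $v_j$ (a protected link, possibly) does not create circular dependence. This second point is resolved because removal of the copy for $v_i$ requires $i$ to be unprotected, while only protected copies certify domination.
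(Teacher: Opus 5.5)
Your proposal is correct and follows the paper's proof. Each of the two copies of $\ell$ is sampled with probability $x(\ell)$, and given that it is sampled it is removed with probability at least $\gamma/4$. This uses the independence of sampling and domination conditioned on $\scrE$, which the paper only states as a remark and you spell out. Since the cleanup only adds up-links, $\ell$ is never re-added, and summing over the two copies gives $(2-\gamma/2)x(\ell)$.

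Two of your refinements tighten the paper's terse argument. You note that case 1 gives a removal probability of at least $\gamma/4$, not exactly $\gamma/4$. You also restrict the coverage certifying domination to the line-(7) samples of the other children, which keeps the independence clean.
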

\begin{proof}
    Let $e_i,e_j$ be the two leading edges of $\ell$. Then, the copy of $\ell$ for $v_i$ is sampled by the algorithm with probability $x(\ell)$ by \cref{lem:simple_structured_rounding_main_technical}. When it is sampled, it is removed in the clean-up phase with probability $\gamma/4$ (regardless of the choice of $\scrE$). Since it is an uncorrelated link and not an up-link, it is never added in the clean-up phase. This is because we only use links that have all their leading edges in $Z_i^+$ to cover subtrees $Q \in \cQ_i$: every such link is either a correlated link (if it has a leading edge in $Z_i$) or an up-link (if its sole leading edge is $e_i$). So, it is used with probability $(1-\gamma/4)x(\ell)$. The same holds for $v_j$, and adding these together we obtain the fact. 
\end{proof}

\begin{fact}
    Let $\ell$ be a correlated link or an up-link. Then, the probability $\ell$ is added to the solution due to \textsc{Cleanup-Analysis} is at most $\gamma^2\frac{x(\ell)}{1-2\gamma}$.
\end{fact}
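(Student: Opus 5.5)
The plan is to condition on the event $\scrE$ at line (6) for the vertex $v$ whose uncorrelated child subtree $Z_i^+$ contains the leading edges of $\ell$. Within that conditioning, I would show that $\ell$ can only enter the solution through a shadow placed in the sample $S$ for a subtree $Q\in\cQ_i$. This requires three things to happen at once: $Q$ is active, some sampled copy in $O_{v_i}$ that is a dominated non-up-link covers an edge of $Q$, and the relevant split copy of $\ell$ lies in $S$. Each of these carries a factor, and multiplying them should give $2\gamma\cdot\frac{\gamma}{4}\cdot\frac{\alge(\ell)}{1-2\gamma}$ per split copy. There are at most two split copies. Averaging over $\scrE$, using $\E[\alge(\ell)]=x(\ell)$, then gives $\gamma^2\frac{x(\ell)}{1-2\gamma}$.

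\emph{Step 1: uniqueness of the relevant index.} In \textsc{Cleanup-Analysis}, links are only added as $S\cap L(Q)$, where $S$ is supported on up-link shadows of links having all their leading edges in some $Z_i^+$. By \cref{fact:leading-uncorr-edges}, a correlated link or up-link $\ell$ with $x(\ell)>0$ covers edges in $Z_i^+$ for a single uncorrelated vertex $v_i$, which sits under a single parent $v$. So $\ell$ is handled only in the iteration for $v_i$. In $\tilde x$, $\ell$ contributes at most two up-link shadows (one if $\ell$ is already an up-link). Each shadow $\ell'$ satisfies $\tilde x(\ell')=\alge(\ell)/(1-2\gamma)$. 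Each shadow covers a path and hence meets at most one $Q\in\cQ_i$.

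\emph{Step 2: the three factors, conditioned on $\scrE$.} Fix a shadow $\ell'$ and the unique $Q$ it meets. It is added only if (a) $Q$ is active, (b) some sampled copy in $O_{v_i}$ of a non-up-link uncorrelated link covering an edge of $Q$ exists, and (c) $\ell'\in S$.

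For (a), after the subsampling in \textsc{Cleanup-Analysis}, $Q$ is active with probability exactly $\gamma/4$.

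For (b), any such link lies in $C_i$. Since its path descends from $e_i$, it covers the top edge $f$ of $Q$, and $f\notin A_i^\scrE$. By \cref{lem:coverage_probability}, the event sampled from $\events(E^*(v)\cup\{e_i\})_{\mid\scrE}$ covers $f$ by a link of $C_i$ with probability at most $2\gamma$.

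For (c), the probability is $\tilde x(\ell')=\alge(\ell)/(1-2\gamma)$.

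\emph{Step 3: independence, which is the main obstacle.} I must check that (a), (b) and (c) are independent given $\scrE$. Event (c) uses a fresh sample $S$, so it is independent of the other two. Event (b) depends only on the event sampled for $v_i$ at line (7). Event (a) depends on the events sampled for the other children $v_j$, on the protection coins, and on the subsampling Bernoulli. The line-(7) samples for different uncorrelated children are independent given $\scrE$, and the coins are independent of all samples. So (a) and (b) are independent. The one subtlety is that the subsampling Bernoulli uses $p_Q$, which is a function of $\scrE$ alone, so it introduces no dependence.

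Multiplying gives at most $2\gamma\cdot\frac{\gamma}{4}\cdot\frac{\alge(\ell)}{1-2\gamma}$ per shadow. Summing over the at most two shadows gives $\gamma^2\frac{\alge(\ell)}{1-2\gamma}$. Taking expectation over $\scrE$ and using \cref{lem:simple_structured_rounding_main_technical} together with the preceding lemma, so that $\E[\alge(\ell)]=x(\ell)$, completes the argument.
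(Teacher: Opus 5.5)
Your proposal is correct and follows the paper's proof. For each up-link shadow of $\ell$ at most one $Q\in\cQ_i$ is affected, and the bound is the product of the activation probability $\gamma/4$, the factor $2\gamma$ from \cref{lem:coverage_probability} applied to the top edge of $Q$, and $\alge(\ell)/(1-2\gamma)$, summed over at most two shadows and averaged over $\scrE$. Your explicit check that these three factors are conditionally independent given $\scrE$ is a step the paper leaves implicit.

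One small fix in Step~1. A correlated link with one correlated and one uncorrelated leading edge does cover edges of two different $Z^+$ sets, so the justification you give is not quite right. What you actually need, and what holds, is that at most one $Z_i^+$ contains all leading edges of $\ell$. Alternatively, argue per shadow with its own conditioning event, as the paper does.
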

\begin{proof}
Split every correlated link into two up-links. For each event $\scrE$, each up-link copy of $\ell$ may appear to fix at most one subtree $Q \in \cQ_i$, where $v_i$ is the root of the connected component of correlated edges containing the leading edge of $\ell$. So, we will focus on the probability that we need to add an up-link solution covering $Q$, and in that solution we include this up-link copy of $\ell$. The final probability of using $\ell$ is then twice this probability. $Q$ is active with probability $\gamma/4$. Now the probability that the event $\scrE_{e_i}$ uses a link $\ell\in C_i \cap L(Q)$ is at most $2\gamma$ by \cref{lem:coverage_probability}. Hence the total probability that we select an edge to be added to fix $Q$ is at most $\gamma/4 \cdot 2\gamma \leq \gamma^2/2$. In that case, the uplink of $\ell$ corresponding to $Q$ is used with probability at most $\alge(\ell)/(1-2\gamma)$. Since $\ell$ is a correlated link, the probability $\ell$ is used by the algorithm is $x(\ell)$. Moreover, the probability of adding a link is at most $\gamma^2/2$ regardless of the choice of $\scrE$. So the cost of $\ell$ for the associated $Q$ is at most $(\gamma/4) \cdot 2\gamma \cdot \frac{x(\ell)}{1-2\gamma} = \frac{\gamma^2}{2(1-2\gamma)}x(\ell)$.

For correlated links $\ell$ that are up-links to begin with, we do not pay a factor of 2, but the fact still holds. For uncorrelated links $\ell$ that are up-links, the fact holds as well as they can only be used to fix one subtree $Q \in \cQ_i$; here, $v_i$ is the vertex so that $e_{v_i}$ is the leading edge of $\ell$. 
\end{proof}

We can combine the above two facts to obtain our theorem analyzing the clean-up procedure.
\begin{theorem}
    After running the clean-up procedure, the expected cost of our solution is at most 
    $$(1+\gamma^2\frac{1}{1-2\gamma})c(x(L_C \cup L_{UP})) + (2-\gamma/2)c(x(L_U \smallsetminus L_{UP}))\,.$$
    \label{thm:cleanupguarantee}
\end{theorem}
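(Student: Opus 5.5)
The plan is to bound the expected cost of the output of \textsc{Cleanup-Analysis} and then transfer the bound to \cref{alg:cleanup}. By the fact above, on every run the output of \cref{alg:cleanup} costs at most the output of \textsc{Cleanup-Analysis}, so it suffices to bound the latter in expectation. Its output is contained, as a multiset, in $O$ minus the removed dominated copies, together with the union over $i$ of the repair sets $S \cap L(Q)$ added for subtrees $Q \in \cQ_i$ with a dominated link. By linearity of expectation, the expected cost is at most $\sum_{\ell} c(\ell)\cdot \E[\text{number of copies of } \ell \text{ in the output}]$. We therefore group links into the two classes $L_U \smallsetminus L_{UP}$ and $L_C \cup L_{UP}$ and bound the expected multiplicity of each link separately.

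For an uncorrelated link $\ell \notin L_{UP}$, we invoke the preceding fact directly. Each of its two copies (for $v_i$ and for $v_j$) is sampled with probability $x(\ell)$ by \cref{lem:simple_structured_rounding_main_technical}. Conditioned on being sampled, it survives with probability at most $1-\gamma/4$, and it is never reinserted by a repair, because repairs only use links whose leading edges all lie in $Z_i^+$, and such links are correlated links or up-links. This gives expected multiplicity at most $(2-\gamma/2)x(\ell)$, which yields the term $(2-\gamma/2)c(x(L_U \smallsetminus L_{UP}))$.

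For $\ell \in L_C \cup L_{UP}$, we bound the expected multiplicity by two contributions. The first is the original sampling, which is at most $x(\ell)$ by the lemma giving marginals $x(\ell)$ for correlated links and up-links; deletions only help. The second is the repair addition, which by the preceding fact is at most $\gamma^2 x(\ell)/(1-2\gamma)$. Summing gives $(1+\gamma^2/(1-2\gamma))x(\ell)$, and multiplying by $c(\ell)$ and summing over both classes gives the theorem.

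The main obstacle is the second contribution for class $L_C \cup L_{UP}$. We must check that each link (or each of its at most two up-link halves after splitting) is charged to only one $v_i$ and one subtree $Q \in \cQ_i$ per run. We must also check that the chain "$Q$ is active with probability $\gamma/4$, times probability at most $2\gamma$ that a $C_i$ link of $O_{v_i}$ touches $Q$, times $\tilde x(\ell) \le \alge(\ell)/(1-2\gamma)$" is valid conditioned on $\scrE$. This needs the independence of $S$, of the subsampled activity, and of the event $\scrE_{e_i}$ given $\scrE$; it also needs that the factor $2$ from splitting is absorbed, since the preceding fact already states the final bound $\gamma^2 x(\ell)/(1-2\gamma)$. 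Averaging $\alge(\ell)$ over $\scrE$ then gives $x(\ell)$. Since these points are exactly what the previous fact establishes, we take it as given, and the theorem reduces to the linearity-of-expectation summation above.
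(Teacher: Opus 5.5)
Your proposal is correct and follows the paper's route. You bound \textsc{Cleanup-Analysis}, which costs at least as much as \cref{alg:cleanup} on every run, by combining the $(2-\gamma/2)x(\ell)$ bound for uncorrelated non-up-links with the baseline marginal $x(\ell)$ plus the $\gamma^2 x(\ell)/(1-2\gamma)$ repair bound for links in $L_C \cup L_{UP}$, and then sum with linearity of expectation; the paper states this simply as combining the two preceding facts.
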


\subsection{Proof of 1.49 Approximation Algorithm}
We combine \cref{thm:cleanupguarantee} and \cref{thm:round_odd_cut} to obtain an improved approximation guarantee by carefully selecting the randomization probability and the parameter $\gamma$.

\begin{theorem}
    Let $x,y, z$ be a Structured Fractional Solution. Then there is a randomized polynomial time algorithm that produces a WTAP solution of expected cost at most $\frac{789}{530} \cdot c(z(L)) \approx 1.48868 \cdot c(z(L))$.
    \label{thm:1.49structured}
\end{theorem}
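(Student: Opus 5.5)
The plan is to run the clean-up algorithm (\cref{alg:structured_rounding} followed by \textsc{Cleanup}) with some probability $p$, and the Odd Cut LP rounding of \cref{thm:round_odd_cut} with probability $1-p$. We then choose $p$ and $\gamma$ to balance the two guarantees against the constraints of a Structured Fractional Solution.

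Write $Z = c(z(L))$, $u = c(z(L_U))$ (so $c(z(L_C)) = Z-u$), $a = c(x(L_C \cup L_{UP}))$, $b = c(x(L_U\smallsetminus L_{UP}))$, and $g = \gamma^2/(1-2\gamma)$. By \cref{thm:cleanupguarantee} and \cref{thm:round_odd_cut}, the expected cost is at most
\[
p\bigl((1+g)a + (2-\gamma/2)b\bigr) + (1-p)(2Z - u).
\]
The Structured Fractional Solution gives $a + b = c(x(L)) \le Z$ and, by \eqref{eq:uncorr_cost}, $b \le u$. Since $1+g \ge 0$, I replace $a$ by $Z-b$. Since the second term is decreasing in $u$, I replace $u$ by $b$. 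This yields the upper bound
\[
Z\bigl(p(1+g) + 2(1-p)\bigr) + b\bigl(p(1-\gamma/2-g) - (1-p)\bigr).
\]

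Next I choose $p = 1/(2-\gamma/2-g)$, which makes the coefficient of $b$ vanish. This is a legitimate probability for small $\gamma$, since then the denominator lies in $(1,2)$. The bound becomes $\bigl(2 - \tfrac{1-g}{2-\gamma/2-g}\bigr) Z$. Because $b\ge 0$, any $p$ at least this value also works, so it suffices to minimize this expression over $\gamma \in (0,1/2)$, subject to the constraint $\gamma \le 3/4$ required by \cref{lem:coverage_probability}.

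A quick check at $\gamma = 1/10$ gives $g = 1/80$ and ratio $2 - 79/155 \approx 1.4903$. Values of $\gamma$ near $0.12$ do slightly better, around $1.4893$. The claimed constant $789/530$ should come from a specific rational choice of $\gamma$, possibly using the slightly tighter intermediate bound $\gamma^2/(2(1-2\gamma))$ that appears inside the proof of the preceding fact. So the last step is to plug in that rational $\gamma$, compute the exact fraction, and check that it equals (or is at most) $789/530$.

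I expect no conceptual obstacle. The only delicate points are bookkeeping: the direction of the substitutions $a \le Z-b$ and $u \ge b$ must be valid for every feasible $(a,b,u)$, the resulting $p$ must lie in $[0,1]$, and the arithmetic must reproduce the exact constant, which depends on which version of the additive loss $g$ is used.
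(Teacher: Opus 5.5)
Your approach is the paper's: running the clean-up algorithm with probability $p$ and making the coefficient of $b$ vanish is exactly the paper's choice $\beta=p$, $\alpha=C_1=C_2-p$, and your closed form $2-\frac{1-g}{2-\gamma/2-g}$ is correct. The gap is that the proposal stops before the one step that actually proves the stated constant, and your numerical exploration points the wrong way. Clearing the factor $1-2\gamma$, your ratio becomes $2-\frac{1-2\gamma-\gamma^2}{2-9\gamma/2}$. This is minimized at $\gamma=(4-\sqrt7)/9\approx 0.1505$, not near $0.12$. Take $\gamma=3/20$, which satisfies $\gamma\le 3/4$. Then $g=9/280$ and $2-\gamma/2-g=53/28$, so $p=28/53$; the paper equivalently runs the Odd Cut rounding with probability $25/53$. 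The ratio is $2-\frac{271}{280}\cdot\frac{28}{53}=2-\frac{271}{530}=\frac{789}{530}$ exactly, using the loss $g=\gamma^2/(1-2\gamma)$ as stated in \cref{thm:cleanupguarantee}.

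In particular, you should not switch to $\gamma^2/(2(1-2\gamma))$. That is the bound for a single up-link copy inside the proof of the preceding fact. A correlated link that is not an up-link is split into two copies, and each may be added, so only $\gamma^2/(1-2\gamma)$ is justified on $c(x(L_C\cup L_{UP}))$. Also, your remark that any larger $p$ also works is backwards. For $p>1/(2-\gamma/2-g)$ the coefficient of $b$ becomes positive and can no longer be dropped. It is smaller $p$ that keeps it nonpositive, at the cost of a larger coefficient on $Z$. This slip is harmless because you set $p$ exactly. With $\gamma=3/20$ filled in, the argument is complete.
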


\cref{thm:1.49structured} improves upon the $1.5$ approximation shown in \cref{thm:1.5structured}. By \cref{thm:reduction_to_structure}, this leads to a randomized $(1.489+\epsilon)$ approximation algorithm for WTAP (for any $\epsilon>0$), proving the main result of the paper \cref{thm:main} (by selecting $\epsilon \leq 1/1000$).

\begin{proof}[Proof of \cref{thm:1.49structured}]
We design a randomized algorithm that combines the algorithm from \cref{thm:round_odd_cut} (Alg1) and the algorithm derived from \cref{alg:structured_rounding} followed by \cref{alg:cleanup}, analyzed in \cref{thm:cleanupguarantee} (Alg2).

We set the parameters to specific rational values close to the optimum\footnote{We remark that the optimal settings are $\gamma = \frac{4 - \sqrt{7}}{9} \approx 0.1504$ and $ p = \frac{35-2\sqrt{7}}{63} \approx 0.47156$. This gives an approximation guarantee of   $\frac{110+4\sqrt{7}}{81}$, which is almost the same so we have opted for cleaner calculations.}:
\[
\gamma = \frac{3}{20} = 0.15, \quad p = \frac{25}{53}.
\]
Noting that $\gamma \le \frac{3}{4}$ as required by \cref{lem:coverage_probability}. We run Alg1 with probability $p$ and Alg2 with probability $1-p$.

First, we calculate the term $K(\gamma) = \frac{\gamma^2}{1-2\gamma}$.
\begin{align*}
1-2\gamma & = 1 - \frac{6}{20} = \frac{14}{20} = \frac{7}{10} \\
\gamma^2 & = \frac{9}{400} \\
K(\gamma) & = \frac{9/400}{7/10} = \frac{9}{400} \cdot \frac{10}{7} = \frac{9}{280}.
\end{align*}

The expected cost $\mathbb{E}[C]$ is bounded by:
\begin{align*}
\mathbb{E}[C] &\le p \left( 2 c(z(L_C)) + c(z(L_U))\right) \\
&\quad + (1-p) \left((1+K(\gamma))c(x(L_C \cup L_{UP})) + (2-\gamma/2)c(x(L_U \smallsetminus L_{UP}))\right).
\end{align*}

We use the properties of the Structured Fractional Solution: $c(x(L)) \le c(z(L))$ and $c(x(L_U \smallsetminus L_{UP})) \le c(z(L_U))$. We aim to bound $\mathbb{E}[C]$ by finding non-negative multipliers $\alpha, \beta \ge 0$ such that:
\begin{align*}
C_1 := (1-p)(1+K(\gamma)) &\le \alpha \\
C_2 := (1-p)(2-\gamma/2) &\le \alpha + \beta
\end{align*}
If these hold, the expected cost is bounded by:
\begin{align*}
\mathbb{E}[C] &\le 2p c(z(L_C)) + p c(z(L_U)) + \alpha c(x(L)) + \beta c(x(L_U \smallsetminus L_{UP})) \\
&\le (\alpha+2p) c(z(L_C)) + (\alpha+\beta+p) c(z(L_U)).
\end{align*}
The approximation ratio is $\max(\alpha+2p, \alpha+\beta+p)$. We choose $\beta=p$ to balance the coefficients, making the ratio $\alpha+2p$. We must verify that we can choose $\alpha$ such that $C_1 \le \alpha$ and $C_2 \le \alpha+p$.

Let's calculate the coefficients using the chosen values.
$1-p = 1 - \frac{25}{53} = \frac{28}{53}$.
$1+K(\gamma) = 1 + \frac{9}{280} = \frac{289}{280}$.
$2-\gamma/2 = 2 - \frac{3}{40} = \frac{77}{40}$.

\begin{align*}
C_1 &= \frac{28}{53} \cdot \frac{289}{280} = \frac{1}{53} \cdot \frac{289}{10} = \frac{289}{530}. \\
C_2 &= \frac{28}{53} \cdot \frac{77}{40} = \frac{7}{53} \cdot \frac{77}{10} = \frac{539}{530}.
\end{align*}

Now we verify the condition on $\alpha$ by calculating $C_2-p$:
\begin{align*}
C_2-p &= \frac{539}{530} - \frac{25}{53} = \frac{539}{530} - \frac{250}{530} = \frac{289}{530}.
\end{align*}
Since $C_1 = C_2-p$, we set $\alpha = C_1$. This confirms that the choice $\beta=p$ is valid.

Finally, we calculate the resulting approximation ratio $\alpha+2p$:
\begin{align*}
\alpha+2p &= \frac{289}{530} + 2 \left(\frac{25}{53}\right) = \frac{289}{530} + \frac{500}{530} = \frac{789}{530}. \qedhere
\end{align*}
\end{proof}

\section{Strong Linear Programming Relaxation}
\label{sec:strongLP}
In this section, we introduce our strong linear programming relaxation for WTAP, which we refer to as the Strong LP.
This relaxation is then used in \cref{sec:red_always_small} to reduce the problem to that of devising an approximation algorithm with respect to a structural fractional solution, i.e., to prove \cref{thm:reduction_to_structured}.
In the previous sections, we have seen that such structural fractional solutions can be rounded and thus combining our reduction together with that rounding implies our improved approximation guarantee for WTAP.

One of the main reasons that the constraints of the structured LP do not form a valid relaxation is that they assume that every tree edge is covered by only a small number of links.
To handle edges that are covered by a larger number of links, we generalize the notion of stars to more flexible subtrees. In the next subsection, we first set the parameters that will be used to define these subtrees and the corresponding events.
Then, we present the variables of the linear programming relaxation followed by its constraints.

\subsection{Parameters, Subtrees, and Events}
The Strong LP we present depends on two parameters:
\begin{itemize}
    \item $\beta$, which bounds the maximum number of leaves in the considered subtrees.
    \item $\rho$, which defines the notion of a \emph{small} link set: a set $S \subseteq L$ is small if $1 \leq |S| \leq \rho$.
\end{itemize}

\paragraph{Subtrees.}

A \emph{subtree} of the input tree $T$ is a tree obtained by deleting edges and vertices from $T$. We are particularly interested in those with at most $\beta+3$ leaves.
Let $\subtrees$ denote the family of such subtrees. Since a subtree is uniquely determined by its leaf set, we have $|\subtrees| = O(n^{\beta+3})$.
Given a subtree $R \in \subtrees$, we use $R$ to denote its edge set and $V(R)$ for its vertex set.


\paragraph{Events.}
Fix $R \in \subtrees$. Let $\LeafEdges(R)$ denote the edges incident to leaves of $R$. For a subset $R_\sm \subseteq \LeafEdges(R)$, let $L(R_\sm) = \bigcup_{e \in R_\sm} L_e$ be the set of links that cover at least one edge in $R_\sm$.  Further let $L_\sm \subseteq L(R_\sm)$ be a subset of these links.  An \emph{event} $\scrF = (R, R_\sm, L_\sm)$ encodes a local configuration with the following semantics (see Figure~\ref{fig:event_example} for an example):
\begin{quote}
    Each edge $e \in R_\sm$ is covered by a small number (at most $\rho$) of links in $L_\sm \cap L_e$, and each $e \in R \setminus R_\sm$ is covered by more than $\rho$ links.
\end{quote}
For such an event $\scrF$, we let $R(\scrF)$, $R_\sm(\scrF)$ and $L_\sm(\scrF)$ denote $R, R_\sm$ and $L_\sm$, respectively.

We define $\events$ as the collection of all such events. Since there are $O(n^{\beta+3})$ subtrees $R$, at most $2^{\beta+3}$ subsets $R_\sm \subseteq \LeafEdges(R)$, and at most $|L|^\rho$ link combinations for each edge in $R_\sm$, we get:
\begin{align}
    |\events| = O\left(n^{\beta+3} \cdot 2^{\beta+3} \cdot |L|^{\rho \cdot (\beta+3)}\right),
    \label{eq:num_events}
\end{align}
which is polynomial for fixed constants $\beta$ and $\rho$.
To avoid confusion, we mostly  use  $\scrF$ to denote an event in $\events$ and $\scrE$ to denote the events/stars in the structural fractional solution. (Exceptions are made in Section~\ref{sec:red_always_small} where we produce several intermediate solutions, and so it is convenient to use both $\scrE$ and $\scrF$ for events in those intermediate solutions.)

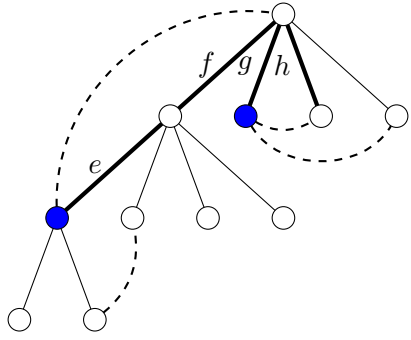
\begin{figure}[t]
    \centering
    \begin{forest}
        for tree={
        circle, draw, minimum size=0.3cm, inner sep=0pt, outer sep=0pt,
        l sep=30pt, s sep=20pt,
        }
        [{}, name=r, fill=white,  
        [{}, fill = white, name=a,  edge label={node[midway,left]{$f$}}, edge=ultra thick
                    [{}, fill=blue, name=f, edge label={node[midway,left]{$e$}}, edge=ultra thick
                            [{}, fill=white, name=v]
                            [{}, fill=white, name=w]
                    ]
                    [{}, fill=white, name=g,]
                    [{}, fill=white, name=h]
                    [{}, fill=white, name=i]
            ]
            [{}, fill=blue, name=k, edge label={node[midway,left]{$g$}}, edge=ultra thick]
            [{}, fill=white , name=l, edge label={node[midway,left]{$h$}}, edge=ultra thick]
            [{}, fill=white, name=m]
        ]
        \draw[dashed, thick, bend right] (k) to (l);
        \draw[dashed, thick, bend right=60] (k) to (m);
        \draw[dashed, thick, bend right] (w) to (g);
        \draw[dashed, thick, bend left=50] (f) to (r);
    \end{forest}

    \caption{Example of an event $\scrF = (R, R_\sm, L_\sm)$ for a subtree $R$ defined by edges $\{e,f,g,h\}$ (depicted by thick lines), with its leaves in $V(R_\sm)$ in blue.
        The set
        $L_\sm$ is depicted by dashed lines and note that the edges in
        $R_\sm = \{e,f\}$ are each covered by at most $2 \leq \rho$ links in $L_\sm$. We remark that the tree edges in $R \setminus R_\sm = \{f, h\}$ are covered by more than $\rho$ links, which are not depicted in the figure.}
    \label{fig:event_example}
\end{figure}

\subsection{Variables of Strong LP}
The Strong LP uses the following variables:
\begin{itemize}
    \item For each link $\ell \in L$, we have a variable $x(\ell)$ that is meant to indicate whether $\ell$ is in the solution.
    \item For each event $\scrF \in \events$ we define a variable $y(\scrF)$ with the intuition that $y(\scrF) = 1$ when the event holds for the optimal solution and $y(\scrF) = 0$ otherwise.
    \item A link $\ell$ is relevant to an event $\scrF = (R, R_\sm, L_\sm)$ if it covers an edge of $R$. In other words, the relevant links are $L(R) = \cup_{e\in R} L_e$. For each  link $\ell$ relevant to $\scrF$ we also define a variable $y_\ell(\scrF)$.
\end{itemize}
We remark that by~\eqref{eq:num_events}, the total number of variables is polynomial for fixed constants $\beta$ and $\rho$.

\subsection{Constraints of Strong LP}\label{subsec:strongLP_constr}
The Strong LP is defined by minimizing $\sum_{\ell \in L} w(\ell) \cdot x(\ell)$, subject to the constraints~\eqref{constr:odd_cut_consraint}--\eqref{constr:extension_consistency_with_ell}
introduced below.

For intuition, it is helpful to consider an optimal solution $L^*$ and the corresponding intended linear programming solution, which sets $x(\ell) = 1$ for $\ell \in L^*$ and $x(\ell) = 0$ for $\ell \in L \setminus L^*$. We also set $y(\scrF) = 1$ for every event $\scrF$ consistent with $L^*$, and $y(\scrF) = 0$ otherwise. Similarly, we set $y_\ell(\scrF) = y(\scrF)$ for every $\ell \in L^*$, and $y_\ell(\scrF) = 0$ for every $\ell \in L \setminus L^*$.

We now introduce the constraints of the linear program along with an explanation of why they are valid, namely why they are satisfied by the reference solution described above.

\subsubsection{Odd Cut LP Constraints} We have
\begin{equation}
    x(\delta_L(S)) + \sum_{e \in \delta_E(S)} x(L_e) \ge |\delta_E(S)| + 1
    \label{constr:odd_cut_consraint}
\end{equation}
for every  $S \subseteq V$ with $|\delta_E(S)|$ odd, and $x(\ell) \geq 0$  $\forall \ell \in L$.
These are the Odd Cut LP constraints that we discussed in Section~\ref{sec:prelimLPs}.
They were introduced in~\cite{FGKS18} that are known to be valid, i.e., they are satisfied by any $x$ that indicates a feasible solution.

\subsubsection{Coverage Constraints} For each edge $e$ there should be exactly one event $(R, R_\sm, L_\sm)$ with $R = \{e\}$ that is satisfied by the optimal solution.  Specifically, For each tree-edge $e\in T$, we enforce
\begin{equation}\label{constr:cov_consistency}
    \sum_{\scrF \in \events(e)}y(\scrF) = 1\,,
\end{equation}
where $\events(e)$ denotes the subset of events $(R, R_\sm, L_\sm)$ where $R = \{e\}$ is the tree consisting of the single edge $e$. So in words, the above constraint says that we either cover a tree-edge $e$ with at most $\rho$ links $L_\sm$ in which case the event where $(R, R_\sm, L_\sm)$ with $R_\sm = \{e\}$  is true (i.e., the corresponding variable is set to $1$), or we cover $e$ by more than $\rho$ links in which case the event $(R, \emptyset, \emptyset)$ is true.

\subsubsection{Marginal Preserving Constraints} For each tree-edge $e \in E$ and link $\ell \in L_e$, we enforce
\begin{equation}\label{constr:marginal_pres_const_child}
    x(\ell) = \sum_{\scrF \in \events(e)}y_\ell(\scrF)
\end{equation}

If $\ell \in L^*$, then exactly one event $\scrF$ is true and for that event we have $y_\ell(\scrF) = y(\scrF) = 1$ and so the constraint is satisfied since $x(\ell) = 1$. Otherwise, if $\ell \not \in L^*$ then $x(\ell) = 0$ and $y_\ell(\scrF) = 0$ for every event and so the constraint is again valid.
We refer to this as ``marginal preserving constraints" as these constraints enforce consistency between marginal usage of $\ell$ and its contribution to events involving $e$.

\subsubsection{Non-Negativity, Link Consistency}
For each $$\scrF  = (R, R_\sm, L_\sm) \in \events$$ we have
\begin{equation}\label{constr:non_neg}
    y(\scrF), y_\ell(\scrF) \geq 0
\end{equation}
and
\begin{equation}\label{constr:inclision_on_small}
    y_\ell(\scrF)=y(\scrF)
\end{equation}
for each $\ell \in L_\sm$.
In addition, for each $e \in R_\sm$ and $\ell \in L_e \setminus L_\sm$ we have
\begin{equation}\label{eq:non_inclusion_links_on_huge_edges}
    y_\ell(\scrF)=0,
\end{equation}
and for each $e \in R \setminus R_\sm$ we have
\begin{equation}\label{constr:huge_edges_const_child}
    \sum_{\ell \in L_e}y_\ell(\scrF) \geq \rho \cdot y(\scrF)
\end{equation}

These ensure that $y_\ell(\scrF)$ aligns with how $\ell$ participates in the event $\scrF$, and are valid constraints for the intended solution.

\subsubsection{Consistency Constraints Among Events}
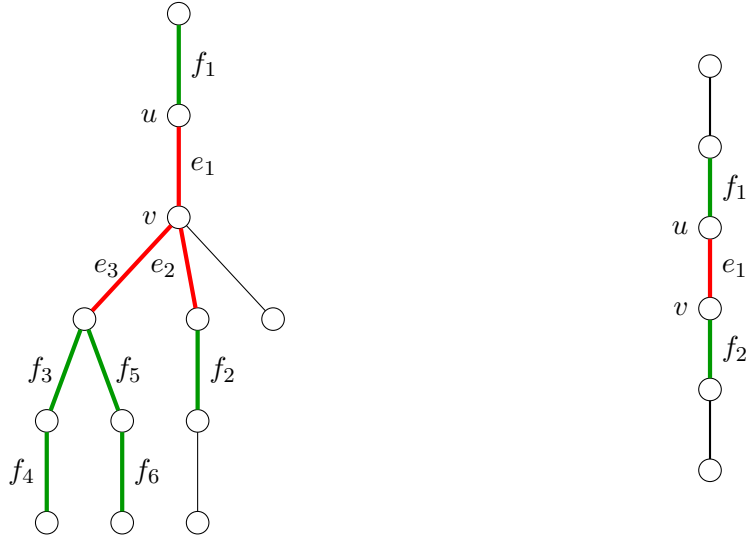
\begin{figure}[t]
    \centering
    \begin{minipage}{0.45\textwidth}
        \centering
        \begin{forest}
            for tree={
            circle, draw, fill=white, minimum size=0.3cm, inner sep=0pt, outer sep=0pt,
            l sep=30pt, 
            s sep=20pt, 
            grow=south, 
            }
            [{}, name=root 
            [{}, name=u, label={left:$u$},
            edge label={node[midway,right]{$f_1$}}, edge={ultra thick, draw=green!60!black}
                [{}, name=v, label={left:$v$},
                    edge label={node[midway,right]{$e_1$}}, edge={ultra thick, draw=red}
                        [{}, name=n1,
                            edge label={node[midway,left]{$e_3$}}, edge={ultra thick, draw=red}
                                [{}, name=n1a,
                                    edge label={node[midway,left]{$f_3$}}, edge={ultra thick, draw=green!60!black}
                                        [{}, name=n1a1,
                                            edge label={node[midway,left]{$f_4$}}, edge={ultra thick, draw=green!60!black}
                                        ]
                                ]
                                [{}, name=n1b, edge label={node[midway,right]{$f_5$}}, edge={ultra thick, draw=green!60!black}
                                        [{}, name=n1b1, edge label={node[midway,right]{$f_6$}}, edge={ultra thick, draw=green!60!black}
                                        ]
                                ]
                        ]
                        %
                        [{}, name=n2,
                            edge label={node[midway,left]{$e_2$}}, edge={ultra thick, draw=red}
                                [{}, name=n2a,
                                    edge label={node[midway,right]{$f_2$}}, edge={ultra thick, draw=green!60!black}
                                        [{}, name=n2b]
                                ]
                        ]
                        %
                        [{}, name=n3 
                        ]
                ]
            ]
            ]
        \end{forest}
    \end{minipage}
    \begin{minipage}{0.45\textwidth}
        \centering
        \begin{forest}
            for tree={
            circle, draw, fill=white, minimum size=0.3cm, inner sep=0pt, outer sep=0pt,
            l sep=20pt, 
            s sep=20pt,
            grow=south, 
            }
            [{},
            [{},  edge=thick 
            [{}, label=left:$u$, 
            edge label={node[midway,right]{$f_1$}}, edge={ultra thick, draw=green!60!black}
                [{}, label=left:$v$, 
                    edge label={node[midway,right]{$e_1$}}, edge={ultra thick, draw=red}
                        [{}, 
                            edge label={node[midway,right]{$f_2$}}, edge={ultra thick, draw=green!60!black}
                                [{},  edge=thick] 
                        ]
                ]
            ]
            ]
            ]
        \end{forest}
    \end{minipage}

    \caption{Examples of extensions. Left: An event $\scrF$ defined on subtree $R$ (edges $e_1, e_2, e_3$ in red) is extended to a larger subtree $Q$ which also includes edges in green ($f_1,f_2, \ldots, f_6$). Here, the unique extension of $\scrF$ that is consistent is found by growing outwards from the center $c=v$. Right: An example where the event $\scrF$ is defined on a subtree $R$ with a single edge $e_1$ and $Q$ consists additionally of $f_1, f_2$. Here we have to be careful to only extend $R$ in one direction to make sure that there is a unique extension consistent with $\scrF$ and the intended solution. As we prioritize the root of $R$ if it is an internal vertex of $Q$, we get that $c=u$ in this case, and we extend $R$ upwards.}
    \label{fig:extension_example}
\end{figure}

Our final constraints are designed to ensure consistency among events defined on different subtrees. These constraints enforce that the variables behave like a hierarchical system of marginal probabilities: the weight $y(\scrF)$ of an event on a subtree $R$ must be exactly distributed among compatible events on any larger subtree $Q \supseteq R$.

To formalize this compatibility, we define the notion of an extension. Intuitively, an extension $\scrF'$ describes how the configuration $\scrF$ propagates outwards within the larger context $Q$. Figure~\ref{fig:extension_example} illustrates this concept with two examples.

A crucial challenge here is ensuring this propagation is uniquely defined and directed. Specifically, to ensure that constraints~\eqref{constr:extension_consistency},\eqref{constr:extension_consistency_with_ell} are valid, we wish to have that there is a unique extension that is consistent with $\scrF$ and the intended solution. Consider the case where $\scrF$ is the event consistent with the reference solution $L^*$ for the single edge $R=\{e=(u,v)\}$, and the larger subtree $Q$ extends both upwards from $u$ and downwards from $v$ (see e.g.\, the right example of Figure~\ref{fig:extension_example}). If the set of extensions can grow in both directions, one can see that there would be several consistent extensions, which would violate the uniqueness we require. Therefore, we cannot allow the extension to grow simultaneously in both directions.  We must therefore enforce a canonical direction of growth whenever $R$ contains a single edge. When $R$ contains more than a single edge, we could grow the extension from any internal (non-leaf) vertex but to avoid ambiguity we define the extension center uniquely also in this case below.

\paragraph{The extension center.} To enforce this unique direction of growth, we define an \emph{extension center} $c \in V(R)$. The extension process will always be defined as originating from this center $c$. The choice of $c$ depends on the structure of $R$ and how $R$ is situated within $Q$. Recall that we assume the input tree $T$ is rooted, and for any subtree $R$, we denote its root (the vertex closest to the root of $T$) by $u$.

We define $c$ by analyzing the structure of $R$ and prioritizing $u$ as the center, unless $u$ is unsuitable because it lies on the boundary of the structure.

\begin{enumerate}
    \item If $u$ is an internal vertex of $R$ (i.e., not a leaf of $R$), we set $c=u$.

    \item Otherwise, if $u$ is a leaf of $R$, the selection of $c$ depends further on the structure of $R$ and $Q$. Let $v$ be the unique vertex adjacent to $u$ in $R$ and further distinguish:
          \begin{itemize}
              \item {Case  $|R| \geq 2$:}  We set $c= v$ which is an internal node in $R$ by the assumption $|R| \geq 2$.
              \item {Case $|R| = 1$:} In this case $R=\{(u,v)\}$ and the choice of $c$ is critical in determining the direction of the extension. We prioritize the root $u$, provided it allows for expansion within the context of $Q$. Specifically, if $u$ is not a leaf in $Q$, we set $c=u$; otherwise, we set $c=v$.
          \end{itemize}
\end{enumerate}
This careful case analysis ensures that $c$ is uniquely and canonically determined for any pair $(\scrF, Q)$.

\paragraph{Definition of extensions.} Once the center $c$ is fixed, the extension process is defined by growing outwards from $c$ along paths in $Q$. The growth continues as long as the edges encountered are "huge" ($>\rho$ links) and stops precisely when the first "small" edge ($\leq \rho$ links) is encountered, or when we reach the boundary of $Q$.

Formally, for an event $\scrF = (R, R_\sm, L_\sm) \in \events$ and a subtree $Q \in \subtrees$ with $R \subseteq Q$, the \emph{set of extensions} $\Ext(\scrF, Q) \subseteq \events$ contains all events $\scrF' = (R', R'_\sm, L'_\sm)$ such that $R' \subseteq Q$ and the following conditions hold (where $c$ is the extension center defined above):

\begin{itemize}
    \item {Consistency with $\scrF$:} The extension must agree with the base event on $R$.
          \begin{itemize}
              \item $R \cap R'_\sm = R_\sm$.
              \item For every $e \in R_\sm$, we have $L_\sm \cap L_e = L'_\sm \cap L_e$.
          \end{itemize}
    \item {Extension Structure (relative to $c$):} For every path $P$ in $Q$ starting at $c$ and ending at a leaf of $Q$, one of the following holds:
          \begin{itemize}
              \item (Huge path) The entire path is contained in $R'$, and no edge of $P$ lies in $R'_\sm$; that is, $P \subseteq R'$ and $P \cap R_\sm'=\emptyset$.
              \item (Path ending in small edge) The subpath $P' = P \cap R'$ begins at $c$ and contains exactly one edge from $R'_\sm$, which is the final edge of $P'$.
          \end{itemize}
\end{itemize}
We say that such an event $\scrF' \in \Ext(\scrF, Q)$ is an \emph{extension} of $\scrF$ on $Q$.

\paragraph{The constraints.} The consistency constraints require that the weight of $\scrF$ is exactly distributed among its extensions in $Q$. For each event $\scrF = (R, R_\sm, L_\sm) \in \events$ and $Q \in \subtrees$ satisfying $R \subseteq Q$ we require
\begin{equation}\label{constr:extension_consistency}
    y(\scrF) = \sum_{\scrF' \in \Ext(\scrF, Q)} y(\scrF'),
\end{equation}
and for every $\ell \in L(R)$
\begin{equation}\label{constr:extension_consistency_with_ell}
    y_\ell(\scrF) = \sum_{\scrF' \in \Ext(\scrF, Q)} y_\ell(\scrF')\,.
\end{equation}

\paragraph{Validity of the constraints.} We now explain why the extension consistency constraints are valid for the reference solution derived from an optimal solution $L^*$.

Consider first an event $\scrF$ that is not satisfied by $L^*$ ($y(\scrF) = 0$). Due to the "Consistency with $\scrF$" condition, if $\scrF$ conflicts with $L^*$, any extension $\scrF'$ (which must agree with $\scrF$ on $R$) will also conflict with $L^*$. Thus, the constraints hold.

Now suppose instead that $\scrF$ {is} satisfied by $L^*$, i.e., $y(\scrF) = 1$. Then there exists a \emph{unique} extension $\scrF'$ in $\Ext(\scrF, Q)$ consistent with $L^*$. This uniqueness relies critically on the definition of the extension center $c$ and the associated directionality it enforces.

We construct this unique extension by considering every path $P$ in $Q$ (from $c$ to a leaf in $Q$). Starting from $c$, we greedily traverse $P$, adding the edges to the extended subtree $R'$, until one of the following occurs:

\begin{enumerate}
    \item We encounter a tree edge $e$ that is covered by at most $\rho$ links in $L^*$ (a small edge). In this case, we stop the traversal along $P$, add $e$ to $R'_\sm$, and include the corresponding links $L_e \cap L^*$ in $L'_\sm$.
    \item We reach the end of the path $P$ in $Q$ without encountering such an edge (all edges along the traversal are huge in $L^*$).
\end{enumerate}

We remark that this extension process might stop when we hit a leaf-edge of the original subtree $R$. Indeed, if the path $P$, starting from $c$, encounters an edge $e \in R_\sm$, then by the consistency of $\scrF$ with $L^*$, $e$ must be small in $L^*$. Thus, the first condition is met, and the extension along $P$ terminates immediately at $e$. This ensures that the extension respects the existing boundaries defined by $\scrF$.

Because the growth process is uniquely defined starting from the canonical center $c$ and proceeding along the paths according to the structure of $L^*$, the resulting event $\scrF'$ is the unique consistent extension of $\scrF$. This confirms the validity of the constraints. 

\section{Construction of Low Cost Structured Fractional Solution}\label{sec:red_always_small}

This section is dedicated to proving the following theorem, which reduces general instances of WTAP to the problem of rounding a Structured Fractional Solution:
\reductiontostructured*

Before providing the formal reduction, we first fix the parameters of the strong linear programming relaxation and the Structured Fractional Solution.

\paragraph{Parameters.} Fix an arbitrarily small parameter $\epsilon > 0$. Based on $\epsilon$, we define the following parameters:
\[
    \rho = \lceil 2/\epsilon^2 \rceil, \qquad \delta = \epsilon, \qquad \beta = \rho, \qquad \rho' = \rho \cdot (\beta+1).
\]
Here, $\rho$ bounds the size of \emph{small} subsets of links $S \subseteq L$, meaning $1 \leq |S| \leq \rho$ for the Strong LP, and similarly $\rho'$ bounds the subsets of small subsets of links for the Structured LP; and $\beta$ bounds the number of leaves in the subtrees $R \in \subtrees$ considered in the Strong LP. While the LPs sizes grow with both $\rho$ and $\beta$, they remain polynomial in the input size for any fixed $\epsilon$. The parameter $\delta$ is used to define the notion of correlated edges, as introduced in \cref{sec:overview}.

A crucial step in our reduction is to show that we can effectively ignore edges $e \in E$ for which the Strong LP assigns a non-negligible probability to the event that $e$ is covered by many links, by computing a solution of negligible cost that covers all such edges.

\paragraph{$\epsilon$-almost-always-small Edges}
Consider a WTAP instance together with a fractional solution 
$(x, y, \{y_\ell\}_{\ell \in L})$ to the Strong LP 
(from \Cref{sec:strongLP}).  
Suppose that for an edge $e \in E$, the event 
$\scrF = (\{e\}, \emptyset, \emptyset)$ satisfies 
$y(\scrF) > \epsilon$.  
Intuitively, this means that it is relatively inexpensive 
(with respect to~$x$) to cover~$e$.  
Using standard techniques, we will see that we can essentially ignore such edges 
and restrict our attention to those edges $e$ 
for which $y(\scrF) \le \epsilon$.

For technical convenience, we will impose a slightly stronger condition 
on the edges that we retain:
\begin{equation}\label{eq:esp_always_small}
    x(L_e) \le \epsilon \cdot \rho.
\end{equation}
Any edge satisfying \Cref{eq:esp_always_small} is said to be 
\emph{$\epsilon$-almost-always-small}.  
An instance (together with its fractional solution) is called 
\emph{$\epsilon$-almost-always-small} if every edge satisfies this condition.
Finally we remark, by \Cref{constr:huge_edges_const_child}, 
if \Cref{eq:esp_always_small} holds for an edge~$e$, 
then indeed $y(\scrF) \le \epsilon$.

\paragraph{Notation $e_r$ and Pairwise Correlated Edges}
For ease of exposition--- and to avoid treating the root as a special case --- we introduce a dummy edge $e_r$, referred to as the \emph{parent edge of the root}, which is not covered by any links. We assume $y(e_r, \emptyset) = 1$.

Recall from \Cref{subsec:structured_LP_struc_frac} that we defined correlated 
nodes and edges, which are used in constructing the Structured LP and 
Structured Fractional Solutions, and that for each $v \in V$ we use $E(v)$ to denote the star containing all edges incident to $v$. To formally define the set~$V_{\mathrm{cor}}$ that our reduction uses, 
we make use of the notion of \emph{pairwise correlation}. 
Given a fractional link vector $x : L \to \mathbb{R}_{\ge 0}$, 
we say that two edges $e, e' \in E$ are \emph{pairwise correlated} if
\[
    x(L_e \cap L_{e'}) \ge \delta.
\]

\paragraph{Reduction.}
Given an instance $(T, L, w)$ of WTAP, our reduction proceeds as follows:
\begin{enumerate}[label=(\arabic*)]
    \item Solve the Strong LP from ~\cref{sec:strongLP} to obtain $(x, y, \{y_\ell\}_{\ell \in L})$, with parameters $(\rho, \beta)$.
    \item Contract all edges that are not $\epsilon$-almost-always-small with respect to $(x, y, \{y_\ell\}_{\ell \in L})$ to obtain tree $T'=(V', E')$ and find a negligible cost solution $L_\lrg$ covering all such edges (details in \cref{subsec:red_almost_always_small}).
    \item On the residual instance defined by \(T'\), solve the Strong LP with the 
    additional constraints given by \Cref{eq:esp_always_small} for every 
    \(e \in E'\), obtaining a new fractional solution 
    \((x', y', \{y'_\ell\}_{\ell \in L})\). 
    This enforces that each edge remains 
    \(\epsilon\)-almost-always-small.
    \label{step:solve_eps_always_sm}
    \item 
    We directly define \(E^*(v)\) for each \(v \in V'\), as the subset of edges in \(E(v)\) consisting of \(e_v\) 
    together with all edges that are pairwise correlated with \(e_v\) (with respect to $x'$).
    This also immediately defines $V_{cor}$ as the set of vertices $v$ such that $e_v$ is contained in $E^*(u)$ where $u$ is the parent of $v$.
    \item Next, remove from \(L\) all links that simultaneously cover both 
    \(e_v\) and an edge in \(E(v) \setminus E^*(v)\). 
    Then, solve for an optimal Structured Fractional Solution with parameter 
    \(\rho'\) bounding the size of small link set, yielding \((x'', y'', z'')\). 
    Recall that a Structured Fractional Solution satisfies    \Cref{eq:constr:small_prob_1:marginals,eq:uncorr_cost}.
    \label{step:obtain_structured_frac_solution}
    \item Apply $\calA$ to \((x'', y'', z'')\), and output the 
    resulting solution, augmented with the links in \(L_\lrg\).
\end{enumerate}

We remark here that we will abuse notation by often using $E^*(e_v)$ and $E(e_v)$ to denote the sets $E^*(v)$ and $E(v)$.

\paragraph{Main Lemmas.}
The remainder of this section is dedicated to proving the following two lemmas.

\begin{restatable}[
    label={lem:reduction_to_eps_small}
]{lemma}{reductiontoepssmall}
The solution $(x', y', \{y_\ell'\}_{\ell \in L})$ obtained in Step \ref{step:solve_eps_always_sm} of our reduction has cost at most that of the input solution $(x, y, \{y_\ell\}_{\ell \in L})$ to the Strong LP.
Furthermore the cost of $L_\lrg$ is at most $\epsilon \cdot c(x(L)).$
\end{restatable}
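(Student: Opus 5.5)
We prove the two claims separately. The cost claim in Step~\ref{step:solve_eps_always_sm} comes from restricting the original Strong LP solution to the contracted instance. The bound on $c(L_\lrg)$ comes from a scaling-and-splitting argument on the links covering the large edges.

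\textbf{Bounding $c(L_\lrg)$.} Let $E_\lrg$ be the set of edges that are not $\epsilon$-almost-always-small, i.e.\ $x(L_e) > \epsilon\rho$. Consider the vector $\hat x = \frac{1}{\epsilon\rho} x$. Every $e \in E_\lrg$ satisfies $\hat x(L_e) \ge 1$, so $\hat x$ is feasible for the Cut LP restricted to covering $E_\lrg$. Now split every cross-link and in-link into its two up-link shadows (the instance is shadow complete) to get $\tilde x$. Then $\tilde x$ still covers $E_\lrg$ and has cost at most $\frac{2}{\epsilon\rho} c(x(L))$. The Cut LP restricted to up-links is integral, so we can compute an up-link solution $L_\lrg$ covering $E_\lrg$ with $c(L_\lrg) \le \frac{2}{\epsilon\rho} c(x(L))$. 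Since $\rho = \lceil 2/\epsilon^2\rceil$, we have $\frac{2}{\epsilon\rho} \le \epsilon$, which gives the claimed bound.

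\textbf{Cost of the new solution.} It suffices to exhibit a feasible solution of the new LP, namely the Strong LP on $T'$ with the extra constraints \eqref{eq:esp_always_small}, whose cost is at most $c(x(L))$. Contracting $E_\lrg$ turns each link into a link of $T'$ covering the path $P_\ell \setminus E_\lrg$. Links whose path becomes empty are dropped, and parallel links are merged by keeping the cheapest, with $x$-values added. We take $x'$ to be this induced vector, so $c(x') \le c(x)$. Every remaining edge $e \in E'$ was small in $T$, and $L'_e$ is the image of $L_e$, so $x'(L'_e) = x(L_e) \le \epsilon\rho$ and \eqref{eq:esp_always_small} holds.

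The odd cut constraints \eqref{constr:odd_cut_consraint} for $T'$ hold because a vertex set $S'$ of $T'$ lifts to a vertex set $S$ of $T$ by uncontracting. This preserves the tree edges crossing the cut, the links crossing the cut, and each $x(L_e)$.

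For the event variables, note that each subtree $R'$ of $T'$ with at most $\beta+3$ leaves corresponds to a subtree $R$ of $T$: the preimage, possibly with extra contracted edges attached. The natural definition is to aggregate. We set $y'(\scrF')$ equal to the sum of $y(\scrF)$ over events $\scrF$ on the preimage subtree $R$ whose restriction to the uncontracted edges is $\scrF'$, and define $y'_\ell$ analogously.

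\textbf{Main obstacle.} The hard part is checking the extension-consistency constraints \eqref{constr:extension_consistency} and \eqref{constr:extension_consistency_with_ell}. Two issues arise. First, preimages of subtrees can have more leaves than $\beta+3$. Second, extension centers and the huge-path semantics change after contraction, because a contracted edge can be small in $T$.

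To avoid both issues, I would use the fact that $E_\lrg$-edges are large in the LP. Alternatively, I would avoid lifting event variables altogether: first argue that the Strong LP on $T'$ is a relaxation, then compare its optimum with the integral/projected structure. Concretely, I would try the following. Show that the original solution conditioned on consistent events projects correctly, by choosing the preimage of $R'$ to be the minimal subtree of $T$ spanning its vertices. This preimage has the same leaf set, so it stays within the leaf budget. Contracted edges are then interior, and the extension conditions only refer to leaf edges of $R'$, which are uncontracted. Verifying this carefully, especially for the $|R|=1$ center rule, is where I expect most of the work to lie.
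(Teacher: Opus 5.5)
Your bound on $c(L_\lrg)$ is the paper's argument: scale $x$ so every edge with $x(L_e)>\epsilon\rho$ is fractionally covered, split into up-links, use integrality, and note $2/(\epsilon\rho)\le\epsilon$. Your checks of the odd-cut constraints and of $x'(L_e)\le\epsilon\rho$ are also fine.

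The gap is in the event variables. You leave this part open, and the concrete plan you sketch does not work. In the Strong LP only leaf edges of $R$ can belong to $R_\sm$; every other edge of $R$ is declared huge. With the minimal preimage, all contracted edges are interior. So the event of $T$ you associate with $\scrF'$ asserts that every contracted edge in it is covered by more than $\rho$ links, and your $y'$ only sees LP mass on such configurations. But $e\in E_\lrg$ only means $x(L_e)>\epsilon\rho$. Since $\epsilon\rho<\rho$, this is compatible with $y(e,\bot)=0$, i.e.\ $e$ is small in every event.

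Concretely, take a path $p,w,w',q$ with $g=pw$ and $h=w'q$ uncontracted, $e=ww'$ contracted with $y(e,\bot)=0$, and $Q'=\{g,h\}$. Fix any $S_g$ with $y((g,S_g))>0$. In $T$, all of $y((g,S_g))$ sits on extensions to $\{g,e,h\}$ that stop at the small edge $e$, which is a leaf edge in $R_\sm$ of the extension. By contrast, every event your rule assigns to an element of $\Ext'((g,S_g),Q')$ declares $e$ huge. Such an event is an extension of $(e,\bot)$, so it has value $0$. Hence $\sum_{\Ext'}y'=0<y'((g,S_g))$, violating \eqref{constr:extension_consistency}.

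Your remark that the extension conditions only involve leaf edges of $R'$ misses that, in $T$, leaf edges of the \emph{extension events} can be contracted edges. The fallbacks do not help either. Being large in the LP does not make an edge huge in the events. Showing the Strong LP on $T'$ is a relaxation bounds its value only by $\mathrm{OPT}(T')$, not by the Strong LP value on $T$, which is what the lemma compares against.

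The missing idea is to glue the two sides of a contracted edge by conditioning on its configuration. Contract one edge $e=uv$ at a time, keeping $x'=x$ and the values of all events not containing $e$. For an event $\scrF$ of $T'$ whose subtree contains the supernode, split the expanded subtree into $R_u,R_v$ meeting exactly in $e$. Let $\scrE^u_S,\scrE^v_S$ be the events on the two sides that agree with $\scrF$ and use $S\in\scov^+(e)$ on $e$. Set $y'(\scrF)=\sum_{S}y(\scrE^u_S)\,y(\scrE^v_S)/y(e,S)$ over $S$ with $y(e,S)>0$, and define $y'_\ell$ analogously. The terms with $S\in\scov(e)$ are exactly what recover the mass on configurations where $e$ is small. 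Extension consistency in $T'$ then follows from the original constraints applied on each side separately: multiply each term by $\sum_{\scrE'\in\Ext((e,S),Q_v)}y(\scrE')/y(e,S)=1$, where $Q_v$ is the part of the expanded $Q$ on the other side of $e$, and regroup.

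A smaller point: do not merge parallel links. If two links that become parallel lie together in some $L_\sm$ with positive weight, the merged link cannot satisfy both \eqref{constr:marginal_pres_const_child} and \eqref{constr:inclision_on_small}. Keeping them as distinct links with $x'=x$, as the paper does, avoids this.
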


\begin{restatable}[
    label={lem:reduction_esp_small_always_small}
]{lemma}{reductionepstoalwayssmall}
The Structured Fractional Solution $(x'', y'', z'')$ obtained in Step \ref{step:obtain_structured_frac_solution} satisfies $$c(z''(L)) \leq (1+o_\epsilon(1))c(x'(L)),$$ where $(x', y', \{y'_\ell\}_{\ell \in L})$ is the $\epsilon$-almost-always-small solution  obtained in Step \ref{step:solve_eps_always_sm}. 
\end{restatable}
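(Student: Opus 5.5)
Since Step~\ref{step:obtain_structured_frac_solution} computes an \emph{optimal} Structured Fractional Solution, it suffices to exhibit one feasible Structured Fractional Solution $(\hat x,\hat y,\hat z)$, for the sets $E^*(v)$ fixed in Step~4 and the reduced link set, with $c(\hat z(L))\le(1+o_\epsilon(1))\,c(x'(L))$. I would build it directly from the $\epsilon$-almost-always-small solution $(x',y',\{y'_\ell\})$ in three stages.

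\textbf{Stage 1: the removed links.} First I would bound the size of the stars. Every correlated child edge $e$ of $v$ satisfies $x'(L_{e_v}\cap L_e)\ge\delta$. The sets $L_{e_v}\cap L_e$ are disjoint over children $e$, since a link uses at most one child edge of $v$ together with $e_v$. Hence, by \eqref{eq:esp_always_small}, there are at most $\epsilon\rho/\delta=\rho=\beta$ correlated children. So every star $E^*(v)$ plus two extra edges is a subtree with at most $\beta+3$ leaves and lies in $\subtrees$.

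Next I would handle the links deleted in Step~\ref{step:obtain_structured_frac_solution}, i.e.\ those covering $e_v$ and an uncorrelated child edge $e$. I would replace each such link $\ell$ by its two shadows meeting at $v$: one running upward through $e_v$, one running downward through $e$. Both exist by shadow-completeness and each costs at most $c(\ell)$. After this replacement, the support property ``$\operatorname{supp} z(L_v)\cap\operatorname{supp} z(L_e)=\emptyset$'' holds, since every link covering both $e_v$ and $e$ has been cut at $v$.

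The extra cost is $\sum c(\ell)x'(\ell)$ over the split links. This is the step I expect to be the main obstacle, because a single link may be split at many vertices, and the mass removed at one vertex is only bounded by $x'(L_{e_v})$, not by $\delta$. My first attempt is an averaging argument over the threshold. I would pick $\delta$ at random, or choose the best, among $\approx1/\epsilon$ geometrically separated scales in $[\epsilon^{C},\epsilon]$ (adjusting $\rho,\beta$ accordingly). I would then show that, for a fixed link, only $O(1)$ of these scales can ``cut'' it at a vertex where the pair mass lies near the threshold. Failing that, I would charge the split mass at $v$ to the uncorrelated pairs, each of mass $<\delta$, and use $x'(L_{e_v})\le\epsilon\rho$ together with the Strong LP's huge-edge constraints to show that the total over all $v$ is $o_\epsilon(1)\,c(x'(L))$.

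\textbf{Stage 2: the event vector $\hat y$.} For every star $F\in\stars$, let $R_F$ be the subtree formed by $F$, and take the Strong LP events $\scrF'\in\Ext((\{e_v\},\cdot,\cdot),R_F)$. Since every edge is $\epsilon$-almost-always-small, the total $y'$-mass of extensions with a huge edge is at most $O(\epsilon\,|F|)$. I would discard those extensions and replace the lost coverage by a cheap uplink fix, charged to $\epsilon$ times the LP cost. On the remaining events every edge of $F$ is small, so they are genuine events $(F,L')$ with at most $\rho$ links per edge. The rerouting of Stage~1 can multiply this by at most $\beta+1$, which explains the bound $\rho'=\rho(\beta+1)$.

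Coverage \eqref{eq:constr:small_prob_1:coverage} follows from \eqref{constr:cov_consistency}, and the marginal constraints follow from \eqref{constr:marginal_pres_const_child} and \eqref{constr:inclision_on_small}. The consistency constraints \eqref{eq:constr:small_prob_1:consistency} follow by applying \eqref{constr:extension_consistency} twice: once from $\{e_v\}$ to $E_1$ and once from $E_1$ to $E_2$. For this I would check that the extension centre is $v$ in both applications, so that the extensions compose.

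\textbf{Stage 3: the vector $\hat z$ and the cost conditions.} Set $\hat z$ equal to $x'$ after the Stage~1 rerouting and the Stage~2 fixes. The Odd Cut constraints \eqref{constr:odd_cut_consraint} are inherited from the Strong LP. Replacing a link by shadows whose union covers the same path, and adding links, never decreases the left-hand side of \eqref{eq:constr:small_prob_1:odd_cut}; I would verify this cut by cut. Taking $\hat x\le\hat z$ coordinatewise on non-uplinks gives $c(\hat x(L))\le c(\hat z(L))$ and \eqref{eq:uncorr_cost}. Summing the losses from Stages~1 and~2 yields $c(\hat z(L))\le(1+o_\epsilon(1))\,c(x'(L))$, which proves the lemma by optimality of $(x'',y'',z'')$.
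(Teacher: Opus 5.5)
Your decomposition matches the paper's: Stage~1 plays the role of \cref{lem:removing_overlap_uncorrelated}, and Stages~2--3 play the role of \cref{lem:esp_alm_sm_struc_frac}. The optimality framing and the leaf count for stars are also fine. Both substantive stages, however, have genuine gaps.

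\textbf{Stage~1.} Paying up to $c(\ell)$ for the lower shadow of each dangerous link $\ell$ comes with no cost guarantee. That shadow can cost as much as $\ell$, since shadow-completeness only makes it as cheap as the cheapest \emph{single} link whose path contains it. When $v$ has many uncorrelated children, each receiving mass $<\delta$ from $L_{e_v}$, the dangerous mass can be almost all of $x'(L_{e_v})$. If covering $e_v$ dominates the LP cost, your construction then nearly doubles it. Your proposed remedies do not fix this. Averaging over scales of $\delta$ fails, because a link whose pair mass at $v$ lies below the smallest scale is cut at every scale. Charging the split \emph{mass} to the uncorrelated pairs does not bound its \emph{cost}, and $x'(L_{e_v})\le\epsilon\rho\approx 2/\epsilon$ is not small anyway.

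The missing idea is never to pay for the lower shadow itself. The paper handles the lowest uncorrelated edge $f=uv$ with $x'(L_{e_u}\cap L_f)>0$ first, so that no new dangerous links are created. It keeps the upper shadow $\ell_1$ and replaces the part of $P_\ell$ below $u$ by a cheapest up-link cover $H$ of that path. Every edge on that path is covered to extent at least $1-\delta$ by links avoiding $e_u$: a link covering both passes through $f$, so it lies in $L_{e_u}\cap L_f$, which has mass $<\delta$. These avoiding links are charged to $f$ through the boundedly overlapping sets $M(f)$. Hence $c(H)=O(c(x'(M(f))))$, and since the dangerous mass at $f$ is $<\delta$, the total increase is $O(\delta)\,c(x'(L))$. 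This replacement must also be carried out inside the Strong LP solution: move the $y'$ and $y'_\ell$ mass to events containing $\ell_1$ and $H$, then recheck all constraints (\cref{apx:formal_preprocessing}). Your Stage~2 draws its events from that solution, so modifying only $x'$ leaves dangerous links inside them.

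\textbf{Stage~2.} First, the mass you discard is not small. A star can have $\beta+3=\Theta(\epsilon^{-2})$ edges, so $O(\epsilon|F|)$ is $\Theta(1/\epsilon)$. Nothing prevents some correlated child of $v$ from being huge with probability close to $1$. The paper never discards such events. A huge edge of a sampled event $\scrF$ is covered either by $L_{\sm}(\scrF)$ or by an up-link cover drawn from $\calH_{\scrF}$, which uses each $\ell$ in expectation at most $\frac{2}{\rho}\cdot\frac{y_\ell(\scrF)}{y(\scrF)}$ times (\cref{lem:covering_huge_edges}). Only the per-edge probability of a conditional reset, which is $O(\epsilon)$, has to be small.

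Second, any patch must preserve \eqref{eq:constr:small_prob_1:consistency} and \eqref{eq:constr:small_prob_1:marginals}. The star at $v$, the star at a child $w$, and $\{e_w\}$ must all induce the same distribution on $L_{e_w}$. Independently chosen patches will not achieve this. Moreover, the mass of small events at $e$ containing $\ell$ is $x'(\ell)-y'_\ell(e,\bot)$, which varies along $P_\ell$. So no single value of $\hat x(\ell)$ satisfies \eqref{eq:constr:small_prob_1:marginals} at every edge of $P_\ell$, unless links are replaced by shadows.

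This is why the paper defines $y''$ as the marking probabilities of one global top-down random process, which makes consistency automatic. That process samples from $\Ext((f,S),T_f)$ and its one-edge extensions. It performs conditional resets from the distributions $\calD_e$ of \cref{lem:event_usage_bounds}. It trims links to shadows, and it uses a pairing distribution $\phi$ for stars with two uncorrelated edges. The paper then proves that the shadows of each link receive total mass at most $(1+9\sqrt{\epsilon})$ times its Strong LP value.

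Relatedly, splitting at $v$ never increases the number of links on an edge, so it cannot explain $\rho'=\rho(\beta+1)$. That bound arises when a huge edge of $T_f$ receives $L_{\sm}\cap L_e$ from up to $\beta+1$ small leaf edges.
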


The proof of \Cref{thm:reduction_to_structured}  then directly follows from \Cref{lem:reduction_to_eps_small,lem:reduction_esp_small_always_small}. To see this we remark that one can simply solve the Strong LP, determine $T'$ and $L_\lrg$ and then solve for an optimal Structured Fractional Solution directly with respect to $T'$. Completing the argument we run $\calA$ on the obtained solution.


\paragraph{Notation $y(e, S)$, $\scov^+(e)$ and $\events(F)$.}
Events of the form $\scrE = (R, R_\sm, L_\sm)$ where $R = \{e\}$ consists of a single edge will play a special role in our analysis. In this case, the subset $R_\sm$ is redundant: $R_\sm = \emptyset$ if and only if $L_\sm = \emptyset$. Thus, we simplify the notation by writing such an event as $(e, L_\sm)$, and often rename $L_\sm$ to $S$. Accordingly, we denote the associated LP variables by $y(e, \bot)$ for the case where $L_\sm = \emptyset$, and by $y(e, S)$ for each $S \in \scov(e)$ when $L_\sm = S$.

Recall from \cref{sec:prelims} that for a set of edges $F \subseteq E$, we let $\scov(F) \subseteq 2^{L(F)}$ denote the collection of link sets that are \emph{small} on $F$ and jointly cover $F$ --- that is, sets $S \subseteq L(F)$ such that $1 \le |S \cap L_e| \le \rho$ for every edge $e \in F$. For a single edge $e \in T$, we also define the extended set $\scov^+(e) := \scov(e) \cup \{\bot\}$, where $\bot$ represents the case in which $e$ is covered by more than $\rho$ links.

Recall from \Cref{subsec:strongLP_constr} that for the Strong LP and each edge $e \in T$ we use $\events(e)$ to denote the subset of events $\scrF$ such that $R(\scrF)=\{e\}$. We extend this in the natural way so that for each $F \subseteq E$ $\events(F)$ is the subset of events $\scrF$ such that $R(\scrF) = F$.

\subsection{Reduction to Almost Always Small}\label{subsec:red_almost_always_small}
In this section, we prove \Cref{lem:reduction_to_eps_small}.  
The key idea is simple: if an edge $e$ has a relatively large event weight
$y(\scrF) \ge \epsilon$ for $\scrF = (\{e\}, \emptyset, \emptyset)$, then $e$
is already heavily covered in the fractional solution.  
Indeed, by our choice of $\rho$, such an edge must be fractionally covered by
at least $\rho \cdot y(\scrF) \ge 2 / \epsilon$ links.  
Hence, even after scaling the fractional solution $x$ by a factor of
$\epsilon / 2$, these edges remain fractionally covered.

Since the integrality gap of the \hyperlink{cutlp}{Cut LP} is at most~2,
we can round this scaled-down solution to obtain a set of large links,
denoted $L_\lrg$, at an additional cost of at most an
$\epsilon$-fraction of the original LP value. 

Next, let $T' = (V', E')$ denote the tree obtained by contracting the edges that are not
$\epsilon$-almost-always-small, and let $L'$ be the set of links
obtained from $L$ by contracting the same set of edges.  
We define the new weight function $w' : L' \to \R_{\ge 0}$ by setting
$w'(\ell') = w(\ell)$, where $\ell \in L$ is the preimage of $\ell'$ under
contraction.  For simplicity, we continue to refer to this contracted link
set as $L$, with weights $w$.

The remainder of the proof shows that for every $e \in E'$,
$x(L_e) \le \epsilon \cdot \rho$ can
be added to the Strong LP without increasing the cost of the Strong LP for the contracted
WTAP instance $(T', L, w)$. In words, this constraint means that edges which were
already nearly ``small'' in the original LP remain so in the reduced
instance.

To complete the proof of the lemma, it then suffices to solve the Strong LP for the original
instance, identify $T'$, and then solve the contracted instance with these
additional small-edge constraints.

Our construction proceeds by contracting one edge at a time.  
Suppose that $(x, y, \{y_\ell\}_{\ell \in L})$ is feasible for the Strong LP
for the instance $(T, L, w)$.  We then build a feasible solution
$(x', y', \{y'_\ell\}_{\ell \in L})$ for the contracted instance $(T', L, w)$
such that:
\begin{itemize}
    \item $T'$ is obtained from $T$ by contracting an edge
          $e=uv \in E$ that is not $\epsilon$-almost-always-small, and
    \item every edge in $T'$ that was $\epsilon$-always-small in the
          original LP solution remains $\epsilon$-always-small in the new
          solution.
\end{itemize}

We will use $\events,$ $\subtrees$ and $\Ext$ to denote the corresponding sets for
the original instance, and $\events',$ $\subtrees'$ and $\Ext'$ for the contracted
instance.  The key point is that the consistency constraints
(\Cref{constr:extension_consistency,constr:extension_consistency_with_ell})
allow the LP to assign value to events involving overlapping or nested
subtrees in $\subtrees$. 

This property is crucial for defining the LP solution in the contracted instance: it ensures that
when we contract an edge $e$, the LP can consistently assign value to
events in $\events'$ by effectively ``ignoring'' what happens at $e$.
In other words, the constraints preserve the structure of the fractional
solution across contractions, allowing us to define a feasible LP on the
smaller tree $T'$.

\newcommand{\expand}{\mathsf{expand}}
\paragraph{Defining our New LP solution.}
For ease of exposition, we use $\scrE, \scrE'$ etc. to denote events in the original LP and $\scrF, \scrF'$ etc. to denote events in the contracted LP.  
Throughout this discussion, we fix the edge $e=uv$ being contracted.

By slight abuse of notation we say that $e \in R(\scrF)$ whenever the super-node $s$ created by
contracting $e$ belongs to $R(\scrF)$, and that $e \notin R(\scrF)$ otherwise.

For any tree $R \in \subtrees'$, let $\expand(R)$ denote the corresponding
tree in the original instance obtained by expanding $s$ back into the edge
$e$.

If $e \notin R(\scrF)$, then $\scrF$ naturally corresponds to an element of
$\events$, and we simply set 
\[
  y'(\scrF) = y(\scrF), \qquad 
  y'_\ell(\scrF) = y_\ell(\scrF)
\] for all relevant links~$\ell$.

Now assume $e \in R(\scrF)$.  
Let $R_u$ and $R_v$ be the two subtrees of $\expand(R(\scrF))$ such that:
\begin{itemize}
    \item $R_u \cap R_v = \{e\}$,
    \item $R_u \cup R_v = \expand(R(\scrF))$, and
    \item $u$ is a leaf of $R_u$ and $v$ is a leaf of $R_v$.
\end{itemize}
For an illustration consult \Cref{fig:contraction_events}.
For each $S \in \scov^+(e)$ and $w \in \{u,v\}$,  
define the event $\scrE^w_S$ on $R_w$ that agrees with $\scrF$ on 
$R_w \setminus \{e\}$ and uses $S$ on $e$.  
Formally:

\begin{itemize}
    \item The edge set of $\scrE^w_S$ is $R_w$,  
          i.e., $R(\scrE^w_S) = R_w$.

    \item All small edges of $\scrF$ contained in $R_w$ remain small in 
          $\scrE^w_S$; moreover, if $S \in \scov(e)$, then $e$ is also 
          declared small by $\scrE^w_S$.

    \item The link set $L_\sm(\scrE^w_S)$ consists of all links in 
          $L_\sm(\scrF)$ relevant to $R_w$, together with the 
          links of $S$ whenever $S \in \scov(e)$.

    \item This definition of $\scrE^w_S$ is consistent with the intended 
          coverage: if $S \in \scov(e)$, then for each edge 
          $e' \in R_\sm(\scrE^w_S) \setminus \{e\}$, no link of $S$ outside 
          $L_\sm(\scrF)$ covers $e'$, and no link of $L_\sm(\scrF)$ outside $S$ 
          covers $e$.
\end{itemize}

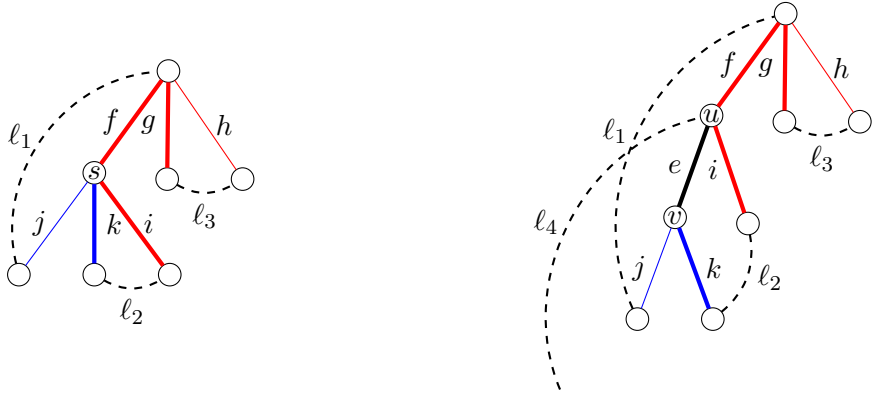
\begin{figure}[t]
    \centering
    \begin{minipage}{0.45\textwidth}
        \centering
        \begin{forest}
        for tree={
        circle, draw, minimum size=0.3cm, inner sep=0pt, outer sep=0pt,
        l sep=30pt, s sep=20pt,
        }
        [{}, name=r, fill=white,  
        [{$s$}, fill = white, name=u,  edge label={node[midway,left]{$f$}}, edge={ultra thick, draw=red}
                    [{}, fill=white, name=v, edge={draw=blue}, edge label={node[midway,left]{$j$}}]
                    [{}, fill=white, name=w, edge label={node[midway,right]{$k$}}, edge={ultra thick, draw=blue}]
                    [{}, fill=white, name=g, edge label={node[midway,right]{$i$}}, edge={ultra thick, draw=red}]
            ]
            [{}, fill=white, name=k, edge label={node[midway,left]{$g$}}, edge={ultra thick, draw=red}]
            [{}, fill=white , name=l, edge label={node[midway,right]{$h$}}, edge={draw=red}]
        ]
        \draw[dashed, thick, bend right] (k) to node[midway, below] {$\ell_3$} (l);
        \draw[dashed, thick, bend right] (w) to node[midway, below] {$\ell_2$} (g);
        \draw[dashed, thick, bend left=50] (v) to node[midway, left] {$\ell_1$} (r);
    \end{forest}
    \end{minipage}
    \begin{minipage}{0.45\textwidth}
        \centering
        \begin{forest}
        for tree={
        circle, draw, minimum size=0.3cm, inner sep=0pt, outer sep=0pt,
        l sep=30pt, s sep=20pt,
        }
        [{}, name=r, fill=white,  
        [{$u$}, fill = white, name=u,  edge label={node[midway,left]{$f$}}, edge={ultra thick, draw=red}
                    [{$v$}, fill=white, name=f, edge label={node[midway,left]{$e$}}, edge=ultra thick
                            [{}, fill=white, name=v, edge label={node[midway,left]{$j$}}, edge={draw=blue}]
                            [{}, fill=white, name=w, edge label={node[midway,right]{$k$}}, edge={ultra thick, draw=blue}]
                    ]
                    [{}, fill=white, name=g, edge label={node[midway,left]{$i$}}, edge={ultra thick, draw=red}]
            ]
            [{}, fill=white, name=k, edge label={node[midway,left]{$g$}}, edge={ultra thick, draw=red}]
            [{}, fill=white , name=l, edge label={node[midway,right]{$h$}}, edge={draw=red}]
        ]
        \draw[dashed, thick, bend right] (k) to node[midway, below] {$\ell_3$} (l);
        \draw[dashed, thick, bend right] (w) to node[midway, right] {$\ell_2$} (g);
        \draw[dashed, thick, bend left=50] (v) to node[midway, left] {$\ell_1$} (r);
        \draw[dashed, thick, bend left=50] (-3, -5) to node[midway, left] {$\ell_4$} (u);
    \end{forest}
        
    \end{minipage}

    \caption{Left: An event $\scrF$ defined on subtree $R(\scrF)$ (edges $f, g, h, i, j, k$) with $R_\sm(\scrF)$ being the thick edges $(f, g, i, k)$ and $L_\sm(\scrF)$ the dashed links ($\ell_1, \ell_2, \ell_3)$. Right: Expanding the supernode $s$ we obtain the tree $\expand(R(\scrF))$, where $R_u$ consists of $e$ plus the blue edges and $R_v$ consists of $e$ plus the red edges. Suitable choices of $S$ include $\{\ell_1, \ell_2\}$ and $\{\ell_1, \ell_2, \ell_4\}$, so that $\scrE_S^w$ will be well-defined for $w \in \{u, v\}$. }
    \label{fig:contraction_events}
\end{figure}

If the resulting event $\scrE^w_S$ is not well-defined,  
we set $y(\scrE^w_S) = 0$ and $y_\ell(\scrE^w_S) = 0$ 
for all relevant links~$\ell$. This occurs when there is a conflict between the edges in $S$ and $L_\sm(\scrF)$. 

Define:
\begin{equation}\label{eq:new_LP_vars}
    y'(\scrF) = \sum_{S \in \scov^+(e): y(e, S) > 0} y(\scrE_S^u) \cdot \frac{y(\scrE_S^v)}{y(e, S)}
\end{equation}
and for each link $\ell$ relevant to $\scrF$:
\begin{equation}
    y'_\ell(\scrF) =
    \begin{cases}
        \sum_{S \in \scov^+(e):y(\scrE_S^u)>0} y_\ell(\scrE^u_S) \cdot \frac{y(\scrE_S^v)}{y(e, S)} & \ell \in L(R_u)\\
        \sum_{S \in \scov^+(e):y(\scrE_S^v)>0} y_\ell(\scrE^v_S) \cdot \frac{y(\scrE_S^u)}{y(e, S)} & \text{ otherwise }.
    \end{cases}
%
\end{equation}

Finally, we set $x'$ equal to $x$.

\paragraph{Proof of Feasibility}
Since $x'=x$, the Odd Cut LP constraints (\Cref{constr:odd_cut_consraint}) are satisfied.
Since for each $e \in E$ and $\scrF \in \events'(e)$ we have $y'(\scrF)=y(\scrF)$ the
Coverage Constraints (\Cref{constr:cov_consistency}) are satisfied. Using the same reasoning the Marginal Preserving Constraints (\Cref{constr:marginal_pres_const_child}) are satisfied.
It is also clear that the Non-Negativity Constraints (\Cref{constr:non_neg}) are satisfied.  We focus our attention on proving the remaining constraints are satisfied. Namely we need to prove that \Cref{constr:inclision_on_small,eq:non_inclusion_links_on_huge_edges,constr:huge_edges_const_child,constr:extension_consistency,constr:extension_consistency_with_ell} hold.

To prove the following for each $\scrF \in \events'$ with $e \in R(\scrF)$ we will use $\scrE_S^w(\scrF)$ to define the events $\scrE_S^w$  for $w \in \{u, v\}$, used to define $y'(\scrF)$ and $y'_\ell(\scrF)$ for each link $\ell$ relevant to $\scrF$.

\begin{fact} For every event $\scrF \in \primeevents$ and link $\ell \in L(R(\scrF))$ \Cref{constr:inclision_on_small,eq:non_inclusion_links_on_huge_edges,constr:huge_edges_const_child} hold.
\end{fact}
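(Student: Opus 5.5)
We split on whether the contracted edge $e$ lies in $R(\scrF)$. If $e \notin R(\scrF)$, then $y'(\scrF)=y(\scrF)$ and $y'_\ell(\scrF)=y_\ell(\scrF)$, so \Cref{constr:inclision_on_small,eq:non_inclusion_links_on_huge_edges,constr:huge_edges_const_child} are inherited verbatim from feasibility of $(x,y,\{y_\ell\})$. The substance is the case $e \in R(\scrF)$. Here $y'$ and $y'_\ell$ are nonnegative combinations, indexed by $S \in \scov^+(e)$ with positive denominator, of original variables on the events $\scrE^u_S$ and $\scrE^v_S$. The weight is the ratio $y(\scrE^{\bar w}_S)/y(e,S)$, where $\bar w$ denotes the other side, and it is independent of $\ell$.

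The plan is to prove each constraint termwise. For a fixed edge $e' \in R(\scrF)$ (viewed in the expanded tree, $e' \neq e$), first note that $e'$ lies in exactly one of $R_u \setminus\{e\}$ and $R_v\setminus\{e\}$. Moreover, every link relevant to $e'$ lies in $L(R_w)$ for the corresponding side $w$. We should check that the case split in the definition of $y'_\ell$ routes $\ell$ to this side $w$. For $\ell \in L_{e'}$, this is automatic when $e' \in R_u$. When $e' \in R_v$ and $\ell \in L(R_u)$ as well, the link $\ell$ covers edges on both sides, hence crosses $e$. Then $\ell \in S$ or $\ell \notin S$ is fixed jointly on both sides, and we would argue that $y_\ell(\scrE^u_S) y(\scrE^v_S) = y_\ell(\scrE^v_S) y(\scrE^u_S)$ in the relevant sub-cases. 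This uses \Cref{constr:inclision_on_small,eq:non_inclusion_links_on_huge_edges} on $e$ itself, and also on $e'$ when it is small.

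With the routing fixed, each constraint follows by linearity. For \Cref{constr:inclision_on_small}, take $\ell \in L_\sm(\scrF)$. By construction $\ell \in L_\sm(\scrE^w_S)$ whenever the event is well-defined, so $y_\ell(\scrE^w_S)=y(\scrE^w_S)$. Summing with the common weights then gives $y'_\ell(\scrF)=y'(\scrF)$. Ill-defined events contribute $0$ to both sides. For \Cref{eq:non_inclusion_links_on_huge_edges}, take $e' \in R_\sm(\scrF)$ and $\ell \in L_{e'}\setminus L_\sm(\scrF)$. Then $\ell \notin L_\sm(\scrE^w_S)$ and $e'$ is small in $\scrE^w_S$, so each term vanishes. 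For \Cref{constr:huge_edges_const_child}, take $e'$ huge in $\scrF$. It is huge in $\scrE^w_S$, so $\sum_{\ell\in L_{e'}} y_\ell(\scrE^w_S) \ge \rho\, y(\scrE^w_S)$. Multiplying by the weight and summing over $S$ yields $\sum_\ell y'_\ell(\scrF) \ge \rho\, y'(\scrF)$.

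The main obstacle is the bookkeeping in the routing step. We must show that the two-case formula for $y'_\ell$ is symmetric enough for links crossing $e$, and that the ill-defined (conflicting) $S$ are harmless. For the symmetry, the plan is to use that for $\ell \in L_e$ and $S \neq \bot$, both $y_\ell(\scrE^w_S)$ values equal either $y(\scrE^w_S)$ or $0$ according to whether $\ell\in S$. For $S=\bot$, the route is forced. Because $S=\bot$ makes $e$ huge, a link covering $e'$ on the $v$-side that also lies in $L(R_u)$ must be handled via its own side, and we would verify the definition does so, adjusting the case condition to "$\ell \in L(R_u)\setminus L(R_v)$" if necessary.
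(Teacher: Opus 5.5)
\textbf{There is a genuine gap, at the step you postponed ($S=\bot$), and it cannot be closed by checking the definition.} The rest matches the paper's proof: inheritance when $e\notin R(\scrF)$, then termwise use of the original constraints on $\scrE^u_S,\scrE^v_S$ with $\ell$-independent weights. Your check for $S\in\scov(e)$ is also correct: on both sides $y_\ell(\scrE^w_S)/y(\scrE^w_S)=\mathbb{I}[\ell\in S]$ by \eqref{constr:inclision_on_small} and \eqref{eq:non_inclusion_links_on_huge_edges} at $e$.

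The definition routes every $\ell\in L(R_u)$, hence every link of $L_e$, through $\scrE^u_S$. For $S=\bot$ the edge $e$ is huge in both $\scrE^u_\bot$ and $\scrE^v_\bot$, so nothing relates $y_\ell(\scrE^u_\bot)/y(\scrE^u_\bot)$ to $y_\ell(\scrE^v_\bot)/y(\scrE^v_\bot)$, and no route is ``forced''. Concretely:
\begin{itemize}
\item Let $\ell\notin L_{\sm}(\scrF)$ cover $e$ and a small leaf edge $e'\in R_v\setminus\{e\}$ of $\scrF$, but no small edge of $R_u$.
\item Take two integral solutions $L_1\ni\ell$ and $L_2\not\ni\ell$. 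Both cover $e$ by more than $\rho$ links and both realize $\scrE^u_\bot$. $L_2$ also realizes $\scrE^v_\bot$; $L_1$ cannot, because $\ell$ covers $e'$.
\item Average their intended LP solutions. Then $y(e,\bot)=y(\scrE^u_\bot)=1$ and $y_\ell(\scrE^u_\bot)=y(\scrE^v_\bot)=\tfrac12$.
\item The definition gives $y'(\scrF)=\tfrac12$ but $y'_\ell(\scrF)=\tfrac14\neq 0$, violating \eqref{eq:non_inclusion_links_on_huge_edges} at $e'$.
\end{itemize}
So with $y'_\ell$ as defined in the paper, the statement fails for this $\scrF$ and $\ell$. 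Your fallback condition $\ell\in L(R_u)\setminus L(R_v)$ sends all of $L_e$ to the $v$-side, and fails on the mirror-image example.

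\textbf{Choosing a different side for crossing links does not rescue the termwise argument either.}
\begin{itemize}
\item Let $\ell_1,\ell_2\notin L_{\sm}(\scrF)$ each cover $e$, a small leaf edge $e''\in R_u\setminus\{e\}$, and a huge edge $e'\in R_v\setminus\{e\}$ of $\scrF$.
\item Let $B$ be $\rho-1$ links covering $e'$ but not $e$.
\item Mix, with weight $\tfrac12$ each, two integral solutions $L_a,L_b$, both with $e$ huge. In $L_a$, $L_a\cap L_{e'}=B\cup\{\ell_1,\ell_2\}$ and $L_a$ realizes $\scrE^v_\bot$. In $L_b$, $L_b\cap L_{e'}=B$ and $L_b$ realizes $\scrE^u_\bot$.
\item Then $y'(\scrF)=\tfrac14$, and \eqref{eq:non_inclusion_links_on_huge_edges} at $e''$ forces $y'_{\ell_i}(\scrF)=0$.
\item The links of $B$ supply only $(\rho-1)/4<\rho\, y'(\scrF)$ to \eqref{constr:huge_edges_const_child} at $e'$.
\end{itemize}

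The underlying issue is that gluing the two sides through $(e,\bot)$ forgets which links crossing $e$ are used. A repair therefore has to change how $y'$ is constructed, not just its case split. The paper's own proof does not supply this. It writes out only \eqref{constr:inclision_on_small} for $\ell\in L(R_u)$ and calls the remaining cases ``analogous'' and ``similar''. The obstacle you flagged is a hole in the paper's argument as well, not something you overlooked.
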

\begin{proof}
    It suffices to prove the constraints hold for the case when $e \in \scrF$, since otherwise $y(\scrF)=y'(\scrF)$ and $y_\ell(\scrF)=y'_\ell(\scrF)$ for each relevant link $\ell$. 
    First we consider \Cref{constr:inclision_on_small}.
    Suppose that $\ell \in L_\sm(\scrF)$. Then, if $\ell \in L(R_u)$ we know that for each $S \in \scov^+(e)$ such that $y_\ell(\scrE_S^u) > 0$ we have that $y_\ell(\scrE^u_S)=y(\scrE^u_S)$ by feasibility of the original solution, and the fact that $\ell \in L_\sm(\scrE_S^u)$. Therefore 

    $$y_\ell'(\scrF) = \sum_{S \in \scov^+(e):y(\scrE_S^u)>0} y_\ell(\scrE^u_S) \cdot \frac{y(\scrE_S^v)}{y(e, S)}  =  \sum_{S \in \scov^+(e):y(\scrE_S^u)>0} y(\scrE^u_S) \cdot \frac{y(\scrE_S^v)}{y(e, S)}= y'(\scrF).$$ An analogous argument can be made for when $\ell \in L(R_v) \setminus L(R_u)$.
    Similar arguments show \Cref{eq:non_inclusion_links_on_huge_edges,constr:huge_edges_const_child} hold and are omitted.

\end{proof}

\begin{fact} For every event $\scrF = (R, R_\sm, L_\sm) \in \events$ and every 
$Q \in \subtrees$ satisfying $R \subseteq Q$, 
~\Cref{constr:extension_consistency} is satisfied, and for each link every $\ell \in L(R(\scrF))$
the constraint~\Cref{constr:extension_consistency_with_ell} is satisfied.
\end{fact}

\begin{proof}
    We prove the claim for \Cref{constr:extension_consistency}. The argument for \Cref{constr:extension_consistency_with_ell}  is analogous and is omitted.
    If $e \notin \expand(Q)$, the constraint holds by feasibility of the input 
    solution. In the remainder, assume $e \in  \expand(Q)$. In a similar fashion we let $Q_u$ and $Q_v$ be the subtrees of $Q$ such that:
    \begin{itemize}
    \item $Q_u \cap Q_v = \{e\}$,
    \item $Q_u \cup Q_v = Q$, and
    \item $u$ is a leaf of $Q_u$ and $v$ is a leaf of $Q_v$.
\end{itemize}

\paragraph{Case 1: $e \notin R(\scrF)$.}
    Then $R(\scrF) \subseteq Q_w$ for some $w \in \{u,v\}$; without loss of generality, let $w = u$.  

Partition $\Ext(\scrF, Q_u)$ into $A$ and $B$, 
where $A$ consists of those events $\scrE$ with $e \in R(\scrE)$.  
Similarly, partition $\Ext'(\scrF, Q)$ into $A'$ and $B'$, 
where $A'$ contains those events $\scrF'$ with $e \in R(\scrF')$

    Then 
    \begin{align*}
        y'(\scrF) 
        & = \sum_{\scrE \in A \cup B} y(\scrE) \\
        & = \sum_{\scrE \in A} \left( y(\scrE) \sum_{\scrE' \in \Ext((e, L_\sm(\scrE) \cap L_e), Q_v)} \frac{y(\scrE')}{y(e, L_\sm(\scrE) \cap L_e)}\right)+ \sum_{\scrE \in B} y(\scrE)\\
        & = \sum_{\scrF' \in A'} \sum_{S \in \scov^+(e)} y(\scrE^u_S(\scrF')) \cdot \frac{y(\scrE^v_S(\scrF'))}{y(e, S)} + \sum_{\scrF' \in B'} y'(\scrF')\\
        & = \sum_{\scrF' \in A'} y'(\scrF') + \sum_{\scrF' \in B'} y'(\scrF')\\
        & = \sum_{\scrF' \in \Ext'(\scrF, Q)} y'(\scrF').
    \end{align*}
    The first inequality follows by feasibility of the original solution. Formally we have
\[
  y'(\scrF) = y(\scrF)
            = \sum_{\scrE \in \Ext(\scrF, Q_u)} y(\scrE).
\]
    The second equality multiplies each term in the first sum by one; 
this is valid because the input solution satisfies 
\Cref{constr:extension_consistency} for the event 
$(e, L_\sm(\scrE) \cap L_e)$ and tree $Q_v$. The third equality groups terms based on their correspondence to elements in $A'$ 
and the final equality applies the definition of $y'(\scrF')$ 
for each resulting event~$\scrF'$.

\paragraph{Case 2: $e \in R(\scrF)$.}
We have
    \begin{align*}
        y'(\scrF) & = \sum_{S \in \scov^+(e): y(e, S) > 0} y(\scrE_S^u(\scrF)) \cdot \frac{y(\scrE_S^v(\scrF))}{y(e, S)}\\
        & = \sum_{S \in \scov^+(e): y(e, S) > 0} \left(\sum_{\scrE' \in \Ext(\scrE_S^u(\scrF), Q_u)} y(\scrE') \cdot \frac{\sum_{\scrE' \in \Ext(\scrE_S^v(\scrF), Q_v)}y(\scrE')}{y(e, S)} \right)\\
        & =  \sum_{\scrF' \in \Ext(\scrF, Q)} \sum_{S \in \scov^+(e): y(e, S) > 0} y(\scrE_S^u(\scrF')) \cdot \frac{y(\scrE_S^v(\scrF'))}{y(e, S)}\\
        & = \sum_{\scrF' \in \Ext(\scrF, Q)} y'(\scrF)
    \end{align*}
Our reasoning follows a similar template to the previous case.
    The first equality follows from the definition of $y'(\scrF)$ in
\Cref{eq:new_LP_vars}.  
The second applies the consistency constraints for the input solution,
namely~\Cref{constr:extension_consistency} for each 
$\scrE^w_S(\scrF)$, $Q_w$, where 
$w \in \{u,v\}$ and $S \in \scov^+(e)$.  
The third equality groups terms based on their correspondence to elements in $\Ext(\scrF, Q)$. The fourth inequality follows by applying the definition of $y'(\scrF')$ for each $\scrF' \in  \Ext(\scrF, Q)$.
\end{proof}

\subsection{Pre-processing of Almost Always Small Solutions}\label{subsec:preprocessing_removing_correlations}

To ensure that we can convert our $\epsilon$-almost-always-small solution 
into the desired Structured Fractional Solution, we first preprocess it so 
that any given edge shares links with only a constant number of its children.  
Namely, $e_v$ will share links only with its children in $E^*(v)$, where 
$E^*(v)$ is the subset of edges in $E(v)$ consisting of $e_v$ together with 
all edges correlated with $e_v$ (with respect to $x$).
We recall that for each $v \in V$ the subset of edges in \(E(v)\) consists of \(e_v\) 
together with all edges that are pairwise correlated with \(e_v\) (with respect to $x$).
The main result of this section is the following. 

\begin{lemma}\label{lem:removing_overlap_uncorrelated}
Given any solution $(x, y, \{y_\ell\}_{\ell \in L})$ to the Strong LP for a 
WTAP instance, we can compute another solution 
$(x', y', \{y'_\ell\}_{\ell \in L})$ to the same LP such that:
\begin{enumerate}
    \item for each $v \in V$, 
    $x'(L_{e_v}) \cap x'(L_e) = \emptyset$ for all 
    $e \in E(v) \setminus E^*(v)$, and
    \item 
    \[
       c(x'(L)) = (1 + o_\epsilon(1)) \cdot c(x(L)).
    \]
\end{enumerate}
\end{lemma}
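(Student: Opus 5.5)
The plan is to remove the overlap between $e_v$ and its uncorrelated children by splitting the offending links. Consider a link $\ell$ with $x(\ell)>0$ that covers $e_v$ and some child edge $e=\{v,w\}\in E(v)\setminus E^*(v)$. Then $\ell$ passes through $v$ and its apex lies strictly above $v$. We split $\ell$ at $v$ into two shadows: the up-link $\ell_v$ from the endpoint below $w$ up to $v$, and the link $\ell'$ from $v$ to the other endpoint. By \hyperlink{shadowcomplete}{shadow completeness} both are available at cost at most $c(\ell)$. We give each the value $x(\ell)$. On the event side, every event $\scrF$ with $\ell$ relevant (i.e.\ $y_\ell(\scrF)>0$ or $\ell\in L_\sm(\scrF)$) is modified by replacing $\ell$ with $\ell_v$ on edges below $v$ and with $\ell'$ on edges above $v$. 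We set $y_{\ell_v}(\scrF)$ and $y_{\ell'}(\scrF)$ equal to the old $y_\ell(\scrF)$, restricted to the relevant edges.

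We then verify feasibility in turn.
\begin{itemize}
\item The marginal constraints \eqref{constr:marginal_pres_const_child} hold edge by edge, because each edge previously covered by $\ell$ is now covered by exactly one of $\ell_v,\ell'$ with the same value.
\item The constraints \eqref{constr:inclision_on_small}--\eqref{constr:huge_edges_const_child} and the extension consistency constraints \eqref{constr:extension_consistency}, \eqref{constr:extension_consistency_with_ell} transfer because the substitution is a bijection on link sets per edge. The only exception is an edge covered by both halves, which is impossible since the halves are edge-disjoint. Small sets therefore stay small and consistent, and $y$ is unchanged.
\item The odd-cut constraints \eqref{constr:odd_cut_consraint} are preserved: a shadow pair covering the same edges can only increase each left-hand side, since $x(\delta_L(S))$ and $x(L_e)$ do not decrease when one link crossing the cut is replaced by two links whose union of paths is the same.
\end{itemize}
After the operation no link with positive value covers both $e_v$ and $e$, which gives property 1.

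The cost analysis is the main obstacle. Splitting doubles the cost of every affected link, so we must show that the total $x$-weight on such links is small relative to the useful weight. The total cost added is $\sum_v\sum_{e\in E(v)\setminus E^*(v)} c(x(L_{e_v}\cap L_e))$, and each term has $x(L_{e_v}\cap L_e)<\delta=\epsilon$ by the definition of pairwise correlation. I would charge this against $x(L_{e_v})\le\epsilon\rho$, but that bound alone is too weak, because there can be many uncorrelated children. The intended use of the Strong LP is the star events at $v$ with $\beta=\rho$ leaves. Conditioned on the small event $(e_v,S)$ with $|S|\le\rho$, at most $\rho$ children share links with $e_v$. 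Let $Q$ be the star consisting of $e_v$ together with the children hit by $S$. Consistency \eqref{constr:extension_consistency_with_ell} on $Q$ then gives, for each link $\ell\in L_{e_v}$, that $\ell$ is charged only on the correct child's edge.

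Accordingly, I would redistribute each link's cost per event $(e_v,S)$. A link $\ell\in S$ continues into exactly one child, and it is split only if that child is uncorrelated. I would bound the probability mass of this by arguing that the children uncorrelated with $e_v$ receive, in total, at most an $o_\epsilon(1)$ fraction of $x(L_{e_v})$ weighted by cost. This needs a cost-weighted version of the correlation threshold, obtained by averaging over a random threshold $\delta\in[\epsilon,\sqrt\epsilon]$, where $\epsilon$-always-smallness ($x(L_{e_v})\le\epsilon\rho=O(1/\epsilon)$) controls the number of scales. I expect this averaging (choosing the threshold so that the split weight is an $o_\epsilon(1)$ fraction of $c(x)$) to be the genuinely delicate step, and I would try it first before falling back on a direct per-vertex charging.
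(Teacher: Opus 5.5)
There is a genuine gap in your cost analysis. You pay for the lower piece with the single shadow $\ell_v$ of $\ell$, so every dangerous link is essentially doubled. Your plan then needs the dangerous weight to be an $o_\epsilon(1)$ fraction of the cost, and that is false for every threshold. The mass of $L_{e_v}$ can be spread over arbitrarily many children, each receiving less than any threshold you pick.

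Concretely, take the following instance.
\begin{itemize}
\item Let $r$ have children $a$ and $v$, and let $v$ have $k>1/\epsilon$ children $w_i$.
\item Each $w_i$ is the top of a path of three edges ending at $t_i$.
\item Let $\ell_i=\{t_i,a\}$ cost $1$, and let all one- and two-edge up-links along the paths cost some tiny $\eta$.
\end{itemize}
Choose $i$ uniformly, take $\ell_i$, and cover every other path by a random tiling in which each two-edge up-link has marginal about $2\sqrt{\epsilon}$. This is a convex combination of integral solutions, hence a feasible, $\epsilon$-almost-always-small Strong LP point of cost about $1$. So no consistency constraint can exclude it.

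For every threshold in $[\epsilon,\sqrt{\epsilon}]$, each $vw_i$ is uncorrelated with $e_v$ (overlap $1/k$), while each path is internally correlated. Hence the $\ell_i$ are dangerous exactly at $v$. The shadow $\{t_i,v\}$ costs $1$, because only $\ell_i$ spans the whole path. Your split replaces each $\ell_i$ by $\{t_i,v\}$ and $\{v,a\}$, which raises the cost from about $1$ to about $2$.

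The missing idea is to pay for the lower piece with a cheapest \emph{set} $H$ of up-links covering the part $P$ of $P_\ell$ below $v$ exactly once. The cost $c(H)$ is then charged to the child side, not to $\ell$ or to $x(L_{e_v})$.
\begin{itemize}
\item Process bottom-up: take the lowest uncorrelated child edge $f$ of a vertex $v$ such that some supported link $\ell$ covers both $e_v$ and $f$.
\item Keep the upper shadow of $\ell$ ending at $v$; it costs at most $c(\ell)$, so there is no loss there. Replace the rest of $\ell$ by $H$.
\item Let $M(f)$ be the links charged to $f$: those with $f$ as a leading edge, plus the correlated links whose apex lies in the correlated component just below $f$. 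Each link is charged to $O(1)$ edges.
\item Because no dangerous link remains below $f$, every edge $e$ of $P$ satisfies $x(M(f)\cap L_e)\ge x(L_e)-x(L_{e_v}\cap L_f)\ge 1-\delta$.
\item Splitting $x|_{M(f)}/(1-\delta)$ into up-links and using integrality of the up-link Cut LP gives $c(H)\le\frac{2}{1-\delta}\,c(x(M(f)))$.
\item The links crossing both $e_v$ and $f$ have total weight below $\delta$. So the extra cost at $f$ is at most $\frac{2\delta}{1-\delta}\,c(x(M(f)))$, and summing over $f$ gives an $O(\delta)\,c(x(L))$ increase.
\end{itemize}
In the example, $H$ consists of the near-free single-edge up-links, so nothing is lost. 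A small overlap between $e_v$ and $f$ means that $f$'s side is almost entirely paid for by other links, and that is what must absorb the split.
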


We call a link $\ell \in L$ \textbf{dangerous} if 
$\ell \in L_{e_v} \cap L_e$ for some 
$e \in E(v) \setminus E^*(v)$, and \textbf{safe} otherwise.  
The argument proceeds by iteratively replacing a dangerous link with possibly 
a shorter dangerous link and a collection of safe links.
In this section, we describe the proof of the lemma that is intuitive. For a fully formal argument with all calculations, we refer the reader to \Cref{apx:formal_preprocessing}.
\paragraph{Description.} 
While there exists a dangerous link in the support of $x'$, the algorithm 
selects the lowest uncorrelated edge $f = uv$ whose parent edge $e_u$ is covered 
by a link $\ell$ in the support of $x'$ that also covers $f$.

We then split $\ell$ at $u$ to create two shadows, $\ell_1$ and $\ell_2$, 
of $\ell$ (covering $e_u$ and $f$, respectively), where $w(\ell_1) = w(\ell)$ 
and $w(\ell_2)$ is defined as the cost of the cheapest up-link solution $H$ 
that:
\begin{itemize}
    \item covers each edge of $P_{\ell_2}$ exactly once, and
    \item covers no edge outside $P_{\ell_2}$.
\end{itemize}
Consult the left image of \Cref{fig:link_splitting} for a visualization of how $\ell_1$ and $\ell_2$ are obtained.

For any event~$\scrF$ with $y'_\ell(\scrF) > 0$, 
we shift the corresponding LP weight from~$\scrF$ 
to a new event~$\scrF'$ obtained by replacing~$\ell$ 
with the relevant links among~$\ell_1$ and~$\ell_2$.  
In particular:
\begin{itemize}
    \item If both~$\ell_1$ and~$\ell_2$ are relevant for~$\scrF$, 
          we replace~$\ell$ by both.
    \item Otherwise, we replace~$\ell$ only by the relevant shadow.
\end{itemize}

Formally, if $\ell \in L_\sm(\scrF)$, 
we construct $\scrF'$ by removing $\ell$ from $L_\sm(\scrF)$ 
and adding the links among $\ell_1$ or $\ell_2$ that cover an edge in $R_\sm(\scrF)$.  
Intuitively, $\scrF'$ represents the same configuration as~$\scrF$, 
except that the link~$\ell$ has been replaced by its relevant shadow(s).

To update the LP variables consistently, 
we create a corresponding copy~$\scrF'_{\text{new}}$ of each modified event~$\scrF'$.  
This copy serves purely as a bookkeeping device: 
it accumulates the LP weight that is being shifted from~$\scrF$ to~$\scrF'$.

Suppose that $y'(\scrF)=\eta$ and 
$y'_{\ell'}(\scrF)=\eta_{\ell'}$ for each link~$\ell'$ relevant to~$\scrF$.  
We perform the following updates:
\begin{itemize}
    \item Increase $y'(\scrF'_{\text{new}})$ by~$\eta$.
    \item For every link $\ell'$ relevant to~$\scrF$ other than~$\ell$, 
          increase $y'_{\ell'}(\scrF'_{\text{new}})$ by~$\eta_{\ell'}$.
    \item For the link~$\ell$, increase 
          $y'_{\ell'}(\scrF'_{\text{new}})$ by~$\eta_{\ell}$ 
          for each relevant shadow $\ell' \in \{\ell_1, \ell_2\}$.
\end{itemize}

Finally, we set all variables associated with the original event~$\scrF$ 
to zero, effectively replacing it by~$\scrF'_{\text{new}}$.  
We also update the link variables by transferring the total weight of~$\ell$ 
to its shadows:
\[
    x'(\ell_1) \leftarrow x'(\ell_1) + x'(\ell), 
    \qquad
    x'(\ell_2) \leftarrow x'(\ell_2) + x'(\ell),
    \qquad
    x'(\ell) \leftarrow 0.
\]

The final LP solution is obtained by summing the LP values 
assigned to both the original and copied events, 
ensuring that the total mass of the distribution and the objective value 
remain unchanged.  
In summary, this transformation redistributes LP weight from events 
involving~$\ell$ to those involving its shadows~$\ell_1$ and~$\ell_2$, 
while preserving feasibility and cost.  
An equivalent formulation—avoiding explicit event copies—is provided in 
\Cref{apx:formal_preprocessing}.

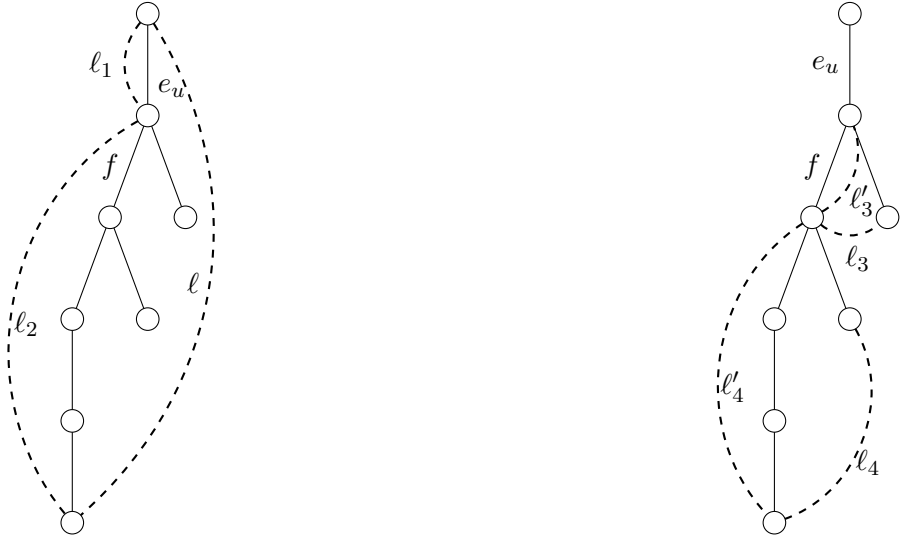
\begin{figure}[t]
\centering
\begin{minipage}{0.45\textwidth}
\centering
\begin{forest}
for tree={
    circle, draw, minimum size=0.3cm, inner sep=0pt, outer sep=0pt,
    l sep=30pt, s sep=20pt
}
[{}, name=root, fill=white
    [{}, name=a, fill=white, edge label={node[midway,right, yshift=-9pt]{$e_u$}}, edge={draw=black}
        [{}, name=r, fill=white, edge label={node[midway,left]{$f$}}, edge={draw=black}
            [{}, fill=white, name=u, edge={draw=black}
                [{}, fill=white, name=v, edge={draw=black}
                    [{}, fill=white, name=b, edge={draw=black}]
                ]
            ]
            [{}, fill=white, name=k, edge={draw=black}]
        ]
        [{}, fill=white, name=c, edge={draw=black}] 
    ]
]
\draw[dashed, thick, bend left=40] (root) to node[midway,left] {$\ell$} (b);
\draw[dashed, thick, bend right=40] (root) to node[midway,left] {$\ell_1$} (a);
\draw[dashed, thick, bend right=50] (a) to node[midway,below, xshift=5pt] {$\ell_2$} (b);
\end{forest}
\end{minipage}
\hfill
\begin{minipage}{0.45\textwidth}
\centering
\begin{forest}
for tree={
    circle, draw, minimum size=0.3cm, inner sep=0pt, outer sep=0pt,
    l sep=30pt, s sep=20pt
}
[{}, name=root, fill=white
    [{}, name=a, fill=white, edge label={node[midway,left]{$e_u$}}, edge={draw=black}
        [{}, name=r, fill=white, edge label={node[midway,left]{$f$}}, edge={draw=black}
            [{}, fill=white, name=u, edge={draw=black}
                [{}, fill=white, name=v, edge={draw=black}
                    [{}, fill=white, name=b, edge={draw=black}]
                ]
            ]
            [{}, fill=white, name=k, edge={draw=black}]
        ]
        [{}, fill=white, name=c, edge={draw=black}] 
    ]
]
\draw[dashed, thick, bend right=40] (r) to node[midway,below, xshift=3pt] {$\ell_3$} (c);
\draw[dashed, thick, bend left=50] (k) to node[midway,below, xshift=2pt] {$\ell_4$} (b);
\draw[dashed, thick, bend right=40] (r) to node[midway,below, xshift=3pt] {$\ell_3'$} (a);
\draw[dashed, thick, bend right=50] (r) to node[midway,below, xshift=5pt] {$\ell_4'$} (b);
\end{forest}
\end{minipage}

\caption{Left: Illustration of the link-splitting step.
The original link~$\ell$ covers both~$e_u$ and~$f$, and the algorithm replaces it with two shadows,~$\ell_1$ and~$\ell_2$, that together preserve the same coverage.
Right: Illustration of the set~$M(f)$.
The link~$\ell$ is not contained in~$M(f)$ since it covers both~$e_u$ and~$f$.
The link~$\ell_3$ is a correlated link covering only edges that are descendants of~$f$, while~$\ell_4$ is an uncorrelated link with~$f$ as a leading edge; both are included in~$M(f)$.
The cost of~$\ell_2$ is bounded by~$w(\ell_3) + w(\ell_4)$, corresponding to the cost of the up-link solution composed of their shadows~$\ell_3'$ and~$\ell_4'$.}
\label{fig:link_splitting}
\end{figure}

\paragraph{Bounding the Cost of $\ell_2$.}
We now argue that the cost of $\ell_2$ is not too large for our purposes.  
Recall that $\ell_2$ is obtained by splitting $\ell$ at $u$ so that it covers only
edges on $P_{\ell_2}$, that is, the path of edges below $f$.

We construct the cheapest possible up-link solution $H$ covering $P_{\ell_2}$
using only shadows of the following types of links:
\begin{itemize}
    \item \textbf{Correlated links} that cover only edges that are descendants of~$f$.
    \item \textbf{Uncorrelated links} for which~$f$ is one of the leading edges.
\end{itemize}
Intuitively, these are exactly the links that can be \emph{charged to}~$f$ when it is
selected by the algorithm.  
Let $M(f)$ denote this set of eligible links. Consult the right image of \Cref{fig:link_splitting} for a visualization of $M(f)$.

Observe that:
\begin{itemize}
    \item For any two distinct edges $f, f'$, 
          the sets $M(f) \cap L_C$ and $M(f') \cap L_C$ are disjoint, 
          since a correlated link can only be charged to the unique highest
          uncorrelated ancestor of its uncorrelated covered edges.
    \item Each uncorrelated link appears in at most two such sets,
          corresponding to its two leading edges.
\end{itemize}

Hence, the collection $\{M(f)\}_{f \in E}$ forms a bounded-overlap decomposition
of the link set, similar to how the Odd Cut LP partitions coverage
contributions across edges.

Moreover, since $f$ and $e_v$ are uncorrelated, for any edge $e \in P_{\ell_2}$
we have
\[
    x(M(f) \cap L_e) \ge 1 - \delta.
\]
That is, the links in $M(f)$ already provide $(1 - \delta)$-fractional
coverage of every edge on the path $P_{\ell_2}$.

By the integrality of the Cut LP restricted to up-links,
a feasible integral up-link solution $H$ covering $P_{\ell_2}$ 
can therefore be obtained at cost at most
\[
    \frac{2 \, c(x(M(f)))}{1 - \delta}.
\]
This follows from the 2-approximation property of the Odd Cut LP 
when restricted to up-links.

Since $x(L_{e_u} \cap L_f) < \delta$, 
the total increase in cost associated with splitting $\ell$
is at most a $\delta$-fraction of this amount.
Hence, the overall penalty in total cost is bounded by
\[
    \frac{2\delta}{1 - \delta} \, c(x(L)).
\]

Finally, because $\ell$ is replaced by the pair $(\ell_1, \ell_2)$,
whose combined coverage exactly matches that of $\ell$,
the modified LP solution remains feasible.
A complete formal argument of feasibility and the corresponding
pseudo-code are provided in \Cref{apx:formal_preprocessing}.
.

\subsection{Reduction from Almost Always Small to Structured Fractional Solution}
We now focus on designing an algorithm, which we call the
\emph{Marking Algorithm}, that converts an
$\epsilon$-almost-always-small solution into a
\emph{Structured Fractional Solution}, incurring only a
negligible loss.

At a high level, an $\epsilon$-almost-always-small solution
already exhibits the desired structure for each edge on all
but an $\epsilon$ fraction of the underlying distribution.
The main challenge is therefore to remove this
probability mass while approximately preserving the original
marginal distributions.

The Marking Algorithm addresses this challenge by
\emph{marking} correlated events from the LP solution, through a sampling procedure and
making repairs whenever a large cover set is present.
The main result of the section is formalized in the following lemma.

\begin{lemma}\label{lem:esp_alm_sm_struc_frac}
Let $(x, y, \{y_\ell\}_{\ell \in L})$ be an $\epsilon$-almost-always-small solution to the Strong LP for a WTAP instance. 
Suppose that for every $v \in V$ and every edge $e \in E(v) \setminus E^*(v)$, it holds that 
\[
    x(L_{e_v} \cap L_e) = 0.
\]
Then there exists an algorithm that computes a Structured Fractional Solution $(x', y', z')$ for the same instance, with total cost
\[
    c(z'(L)) \;\le\; (1 + o_\epsilon(1)) \cdot c(x(L)).
\]
\end{lemma}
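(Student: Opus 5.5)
The plan is to build $(x',y',z')$ directly from the Strong LP solution, taking $z'$ to be essentially $x$ and obtaining $(x',y')$ from the star-events by a ``marking'' procedure that removes the $\bot$-mass (edges covered by more than $\rho$ links). First, $z'$. Since $x$ satisfies the Odd Cut constraints \eqref{constr:odd_cut_consraint}, and by hypothesis $x(L_{e_v}\cap L_e)=0$ for uncorrelated child edges $e$, I set $z' := x$ (or a $(1+O(\epsilon))$-scaling of it, if slack is needed later to dominate $x'$). This immediately gives the Odd Cut feasibility and the disjoint-support condition required of $z'$ in a Structured Fractional Solution. What remains is to construct $(x',y')$ feasible for the Structured LP with parameter $\rho'=\rho(\beta+1)$, satisfying $c(x'(L))\le c(z'(L))$ and \eqref{eq:uncorr_cost}.

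Second, the stars. For each $v$, every star $F\in\stars$ ($E^*(v)$ plus at most two further edges of $E(v)$) is a subtree with at most $|E^*(v)|+2$ leaves. I would first show $|E^*(v)|\le \rho\epsilon/\delta+1$. The edges in $E^*(v)\setminus\{e_v\}$ are pairwise correlated with $e_v$, and each link of $L_{e_v}$ covers at most two child edges, so $\delta(|E^*(v)|-1)\le 2x(L_{e_v})\le 2\epsilon\rho$ by \eqref{eq:esp_always_small}. This count is too large for $\beta=\rho$, so I would redefine the correlation threshold, or prune $E^*(v)$ to the at most $\beta$ heaviest edges and push the others into the uncorrelated class using the splitting of \cref{subsec:preprocessing_removing_correlations}. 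Either way, each star becomes an element of $\subtrees$ and the Strong LP supplies events on it. Apply \eqref{constr:extension_consistency} from the single edge $(e_v,\cdot)$ to the star $F$, with center $v$. Each extension is then a subtree of $F$ where every path from $v$ stops at its first small edge. On the ``good'' extensions, meaning every edge of $F$ is small, this is exactly a Structured event. Consistency between nested stars is inherited from \eqref{constr:extension_consistency}, and the marginals come from \eqref{constr:marginal_pres_const_child} together with \eqref{constr:inclision_on_small}.

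Third, the marking and repair of bad mass. An extension is bad if some edge of $F$ is huge, and by coverage plus almost-always-smallness the total bad weight per edge is at most $\epsilon$. Whenever $e$ is huge in a sampled configuration, I repair it by replacing the configuration on $e$ with a small cover. The cover is drawn, independently and consistently, from the conditional distribution of small covers $y(e,S)/(1-y(e,\bot))$, rescaled by $1/(1-\epsilon)$. Since a repair on one edge can create overlap with up to $\beta+1$ neighbouring covers, the result is small with respect to $\rho'=\rho(\beta+1)$. To keep the Structured consistency constraints exact, I would define the repaired $y'$ top-down along the tree, exactly as \cref{alg:structured_rounding} samples, so that every star's distribution is a conditional extension of its parent edge's distribution. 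The marginals $x'$ are then read off from \eqref{eq:constr:small_prob_1:marginals}. The cost bound is $x'(\ell)\le x(\ell)+O(\epsilon)\cdot(\text{repair mass})$, and I would charge the repair cost to $x$ restricted to $L_e$ via $\frac{1}{1-\epsilon}$ scaling. This yields $c(x'(L))\le(1+O(\epsilon))c(x(L))$, so taking $z'=(1+O(\epsilon))x$ gives both inequalities, with \eqref{eq:uncorr_cost} holding linkwise on $L_U$.

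The main obstacle is the third step: making the repairs while \emph{simultaneously} keeping the consistency constraints exact across overlapping stars and keeping the marginals of uncorrelated links equal on both leading edges. My first attempt would be to define everything via a single global top-down sampling process. Every $y'$ variable is then a genuine marginal of one joint distribution, so consistency is automatic; the risk lies in the fact that the process exists only implicitly and marginals must be computable polynomially, which I would handle via the bounded star size.
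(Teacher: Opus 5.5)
Your outline matches the paper's: top-down sampling of Strong LP extensions, removal of the $\bot$-mass, $y'$ defined as marking probabilities of one process, and $z'$ dominating $x$. The third step, which carries the whole lemma, fails as described. Write $\Pr[e\mapsto S]$ for the probability that the process assigns $S$ to $e$. Repairing a huge edge $g$ by an independent draw from $y(g,S)/(1-y(g,\bot))$ breaks the cost bound in two ways.

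First, after a repair everything below is sampled conditionally on repaired sets, so the process is no longer distributed as $y$. Suppose every small $S$ satisfies $\Pr[e_v\mapsto S]\le(1+\eta)\,y(e_v,S)$. After repairing a child $g$, the bound you can propagate is $\Pr[g\mapsto T]\le\frac{1+\eta}{1-\epsilon}\,y(g,T)$, which compounds to $(1-\epsilon)^{-\mathrm{depth}}$: the $\bot$-mass removed at ancestors is pushed onto the sets that co-occur with small ancestors, and your accounting does not control this.

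Second, a fresh cover of a deep edge $g$ can contain a long link $\ell$ that was not chosen at $g$'s parent, which creates a new copy (shadow) of $\ell$. Every edge of $P_\ell$ may be huge with probability up to $\epsilon$, so charging each repair to $x$ restricted to $L_g$ charges $\ell$ about $\epsilon|P_\ell|$ times. That is unbounded relative to $x(\ell)$.

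The missing ideas are \cref{lem:covering_huge_edges} and \cref{lem:event_usage_bounds}. Huge edges inside the current correlated subtree are never resampled. The paper samples on all of $T_f$ (at most $\beta+1$ leaves by \cref{obs:T_f_constant_leaves}), so a huge edge is seen through to the first small edges. It is then covered either by the $L_\sm$-links crossing it (whence $\rho'=\rho(\beta+1)$) or by an up-link cover from $\calH_\scrF$, built from \eqref{constr:huge_edges_const_child} at relative cost $2/\rho$. Resets occur only at edges outside the current $T_f$. There, $\calD_e$ samples proportionally to $y(e,S)$, but only among sets that are still under-represented ($\Pr[\AlgEvent(e,S)]\le(1+2\sqrt{\epsilon})\,y(e,S)$) and disjoint from $L_f$ for the first ancestor $f$ not pairwise correlated with $e$. \cref{claim:not_too_big} shows this family has $y$-mass at least $\sqrt{\epsilon}$, which yields the depth-independent bound $(1+5\sqrt{\epsilon})\,y(e,S)$. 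The disjointness confines reset inclusions of $\ell$ to a single iteration, with at most one reset edge per side (the $\restart$ accounting). Together these give $x'(\shadows(\ell))\le(1+9\sqrt{\epsilon})\,x(\ell)$.

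Consistency is the second gap. ``Drawn independently and consistently'' has no mechanism behind it. A fresh cover of a correlated child $g$ will in general disagree with the set already fixed on $e_v$ on links of $L_{e_v}\cap L_g$, which carries $x$-mass at least $\delta$. The star configuration is then not a Structured event over the original links. The paper restores representability by replacing each sampled link with the shadow covering exactly its maximal segment of assignment (Step~\ref{step:shadow_replacement_step}), with duplicated shadow copies so each shadow has a unique original.

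Your claim that making every $y'$ a marginal of one top-down process gives consistency automatically also fails for the large stars $E^*(v)\cup\{g,g'\}$. The process samples the uncorrelated children $g,g'$ independently, so an in-link of $L_g\cap L_{g'}$ can be taken on one side only, and the joint marginal is not supported on valid events. Splitting such links into their two up-link halves would restore validity but double $x'(\shadows(\ell))$. The paper instead defines the large-star values through a separately built pairing $\phi$ with marginal and structural consistency. This $\phi$ exists because Step~\ref{step:split_uncorrelated_links} chooses the $g$-side shadow of a flagged link from a jointly sampled extension $\scrF_{g,g'}$ and a conditional sample of the $g'$-side assignments (\cref{appendix:construct_phi}).

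A minor point: a link of $L_{e_v}$ covers at most one child edge of $v$, so $|E^*(v)|-1\le\epsilon\rho/\delta=\beta$ and stars already lie in $\subtrees$. No pruning is needed. Demoting correlated edges would not be paid for by \cref{subsec:preprocessing_removing_correlations}, whose bound uses $x(L_{e_u}\cap L_f)<\delta$.
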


\Cref{lem:reduction_esp_small_always_small} then follows by applying \Cref{lem:esp_alm_sm_struc_frac} to the solution guaranteed by \Cref{lem:removing_overlap_uncorrelated}.
For the remainder of this section, we fix the input solution $(x, y, \{y_\ell\}_{\ell \in L})$, and all subsequent results are stated with respect to this solution.

\paragraph{Outline}

Intuitively, since every edge is
$\epsilon$-almost-always small with respect to the input
solution, we can modify the solution so that it is
\emph{always} small---from which a
\emph{Structured Fractional Solution} follows---while
incurring only a small loss.
The formal proof, however, is technical.
To guide the reader, we outline the remainder of this
section below.

\begin{itemize}
    \item \Cref{subsubsec:warm_up_always_small} provides a
    warm-up algorithm illustrating how the Marking Algorithm
    simplifies when no probability mass is assigned to large
    sets.

    \item \Cref{sec:desc_subtree_alg} introduces the Marking
    Algorithm, which samples events from the input
    $\epsilon$-almost-always-small solution
    $(x, y, \{y_\ell\}_{\ell \in L})$.

    \item Each event $\scrE$ in the Structured Fractional
    Solution is assigned a value equal to the probability
    that it is \emph{marked} by the algorithm.

    \item In \Cref{subsubsec:feas_struc_frac}, we prove that
    the resulting solution is a feasible Structured
    Fractional Solution.

    \item In \Cref{sec:resampling_distributions}, we show
    that the algorithm is approximately
    marginal-preserving, which ensures that the increase in
    cost is negligible.

    \item Finally,
    \Cref{subsubsec:resampling_dist,subsubsec:up_link_soln_huge_edges}
    establish key lemmas used in the analysis of
    \Cref{sec:resampling_distributions}.
\end{itemize}

For the remaining for each $v \in V$ we will use $E^*(v)$ and $E^*(e_v)$ as well as $E(v)$ and $E(e_v)$ interchangeably.

\subsubsection{Warm-Up: The Always-Small Case}\label{subsubsec:warm_up_always_small}
Before presenting the full Structured-Sampling Algorithm, it is helpful to 
consider a simplified setting in which no probability mass is assigned to 
events that cover an edge with many links.  Formally, we assume that 
$y(\scrF) = 0$ for every event $\scrF$ containing an edge $e$ with 
$\scrF = (\{e\}, \emptyset, \emptyset)$.  In this case we call the input 
solution $(x, y, \{y_\ell\}_\ell)$ \emph{always-small}.  

For an always-small solution every event $\scrF = (R, R_{\sm}, L_{\sm})$ 
satisfies $R = R_{\sm}$, so the Strong-LP notation becomes unnecessarily 
heavy.  In this warm-up we therefore adopt the notation of the Structured LP: 
we write an event as $\scrE = (E', L')$, where $E' = R$ and $L' = L_{\sm}$.  
We also use $E(\scrE)$ and $L(\scrE)$ to denote these components.  

The full Structured-Sampling Algorithm in \Cref{sec:desc_subtree_alg} 
returns to the Strong-LP notation $\scrF = (R, R_{\sm}, L_{\sm})$, since 
in the general case $R$ and $R_{\sm}$ need not coincide.

For an always-small solution the algorithm performs no modifications: the 
Structured-Sampling Algorithm simply returns $(x, y)$ 
unchanged (where here we restrict the support to the events of the Structured LP).  The general algorithm of \Cref{sec:desc_subtree_alg} includes 
additional steps to correct the rare occurrences of large cover sets, but 
reduces to this behaviour when the input solution is always-small.  

Before presenting the warm-up, we introduce some notation.
\paragraph{Notation (Ancestry Relationship between Edges $E$).} 
Up to this point we have referred only to the ancestry relation between 
vertices of the tree. We now extend this notion to edges. For each vertex 
$v\in V$, let $E(v) \setminus \{e_v\}$ denote the set of child edges of $e_v$, i.e., the edges 
whose parent endpoint is $v$. Moreover, for two edges $e = uv$ and $e' = u'v'$ (with $u$ and $u'$ the respective parent endpoints), we say that 
$e$ is a descendant of $e'$ whenever $u$ is a descendant of $u'$ in the 
vertex ancestry relation.

\paragraph{Warm-Up Algorithm.}
The warm-up algorithm proceeds as follows.
We maintain a queue $X$ containing pairs $(f, S)$, where
$f$ is a tree edge and $S$ is the small set of links
currently assigned to cover~$f$.
When a pair $(f, S)$ enters the queue, all ancestors of $f$
(including $f$ itself) have already been assigned their
covering sets, and it remains only to assign covering sets
to the descendant edges of $f$.

To process the tree uniformly in a top-down manner, we
introduce a dummy \emph{parent edge} $e_r$.
The algorithm begins by inserting $(e_r, \emptyset)$ into
$X$.

\medskip
While the queue is non-empty, we remove a pair $(f, S)$ and
process it as follows.
Since the solution is always-small, every event in
$\events(E^*(f))$ has the form $\scrE = (E', L')$ with
$E' = E^*(f)$ and $L'$ the small set of links covering these
edges.
We sample such an event
$\scrE \in \Ext((f, S), E^*(f))$ with probability
\[
    \frac{y(\scrE)}{y(f, S)},
\]
and we mark~$\scrE$.
(The consistency constraints
\Cref{constr:extension_consistency} ensure that this defines
a valid probability distribution.)

\medskip
\textbf{Processing correlated children.}
For each child edge $g$ of $f$ within $E^*(f)$, we mark the
event $(\{g\}, L(\scrE) \cap L_g)$ and add
$(g, L(\scrE) \cap L_g)$ to the queue.

\medskip
\textbf{Processing uncorrelated children.}
For each child edge $g$ of $f$ in $E(f) \setminus E^*(f)$,
we sample an event
$\scrE_g \in \Ext(\scrE, E^*(f) \cup \{g\})$ with probability
\[
    \frac{y(\scrE_g)}{y(\scrE)},
\]
mark~$\scrE_g$, and then add
$(g, L(\scrE_g) \cap L_g)$ to the queue.

\medskip
\textbf{Processing events with two uncorrelated children.}
For any pair of uncorrelated children $g, g'$ of $f$, we
similarly sample an event
$\scrE_{g, g'} \in
 \Ext(\scrE, E^*(f) \cup \{g, g'\})$
with probability
\[
    \frac{y(\scrE_{g, g'})}{y(\scrE)},
\]
and we mark~$\scrE_{g, g'}$. In the case that $\{f\} \neq E^*(f)$ we also mark $(f, S)$.

For a visualization of the warm-up algorithm consult \Cref{fig:warm_up_algorithm}.
We note here that the events $\scrE_g$ and $\scrE_{g, g'}$ will not necessarily be consistent. This will not pose any problem. In particular, we will use the expected output of the algorithm to define a fractional solution, whose values will be consistent.

Next let $(x', y', z')$ be the Structured Fractional Solution such that for each event $\scrE$ of the Structured LP we let $y'(\scrE)$ be the probability that $\scrE$ is marked by the algorithm. We then set both $x'(\ell)$ and $z'(\ell)$ equal to $x(\ell)$ for each like $\ell \in L$.

\begin{figure}[htp]
    \centering
     \begin{minipage}{0.3\textwidth}
        \centering
        \begin{forest}
        for tree={
        circle, draw, minimum size=0.3cm, inner sep=0pt, outer sep=0pt,
        l sep=30pt, s sep=20pt,
        }
        [{}, name=a, fill=white
            [{}, name=r, fill=white, edge label={node[midway,left]{$f$}}, edge={ultra thick, draw=black}
                 [{}, fill = white, name=u,
                    edge label={node[midway,left, yshift=-3pt]{$e_1$}},
                    edge={ultra thick, draw=black}
                 ]
                 [{}, fill=white, name=k,
                    edge label={node[midway,right, yshift=-3pt]{$e_2$}},
                    edge={ultra thick, draw=black}
                 ]
                 [{}, fill=white, name=l,
                    edge label={node[midway,right, yshift=-3pt]{$e_3$}},
                    edge={ultra thick, draw=red}
                 ]
                 [{}, fill=white, name=m,
                    edge label={node[midway,right, yshift=-3pt]{$e_4$}},
                    edge={ultra thick, draw=red}
                 ]
            ]
        ]
        \draw[dashed, thick, bend right] (a) to node[midway, right] {$\ell_1$} (-3, -3);
        \draw[dashed, thick, bend right] (k) to node[midway, below] {$\ell_3$} (2, -4);
        \draw[dashed, thick, bend left=50] (-2, -4) to node[midway, left] {$\ell_2$} (r);
\end{forest}

    \end{minipage}%
    \hfill
     \begin{minipage}{0.3\textwidth}
        \centering
        \begin{forest}
        for tree={
        circle, draw, minimum size=0.3cm, inner sep=0pt, outer sep=0pt,
        l sep=30pt, s sep=20pt,
        }
        [{}, name=a, fill=white
            [{}, name=r, fill=white, edge label={node[midway,left]{$f$}}, edge={ultra thick, draw=black}
                 [{}, fill = white, name=u,
                    edge label={node[midway,left, yshift=-3pt]{$e_1$}},
                    edge={ultra thick, draw=black}
                 ]
                 [{}, fill=white, name=k,
                    edge label={node[midway,right, yshift=-3pt]{$e_2$}},
                    edge={ultra thick, draw=black}
                 ]
                 [{}, fill=white, name=l,
                    edge label={node[midway,right, yshift=-3pt]{$e_3$}},
                    edge={ultra thick, draw=black}
                 ]
                 [{}, fill=white, name=m,
                    edge label={node[midway,right, yshift=-3pt]{$e_4$}},
                    edge={ultra thick, draw=red}
                 ]
            ]
        ]
        \draw[dashed, thick, bend right] (a) to node[midway, right] {$\ell_1$} (-3, -3);
        \draw[dashed, thick, bend right] (k) to node[midway, below] {$\ell_3$} (2, -4);
        \draw[dashed, thick, bend left=50] (-2, -4) to node[midway, left] {$\ell_2$} (r);
\end{forest}

    \end{minipage}%
    \hfill
     \begin{minipage}{0.3\textwidth}
        \centering
        \begin{forest}
        for tree={
        circle, draw, minimum size=0.3cm, inner sep=0pt, outer sep=0pt,
        l sep=30pt, s sep=20pt,
        }
        [{}, name=a, fill=white
            [{}, name=r, fill=white, edge label={node[midway,left]{$f$}}, edge={ultra thick, draw=black}
                 [{}, fill = white, name=u,
                    edge label={node[midway,left, yshift=-3pt]{$e_1$}},
                    edge={ultra thick, draw=black}
                 ]
                 [{}, fill=white, name=k,
                    edge label={node[midway,right, yshift=-3pt]{$e_2$}},
                    edge={ultra thick, draw=black}
                 ]
                 [{}, fill=white, name=l,
                    edge label={node[midway,right, yshift=-3pt]{$e_3$}},
                    edge={ultra thick, draw=black}
                 ]
                 [{}, fill=white, name=m,
                    edge label={node[midway,right, yshift=-3pt]{$e_4$}},
                    edge={ultra thick, draw=black}
                 ]
            ]
        ]
        \draw[dashed, thick, bend right] (a) to node[midway, right] {$\ell_1$} (-3, -3);
        \draw[dashed, thick, bend right] (k) to node[midway, below] {$\ell_3$} (2, -4);
        \draw[dashed, thick, bend left=50] (-2, -4) to node[midway, left] {$\ell_2$} (r);
        \draw[dashed, thick, bend right=50] (l) to node[midway, left, xshift=8pt, yshift=-4pt] {$\ell_4$} (m);
\end{forest}

    \end{minipage}
    \caption{Warm-Up Visualization: Assume the other endpoint of $\ell_1$ is below $f$, the other endpoint of $\ell_2$ is below $e_1$, and the other endpoint of $\ell_3$ is below $e_3$.  
    Let $E^*(f) = \{f, e_1, e_2\}$ and $E(f) \setminus E^*(f) = \{e_3, e_4\}$. 
    In each figure, the black edges form the edge set of the event being considered.\\
    \textbf{Left: }
    The pair $(f, \{\ell_1\})$ is removed from $X$ and the algorithm samples the event 
    $\scrE = (E^*(f), \{\ell_1, \ell_2, \ell_3\})$.\\
    \textbf{Middle: }
    The algorithm samples 
    $\scrE_{e_3} = (E^*(f) \cup \{e_3\}, \{\ell_1, \ell_2, \ell_3\}) 
    \in \Ext(\scrE, E^*(f) \cup \{e_3\})$.\\
    \textbf{Right: }
    The algorithm samples 
    $\scrE_{e_3, e_4} = (E^*(f) \cup \{e_3, e_4\}, \{\ell_1, \ell_2, \ell_3, \ell_4\}) 
    \in \Ext(\scrE, E^*(f) \cup \{e_3, e_4\})$.  Note that here the events $\scrE_{e_3}$ and $\scrE_{e_3, e_4}$ are not consistent.}
\label{fig:warm_up_algorithm}
\end{figure}
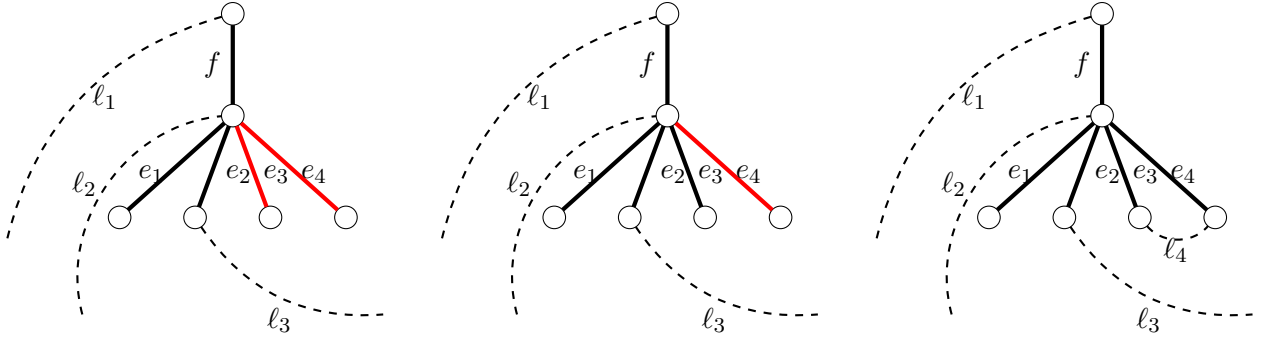

\medskip
\textbf{Correctness.}
Let $\scrE$ be any event of the Structured LP, and let $f$
denote the edge of $E(\scrE)$ incident to the highest vertex
of the event, i.e., the vertex closest to the root.
A straightforward induction on the distance of $f$ from
the dummy root edge $e_r$ shows that the algorithm marks
$\scrE$ with probability exactly $y(\scrE)$.

Hence, in the always-small case, the algorithm reproduces
the input marginals without modification, and no changes
to $x$ are required.

We remark that this solution is indeed a Structured
Fractional Solution, which follows directly from the
feasibility of $(x, y, \{y_\ell\}_{\ell \in L})$ for the
Strong LP.

With this intuition in hand, we now turn to the general algorithm.

\subsubsection{Description of the Marking Algorithm}
\label{sec:desc_subtree_alg}
Before presenting the full Structured-Sampling Algorithm, we introduce 
additional notation that will be used throughout this section.

\paragraph{Notation (subtree $T_f$).} 
For a tree edge $f \in E$, we define $T_f$ to be the subtree of $T$ consisting of the edge $f$ and all of its descendant edges in $T$ that are pairwise correlated with $f$ (with respect to $x$).
We define $T_{e_r}$ to be the trivial subtree consisting only of $e_r$ (as there are no descendant edges that are correlated with $e_r$).

The following is key to our sampling procedure as we will need to ensure that $T_f$ is indeed a member of $\subtrees$ so that we can sample from $\events(T_f)$.
\begin{observation}\label{obs:T_f_constant_leaves}
    For each $f \in E$ the tree $T_f$ has at most $\beta+1$ leaves.
\end{observation}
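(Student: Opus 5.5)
The plan is a direct counting argument. It combines the $\epsilon$-almost-always-small condition \eqref{eq:esp_always_small}, namely $x(L_f) \le \epsilon \rho$, with the pairwise-correlation threshold $x(L_f \cap L_{g}) \ge \delta = \epsilon$ that every non-top edge $g$ of $T_f$ satisfies by definition. Since $\beta = \rho$, the goal is to show that $T_f$ has at most $\rho+1$ leaves. For $f = e_r$ the claim is trivial, since $T_{e_r}$ is a single edge with $2 \le \beta+1$ leaves. So I will fix an ordinary edge $f = uv$ with $u$ the parent of $v$.

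First I will check that $T_f$ really is a tree hanging below $f$. Let $g$ be a descendant edge with $x(L_f \cap L_g) \ge \delta$, and let $h$ be any edge on the tree path strictly between $f$ and $g$. Every link covering both $f$ and $g$ also covers $h$, because a link's covered edges form a path in $T$. Hence $L_f \cap L_g \subseteq L_f \cap L_h$, so $h$ is also pairwise correlated with $f$. Therefore $T_f$ is connected, contains $f$ as its topmost edge, and the vertex $u$ has degree one in $T_f$.

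The leaves of $T_f$ are then $u$ together with the lower endpoints of the \emph{leaf edges} $g \ne f$ of $T_f$ (if $T_f=\{f\}$ the leaves are just $u,v$, giving $2 \le \rho+1$). Distinct leaf edges $g, g'$ of $T_f$ are incomparable in the ancestry order. I claim that no link $\ell \in L_f$ covers two of them. The path $P_\ell$ passes through $f$, and its part below $f$ is a single downward path starting at $v$, because the part of $P_\ell$ on the other side of $f$ lies outside $T_v$. A downward path cannot contain two incomparable edges, which proves the claim. Consequently, the sets $L_f \cap L_g$ over the leaf edges $g$ are pairwise disjoint subsets of $L_f$.

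Summing over the leaf edges, if there are $k$ of them, gives
\[
k\,\delta \;\le\; \sum_{g} x(L_f \cap L_g) \;\le\; x(L_f) \;\le\; \epsilon\rho .
\]
With $\delta = \epsilon$ this yields $k \le \rho$, so $T_f$ has at most $k+1 \le \rho + 1 = \beta+1$ leaves. The only step needing care is the disjointness claim, i.e.\ that a link through $f$ descends along a single path below $f$. This must also be checked to hold in the contracted instance $T'$ on which the solution is defined, and it does, since contraction preserves the tree and path structure. The rest is bookkeeping with the chosen parameters.
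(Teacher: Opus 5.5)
Your argument is correct and is essentially the paper's proof: combine $x(L_f)\le\epsilon\rho$ with the fact that each leaf edge of $T_f$ other than $f$ accounts for at least $\delta=\epsilon$ of that mass, so there are at most $\rho=\beta$ such leaves. You also make explicit two facts that the paper's one-line proof uses silently: that $T_f$ is connected, and that the sets $L_f\cap L_g$ over the bottom leaf edges $g$ are pairwise disjoint because a link through $f$ descends along a single path below $f$.
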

\begin{proof}
Since each such edge $f \in E$ is $\epsilon$-almost-always-small that $$x(L_f) \leq \epsilon \rho.$$
Since there are at most $\frac{x(L_f)}{\delta}=\beta$ leaves in $T_f$ distinct from $f$, we know that $T_f$ has at most $\beta+1$ leaves and is therefore contained in $\subtrees$.
\end{proof}

Throughout this section, we use $\scov$, $\events$, $\subtrees$, and $\Ext$
to denote the corresponding sets in the Strong LP, and
$\scov'$, $\events'$, and $\stars'$ to denote those in the
Structured Fractional Solution.
When $z'$ is not explicitly defined, we will also refer to this
solution as the Structured LP.
In addition, recall that for a link $\ell=\{u,v\}$ we denote by $\ell_u$ and 
$\ell_v$ the two up-links satisfying $\ell_u =\{u,\apex(\ell)\}$ and $\ell_v=\{v,\apex(\ell)\}.$
For any leading edge $e$ of $\ell$, we write $\ell_e$ to denote the unique
up-link in $\{\ell_u,\ell_v\}$ that covers $e$.

The algorithm proceeds by randomly selecting events from the Strong LP
solution and associating them with events of the Structured LP.
We show that assigning to each Structured-LP event a value equal to the
probability with which it is marked by the algorithm yields a solution
$(x', y')$ that approximately preserves the original marginals.
From this, we construct a Structured Fractional Solution by first setting
$z' = x$ and then augmenting $z'$ in accordance with $x'$ to satisfy the
required properties, while maintaining the cost bound.

We remark that not all Structured-LP events appear explicitly in the
algorithm. In particular, for each $v \in V$, we exclude events defined
over the edge set $E^*(v)$ together with any two uncorrelated edges in
$E(v)$; we refer to these as \emph{large} events.
 All other events are marked by the algorithm and are called \emph{small} events. 
These cases will be handled
in \Cref{subsubsec:feas_struc_frac}, where we define them explicitly so that consistency constraints (\Cref{eq:constr:small_prob_1:consistency}) hold. This differs from the exposition of the warm-up, where we directly sample and mark large events. The difference arises from a technical difficulty caused by the ``conditional resets'' performed during the algorithm. In the warm-up algorithm, when removing $(f, S)$ from the queue, we determine precisely which small events involving $f$ will be marked. In the general case, however, this is no longer true, and additional care is required to ensure consistency throughout the algorithm.

We begin with an intuitive description of the Marking Algorithm,
followed by its formal pseudocode.

\paragraph{Intuitive Description.}
The Marking Algorithm proceeds by first calling the Structured-Sampling Algorithm and then proceeding to mark small events based upon the outcome.

As in the warm-up, we introduce a dummy ``parent edge''~$e_r$ so that the algorithm can process the tree~$T$ uniformly in a top-down manner. 
Initially, the algorithm inserts the pair $(e_r, \emptyset)$ into~$X$, where each element $(f, S)$ encodes an edge~$f$ together with its current small covering set $S \in \scov(f)$.  
At each iteration, it removes some $(f, S)$ from~$X$ and samples an \emph{event} $\scrF = (R, R_{\sm}, L_{\sm})$ from $\Ext((f,S), T_f)$ (Step~\ref{step:select_event}), using the fractional LP solution~$y$ as a guide.  
The sampling distribution in~\eqref{eq:choosing_correl_event_sm_high} is derived directly from the Extension Consistency constraint~\eqref{constr:extension_consistency}. Here:
\begin{itemize}
  \item $R$ is a sub-tree of edges correlated with~$f$; in particular, $R \setminus \{f\}$ contains only \emph{correlated} edges.
  \item $R_{\sm} \subseteq R$ consists of edges covered by at most~$\rho$ links (the ``small'' edges).
  \item $L_{\sm} \subseteq L$ is the set of links that integrally cover each edge in~$R_{\sm}$.
\end{itemize}

The distribution in~\eqref{eq:choosing_ext_correl_event_sm_high} analogously defines how the algorithm extends an event~$\scrF$ to a larger one~$\scrF_g$, corresponding to covering the edge $g$ in addition to those of~$T_f$.  These events are analgous to the events $\scrE$ and $\scrE_g$ chosen by the always-small algorithm.

Unlike the always-small algorithm, the marking algorithm handles the``bad'' cases—where an edge is covered by many links— using \emph{conditional resets} and using the large conditional LP value to find a  suitable partial up-link solution for these bad edges.  
In the case of a conditional reset, the Structured-Sampling Algorithm resamples using the distributions~$\calD_e$ guaranteed by~\Cref{lem:event_usage_bounds}.  
The process continues until~$X$ becomes empty. Once complete, the Marking Algorithm replaces certain links by their shadows in 
order to enforce the consistency constraints, and then marks the 
corresponding small events.

Overall, the Marking Algorithm can be interpreted as selecting integral configurations for \emph{stars} of~$T$ while maintaining local consistency across overlapping regions.  
To this end, it maintains for each edge~$e \in E$ a small covering set $F_e \subseteq L_e$, representing the subset of links that will integrally cover~$e$.

A crucial consistency requirement is that whenever two edges appear in a common marked event in Step~\ref{step:mark} (i.e. are both contained in a common small event), their local cover sets agree. Formally:
\begin{quote}
If a pair of edges~$e$ and~$e'$ are contained in a common event marked by the algorithm in Step~\ref{step:mark} (i.e. a small event), then they are \emph{consistent}, meaning that
\[
F_e \cap L_{e'} = F_{e'} \cap L_e.
\]
\end{quote}

This consistency is enforced when we replace sampled links by a smallest possible set of shadows in Step~\ref{step:shadow_replacement_step}.  
We will later show that the expected number of included shadows for each link $\ell \in L$ is approximately equal to $x(\ell)$.  
It is also not difficult to see that if at any point two such adjacent edges are assigned a common link (i.e., $\ell \in F_e \cap F_{e'}$), then the same shadow $\ell'$ of $\ell$ will be assigned to both $F_e$ and $F_{e'}$ after Steps \ref{step:shadow_replacement_step} and \ref{step:split_uncorrelated_links} of the Marking Algorithm.
Step~\ref{step:split_uncorrelated_links} is included solely to enable the 
definition of LP values for the large events introduced in 
\Cref{subsubsec:feas_struc_frac} to satisfy the desired properties of a Structured Fractional Solution. Consequently, the reader may safely 
ignore this step until reaching that section.

Next, we present the pseudocode description of the Structured-Sampling Algorithm.  
The reader is encouraged to consult \Cref{fig:structured_sampling_alg} for an illustration of a single iteration of the \texttt{while}-loop, highlighting the selection of~$H$ and the marking of events.

\paragraph{Structured-Sampling Algorithm Pseudocode}

\begin{enumerate}

    \item \textbf{Initialization:}  
    Initialize the tuple set \( X \gets \{(e_r, \emptyset)\} \). 

    \item \textbf{Compute Resampling Distributions:}  
    For each \( e \in E \), compute the resampling distribution 
    \( \mathcal{D}_e \) over subsets \( S \subseteq L_e \) of links that cover \( e \)
    satisfying \( 1 \le |S| \le \rho \) as defined in \Cref{lem:event_usage_bounds}. These distributions are defined with respect to the expected output of the algorithm 
    above \( e \), and are therefore well-defined despite the top-down construction.

    \item
    \textbf{Compute Up-Link Covering Distributions:}  
    For each event \( \scrF \), compute the distribution \( \mathcal{H}_{\scrF} \)
    over up-links covering \( R(\scrF) \setminus R_\sm(\scrF) \)
    not covered by \( L_\sm(\scrF) \), guaranteed by \Cref{lem:covering_huge_edges}.

    \item \textbf{Main Sampling Loop:}  
    \textbf{While} \( X \neq \emptyset \):
    \begin{enumerate}

        \item \textbf{Remove Current Pair:}  
        Remove an element \( (f, S) \) from \( X \). Let \( f = \{u,v\} \) 
        and assume w.l.o.g. that \( u \) is the parent of \( v \).
        \label{step:remove_from_X_sm_high}

        \item \textbf{Select an Event:}  
        Select an event \( \scrF = (R, R_\sm, L_\sm) \)
        from \( \text{Ext}((f, S), T_f) \) with probability
        \begin{equation} \label{eq:choosing_correl_event_sm_high}
            \frac{y(\scrF)}{y((f, S))},
        \end{equation}
        and sample \( H \) from \( \mathcal{H}_{\scrF} \).
        \label{step:select_event}

        \item \textbf{Update Coverage Records:}  
        For each \( e \in (R \setminus R_\sm) \cup \{f\} \), for which a link in $L_\sm$ covers $e$
        set \( F_e \gets L_\sm \cap L_e \), and for each remaining $e (R \setminus R_\sm) \cup \{f\}$ set \( F_e \gets H \cap L_e\).
        \label{step:modify_F_e}

        \item \textbf{Process Each Uncorrelated Child of \( f \):}  
        For each child \( g \) of \( f \) not in \( R \):
        \begin{enumerate}
            \item \textbf{Sample Extension Event:}  
            Sample an event \( \scrF_g = (R', R'_\sm, L'_\sm) \)
            from \( \text{Ext}(\scrF, T_f \cup \{g\}) \)
            with probability
            \begin{equation} \label{eq:choosing_ext_correl_event_sm_high}
                \frac{y(\scrF_g)}{y(\scrF)}\,,
            \end{equation}

            and \textbf{flag $g$.}
            \label{step:select_stars_plus_one_edge}

            \item \textbf{If \( g \in R_\sm' \) (edge covered by few links):}
                 Add \( (g, L_\sm' \cap L_{g}) \) to \( X \).
                \label{step:select_prob_trim_small}
            
            \item \textbf{If \( g \notin R_\sm' \) (edge covered by many links):}
             Perform a \textbf{conditional reset at \( g \)} by  
            sampling \( S_g \) from \( \mathcal{D}_g \), and
            adding \( (g, S_g) \) to \( X \).
            \label{step:reset_uncorrelated_children}
                 
             \label{step:uncorr_children_huge}
        \end{enumerate}
        \label{step:process_children_f}

        \item \textbf{Process Edges Covered by $\scrF$:}  
        For each \( g \in R \setminus \{f\} \):
        \begin{enumerate}
            \item \textbf{If \( g \in R_\sm \) (edge covered by few links)}:
            Add \( (g, L_\sm \cap L_{g}) \) to \( X \).
            \item \textbf{If \( g \not \in R_\sm \) (edge covered by many links)}:
            Perform a \textbf{conditional reset on uncovered child edges}:
            for each child edge \( h \) of \( g \) with \( h \notin R \),
            sample a set \( S_h \) from \( \mathcal{D}_{h} \), add \( (h, S_h) \) to \( X \).\label{step:cond_reset_lower_child}
        \end{enumerate}
        \label{step:huge_edges_in_tree}

    \end{enumerate}
\end{enumerate}

Next we discuss the Marking Algorithm. 
At a high level, one might hope to run the Structured-Sampling Algorithm
(as in Step~\ref{step:run_structured_sampling}) and then simply mark events
as in Step~\ref{step:mark}.  
However, this is not sufficient for two main reasons.

First, due to the possibility of conditional resets in Steps \ref{step:reset_uncorrelated_children} and \ref{step:cond_reset_lower_child} of the Structured-Sampling Algorithm, there may exists two distinct edges $e, e' \in E(e)$ contained in a marked small event and a link $\ell \in L_e \cap L_{e'}$
yet $\ell$ is assigned to $F_e$ but not to $F_{e'}$.
Such asymmetric assignments cannot be represented within the notation of
the Structured LP. 

Rather than complicating the Structured LP with additional bookkeeping to track
such asymmetries, we resolve this issue by introducing
Step~\ref{step:shadow_replacement_step}, which restores representability while
preserving the simplicity of the LP formulation.

If there are no large events in the Structured LP, the introduction of 
Step~\ref{step:shadow_replacement_step} alone would be sufficient, and 
Step~\ref{step:split_uncorrelated_links} would not be needed. However, in order to handle the setting in which there are large events we cannot simply rely on this adjustment. This brings us to our second problem. To understand this issue we analyze some important properties of the Structured-Sampling Algorithm.

Recall that for a vertex $v \in V$, we denote by $T_v$ the subtree rooted at $v$,
and by $T_v^+$ the tree with edge set $E(T_v)\cup\{e_v\}$.

Consider a link $\ell$ in the support of the Strong LP. For each edge $e$ in the tree the algorithm indicates via $F_e$ whether or not $\ell$ will be assigned to cover $e$ (i.e. $\ell \in F_e$ is $\ell$ is assigned to cover $e$ and $\ell \notin F_e$ otherwise). For $F=\{F_e\}_{e \in E}$ and each leading edge $e$ of $\ell$ covered by $\ell$ we let $Q_\ell(e, F)$ be the empty path if $\ell \notin F_e$, otherwise we let $Q_\ell(e, F)$ be the maximal path containing $e$ such that each edge $e' \in Q_\ell(e, F)$ satisfies $\ell \in F_{e'}$. If $F$ is understood we may simplify the notation to $Q_\ell(e)$.
For a leading edge $e$ of $\ell$ we let $P_\ell(e, F)$ be the portion of $Q_\ell(e, F)$ restricted to descendants of $e$ (including $e$ itself). Similarly, if $F$ is understood we may simplify the notation to $P_\ell(e)$

Suppose $\ell$ is a link in the support of the Strong LP with two leading edges
$e$ and $e'$.
If $\ell$ is correlated or is not flagged, then the Structured-Sampling Algorithm
defines both $P_\ell(e)$ and $P_\ell(e')$ based on a single sampled event—either
in Step~\ref{step:select_event} or in
Step~\ref{step:select_stars_plus_one_edge} (together with the recursive events
below).
In this case, Step~\ref{step:shadow_replacement_step} is sufficient to ensure
consistency between the two assignments.

In contrast, if $\ell$ is uncorrelated and flagged, then the events
$\scrF_e$ and $\scrF_{e'}$ are sampled independently in the respective executions
of Step~\ref{eq:choosing_ext_correl_event_sm_high} when $g=e$ and when $g=e'$.
As a result, the assignments $P_\ell(e)$ and $P_\ell(e')$ are no longer correlated.
To address this issue, we introduce
Step~\ref{step:split_uncorrelated_links}, which modifies
Step~\ref{step:shadow_replacement_step} to handle flagged uncorrelated links.

For each flagged link $\ell$ and each of its leading edges $e$,
Step~\ref{step:split_uncorrelated_links} examines, in distribution,
how the Structured-Sampling Algorithm assigns the edges in $P_\ell(e')$
during the iteration in which $g=e$ and $g'=e'$.
More precisely, this step reproduces the conditional distribution of these
assignments given the jointly sampled extension event.

In this case, whenever $\ell \in F_e$, instead of deterministically assigning
a shadow of $\ell$ as in Step~\ref{step:shadow_replacement_step}, the Marking
Algorithm selects the shadow $\ell'$ in
Step~\ref{step:shadow_assignment_flagged} according to the random choices made
in Steps~\ref{step:extension_two_edges} and
\ref{step:sample_desc_assignments}.
This modification ensures that the resulting shadow assignments are
distributionally consistent with a joint sampling of the two leading edges,
and we will see that it suffices to guarantee that the consistency constraints
also hold for large events.

Next we describe more formally how this is done.
Let $F = \{F_e\}_{e \in E}$ denote the random link-set assignment produced
by the Structured-Sampling Algorithm, and let $\mathcal{D}$ denote its distribution.
For any deterministic assignment $F' = \{F'_e\}_{e \in E}$,
\[
\Pr_{\mathcal{D}}[F = F']
\]
equals the probability that the Structured-Sampling Algorithm assigns
$F_e = F'_e$ simultaneously for all $e \in E$.
We will use $\Pr_{\mathcal{D}}[\,F = F' \mid \scrF\,]$ to denote probability
conditioned on the event that the Structured-Sampling Algorithm samples the
event $\scrF$.
When conditioning on the event that the Structured-Sampling Algorithm
samples $\scrF$, we write
\[
F \sim (\mathcal{D} \mid \scrF)
\]
to denote that $F$ is sampled from the conditional distribution of
$\mathcal{D}$ given $\scrF$.

\begin{definition}[Restriction (Strong LP)]
An event $\scrF \in \events$ is a \emph{restriction} of an event
$\scrF' \in \events$ if there exists a set
$Q \supseteq R(\scrF)$ such that $\scrF' \in \Ext(\scrF, Q)$.
Equivalently, $\scrF$ is obtained from $\scrF'$ by restricting
to the subtree induced by $R(\scrF)$.
\end{definition}

That is, $\scrF$ agrees with $\scrF'$ on all edges in $R(\scrF)$
and ignores all assignments outside this subtree.

For an event $\scrF$ and a subset $R' \subseteq R(\scrF)$,
we write $\Restr(\scrF, R')$ to denote the restriction of
$\scrF$ to the edge set $R'$.

Finally we are in a position to describe in more depth the intuition behind Steps \ref{step:extension_two_edges} and \ref{step:sample_desc_assignments} of the Marking Algorithm. Here we exploit the consistency constraints (\Cref{constr:extension_consistency}) of the Strong LP. Recall that the Strong LP has events for sub-trees with up to $\beta + 3$ leaves. Furthermore by \Cref{obs:T_f_constant_leaves}, the subtree $T_e$ has at most $\beta+1$ leaves for each $e \in E$. Consequently, $T_e$ can be safely extended by up to two additional edges
while remaining within the support of the Strong LP. Consider any two uncorrelated children $e', e''$ of some edge $e \in E$.
In Step~\ref{step:sample_desc_assignments} when $g=e', g'=e''$, the idea is to extend the event $\scrF_{e'}$ sampled in Step~\ref{step:select_stars_plus_one_edge} of the Structured-Sampling Algorithm when $g=e'$ to an event $\scrF_{e', e''}$ with edge set $T_e \cup \{e', e''\}$. In Step~\ref{step:sample_desc_assignments}, we then sample an outcome $F'$
of the Structured-Sampling Algorithm
conditioned on the event $\scrF_{e',e''}$, restricted to the subtree
$R(\scrF_{e''})$. 
Then for each uncorrelated link $\ell$ with leading edge $e$ and $e'$ we use the outcome $P_\ell(e'', F')$ to determine the precise shadow, if any, of $\ell$ used to cover $e'$.
The analogous procedure is performed when $(g,g')=(e'',e')$.
After completing Step \ref{step:split_uncorrelated_links} the algorithm is able to correctly mark the events in Step \ref{step:mark}. For each small event $\scrE$ we can then simply assign $y'(\scrE)$ to be the probability that $\scrE$ is marked in \ref{step:split_uncorrelated_links}. In order to define $y'(\scrE)$ for large events \Cref{subsubsec:feas_struc_frac} we use Step \ref{step:split_uncorrelated_links} to create a distribution over large events. The following is the pseudocode of the Marking Algorithm.  \Cref{fig:shadow_replacement} provides a visualization of the Marking Algorithm.

\paragraph{Marking Algorithm.}
\begin{enumerate}
    \item \textbf{Run Structured-Sampling Algorithm.}
    Run the Structured-Sampling Algorithm to obtain the sets $\{F_e\}_{e \in E}$
    and the set of flagged edges.  Flag each uncorrelated link $\ell$ if both of its leading edges are flagged.
    \label{step:run_structured_sampling}

    \item \textbf{Replace Links by Shadows (Non-flagged Links).}
    For each non-flagged link $\ell \in L$, define $\trim(\ell)$ as the smallest
    subset of $\shadows(\ell)$ satisfying:
    \begin{itemize}
        \item Each shadow covers only edges that use $\ell$.
        \item Each edge covered by $\ell$ is assigned to a unique shadow.
    \end{itemize}
    For each edge $e$ with $\ell \in F_e$, replace $\ell$ in $F_e$ by the unique
    shadow in $\trim(\ell)$ covering $e$.
    \label{step:shadow_replacement_step}

    \item \textbf{Split Flagged Links via Event Extensions.}
    For each flagged edge $g=\{v,w\}$ and each uncorrelated edge
    $g'=vw' \notin \{g,e_v\}$ incident to $v$, perform the following steps:
    \begin{enumerate}
        \item \textbf{Sample an Extension Event.}
        Sample an event
        \[
        \scrF_{g,g'} \in \Ext\!\bigl(\scrF_g,\; R(\scrF_g)\cup\{g'\}\bigr)
        \]
        with probability
        \begin{equation}\label{eq:choosing_ext_correl_event_pairs}
            \frac{y(\scrF_{g,g'})}{y(\scrF_g)} \, .
        \end{equation}
        \label{step:extension_two_edges}
        \item \textbf{Sample Descendant Assignments.}
        Let $\scrF_{g,g'}'$ be the restriction of $\scrF_{g,g'}$ to
        $T_f \cup \{g'\}$, and sample
        \[
            F'=\{F'_e\}_{e \in E}
        \]
        from the distribution $(\mathcal{D} \mid \scrF_{g,g'}')$.
        \label{step:sample_desc_assignments}

        \item \textbf{Shadow Assignment for Flagged Links.}
        For flagged link $\ell$ whose leading edges are $g$ and $gl$, let $\trim(\ell)$ be the smallest subset of
        $\shadows(\ell)$ satisfying:
        \begin{itemize}
            \item Each shadow $\ell' \in \trim(\ell)$ covers only descendant
            edges $e$ of $g$ (including $g$ itself) for which $\ell \in F_e$.
            \item Each descendant edge of $g$ covered by $\ell$ is assigned to
            a unique shadow in $\trim(\ell)$.
        \end{itemize}
        Equivalently, $\trim(\ell)$ partitions the set of edges currently using $\ell$
        into maximal connected segments, and assigns one shadow per segment.
        
        If $\ell \in F_g$, let $\ell'$ be the unique shadow of $\ell$ in $\trim(\ell)$
        that covers $g$.  Replace $\ell'$ by the unique shadow of $\ell$ whose path
        contains the union of the edges covered by $\ell'$ and the edges in $P_\ell(g')$.

        For every descendant edge $d$ of $g$, whenever $\ell \in F_d$, replace
        $\ell$ by the corresponding shadow in $\trim(\ell)$.
        \label{step:shadow_assignment_flagged}
    \end{enumerate}
    \label{step:split_uncorrelated_links}

    \item \textbf{Mark Events.}
    Mark all events corresponding to stars with root $e$ and at most one
    uncorrelated edge, namely
    \[
        (e, F_e), \quad
        (E^*(e), \bigcup_{e' \in E^*(e)} F_{e'}), \quad (E^*(e) \cup \{g\}, \bigcup_{e' \in E^*(e) \cup \{g\}} F_{e'})
    \]
    for each $g \in E(e) \setminus E^*(e)$.
    \label{step:mark}
\end{enumerate}

\begin{figure}[t]
    \centering
    \begin{minipage}{0.45\textwidth}
        \centering
        \begin{forest}
        for tree={
        circle, draw, minimum size=0.3cm, inner sep=0pt, outer sep=0pt,
        l sep=30pt, s sep=20pt,
        }
        [{}, name=a, fill=white
            [{}, name=r, fill=white, , edge label={node[midway,left]{$f$}}, edge={ultra thick, draw=black}
                 [{}, fill = white, name=u,  edge label={node[midway,left, yshift=-3pt]{$e_1$}}, edge={ draw=black}
                        [{}, fill=white, name=v, edge={ultra thick, draw=black}, edge label={node[midway,left, yshift=+2pt]{$e_4$}}]
                        [{}, fill=white, name=h, edge={draw=red}, edge label={node[midway, right]{$e_5$}}]
                        ]
                [{}, fill=white, name=k, edge label={node[midway,right, yshift=-3pt]{$e_2$}}, edge={ultra thick, draw=black}]
                [{}, fill=white , name=l, edge label={node[midway,right, yshift=-3pt]{$e_3$}}, edge={ultra thick, draw=red}]
                ]
        ]
        \draw[dashed, thick, bend right] (a) to node[midway, right] {$\ell_1$} (-3, -3);
        \draw[dashed, thick, bend right] (k) to node[midway, below] {$\ell_3$} (2, -4);
        \draw[dashed, thick, bend left=50] (v) to node[midway, left] {$\ell_2$} (r);
    \end{forest}
    \end{minipage}

    \caption{Illustration of one iteration of the while loop of the Structured-Sampling Algorithm. Throughout the reader should note that the other endpoint of $\ell_1$ is below $e_4$ in the tree and the other endpoint of $\ell_3$ is below $e_3$ in the tree.
    The pair $(f, \{\ell_1\})$ is first removed from $X$ in Step \ref{step:remove_from_X_sm_high}, after which the event $\scrF$, defined on the subtree $R(\scrF)$ consisting of the black edges $(f, e_1, e_2, e_4)$, is sampled in Step \ref{step:select_event}. 
    Here, $R_\sm(\scrF)$ corresponds to the thick edges $(f, e_2, e_4)$, and $L_\sm(\scrF)$ to the dashed links $(\ell_1, \ell_2, \ell_3)$. Since the only edge $e_1$ assigned by $\scrF$ to be covered by many links is covered by $L_\sm(\scrF)$, $H$ is empty.
    The edge $e_3 \notin E^*(f)$ is then considered as $g$ in Step~\ref{step:process_children_f}, where the event $\scrF_{e_3}$ is selected with $e_3 \in R_\sm'(\scrF_{e_3})$ and $L_\sm(\scrF_{e_3}) = L_\sm(\scrF)$.
    Next, $e_1$ is treated as $g$ in Step~\ref{step:huge_edges_in_tree}; since $e_5$ is a child of $e_1$ not contained in $R(\scrF)$, a conditional reset is performed in Step~\ref{step:cond_reset_lower_child}.
    Some examples of events that will subsequently marked (up to shadow replacement in Steps \ref{step:shadow_replacement_step} and \ref{step:split_uncorrelated_links} of the Marking Algorithm)
    in Step \ref{step:mark} include $\{(f, e_1, e_2\}, \{\ell_1, \ell_2, \ell_3\})$, $(\{(e_1\}, \{\ell_1, \ell_2\})$, and $(\{(e_1, e_4\}, \{\ell_1, \ell_2\})$.
     The pairs $(e_4, \{\ell_1, \ell_2\}), (e_5, S_5), (e_2, \{\ell_3\})$ and $(e_3, \ell_3)$ are added to $X$.}
\label{fig:structured_sampling_alg}
\end{figure}
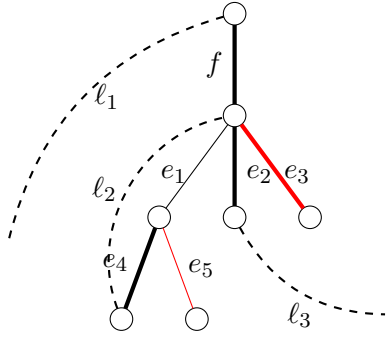

\begin{figure}[t]
\centering
\begin{minipage}[t]{0.48\linewidth}
\centering
\scalebox{0.78}{%
\begin{tikzpicture}[
    node/.style={circle, draw, fill=white, minimum size=9pt, inner sep=2pt},
    every label/.style={font=\small}
]
\def\dx{0.8}
\def\dy{0.8}
\node[node] (r) at (0,0) {};
\foreach \i in {1,...,5}{
    \pgfmathsetmacro{\x}{-\dx*\i}
    \pgfmathsetmacro{\y}{-\dy*\i}
    \node[node] (L\i) at (\x,\y) {};
}
\foreach \i in {1,...,4}{
    \pgfmathsetmacro{\x}{\dx*\i}
    \pgfmathsetmacro{\y}{-\dy*\i}
    \node[node] (R\i) at (\x,\y) {};
}
\draw[line width=1.4pt, red] (L5) -- node[midway, left] {$e_1$} (L4);
\draw[line width=1.4pt, red] (L4) -- node[midway, left] {$e_2$} (L3);
\draw[line width=1.4pt, red] (L3) -- node[midway, left] {$e_3$} (L2);
\draw[line width=1.4pt, red] (L2) -- node[midway, left] {$e_4$} (L1);
\draw[line width=1.4pt, red] (L1) -- node[midway, left] {$e_5$} (r);
\draw[thick] (r) -- node[midway, right] {$e_6$} (R1);
\draw[thick] (R1) -- node[midway, right] {$e_7$} (R2);
\draw[line width=1.4pt, red] (R2) -- node[midway, right] {$e_8$} (R3);
\draw[line width=1.4pt, red] (R3) -- node[midway, right] {$e_9$} (R4);
\draw[dashed, thick, bend right=20] (L5) to node[below] {$\ell$} (R4);
\draw[dashed, thick, bend left=30]  (r) to node[below] {$\ell_1$} (L5);
\draw[dashed, thick, bend left=30] (R4) to node[below] {$\ell_2$} (R2);
\end{tikzpicture}%
}
\end{minipage}
\hfill
\begin{minipage}[t]{0.48\linewidth}
\centering
\scalebox{1.12}{%
\begin{forest}
    for tree={
        circle, draw, minimum size=0.3cm, inner sep=0pt, outer sep=0pt,
        l sep=30pt, s sep=20pt
    }
    [{}, name=a
        [{}, name=r, edge label={node[midway,left]{$f$}}, edge={ultra thick, draw=black}
            [{}, name=j, edge label={node[midway,right, yshift=-3pt]{$g$}}, edge={ultra thick, draw=black}
                [{}, name=d, edge label={node[midway,right, yshift=-3pt]{$d$}}, edge={draw=black}]
            ]
            [{}, name=i, edge label={node[midway,right, yshift=-3pt]{$g'$}}, edge={draw=red}
            [{}, name=dprime, edge label={node[midway,right, yshift=-3pt]{$d'$}}, edge={draw=black}]
            ]
        ]
    ]
    \draw[dashed, thick, bend right] (d) to node[midway, below] {$\ell$} (dprime);
    \draw[dashed, thick, bend left]  (i) to node[midway, below] {$\ell'$} (j);
\end{forest}%
}
\end{minipage}

\caption{
Illustrations of the Marking Algorithm:\\
\textbf{Left: } (An Illustration of the Shadow Replacement Step (Step \ref{step:shadow_replacement_step}).): If $\ell$ is contained in $F_e$ for precisely $e \in \{e_1, e_2, e_3, e_4, e_5, e_8, e_9\}$, then we will replace $\ell$ by $\ell_1$ in $F_{e_i}$ for $i=1, 2, 3, 4, 5$ and by $\ell_2$ in $F_{e_8}$ and $F_{e_9}$.\\
\textbf{Right: } (An Illustration of Up-Link Splitting Step (Step \ref{step:split_uncorrelated_links}).):
Suppose that $f$ is the edge in the pair removed from $X$ during a given iteration of the while loop.
Suppose that $g$ and $g'$ are uncorrelated edges that are children of $f$ and hence are flagged in Step \ref{step:process_children_f}. The link $\ell$ depicts an uncorrelated link with leading edges $g$ and $g'$, $\ell$ is contained in $F_g$ and $F_{g'}'$ but not in $F_d$ or $ F_{d'}'$. In this case, the shadow $\ell'$ replaces $\ell$ in $F_g$.}
\label{fig:shadow_replacement}
\end{figure}

\paragraph{Defining our Structured Fractional Solution.}\label{subsubsec:construct_struct_frac_soln}

Recall that the Structured-Sampling Algorithm samples events from the support
of the Strong LP, while the Marking Algorithm replaces the links selected by
these events with shadows.
In the following, we will require a well-defined mapping from each shadow
link $\ell' \in L$ to the unique original link $\ell$ (from the support of $x$)
that was sampled in the corresponding event.

To make this mapping well defined, we duplicate links so that the shadow sets become pairwise disjoint: for every original link $\ell$, the set $\shadows(\ell)$ consists of all (and only) the shadow copies associated to $\ell$, and these sets satisfy
\[
\shadows(\ell) \cap \shadows(\ell') = \emptyset \qquad\text{for all } \ell \neq \ell'.
\]

Throughout the remainder of this section, when we refer to the set $L$,
we implicitly mean the set containing these duplicated (shadow) links.
We also remark that the same mapping applies to links sampled in $H$,
since they are likewise replaced by shadows associated with a unique
original link, as guaranteed by \Cref{lem:covering_huge_edges}.

All marked events contain
\emph{shadow links} rather than original links.  Because each shadow 
$\ell'$ belongs to exactly one set $\shadows(\ell)$, it has a unique 
associated original link~$\ell$.  Thus, whenever a shadow appears in the 
link set $L(\scrE)$ of a marked event, it is simply treated as a distinct 
link variable within the Structured LP, with the understanding that it 
originated from~$\ell$.

We now formally define our Structured Fractional Solution $(x', y', z')$  guaranteed by \Cref{lem:esp_alm_sm_struc_frac}. Recall that for $F \subseteq E$, the set $\scov(F)$ denotes the collection of link sets that are \emph{small} on each edge $e \in F$, that is, every edge $e \in F$ is covered by at most $\rho$ links.

As previously mentioned, for the Structured Fractional Solution, we increase the smallness bound so that each edge may be covered by up to
\[
   \rho' \;=\; \rho \cdot (\beta+1)
\]
links. Formally, for $F \subseteq E$, we let $\scov'(F)$ denote the collection of link sets $L \subseteq L(F)$ such that every edge $e \in F$ is covered by at most $\rho'$ links in $L$.

For each event $\scrE$ that is small enough to be marked by the 
Structured-Sampling Algorithm, we define $y'(\scrE)$ as the probability 
that the algorithm marks $\scrE$ in  
Step~\ref{step:mark}. 
Formally, we say that an event is \emph{small} if 
$\scrE \in \events'(E')$, where 
$E' \subseteq E^*(e) \cup \{g\}$ for some $g \in E(e)$ and $e \in E$. 
Small events are precisely those that can be marked by our algorithm.

The remaining \emph{large} events~$\scrE$, which satisfy 
$\scrE \in \events'(E')$ for 
$E' = E^*(e) \cup \{g, g'\}$ with 
$g, g' \in E(e) \setminus E^*(e)$ for some $e \in E$, 
will be defined in \Cref{subsubsec:feas_struc_frac} so that the 
extension consistency constraints 
(\Cref{eq:constr:small_prob_1:consistency}) are preserved.

We set $x'$ so that
the marginal-preserving constraints in \Cref{eq:constr:small_prob_1:marginals} are satisfied. 
In particular, for each link $\ell \in L$ we select any edge $e$ covered by $\ell$ and define 
\begin{equation}\label{eq:defining_x_prime}
    x'(\ell) = \sum_{\scrE \in \primeevents(e): \ell \in L(\scrE)} y'(\scrE).
\end{equation}
Although this choice may seem ambiguous due to the arbitrary choice of $e$, we remark that so long as the consistency constraints hold, this is sufficient. To see this, suppose that $e'$ is some other edge covered by $\ell$. For simplicity suppose that $e'$ is a child of $e$. Then if the consistency constraints hold the right hand side of \Cref{eq:defining_x_prime} equals $$\sum_{\scrE \in \primeevents(E^*(e) \cup \{e'\}): \ell \in L(\scrE)} y'(\scrE)$$ which again by consistency equals $$\sum_{\scrE \in \primeevents(e'): \ell \in L(\scrE)} y'(\scrE).$$ Using this argument we can walk along $P_\ell$ to show that every choice of $e$ leads to the same value for $x'(\ell)$.

When defining the solution $z'$, we
assign values only to those links that lie in the support of the original LP
solution~$x$.  Shadows contribute to $z'$ only indirectly, through their
associated original link.

To ensure that the odd cut constraints are satisfied by $z'$, we define $z'$
so that it dominates the original solution~$x$.
Intuitively, if the shadows of a link $\ell$ are selected frequently by the
algorithm, then the total mass assigned to these shadows in $x'$ should be
reflected in the value of $z'(\ell)$.

Accordingly, for each link $\ell$ in the support of~$x$, we define
\[
z'(\ell)
\;=\;
\max\!\bigl( x'(\shadows(\ell)),\; x(\ell) \bigr),
\]
where $x'(\shadows(\ell))$ denotes the total $x'$-mass assigned to all shadow
copies of~$\ell$.

This definition ensures that:
\begin{itemize}
\item the value assigned to each link $\ell$ under $z'$ is at least its
original value $x(\ell)$, implying that $z'$ is feasible for the Odd Cut LP; and
\item $c\!\left(z'(L)\right) \ge c\!\left(x'(L)\right)$.
\end{itemize}
Consequently, $z'$ satisfies the required properties of a feasible
fractional solution.

\subsubsection{Feasibility}\label{subsubsec:feas_struc_frac}
We will see in \Cref{lem:probability_including_edge} that, due to our choice of
the input solution $(x, y, \{y_\ell\}_{\ell \in L})$, it remains true that for
each vertex $v \in V$ and each edge $e \in E(v) \setminus E^*(v)$, the supports
$x'(L_{e_v})$ and $x'(L_e)$ are disjoint, and likewise
$z'(L_{e_v})$ and $z'(L_e)$ are disjoint.
This follows from the bound established in the lemma, which guarantees that any
original link absent from the support of~$x$ has all of its associated shadows absent from
the supports of both $x'$ and~$z'$.

Moreover,
\[
    c(z'(L)) \ge c(x'(L)).
\]

Since $z'$ dominates $x$ and $x$ is feasible for the odd cut LP, it follows that $z'$ is feasible for the odd cut LP.

In addition, we have constructed $x'$ to satisfy the marginal-preserving constraints stated in \Cref{eq:constr:small_prob_1:marginals} (provided that the consistency constraints hold).

Therefore, in order to prove the feasibility of the solution $(x', y', z')$, it suffices to verify the following conditions:

\begin{itemize}
    \item The \emph{coverage constraints} (\Cref{eq:constr:small_prob_1:coverage}) are satisfied.
    \item The \emph{consistency constraints} (\Cref{eq:constr:small_prob_1:consistency}) are satisfied.
    \item The \emph{uncorrelated edge bound} (\Cref{eq:uncorr_cost}) holds.
    \item The vector $y'$ is well-defined, meaning that for every event $\scrE$ to which we assign positive LP value, we have $\scrE \in \events'$.
\end{itemize}

We begin by proving the following:
\begin{lemma}\label{lem:stuc_frac_soln_well_defined_edge_events} For each $e \in E$, the Marking Algorithm always marks an event in $\events'(e)$.
\end{lemma}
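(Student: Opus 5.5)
Fix an edge $e \in E$. The plan is to show two things: the Marking Algorithm marks exactly one event of the form $(e, F_e)$ in Step~\ref{step:mark}, and the final set $F_e$ (after the shadow replacements in Steps~\ref{step:shadow_replacement_step} and~\ref{step:split_uncorrelated_links}) lies in $\scov'(e)$. That is, we need $1 \le |F_e| \le \rho'$ and every link of $F_e$ covers $e$.

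\textbf{Step 1: $F_e$ is assigned exactly once.} I would first argue that the Structured-Sampling Algorithm assigns $F_e$ exactly once. Edges enter $X$ only through three routes: Step~\ref{step:select_prob_trim_small}, the conditional resets in Steps~\ref{step:reset_uncorrelated_children} and~\ref{step:cond_reset_lower_child}, and Step~\ref{step:huge_edges_in_tree}(i). Each edge is inserted by the unique iteration that processes its lowest ancestor edge $f$ with $e \in R(\scrF) \cup E(f)$. By induction from $(e_r,\emptyset)$ in top-down order, every edge is inserted into $X$ exactly once. When $(e,S)$ is removed, Step~\ref{step:modify_F_e} sets $F_e$, since $f=e$ itself is included there. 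So $F_e$ is assigned at the point where $e$ is removed from $X$. The only other assignments of $F_e$ occur for edges $e \in R\setminus R_\sm$ of a sampled event, and those edges are never themselves inserted into $X$ (only their children are reset). Hence there is no double assignment.

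\textbf{Step 2: the pre-shadow bound $1 \le |F_e| \le \rho$.} There are two cases, according to how $F_e$ was set.
\begin{itemize}
\item If $F_e$ was set from a small pair $(f,S)$ or from $L_\sm \cap L_e$ with $e \in R_\sm$, then $S \in \scov(e)$ or the event belongs to $\events$. This holds because the pair came from the support of $y$, or from $\mathcal{D}_e$, whose support is over sets $S$ with $1 \le |S| \le \rho$ by \Cref{lem:event_usage_bounds}. So $1 \le |F_e| \le \rho$.
\item If $e$ is huge in the sampled event $\scrF$, then $F_e$ is either $L_\sm\cap L_e$ (nonempty, and of size at most $\rho$ for each of the at most $\beta+1$ small leaf edges, since $T_f$ has at most $\beta+1$ leaves by \Cref{obs:T_f_constant_leaves}), or $H \cap L_e$. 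In the latter case \Cref{lem:covering_huge_edges} guarantees that $H$ covers every huge edge not covered by $L_\sm$ with a bounded number of up-links.
\end{itemize}
In either case $1 \le |F_e| \le \rho(\beta+1) = \rho'$.

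\textbf{Step 3: shadow replacement preserves the bound.} Steps~\ref{step:shadow_replacement_step} and~\ref{step:split_uncorrelated_links} replace each $\ell \in F_e$ by exactly one shadow that still covers $e$. In Step~\ref{step:shadow_replacement_step} the unique shadow in $\trim(\ell)$ covering $e$ is chosen. In Step~\ref{step:shadow_assignment_flagged} the replacement shadow's path contains the edges covered by $\ell'$, so it still contains $e$. These replacements neither change $|F_e|$ nor make any link stop covering $e$. Shadows of distinct links are distinct by the duplication convention, so no two elements of $F_e$ merge. Therefore the marked event $(e,F_e)$ lies in $\events'(e)$.

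The main obstacle is the huge-edge case of Step 2. There I must check that \Cref{lem:covering_huge_edges} really yields a per-edge bound of at most $\rho'$ links from $H$. I would first try to use that lemma's guarantee that $H$ consists of up-links covering each such edge, together with the fact that each edge is covered by a bounded number of them. If the lemma only gives ``covers'', I would instead trim $H$ to a minimal up-link cover; a minimal up-link cover of a subtree covers each edge by at most two links, which is at most $\rho'$.
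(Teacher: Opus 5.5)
Your proposal is correct and follows the paper's proof. You show that $F_e$ is fixed exactly once, and that it lies in $\scov(e)$ when $e=f$ because every pair entering $X$ carries a small set (from an extension event or from $\calD_{f'}$). For a huge edge $e\in R\setminus R_\sm$, $F_e$ is either $L_\sm\cap L_e$, of size at most $\rho(\beta+1)$, or the single link of $H$ covering $e$, and the later shadow replacements preserve membership in $\scov'(e)$.

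The ``main obstacle'' you flag is settled directly by \Cref{lem:covering_huge_edges}, which says each such edge is covered by exactly one link of $H$. Drop your fallback, which is false as stated: take $a$ with parent $r$ and children $b_1,\dots,b_k$. The up-links $\{b_i,r\}$ form an inclusion-minimal cover of the edges at $a$, yet they cover $\{r,a\}$ $k$ times.
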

\begin{proof}
    We claim that the statement already holds immediately after
    Step~\ref{step:run_structured_sampling}, i.e., after the Structured-Sampling Algorithm
    has produced the sets $\{F_e\}_{e\in E}$.  Since all subsequent modifications to the
    sets $F_e$ only replace links by their shadows, it suffices to show that after
    Step~\ref{step:run_structured_sampling} we have $F_e \in \scov'(e)$ for every $e\in E$.
    By the definition of $\events'(e)$, this implies that $(e,F_e) \in \events'(e)$ and hence
    an event in $\events'(e)$ is marked in Step~\ref{step:mark}.

    Consider an execution of the Structured-Sampling Algorithm  where the pair $(f, S)$ is removed from $X$ in Step \ref{step:remove_from_X_sm_high} and $F_e$ is fixed in Step \ref{step:modify_F_e}. Observe that this happens precisely one time per execution of the algorithm.
    
    Let $\scrF$ be the event sampled in Step~\ref{step:select_event}.
    We prove by induction that every pair $(f,S)$ that appears in $X$
    satisfies $S \in \scov(f)$ (i.e. $|S| \leq \rho$).

    Initially $X=\{(e_r,\emptyset)\}$.  (We may interpret $\emptyset \in \scov(e_r)$.) Assume $(f,S)\in X$ satisfies $S\in\scov(f)$ when it is removed.
    We show that every new pair $(f',S')$ added to $X$ during this iteration also satisfies
    $S'\in\scov(f')$.
    If a conditional reset occurs at $f'$, then $S'$ is sampled from $\mathcal{D}_{f'}$ whose
    support is contained in $\scov(f')$.
    Otherwise, $S'$ is obtained from an extension event (either $\scrF$ chosen in
    Step~\ref{step:select_event} or $\scrF_g$ chosen in Step~\ref{step:select_stars_plus_one_edge}),
    and by construction that event assigns $f'$ a small covering set, i.e.\ an element of $\scov(f')$.
    Thus every $(f',S')$ added to $X$ satisfies $S'\in\scov(f')$.

Now consider the assignments to $F_e$ in Step~\ref{step:modify_F_e}.  Let $\scrF$ be the
event selected in Step~\ref{step:select_event}.
We only assign $F_e$ for edges $e \in (R(\scrF)\setminus R_\sm(\scrF)) \cup \{f\}$.

If $e=f$ and $F_e$ equals $S$ which is in $\scov(e)$ and hence in $\scov'(e)$ by our inductive claim.

Otherwise, there are two cases to consider:
\begin{itemize}
    \item If no link in~$L_\sm(\scrF)$ covers~$e$, 
    $\ell$ is added to $F_e$ where ~$\ell$ is the unique link in~$H$ covering~$e$.
    \item If instead $L_\sm(\scrF)$ contains a link covering~$e$, 
    then $L_\sm(\scrF) \cap L_e$ is added to $F_e$. 
    Observe that $|L_\sm(\scrF) \cap L_e| \le \rho \cdot  (\beta + 1)$, 
    since there are at most $\beta + 1$ leaves in~$R(\scrF)$.
\end{itemize}
In either case $F_e \in \scov'(e)$.
This shows that after Step~\ref{step:run_structured_sampling}, we have
$F_e \in \scov'(e)$ for every $e \in E$, completing the proof.
\end{proof}

Since each edge is marked exactly once in the final step of the algorithm, as an immediate corollary to \Cref{lem:stuc_frac_soln_well_defined_edge_events}, we conclude that the coverage constraints are satisfied.
We now verify that the consistency constraints (\Cref{eq:constr:small_prob_1:consistency}) also hold. 

Consider any pair of stars $E_1, E_2 \in \stars'$ with $E_1 \subseteq E_2$,
where $E_2$ contains at most one uncorrelated edge (aside from the edge incident
to the root of $E_2$).
The corresponding consistency constraint is satisfied because whenever the
algorithm marks an event $\scrE_1 \in \events'(E_1)$, it also marks an event
$\scrE_2 \in \events'(E_2)\!\mid_{\scrE_1}$.

Indeed, let $e$ and $e'$ be two distinct edges in $E_2$ such that
$\ell \in F_e \cap F_{e'}$ prior to
Steps~\ref{step:shadow_replacement_step} and~\ref{step:split_uncorrelated_links}
of the Marking Algorithm.
We show that, after these steps, the shadow of $\ell$ in $F_e$
coincides with the shadow of $\ell$ in $F_{e'}$.
For convenience, we refer to this as the \emph{identical assignment property}
(of $\ell$).

There are two cases to consider.
First, suppose that $e$ and $e'$ are the two leading edges of $\ell$.
By our restriction on $E_2$, at least one of $e$ and $e'$ must be correlated,
and hence $\ell$ is a correlated link.
In this case, $\ell$ is never flagged, and
Step~\ref{step:split_uncorrelated_links} does not modify its assignments.
It therefore suffices to argue that the identical assignment property is
preserved by Step~\ref{step:shadow_replacement_step}.

Otherwise, $e$ and $e'$ have an ancestor--descendant relationship.
As in the previous case, it again suffices to verify that the identical
assignment property is preserved by
Step~\ref{step:shadow_replacement_step}, since
Step~\ref{step:split_uncorrelated_links} may modify assignments,
but cannot break the identical assignment property for edges that lie on the
same side of the apex of $\ell$.

We now argue that the identical assignment property is indeed preserved by
Step~\ref{step:shadow_replacement_step}.
By construction, $\trim(\ell)$ is chosen as the smallest subset of
$\shadows(\ell)$ satisfying the required coverage properties.
Since $e$ and $e'$ are adjacent edges, they are covered by the same connected
segment of $\ell$, and are therefore assigned the same shadow.

Conversely, if $\ell \in F_e \setminus F_{e'}$ prior to
Steps~\ref{step:shadow_replacement_step} and \ref{step:split_uncorrelated_links}, then the shadow of $\ell$ assigned to
$F_e$ after these steps does not cover $e'$. Hence no shadow of $\ell$ appears in
both $F_e$ and $F_{e'}$ unless $\ell \in F_e \cap F_{e'}$.

Consequently, when the algorithm marks the events
\[
\scrE_1 = \bigl(E_1,\ \bigcup_{e \in E_1} F_e\bigr)
\quad\text{and}\quad
\scrE_2 = \bigl(E_2,\ \bigcup_{e \in E_2} F_e\bigr),
\]
their associated link sets agree on $E_1$, and hence the extension consistency
constraint for $E_1 \subseteq E_2$ is satisfied.

It remains to define the variables~$y'$ for the remaining events so that the
consistency constraints continue to hold once we drop the restriction on~$E_2$.
In particular, we must address the case where~$E_2$ contains two uncorrelated
edges in addition to its root.  Recall that we refer to such stars as \emph{large stars},
and to all others as \emph{small stars}.

For small stars~$\scrE$, we already defined $y'(\scrE)$ as the probability that
$\scrE$ is marked by the Structured-Sampling Algorithm.  We now extend this
definition to large stars in a way that (i) preserves all consistency
constraints and (ii) allows us to interpret $y'(\scrE)$ as the probability
mass assigned to $\scrE$ by an explicit sampling rule.

To achieve this, for each edge~$e \in E$ and every pair of
uncorrelated children $e', e'' \in E(e) \setminus E^*(e)$,
we introduce a \emph{pairing distribution}
\[
  \phi : \events(E') \times \events(E'') \to \R_{\ge 0},
\]
where, for brevity, we write
$E' = E^*(e) \cup \{e'\}$ and $E'' = E^*(e) \cup \{e''\}$.
Intuitively, $\phi$ describes how events involving $E'$ and $E''$
are combined to form the joint events on
$E^*(e) \cup \{e', e''\}$,
while preserving both probability mass and consistency.

Formally, this pairing distribution must satisfy two key properties:

\begin{enumerate}

  \item \textbf{Marginal Consistency:}
  For every $\scrE' \in \events(E')$ and every $\scrE'' \in \events(E'')$,
  \[
    \sum_{\scrE'' \in \events(E'')} \phi(\scrE', \scrE'') = y'(\scrE'),
    \qquad
    \sum_{\scrE' \in \events(E')} \phi(\scrE', \scrE'') = y'(\scrE'').
  \]
  Hence, the marginals of $\phi$ match the previously defined values of~$y'$.

  \item \textbf{Structural Consistency:}
  Whenever $\phi(\scrE', \scrE'') > 0$, the paired events
  agree on their shared structure:
  \[
    L(\scrE') \setminus L(\scrE'') \text{ covers no edge in } E'',
    \quad
    L(\scrE'') \setminus L(\scrE') \text{ covers no edge in } E'.
  \]
  In other words, their link sets must be compatible; any link present in one set but not the other must not cover an edge in the other event's star. Consult \Cref{fig:structural_consistency} for an illustration of the structural consistency property.
\end{enumerate}

Observe that the Marginal Consistency Property implies that following:

\textbf{Pairwise Mass Preservation:}
  \[
    \sum_{\scrE' \in \events(E')} \sum_{\scrE'' \in \events(E'')}
    \phi(\scrE', \scrE'') = 1.
  \]
That is, $\phi$ redistributes total probability mass. This is because $y'(\scrE')$ is simply the probability that the algorithm marks $\scrE'$.

Given such a pairing~$\phi$, we define for every
$\scrE \in \events(E^*(e) \cup \{e', e''\})$:
\[
  y'(\scrE) =
  \sum_{\substack{\scrE' \in \events(E'),\, \scrE'' \in \events(E'') \\[1pt]
                  L(\scrE') \cup L(\scrE'') = L(\scrE)}}
  \phi(\scrE', \scrE'').
\]
Thus, $y'(\scrE)$ represents the total probability of obtaining~$\scrE$
by combining compatible smaller events on~$E'$ and~$E''$.

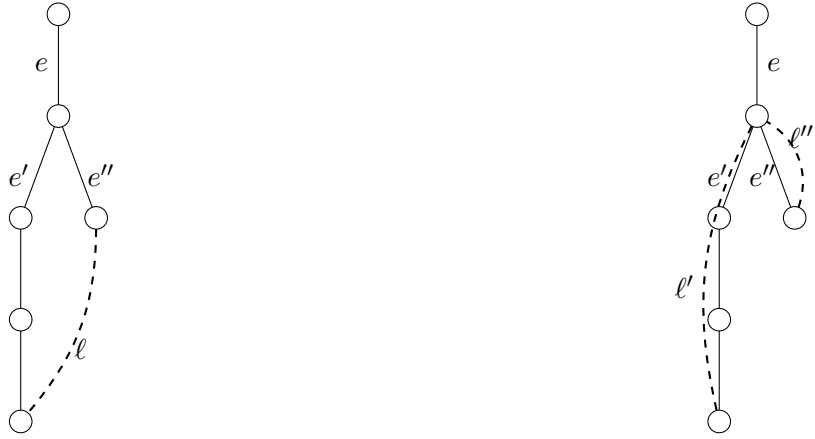
\begin{figure}[t]
\centering
\begin{minipage}{0.45\textwidth}
    \centering
    \vspace{4pt}
    \begin{forest}
        for tree={
            circle, draw, minimum size=0.3cm, inner sep=0pt, outer sep=0pt,
            l sep=30pt, s sep=20pt
        }
        [{}, name=a, fill=white
            [{}, name=r, fill=white, edge label={node[midway,left]{$e$}}, edge={draw=black}
                [{}, fill=white, name=u, edge label={node[midway,left, yshift=-3pt]{$e'$}}, edge={draw=black}
                    [{}, fill=white, name=v, edge={draw=black}, edge label={node[midway,left, yshift=+2pt]{}}
                        [{}, fill=white, name=b, edge={draw=black}]
                    ]
                ]
                [{}, fill=white, name=k, edge label={node[midway,right, yshift=-3pt]{$e''$}}, edge={draw=black}]
            ]
        ]
        \draw[dashed, thick, bend left=20] (k) to node[midway, below] {$\ell$} (b);
    \end{forest}
\end{minipage}%
\hfill
\begin{minipage}{0.45\textwidth}
    \centering
    \vspace{4pt}
      \begin{forest}
        for tree={
            circle, draw, minimum size=0.3cm, inner sep=0pt, outer sep=0pt,
            l sep=30pt, s sep=20pt
        }
        [{}, name=a, fill=white
            [{}, name=r, fill=white, edge label={node[midway,right]{$e$}}, edge={draw=black}
                [{}, fill=white, name=u, edge label={node[midway,left, yshift=-3pt]{$e'$}}, edge={draw=black}
                    [{}, fill=white, name=v, edge={draw=black}, edge label={node[midway,left]{}}
                        [{}, fill=white, name=b, edge={draw=black}]
                    ]
                ]
                [{}, fill=white, name=k, edge label={node[midway,left, xshift=4pt, yshift=-3pt]{$e''$}}, edge={draw=black}]
            ]
        ]
        Consistent non-crossing dashed links
        \draw[dashed, thick, bend right=40] (k) to node[midway, above, xshift=1pt] {$\ell''$} (r);
        \draw[dashed, thick, bend right=20] (r) to node[midway, below, xshift=-8] {$\ell'$} (b);
    \end{forest}
\end{minipage}
\caption{Illustration of structural consistency in the pairing distribution~$\phi$. 
Left: If $\phi(\scrE', \scrE'')>0$, then if $\ell$ is present in either event it must be present in both.\\
Right: If $\phi(\scrE', \scrE'')>0$, then $\scrE''$ is agnostic to $\ell'$'s presence in $\scrE'$. Likewise $\scrE'$ is agnostic to $\ell''$'s presence in $\scrE''$.}
\label{fig:structural_consistency}
\end{figure}

Before proceeding with out consistency proof we generalize the notion of restriction to the Structured LP.

\begin{definition}[Restriction (Structured LP)]
An event $\scrE \in \events'$ is said to be a \emph{restriction} of
an event $\scrE' \in \events'$ if
$\scrE' \in \Ext'(\scrE, Q)$ for some $Q \supseteq E(\scrE)$.
In this case, $\scrE$ is the restriction of~$\scrE'$ onto its edge set~$E(\scrE)$.
\end{definition}

\newcommand{\primeRestr}{\mathsf{Restr}'}
We will use the notation $\Restr'$ analogous to $\Restr$ for the Strong LP.

By the structural consistency property of~$\phi$, for $\scrE \in \events(E^*(e) \cup \{e', e''\})$ if $y'(\scrE) > 0$,
then it must equal $\phi(\scrE', \scrE'')$ for the unique
events~$\scrE'$ and~$\scrE''$ that are the restrictions of~$\scrE$
onto~$E'$ and~$E''$, respectively.

Since the consistency constraint already holds for all pairs of small events, it 
suffices to verify it in the remaining case where 
$E_2 = E^*(e) \cup \{e', e''\}$ and 
$E_1 \in \{E', E''\}$.
If $E_1$ is itself a subset of~$E^*(e)$, we leverage the fact that $E'$ and $E''$ are small stars. The consistency constraint between $(E_1, E')$ or $(E_1, E'')$ is already satisfied (as shown previously), which allows us to reduce this scenario to the desired primary case.

We now focus on the case $E_1 = E'$. An identical argument is made for $E_1=E''$.
Consider any event $\scrE_1 \in \events'(E_1)$.
By the \emph{marginal consistency} property of~$\phi$,
\[
    y'(\scrE_1)
    = \sum_{\scrE'' \in \events(E'')} \phi(\scrE_1, \scrE'').
\]

By the \emph{structural consistency} of~$\phi$, 
each event~$\scrE''$ with $\phi(\scrE_1, \scrE'') > 0$ satisfies that 
$L(\scrE'') \setminus L(\scrE_1)$ covers no edge of~$E_1$.  
Hence, every such pair $(\scrE_1, \scrE'')$ corresponds uniquely to an event 
$\scrE \in \events'(E^*(e) \cup \{e', e''\})$ whose restriction onto~$E_1$
is precisely~$\scrE_1$.

It follows, by the definition of $y'(\scrE)$ as the total probability of forming $\scrE$ from compatible pairs $(\scrE_1, \scrE'')$, that
\[
    y'(\scrE_1)
    = \sum_{\scrE \in \events'(E^*(e) \cup \{e', e''\})_{\mid \scrE_1}}
      y'(\scrE),
\]
where the subscript ${\mid \scrE_1}$ denotes events restricting to~$\scrE_1$ on~$E_1$.
Equivalently,
\[
    y'(\scrE_1)
    = \sum_{\scrE_2 \in \events(E_2)_{\mid \scrE_1}} y'(\scrE_2),
\]
which is exactly the requirement of 
\Cref{eq:constr:small_prob_1:consistency}.
Therefore, the consistency constraint holds for all~$E_1 \subseteq E_2$,
and the proof is complete.

\paragraph{Constructing $\phi$.}
Intuitively, we would like to define $\phi(\scrE', \scrE'')$ to be the probability
that both events $\scrE'$ and $\scrE''$ are marked in the following
\emph{modified} version of the Structured-Sampling Algorithm. First remove Step~\ref{step:split_uncorrelated_links} of the Marking Algorithm.
Instead of sampling $\scrF_{e'}$ and $\scrF_{e''}$ independently, we sample a
single joint event $\scrF_{e',e''}$ on the edge set
$T_e\cup\{e',e''\}$, and then define
\[
  \scrF_{e'} = \Restr(\scrF_{e',e''}, E')
  \qquad\text{and}\qquad
  \scrF_{e''} = \Restr(\scrF_{e',e''}, E'').
\]
This modified behavior is precisely what is mimicked by
Step~\ref{step:split_uncorrelated_links} of the Marking Algorithm.
The formal construction of $\phi$ is presented in
\Cref{appendix:construct_phi}.

\paragraph{Final steps.}
Having established that our definition of~$\phi$ guarantees the
consistency constraints (\Cref{eq:constr:small_prob_1:consistency}),
it remains to verify that~$y'$ is well defined and that the uncorrelated edge bound
(\Cref{eq:uncorr_cost}) holds.

The well-definedness of~$y'$ follows directly from
\Cref{lem:stuc_frac_soln_well_defined_edge_events} and the extension consistency constraints:
for every event~$\scrE$ to which we assign positive LP value, the restriction and extension relationships are consistent and the events assign a small subset to each edge, so $\scrE \in \events'$.

Finally, to verify \Cref{eq:uncorr_cost},
note that for every correlated link in the support of~$x'$,
the only way the algorithm can generate an uncorrelated shadow is by creating an up-link.
Therefore, every link $\ell \in L_U \setminus L_{UP}$ with $x'(\ell)>0$ 
maps back to a link in the support of~$x$ with the same leading edges.
Recalling that $z'(\ell) = \max(x'(\shadows(\ell)), x(\ell)) \ge x'(\shadows(\ell))$,
we obtain
\[
    c(x'(L_U \setminus L_{UP}))
    \leq \sum_{\ell \in L_U : x(\ell)>0} c(x'(\shadows(\ell)))
    \leq c(z'(L_U)).
\]
Combining the above, we conclude that $(x', y', z')$
forms a valid Structured Fractional Solution.

\subsubsection{Proving the Marginal Preserving Cost}

\label{sec:resampling_distributions}

In this section we argue that $x'$ approximately preserves the marginals of~$x$. 
Formally, we show that 
\[
  x'(\shadows(\ell)) \le (1 + o_\epsilon(1))\, x(\ell),
\]
which directly implies that the cost of $z'$ is close to that of $x$.
To prove this result, we must bound the probability that any given event is sampled by our algorithm.

During the iteration of the algorithm in which a pair $(f,S)$ is removed from
the queue, the sets $F_e$ are initialized for $e=f$ and for every edge
$e \in R \setminus R_{\sm}$.  Within the same iteration, however, new pairs
$(f',S')$ may be added to the queue, and thesis completely specifies the sets $F_{f'}$
we select in subsequent iterations. 

For the analysis in this section, it is convenient to adopt an alternative viewpoint.
We regard the set \(F_f\) as having been fixed in an earlier iteration, and view the
current iteration as determining the sets \(F_e\) for all edges \(e\) that either lie in
\(R \setminus \{f\}\), are children of an edge in \(R \setminus R_{\sm}\), or are children of \(f\).

This perspective clarifies which edges receive their covering sets as a direct
consequence of the event sampled in the current iteration.

With this viewpoint in place, we next enumerate all situations in which a given
small set \(S\) may become the covering set of an edge \(e\). These cases will form
the basis of our probability analysis.

Throughout this section, we restrict attention to events $\scrF$ in the support of the Strong LP (i.e., with $y(\scrF) > 0$), ensuring that all expressions are well-defined.

\paragraph{Algorithmic Event Definition.}
We focus on events $(e,S)$ such that $e\in E$ and $S\in\scov(e)$.
For such $e$ and~$S$, define $\mathsf{ALG}(e,S)$ to occur if any of the following 
happens during the algorithm:

\begin{enumerate}
    \item An event $\scrF=(R,R_\sm,L_\sm)$ is selected in 
    Step~\ref{step:select_event} such that 
    $e\in R_\sm\setminus\{f\}$ and $S=L_\sm\cap L_e$.
    \label{cond:alg_select_1}
    \item Edge $e$ is selected (as $g$) in 
    Step~\ref{step:select_stars_plus_one_edge},
    and the sampled event $\scrF_g=(R',R'_\sm,L'_\sm)$ satisfies 
    $e\in R'_\sm$ and $L'_\sm\cap L_e=S$.
    \label{cond:alg_select_2}
    \item Edge $e$ is conditionally reset in either 
    Step~\ref{step:reset_uncorrelated_children} or 
    Step~\ref{step:cond_reset_lower_child},
    and the sampled set of links equals~$S$.
    \label{cond:alg_select_3}
\end{enumerate}

If instead $S=\bot$, $\AlgEvent(e,\bot)$ is true whenever
an event $\scrF=(R,R_\sm,L_\sm)$ is selected in 
Step~\ref{step:select_event} with $e\in R\setminus R_\sm$.
Note that $\AlgEvent(e,\bot)$ never occurs when $e$ is a child of~$f$ 
not contained in~$R$.

Intuitively, the probability $\Pr[\AlgEvent(e,S)]$ equals the expected 
contribution of $S$ to $x'(L_e)$; thus, bounding it by $y(e,S)$
implies marginal preservation.

\paragraph{Supporting Lemmas.}
Our proof relies on the following two results, whose proofs are deferred to 
\Cref{subsubsec:resampling_dist,subsubsec:up_link_soln_huge_edges}.

\begin{restatable}[Resampling Distribution Properties]{lemma}{eventusagebounds}\label{lem:event_usage_bounds}
We can compute the resampling distributions $\{\calD_e\}_{e\in E}$ such that:
\begin{enumerate}
    \item For every $e\in T$ and $S\in\scov^+(e)$,
    \[
      \Pr[\AlgEvent(e,S)] \le (1 + 5\sqrt{\epsilon})\,y(e,S).
    \]
    \label{marginals_preserved}
    \item For every $e\in T$ and $S\in\scov(e)$, the probability of $S$ in $\calD_e$ 
    is at most $y(e,S)/\sqrt{\epsilon}$. 
    \label{marginals_resampling_not_too_big}
    \item For every edge $e \in T$ and $S \in \scov(e)$, if the probability of sampling $S$ from $\calD_e$ is nonzero, then no link $\ell \in S$ covers any ancestor of $e$ that is not pairwise correlated with~$e$ (with respect to $x$). \label{non_correlation_resampling}
\end{enumerate}
\end{restatable}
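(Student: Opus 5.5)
I would prove \Cref{lem:event_usage_bounds} by constructing the distributions $\calD_e$ top-down, in the same order in which the Structured-Sampling Algorithm processes the tree. By induction, when we reach $e$ the probability of every algorithmic event at ancestors of $e$ is already determined. The natural ansatz is to make $\calD_e$ a ``deficit-filling'' distribution. Let $q(e,S)$ be the probability that $\AlgEvent(e,S)$ occurs through conditions~\ref{cond:alg_select_1} or~\ref{cond:alg_select_2}, i.e.\ without a reset. Let $r_e$ be the probability that $e$ is conditionally reset. We then choose
\[
\Pr_{\calD_e}[S] \;\propto\; \max\bigl(0,\,(1+5\sqrt\epsilon)y(e,S)-q(e,S)\bigr)\cdot \mathbb{I}[S \text{ satisfies property~\ref{non_correlation_resampling}}],
\]
normalized to sum to one. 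The overall probability of $\AlgEvent(e,S)$ is then $q(e,S)+r_e\Pr_{\calD_e}[S]$. Property~\ref{marginals_preserved} holds for $S\neq\bot$ as long as $r_e$ does not exceed the total deficit mass. For $S=\bot$, the event only arises through condition~\ref{cond:alg_select_1} from sampled events with $e\in R\setminus R_\sm$. Its probability is controlled by how often the parent process reaches $e$ in the right state, which I would bound inductively by $(1+5\sqrt\epsilon)y(e,\bot)$ using the consistency constraint~\eqref{constr:extension_consistency}.

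The key quantitative steps, in order, are as follows.

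First, bound $q(e,S)$. I would show that, conditioned on the pair $(f,S_f)$ taken from $X$, the algorithm samples $\scrF$ exactly according to $y(\scrF)/y(f,S_f)$. Hence $q(e,S)\le \sum_{S_f}\Pr[(f,S_f)\text{ enters }X]\cdot y(\cdot)/y(f,S_f)$. By induction $\Pr[(f,S_f)\text{ enters }X]\le(1+5\sqrt\epsilon)y(f,S_f)$, and consistency then gives $q(e,S)\le(1+5\sqrt\epsilon)y(e,S)$. This shows the deficits are nonnegative.

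Second, bound the reset probability. A reset at $e$ happens only when the sampled event declares $e$ (or its parent inside $R$) huge. Because every edge is $\epsilon$-almost-always-small, $y(e,\bot)\le\epsilon$ and $y(e',\bot)\le \epsilon$ for the relevant parent. Together with the inductive bound, this gives $r_e\le (1+5\sqrt\epsilon)\cdot 2\epsilon$, roughly $O(\epsilon)$.

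Third, show the deficit mass is large. Using $\sum_S y(e,S)=1-y(e,\bot)$ and $\sum_S q(e,S)=1-r_e-\Pr[\AlgEvent(e,\bot)]$, the total slack is at least about $5\sqrt\epsilon$ minus lower-order terms. Normalizing therefore multiplies by at most $r_e/(\text{deficit})$, and the mass placed on $S$ is at most $r_e\cdot(\text{share})$. For property~\ref{marginals_resampling_not_too_big}, the bound $\Pr_{\calD_e}[S]\le y(e,S)/\sqrt\epsilon$ follows because each $S$ has deficit at most $(1+5\sqrt\epsilon)y(e,S)$ and the total deficit is at least $\sqrt\epsilon$.

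Property~\ref{non_correlation_resampling} is enforced by the support restriction. The cost of that restriction is that we must show the permitted sets still carry enough $y$-mass, at least $1-O(\epsilon)$. This holds because links covering $e$ and an uncorrelated ancestor have $x$-mass below $\delta=\epsilon$ per ancestor. After the preprocessing of \Cref{lem:removing_overlap_uncorrelated}, the hypothesis $x(L_{e_v}\cap L_e)=0$ of \Cref{lem:esp_alm_sm_struc_frac} kills such links at uncorrelated branchings. The links that remain can only cross into ancestors through correlated stretches, which are pairwise correlated by definition.

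The main obstacle is the first step. The $F_e$ assignments and resets make the process non-Markovian: conditional resets inject sets not drawn from $y$, and those sets then propagate to descendants. One must check that conditioning on a reset-produced $(f,S_f)$ still yields extensions with probabilities $y(\cdot)/y(f,S_f)$, so that multiplicative errors of $(1+5\sqrt\epsilon)$ do not compound along depth. My plan is to avoid compounding by making the deficit-filling exact: choose $\calD_e$ so that $\Pr[(e,S)\text{ enters }X]$ equals $y(e,S)$ up to a single factor, and mass it cannot match is absorbed into reset slack. The inductive hypothesis should be stated on the entry probabilities of $X$, not on $\AlgEvent$, so that the factor is reapplied rather than multiplied at each level.
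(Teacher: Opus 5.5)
Your proof is correct and shares the paper's skeleton. That skeleton is: top-down induction; the bound $\Pr[\AlgEvent'(e,S)]\le(1+5\sqrt{\epsilon})\,y(e,S)$, obtained by pushing the inductive hypothesis through the conditional probabilities $y(\scrF)/y(e',S')$ and summing over significant events (the paper's first claim plus \cref{obs:significant_events_relation_extension}); and the reset bound $r_e\le 2\epsilon(1+5\sqrt{\epsilon})$ from almost-always-smallness. Where you genuinely differ is the choice of $\calD_e$.

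The paper's construction works as follows. It takes the family $\calL(e)$ of permitted sets whose non-reset probability lies below a \emph{threshold}, $\Pr[\AlgEvent'(e,S)]\le(1+2\sqrt{\epsilon})\,y(e,S)$, and samples from it proportionally to $y(e,S)$. A counting contradiction shows $\sum_{S\in\calL(e)}y(e,S)\ge\sqrt{\epsilon}$, which gives Property 2 directly. Property 1 then follows from $(1+2\sqrt{\epsilon})+2\epsilon(1+5\sqrt{\epsilon})/\sqrt{\epsilon}\le 1+5\sqrt{\epsilon}$.

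You instead sample proportionally to the deficit $d(e,S)$. Property 1 then reduces to the single inequality $r_e\le D$, and Property 2 to $D\ge(1+5\sqrt{\epsilon})\sqrt{\epsilon}$. Both follow from $D\ge(1+5\sqrt{\epsilon})(1-\epsilon-\delta)-1\ge 5\sqrt{\epsilon}-O(\epsilon)$. Your route avoids the intermediate threshold; the paper's has the mild advantage that $\calD_e$ is simply $y(e,\cdot)$ conditioned on a subfamily.

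Two small corrections on your write-up:
\begin{enumerate}
\item For Property 2 you need $D\ge(1+5\sqrt{\epsilon})\sqrt{\epsilon}$, not merely $D\ge\sqrt{\epsilon}$. Your own estimate $D\approx 5\sqrt{\epsilon}$ covers this.
\item The ``main obstacle'' you describe is already resolved by your first step. The algorithm samples from $\Ext((f,S_f),T_f)$ with probabilities $y(\scrF)/y(f,S_f)$ no matter how $(f,S_f)$ entered $X$. Moreover, for $S\in\scov(e)$ the event $\AlgEvent(e,S)$ \emph{is} the event that $(e,S)$ enters $X$, so there is no distinction in where to state the hypothesis.
\end{enumerate}

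Your justification that the permitted sets carry mass $1-O(\epsilon)$ is, however, wrong as written. A chain of parent--child correlated edges does not make $e$ pairwise correlated with its far ancestors. Charging $\delta$ per uncorrelated ancestor is unbounded in the depth. The preprocessing hypothesis is beside the point here.

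The correct one-line argument, which the paper uses, is this. Let $f$ be the lowest ancestor of $e$ with $x(L_e\cap L_f)<\delta$; all ancestors strictly between $e$ and $f$ are pairwise correlated with $e$. Since a link covers a path, any link covering $e$ and an ancestor at or above $f$ also covers $f$. Hence the forbidden sets are exactly those meeting $L_f$. By link consistency and the marginal-preserving constraints, their total mass is $\sum_{S\cap L_f\neq\emptyset}y(e,S)\le x(L_e\cap L_f)<\delta$.
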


\begin{lemma}\label{lem:covering_huge_edges}
For each event $\scrF\in\events$, there exists a distribution 
$\calH_\scrF$ over up-link solutions covering all edges in 
$R(\scrF)\setminus R_\sm(\scrF)$ not covered by a link in $L_\sm(\scrF)$ such that: 
\begin{itemize}
    \item For each link $\ell\in L$, let $X_\scrF(\ell)$ denote the number of 
    shadows in $\shadows(\ell)$ included in $H\sim\calH_\scrF$. Then
    \[
      \E[X_\scrF(\ell)] \le \frac{2y_\ell(\scrF)}{\rho\,y(\scrF)},
    \]
    for all links relevant to $\scrF$ but not in $L_\sm(\scrF)$; otherwise 
    $X_\scrF(\ell)=0$ with probability~$1$.
    \item For every $H \sim \calH_\scrF$, each edge guaranteed to be covered is covered by exactly one link in $H$; otherwise, it is not covered at all.
\end{itemize}
\end{lemma}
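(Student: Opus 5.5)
The plan is to build $\calH_\scrF$ by rounding a scaled-down, split fractional vector with the integrality of the up-link Cut LP. Let $U$ be the set of edges of $R(\scrF)\setminus R_\sm(\scrF)$ that are not covered by $L_\sm(\scrF)$. If $y(\scrF)=0$ the statement is vacuous, so assume $y(\scrF)>0$.

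\textbf{Fractional vector.} For every link $\ell$ relevant to $\scrF$ with $\ell\notin L_\sm(\scrF)$, set $\tilde x(\ell)=\frac{2\,y_\ell(\scrF)}{\rho\, y(\scrF)}$. Links in $L_\sm(\scrF)$ get no mass. By \eqref{constr:huge_edges_const_child}, every $e\in R\setminus R_\sm$ satisfies $\sum_{\ell\in L_e}y_\ell(\scrF)\ge\rho\,y(\scrF)$. For $e\in U$, no link of $L_\sm(\scrF)$ covers $e$, so all of this mass lies on links outside $L_\sm(\scrF)$, and hence $\tilde x(L_e)\ge 2$. This slack of $2$ is wasteful; coverage $1$ would suffice. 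I would simply keep it, since the lemma's bound already allows the factor $2$.

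\textbf{Splitting and rounding.} Next, split every non-up-link $\ell$ in the support into its two up-link shadows $\ell_u,\ell_v$, each receiving value $\tilde x(\ell)$, and restrict them to $U$. The union of the two shadows covers exactly $P_\ell$, so the split vector $\hat x$ still satisfies $\hat x(L_e)\ge 1$ for every $e\in U$. Because $U$ may be disconnected, I would work with the subtree of $T$ spanned by $U$ and, for each $U$-edge, keep only the shadows of up-links restricted to the maximal segments lying in $U$. The Cut LP restricted to up-links has integral vertices. Therefore the scaled vector lies in the dominant of the integral up-link covers, and it can be written (after decreasing it to a point of the polytope) as a convex combination of integral up-link covers of $U$. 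We sample $H$ from this combination. Each original $\ell$ contributes two shadow variables, each with marginal at most $\tilde x(\ell)$, so this bound is the one I need to match against the lemma.

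\textbf{Cost accounting.} The count is $\E[X_\scrF(\ell)]\le 2\tilde x(\ell)=\frac{4y_\ell}{\rho y}$, which is a factor $2$ too large. To fix this I would scale by $1/\rho$ instead of $2/\rho$, which gives $\tilde x(L_e)\ge 1$ before splitting. Splitting duplicates mass but does not reduce coverage, so each edge remains covered with value $\ge1$, and $\E[X_\scrF(\ell)]\le 2\cdot\frac{y_\ell}{\rho y}$ as required. Links not relevant, or lying in $L_\sm(\scrF)$, receive zero mass, so $X_\scrF(\ell)=0$ surely.

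\textbf{Exact-once coverage.} Finally, the covers must cover each edge of $U$ exactly once and cover nothing outside $U$. Given any up-link cover, I would trim it: process edges top-down along each root-to-leaf path and replace each up-link by a shadow whose path is the set of edges it covers that are not already covered by a higher-chosen up-link. This is always possible because the instance is shadow-complete. Trimming never increases the number of shadows per original link, so the expectation bound is preserved. The main obstacle is this step: keeping the number of shadows per link unchanged while trimming, since a single up-link could be cut into several pieces. Because the pieces of an up-link cut by higher links form a suffix of its path, each up-link yields at most one shadow.
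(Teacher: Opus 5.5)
Your final construction is the paper's proof. You set $z(\ell)=y_\ell(\scrF)/(\rho\,y(\scrF))$ on relevant links outside $L_\sm(\scrF)$; you were right to abandon the $2/\rho$ first attempt. You get $z(L_e)\ge 1$ on $U$ from \eqref{constr:huge_edges_const_child} together with $L_e\cap L_\sm(\scrF)=\emptyset$. You give both up-link halves of each link the value $z(\ell)$, and you sample from a convex combination of integral up-link covers dominated by this vector. Your remark about passing to the dominant is more careful than the paper's wording. Your top-down trimming is also sound, and it is what actually justifies the paper's one-line appeal to shadow-completeness and ``minimal'' covers. An up-link whose apex is at least as high as that of $a$ meets $P_a$, if at all, in a top prefix of $P_a$. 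So each up-link leaves a single bottom suffix, every edge of $U$ ends up covered exactly once, and no link gains extra shadows.

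The step that fails is restricting each up-link to its maximal segments inside $U$. $U$ need not be contiguous along a vertical path. A link of $L_\sm(\scrF)$ whose endpoints lie in two side branches hanging off such a path covers a middle stretch of it, so an up-link half passing through that stretch meets $U$ in two separate segments. Restricting it creates two shadow variables, each of value $z(\ell)$. This breaks your accounting that ``each original $\ell$ contributes two shadow variables,'' and $\E[X_\scrF(\ell)]$ can then exceed $2y_\ell(\scrF)/(\rho\,y(\scrF))$. Simply drop this step. Without it, your trimmed covers may still touch edges outside $U$, each at most once. The paper's proof does not rule that out either, and the Structured-Sampling Algorithm only consults $H$ on edges not covered by $L_\sm(\scrF)$.
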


\paragraph{Bounding Event Sampling.}
For any remaining event $\scrF\in\events$, let $\AlgEvent(\scrF)$ denote the 
event that $\scrF$ is selected in Step~\ref{step:select_event} or 
Step~\ref{step:select_stars_plus_one_edge}.  
The following corollary to \Cref{lem:event_usage_bounds} bounds this probability.

\begin{corollary}\label{cor:event_sampling_bound}
For each $e\in E$ and 
$\scrF\in\events(T_e)\cup\bigcup_{e'\in E(e)}\events(T_e\cup\{e'\})$,
\begin{equation}\label{eq:algevent_bound}
    \Pr[\AlgEvent(\scrF)] \le (1+5\sqrt{\epsilon})\,y(\scrF).
\end{equation}
\end{corollary}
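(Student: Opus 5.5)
We prove the bound by expressing $\Pr[\AlgEvent(\scrF)]$ as a mixture over the pair $(f,S)$ that is removed from the queue at the start of the iteration in which $\scrF$ could be selected. We then control each mixture term with item~\ref{marginals_preserved} of \Cref{lem:event_usage_bounds}. Fix $e\in E$, and write $S_0$ for the link set that $\scrF$ assigns to $e$. Every event $\scrF\in\events(T_e)$ can only be selected in Step~\ref{step:select_event} of an iteration that removes a pair $(e,S)$ from $X$. By the consistency condition in the definition of $\Ext((e,S),T_e)$, this requires $S=S_0$, where $S_0=L_\sm(\scrF)\cap L_e$ (we may assume $e\in R_\sm(\scrF)$, since otherwise $y(\scrF)$ is irrelevant to the sampling in Step~\ref{step:select_event}). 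Hence
\[
\Pr[\AlgEvent(\scrF)] \;=\; \Pr[(e,S_0)\text{ is removed from }X]\cdot \frac{y(\scrF)}{y(e,S_0)}.
\]

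The key step is to identify the first factor with $\Pr[\AlgEvent(e,S_0)]$. We check that $(e,S_0)$ enters $X$ exactly when one of the three conditions in the definition of $\AlgEvent(e,S_0)$ fires, and that this happens at most once per run. There are three routes into $X$: Step~\ref{step:huge_edges_in_tree}(a), when $e\in R_\sm\setminus\{f\}$ for the event selected at $e$'s nearest correlated ancestor (condition~\ref{cond:alg_select_1}); Step~\ref{step:select_prob_trim_small} (condition~\ref{cond:alg_select_2}); and a conditional reset (condition~\ref{cond:alg_select_3}). Each edge receives exactly one queue entry, because every edge is either inside $R$ of its ancestor's sampled event, or is a child of $f$ outside $R$, or is a child of a huge edge of $R$. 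Consequently these routes are mutually exclusive. Applying \Cref{lem:event_usage_bounds}(\ref{marginals_preserved}) then gives a first factor of at most $(1+5\sqrt\epsilon)\,y(e,S_0)$, and the two factors of $y(e,S_0)$ cancel, leaving \eqref{eq:algevent_bound}. For the dummy edge $e_r$ the pair $(e_r,\emptyset)$ is removed with probability $1=y(e_r,\emptyset)$, so the bound is immediate.

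For $\scrF\in\events(T_e\cup\{e'\})$ with $e'\in E(e)\setminus E^*(e)$, the event is selected in Step~\ref{step:select_stars_plus_one_edge}. This happens only after some $\scrF_0\in\events(T_e)$ has been selected, with $\scrF\in\Ext(\scrF_0,T_e\cup\{e'\})$. The extension center is determined by $T_e$, and $T_e$ has at least one internal vertex or has $e$ as its root edge. Therefore $\scrF_0=\Restr(\scrF,T_e)$ is uniquely determined by $\scrF$, and
\[
\Pr[\AlgEvent(\scrF)] = \Pr[\AlgEvent(\scrF_0)]\cdot\frac{y(\scrF)}{y(\scrF_0)}\le(1+5\sqrt\epsilon)\,y(\scrF)
\]
by the first case. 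If $e'$ is instead a correlated child, then $T_e\cup\{e'\}=T_e$ and the first case applies directly.

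The main obstacle is the uniqueness claim in the second paragraph. We must show that the iteration selecting events on $T_e$ is precisely the iteration removing $(e,\cdot)$, that no other iteration can select an event whose edge set is $T_e$, and that $\AlgEvent(e,S_0)$ counts exactly these entries, with no double counting between resets and the ordinary entry. We also need that the conditional distributions $y(\scrF)/y(e,S)$ are well defined; this is guaranteed by \Cref{constr:extension_consistency} with $Q=T_e$, which is in $\subtrees$ by \Cref{obs:T_f_constant_leaves}. We would verify the uniqueness by induction on depth, following the top-down structure of the queue.
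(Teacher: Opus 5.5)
Your proof is correct and follows the paper's argument. The paper asserts $\Pr[\AlgEvent(\scrF)] = \Pr[\AlgEvent((f,S))]\cdot y(\scrF)/y((f,S))$, with $(f,S)$ the restriction of $\scrF$ to its root edge, and then invokes the first property of \Cref{lem:event_usage_bounds}; your two-step chain through $\scrF_0$ telescopes to exactly this identity. One imprecision: huge edges of $R$ never enter the queue, so the correct claim is that each edge receives at most one queue entry, which is all your identification with $\Pr[\AlgEvent(e,S_0)]$ needs.
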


For each $\scrF$ in \Cref{cor:event_sampling_bound}, the probability that 
$\scrF$ is selected by the algorithm is exactly
\[
  \Pr[\AlgEvent((f,S))]\cdot\frac{y(\scrF)}{y((f,S))},
\]
where $(f,S)$ is the restriction of $\scrF$ onto~$f$.
Applying Property~\ref{marginals_preserved} of 
\Cref{lem:event_usage_bounds} yields \Cref{cor:event_sampling_bound}.

\paragraph{Significant Events for Edges.}
The significant events for $(e,S)$ are those that ensure 
$\AlgEvent(e,S)$ occurs with probability one and include the following scenarios:
\begin{enumerate}
    \item An event $\scrF=(R,R_\sm,L_\sm)$ is selected in Step~\ref{step:select_event} 
    with $e\in R_\sm\setminus\{f\}$ and $S=L_\sm\cap L_e$.
    \label{cond:alg_select_1_again}
    \item Edge $e$ is selected (as $g$) in Step~\ref{step:select_stars_plus_one_edge} and the sampled event $\scrF' = (R', R'_\sm, L'_\sm)$ is such that $e\in R'_\sm$ and $L'_\sm \cap L_e = S$.
\end{enumerate}
We write $\AlgEvent'(e,S)$ for the event that a significant event occurs. Here we note that $\AlgEvent(e, S)$ occurs when $\AlgEvent'(e,S)$ occurs or when a conditional reset at $e$ occurs and selects $S$.
The following is an explicit characterization of the significant events for $(e, S)$.
\begin{definition}[Significant Events for an Edge]
Consider an event $(e, S)$, where $e \in E$ and $S \in \scov^+(e)$. We say that an event $\scrF \in \events$ is \emph{significant} for $(e, S)$ if:
\begin{itemize}
    \item If $S\in\scov(e)$, then $e\in R_\sm(\scrF)$ and 
          $L_\sm(\scrF)\cap L_e=S$; otherwise, $e\in R(\scrF)\setminus R_\sm(\scrF)$.
    \item Furthermore, either
          \begin{itemize}
              \item $\scrF\in\Ext((e',S'),T_{e'})$ 
                    for some ancestor $e'$ of~$e$ with $e\in R(\scrF)$, or
              \item $\scrF\in\Ext((e',S'),T_{e'}\cup\{e\})$, 
                    where $e$ is a child of $e'$ in $E(e')\setminus E^*(e')$,
          \end{itemize}
          and $S' \in \scov(e')$.
\end{itemize}
\end{definition}

Let $\signif(e,S)$ denote the set of events significant for $(e,S)$.
For example, in \Cref{fig:structured_sampling_alg}, 
the event~$\scrF$ is significant for $(e_4,\{\ell_1,\ell_2\})$, 
and $\scrF'$ is significant for $(e_3,\{\ell_3\})$.

\begin{observation}\label{obs:significant_events_relation_extension}
For every $e\in E$ and $S\in\scov^+(e)$,
\[
  y((e,S)) \ge \sum_{\scrF\in\signif(e,S)} y(\scrF).
\]
\end{observation}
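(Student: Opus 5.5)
The plan is to decompose $\signif(e,S)$ according to the edge $e'$ from which the significant event is grown, and then apply the extension consistency constraints \eqref{constr:extension_consistency} together with the coverage constraint \eqref{constr:cov_consistency} for the single edge $e$.

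\textbf{Reduction to a single larger subtree.} For every $\scrF\in\signif(e,S)$ we have $e\in R(\scrF)$, and $\scrF$ agrees with $(e,S)$ on $e$. In other words, $e\in R_\sm(\scrF)$ with $L_\sm(\scrF)\cap L_e=S$ when $S\in\scov(e)$, and $e\in R(\scrF)\setminus R_\sm(\scrF)$ when $S=\bot$. Hence $\Restr(\scrF,\{e\})=(e,S)$. The tempting route is to apply \eqref{constr:extension_consistency} to the event $(e,S)$ with $Q=R(\scrF)$, which would give $y(e,S)=\sum_{\scrF'\in\Ext((e,S),Q)}y(\scrF')\ge y(\scrF)$. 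That bounds each term separately but not the sum, so I instead group by subtree.

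\textbf{Grouping.} Different significant events have edge sets $T_{e'}$ or $T_{e'}\cup\{e\}$, where $e'$ ranges over ancestors of $e$ (including $e$ itself) with $e\in T_{e'}$, or $e'$ is the parent edge in the uncorrelated-child case. I would fix one such subtree $Q$ and bound the total mass of significant events whose edge set is exactly $Q$. Summing over $S'\in\scov(e')$, the sets $\Ext((e',S'),Q)$ partition events on $Q$ by their restriction to $e'$. The significant events among them with $\Restr(\cdot,\{e\})=(e,S)$ then have total mass equal to the $y$-mass of events on $Q$ restricting to $(e,S)$ and to a small set on $e'$.

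\textbf{The main obstacle.} Different $Q$'s overlap, so a naive sum over $Q$ could double count. I therefore expect the intended proof to need disjointness across $Q$. This comes from the structure of the algorithm: significant events for $e$ must arise from the unique iteration that fixes $F_e$. Concretely, I would check that for a fixed $e$ only one candidate $e'$ can yield $e\in R_\sm(\scrF)$ with $e\neq e'$. The reason is that the extension structure stops at the first small edge, and the ancestor $e'$ must be small while all edges strictly between $e'$ and $e$ are huge. So for distinct ancestors $e'_1,e'_2$, the corresponding events restrict to mutually incompatible events on the common path: one declares an intermediate edge small, the other huge.

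With this in hand, I would take $Q^*$ to be the union of the path from the topmost candidate $e'$ down to $e$ together with $e$'s relevant neighbours, which has few leaves. Applying \eqref{constr:extension_consistency} from $(e,S)$ to $Q^*$, and then restrictions from $Q^*$ down to each candidate $Q$, embeds all significant events injectively into disjoint families of extensions of $(e,S)$. Summing then gives $y(e,S)\ge\sum_{\scrF\in\signif(e,S)}y(\scrF)$.

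If the incompatibility argument turns out to be messy, the fallback is to note that since $T_{e'}$ contains only edges correlated with $e'$, at most one $e'$ is relevant per uncorrelated boundary. That would reduce the argument to a single $Q$, plus the case $e'=e$, which is handled directly by \eqref{constr:extension_consistency}.
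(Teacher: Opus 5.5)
The gap is in your second step; your first step is right and matches the paper's. There, you extend $(e,S)$ along the ancestor path $P$ from $e$ up to the highest ancestor pairwise correlated with $e$ (the parent edge if $e$ is uncorrelated), so that $y(e,S)=\sum_{\scrG\in\Ext((e,S),P)}y(\scrG)$. Each path event $\scrG$ is determined by the first small edge $e''$ above $e$. The incompatibility you describe is exactly why this is a single sum rather than one bound per candidate root.

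The second step fails because significant events are not events on $Q^*$ or on subtrees of it. A significant event rooted at $e''$ lies in $\Ext((e'',S''),T_{e''})$ or $\Ext((e'',S''),T_{e''}\cup\{e\})$. Its edge set is a subtree of $T_{e''}$ grown from the bottom endpoint of $e''$ until the first small edges (not $T_{e''}$ itself, as your grouping paragraph assumes). That edge set generally contains edges of $T_{e''}$ far off the path. So there is no ``restriction from $Q^*$ down to each candidate $Q$''. Moreover, the map sending a significant event to its restriction on the path is many-to-one, not injective. Disjointness of the path events therefore gives nothing by itself. You still need, for each path event $\scrG$ whose top edge $e''$ is small, that the significant events restricting to $\scrG$ have total mass at most $y(\scrG)$.

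That bound is the missing idea. It is a second application of \eqref{constr:extension_consistency}, now going upward from the path event: $y(\scrG)=\sum_{\scrF'\in\Ext(\scrG,\,T_{e''}\cup\{e\})}y(\scrF')$. You must also check that every significant event rooted at $e''$ with path restriction $\scrG$ lies in $\Ext(\scrG,T_{e''}\cup\{e\})$. This holds because $\scrG$ and $(e'',S'')$ both have the bottom endpoint of $e''$ as extension center relative to this tree, so the growth conditions coincide. Substituting this into the path decomposition gives a double sum ranging over a superset of $\signif(e,S)$, which is the paper's proof. Path events in which every edge up to the top of $P$ is huge simply contribute no significant events.

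Neither of your shortcuts can replace this step. A single tree containing all $T_{e''}$ for $e''\in P$ can have far more than $\beta+3$ leaves: $P$ can be long and each $T_{e''}$ may carry its own side branches, so that tree need not lie in $\subtrees$. The fallback is also false. Uniqueness of the root holds only per path configuration, and for a correlated $e$ every edge of $P$ can be the root of some significant event, not just one.
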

\begin{proof}
If $e$ is correlated, let $e'$ be the highest ancestor of~$e$
pairwise correlated with it; otherwise, let $e'$ be its parent edge.
Let $P$ be the path in~$T$ from $e'$ to~$e$.
By the extension consistency property,
\begin{equation}\label{eq:extenstion_paths}
   y((e,S)) = \sum_{\scrF\in\Ext((e,S),P)} y(\scrF).
\end{equation}
For each $\scrF\in\Ext((e,S),P)$, let $e''$ be the edge incident 
to the root of $R(\scrF)$.
Then
\[
  y(\scrF) = \sum_{\scrF'\in\Ext(\scrF, T_{e''}\cup\{e\})} y(\scrF').
\]
Substituting this into~\eqref{eq:extenstion_paths} yields the desired result, 
since the sum now ranges over a superset of $\signif(e,S)$, 
excluding precisely those $\scrF'$ where $e''=e'$ is assigned to be 
covered by many links.
\end{proof}

\paragraph{Significant Events for Links.}
We are now ready to develop the remaining tools needed to bound the cost of our
Structured Fractional Solution~$(x', y', z')$ in comparison to the original
solution~$(x, y, \{y_{\ell}\}_{\ell \in L})$.
Our next goal is to analyze each link~$\ell \in L$ individually and bound the
expected number of distinct shadows of~$\ell$ that appear in marked events during
a single execution of the Marking Algorithm.

For links that are correlated, this quantity can be computed directly for a given
execution by analyzing the events in which a shadow of~$\ell$ is marked.
For uncorrelated links, the bound is instead obtained indirectly using the pairing
distribution~$\phi$.

Recalling that~$z'(\ell)$ is defined as the maximum of this expectation and
$x(\ell)$, this analysis will yield \Cref{lem:esp_alm_sm_struc_frac}.
The remainder of this section formalizes these arguments.

Next, we analyze the events in the Marking Algorithm that can cause shadows of a
link~$\ell$ to be included.

\paragraph{Case 1: All leading edges of $\ell$ are correlated.}

Fix a link~$\ell \in L$ and first consider the case in which both leading
edges of~$\ell$ are correlated.  In this situation, there are exactly three
ways in which~$\ell$ can be sampled from the support of the Strong LP and
added to some set~$F_e$, thereby causing a shadow of~$\ell$ to appear in a
marked event:

\begin{enumerate}
    \item \textbf{Event-Based Inclusion.}  
    An event~$\scrF$ is sampled in 
    Step~\ref{step:select_event} such that 
    $\ell \in L_\sm(\scrF) \setminus S$.

    \item \textbf{Auxiliary Inclusion.}  
    An event~$\scrF$ is sampled in Step~\ref{step:select_event}, 
    and a shadow of~$\ell$ appears in the set~$H$ sampled from~$\calH_\scrF$.

    \item \textbf{Reset Inclusion.}  
    An event~$\scrF$ is sampled in Step~\ref{step:select_event}, and a conditional reset occurs at an edge $e$ considered as $h$ in Step~\ref{step:cond_reset_lower_child} in which $\ell$ is contained in the sampled set.
\end{enumerate}

Observe that, in the case when $\ell$ has only correlated leading edges, it can never appear in the sets considered during 
a conditional reset in Step~\ref{step:reset_uncorrelated_children}. Indeed, every link covering an uncorrelated edge necessarily has at least one 
uncorrelated leading edge. 
This follows by our choice of $\{E^*(v)\}_{v \in V}$ and from the fact that our input solution satisfies for every $v\in V$ and each 
$e \in E(v)\setminus E^*(v)$,
\[
    x(L_{e_v}) \cap x(L_e) = \emptyset.
\]

We will argue later that there is exactly one iteration of the while loop 
in which any of the three situations above can occur for a given link~$\ell$.  
However, within that single iteration multiple inclusions of~$\ell$ may take place, 
and so we must carefully bound the expected number of inclusions of each type.  
This will be sufficient, since the number of shadows ~$\ell$ 
is bounded by the number of distinct inclusions arising from these three 
mechanisms.


\paragraph{Case 2: $\ell$ has at least one uncorrelated leading edge.} Next we consider the case when $\ell$ has at least one uncorrelated leading edge.
As before, there are exactly three situations in which a shadow of~$\ell$ can appear in a marked event. In this case we say that an \emph{event-based inclusion} occurs if an event~$\scrF_g$ is sampled in Step~\ref{step:select_stars_plus_one_edge} such that $\ell \in L_\sm(\scrF_g) \setminus S$ when~$g$ is an uncorrelated leading edge of~$\ell$.
An \emph{auxiliary} inclusion is defined in the same manner as the previous case. We note, however, that $\ell$ must be correlated for a secondary shadow to appear. Finally, a \emph{reset inclusion} occurs whenever $\ell$ is sampled during a conditional reset, either in Step~\ref{step:cond_reset_lower_child} or in Step~\ref{step:reset_uncorrelated_children} when~$g$ is an uncorrelated leading edge of~$\ell$.

\paragraph{Formal Characterization.}
Having identified the various types of inclusion events, 
we now give a precise description of the events~$\scrF$ which, if sampled in 
Step~\ref{step:select_stars_plus_one_edge}, may result in an inclusion of~$\ell$ 
during that same iteration of the while loop.  
We call this collection the \emph{significant events} for~$\ell$ and denote it 
by~$\signif(\ell)$.  
We now characterize this set explicitly.
As a reminder, we also use $\signif(e,S)$ for the significant events associated with the edge–set pair $(e,S)$; the notation is unified but the arguments specify which object is under consideration.

\begin{lemma}[Significant Events for a Link]
\label{defn:sig_events_link}
Consider a link $\ell \in L$ with $\apex(\ell)=u$.
Then each event $\scrF$ in $\signif(\ell)$ satisfies one of the following:
\begin{itemize}
    \item $\scrF \in \signif(e_u, \bot)$, or
    \item $\ell$ has only correlated leading edges and
    $\scrF \in \Ext((e_u, S), T_{e_u})$ for some
    $S \in \scov(e_u)$, or
    \item $\ell$ has an uncorrelated leading edge and
    $\scrF \in \Ext((e_u, S), T_{e_u} \cup \{e\})$ for some
    $S \in \scov(e_u)$ and $e$ is an uncorrelated leading edge of $\ell$.
\end{itemize}
\end{lemma}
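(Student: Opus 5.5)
The plan is to trace, for a fixed link $\ell$ with $\apex(\ell)=u$, the unique iteration of the while loop in which $\ell$ can first enter some $F_e$. We then read off which sampled event is responsible. The key structural input is that the input solution satisfies $x(L_{e_v}\cap L_e)=0$ for every $e\in E(v)\setminus E^*(v)$. Together with property~\ref{non_correlation_resampling} of \Cref{lem:event_usage_bounds}, this means $\ell$ cannot appear in any covering set $S$ of a pair $(f,S)$ in $X$ with $f$ an ancestor of $e_u$ (strictly above the apex). Indeed, such an $f$ would be covered by $\ell$ only if $\ell$ were an up-link reaching above $u$, contradicting $\apex(\ell)=u$. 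So $\ell$ first becomes eligible exactly in the iteration that processes some pair $(f,S)$ whose sampled event $\scrF$ (from Step~\ref{step:select_event} or Step~\ref{step:select_stars_plus_one_edge}) involves the leading edges of $\ell$. I would first prove that this $f$ must be $e_u$, or an ancestor $e'$ of $e_u$ such that $e_u\in R(\scrF)\setminus R_\sm(\scrF)$.

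The case split is then as follows. First, suppose $e_u$ is declared huge by an event sampled when processing some ancestor edge $e'$, i.e.\ $e_u\in R(\scrF)\setminus R_\sm(\scrF)$ with $\scrF\in\Ext((e',S'),T_{e'})$, or with $\scrF\in\Ext((e',S'),T_{e'}\cup\{e_u\})$ if $e_u$ is an uncorrelated child. Then by the definition of significance for edges, $\scrF\in\signif(e_u,\bot)$. In this case the inclusions of $\ell$ arise in that same iteration: auxiliary inclusions via $H\sim\calH_\scrF$, or reset inclusions at children of $e_u$ via Step~\ref{step:cond_reset_lower_child}. This gives the first bullet. Otherwise, $e_u$ is small in the event that fixes it, so $(e_u,S)$ with $S\in\scov(e_u)$ is placed into $X$, and the iteration removing $(e_u,S)$ is the one where $\ell$ is sampled. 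We then split on the types of the leading edges of $\ell$.

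If both leading edges are correlated, they lie in $T_{e_u}$, because pairwise correlation with $e_u$ is exactly the membership criterion for children in $T_{e_u}$. So the only event deciding $\ell$ is $\scrF\in\Ext((e_u,S),T_{e_u})$ from Step~\ref{step:select_event}, together with its $H$ and the resets below huge edges of $R(\scrF)$. This gives the second bullet. If some leading edge $e$ is uncorrelated, then $e\notin T_{e_u}$, so $e$ is processed as $g$ in Step~\ref{step:process_children_f}. There the event $\scrF_g\in\Ext(\scrF,T_{e_u}\cup\{e\})$, which by transitivity of extensions lies in $\Ext((e_u,S),T_{e_u}\cup\{e\})$, determines whether $\ell$ is included, or else a reset at $e$ determines it. This gives the third bullet.

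The main obstacle is the uniqueness claim in the first step: showing that $\ell$ cannot also be introduced in a later iteration through a reset lower down, or through an event rooted at a descendant edge. For this I would argue that any later pair $(f',S')$ has $f'$ strictly below a leading edge of $\ell$. For such an $f'$, only links with apex at or below the top of $f'$ can be sampled fresh. Any occurrence of $\ell$ in $S'$ is inherited from the earlier event through the restriction structure of extensions, so it is not a new inclusion. Property~\ref{non_correlation_resampling} of \Cref{lem:event_usage_bounds} handles the reset branch of this argument.
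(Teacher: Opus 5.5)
Your overall plan matches the paper's proof. Before the iteration in which an event of the lemma's list is drawn, $\ell$ is irrelevant to everything sampled; all inclusions of $\ell$ happen in that iteration; and nothing new happens afterwards.

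\textbf{Main gap: the uniqueness step.} You correctly flag this last step as the main obstacle, but the argument you give for it does not work.
\begin{itemize}
\item Your premise that every later pair $(f',S')$ has $f'$ strictly below a leading edge of $\ell$ is false. The iteration that samples the relevant event pushes pairs $(e,\cdot)$ for the leading edges $e$ themselves: when $e\in R_\sm(\scrF)$, via Step~\ref{step:select_prob_trim_small}, or via a reset.
\item For such a pair $f'=e$, the top of $f'$ is $u=\apex(\ell)$. So your criterion ``only links with apex at or below the top of $f'$ can be sampled fresh'' does not exclude $\ell$.
\item Inheritance through the restriction structure of extensions only controls $L_\sm(\scrF')$. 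It says nothing about the auxiliary set $H\sim\calH_{\scrF'}$, which is drawn separately from the values $y_\ell(\scrF')$.
\end{itemize}

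\textbf{The missing ingredient}, which the paper uses, is an LP constraint. Every $\scrF'\in\Ext((f',S'),T_{f'})$ has $f'\in R_\sm(\scrF')$ with $L_\sm(\scrF')\cap L_{f'}=S'$. Hence for $\ell\in L_{f'}$, either $\ell\in L_\sm(\scrF')$, or $y_\ell(\scrF')=0$ by \eqref{eq:non_inclusion_links_on_huge_edges}. In both cases \cref{lem:covering_huge_edges} gives $X_{\scrF'}(\ell)=0$, so $\ell$ never enters $H$.

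Two further facts complete the step:
\begin{itemize}
\item Every edge reset while processing $(f',S')$ lies outside $T_{f'}$, so it is not pairwise correlated with $f'$. Property~\ref{non_correlation_resampling} of \cref{lem:event_usage_bounds} therefore excludes $\ell$ from the reset.
\item A link that does not cover $f'$ covers nothing below $f'$.
\end{itemize}
Together these give uniqueness for all later pairs, including those at the leading edges.

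\textbf{A smaller slip in your first case.} Suppose $e_u$ is an uncorrelated child of $e'$ and is huge in $\scrF_{e_u}\in\Ext(\scrF,T_{e'}\cup\{e_u\})$. Then no inclusion of $\ell$ occurs in that iteration:
\begin{itemize}
\item $\ell$ is not relevant to $\scrF_{e_u}$;
\item no $H$ is drawn for $\scrF_{e_u}$;
\item Step~\ref{step:cond_reset_lower_child} never touches the children of $e_u$, since $e_u\notin R(\scrF)$.
\end{itemize}
Instead, Step~\ref{step:reset_uncorrelated_children} pushes $(e_u,S_{e_u})$ with $S_{e_u}\in\scov(e_u)$. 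The inclusions happen when that pair is processed, through events of the second or third bullet. The lemma's conclusion is unaffected. However, your ``otherwise'' branch should read ``some pair $(e_u,S)$ with $S\in\scov(e_u)$ enters $X$'', so that it covers this reset as well as the small-assignment case.
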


\begin{proof}
Let $\calA$ denote the subset of events described in the lemma statement.
We show that $\signif(\ell) \subseteq \calA$.

Suppose that no event in $\calA$ has yet been sampled by the algorithm.
Since the algorithm proceeds in a top--down manner,~$\ell$ cannot be included in any set~$F_e$ for $e \in E$, because $\ell$ is never relevant for the events sampled in Steps~\ref{step:select_event} or~\ref{step:select_stars_plus_one_edge}, 
nor for the edges where conditional resets are performed.
Observe that there is exactly one iteration of the while loop in which an
event from~$\mathcal{A}$ is sampled.  
Note that if $\ell$ has two uncorrelated leading edges, then two events from~$\mathcal{A}$
may be sampled in that same iteration.)
Let $X'$ denote the set of pairs added to $X$ during this iteration.

We remark that any other significant event for $\ell$ can only be sampled in a later iteration in
which the algorithm processes a pair $(f,S)$ whose edge $f$ is a
\emph{descendant} of one of the edges appearing in a pair in $X'$ and $\ell \in L_f$.  
This is because if $f$ were not a descendant of these edges, then $\ell$
would not be relevant for the sampled events. Furthermore, if $\ell$ does not cover $f$ but covers an ancestor of $f$, then it covers no descendant of $f$.

Now take $(f, S)\in X'$ and suppose $\ell \in L_f$. 
Then, in the iteration of the while loop corresponding to $(f, S)$, 
$\ell$ can never be sampled in~$H$.
Indeed, by \Cref{lem:covering_huge_edges}, $X_\scrF(\ell)=0$ 
for all such events $\scrF$, since either $\ell \in L_\sm(\scrF)$ 
or $y_\ell(\scrF)=0$. Hence no auxiliary inclusion occurs.

By the same reasoning, $\ell$ cannot have a event-based inclusion
at this iteration, since by definition $\ell$ cannot be in~$S$ for this to occur.

Moreover, for any descendant edge~$e$ of~$f$ but not contained in~$T_f$, 
$e$ is pairwise uncorrelated with~$f$, and thus by Property~\ref{non_correlation_resampling}
of \Cref{lem:event_usage_bounds}, $\ell$ can never be sampled in a 
conditional reset in any later iteration of the while loop corresponding 
to a proper descendant of~$f$.

The same reasoning applies to all pairs $(f',S')$ added to $X$ whose edge
$f'$ is a descendant of~$f$: for none of them can an inclusion event involving
$\ell$ occur.

As a result, all inclusion events must occur when an element of $\calA$ is sampled, concluding the proof.
\end{proof}

For the remainder of the analysis we will, for simplicity, assume that 
$\signif(\ell) = \calA$.  
Strictly speaking, this need not hold: it may happen that 
$y_\ell(\scrF)=0$ while $R(\scrF)$ contains all edges covered by~$\ell$, 
in which case no inclusion event for~$\ell$ can occur.  
This technical exception, however, has no effect on the arguments that follow, 
and we therefore suppress it for clarity of presentation.

The following is the main result we use to prove \Cref{{lem:esp_alm_sm_struc_frac}}:
\begin{lemma}\label{lem:probability_including_edge} For each $\ell \in L$ we have,
\begin{equation} \label{eq:prob_link__sm_high_corr_link}
x'(\shadows(\ell)) \leq (1+9\sqrt{\epsilon})x(\ell)
\end{equation}
\end{lemma}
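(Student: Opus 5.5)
We bound $x'(\shadows(\ell))$ by the expected number of distinct shadows of $\ell$ that appear in marked events, and control that expectation through the three inclusion mechanisms: event-based, auxiliary, and reset. By \cref{defn:sig_events_link}, all inclusions of $\ell$ happen in the single iteration of the while loop in which an event of $\signif(\ell)$ is sampled. The number of distinct shadows in $\trim(\ell)$, after Step~\ref{step:shadow_assignment_flagged} where relevant, is at most the number of distinct inclusions in that iteration. So it suffices to show
\[
\E[\#\text{inclusions of } \ell] \le (1+9\sqrt{\epsilon})\,x(\ell).
\]
We condition on which event $\scrF\in\signif(\ell)$ is sampled. By \cref{cor:event_sampling_bound}, the probability that $\scrF$ is sampled is at most $(1+5\sqrt{\epsilon})\,y(\scrF)$. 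We then bound the expected contribution of each mechanism given $\scrF$.

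\textbf{Event-based inclusions.} Given $\scrF$, an event-based inclusion happens only if $\ell\in L_\sm(\scrF)$, or $\ell \in L_\sm(\scrF_g)$ for an uncorrelated leading edge $g$. By \eqref{constr:inclision_on_small}, $y_\ell(\scrF)=y(\scrF)$ for such events. Summing over the event types in \cref{defn:sig_events_link}, the extension-consistency constraint \eqref{constr:extension_consistency_with_ell} collapses $\sum_{\scrF} y_\ell(\scrF)$ back to $\sum_{S} y_\ell((e,S))$ on a leading edge $e$ of $\ell$. By \eqref{constr:marginal_pres_const_child}, that sum is at most $x(\ell)$ per leading edge. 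This gives a contribution of $(1+5\sqrt{\epsilon})x(\ell)$.

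Two leading edges need care. If $\ell$ is correlated, a single event covers both leading edges, so there is one inclusion, not two. If $\ell$ is uncorrelated and flagged, it is sampled once per leading edge. Step~\ref{step:shadow_assignment_flagged} merges the segments so that the total shadow mass is computed via the pairing distribution $\phi$. By marginal consistency of $\phi$, each side contributes $x(\ell)$ to a \emph{different} shadow (different leading edges). Charging per shadow copy, it therefore suffices to bound each side separately by $(1+O(\sqrt\epsilon))x(\ell)$, and we prove the lemma per side.

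\textbf{Auxiliary inclusions.} By \cref{lem:covering_huge_edges}, given $\scrF$ the expected number of shadows of $\ell$ in $H$ is at most $2y_\ell(\scrF)/(\rho\, y(\scrF))$. Multiplying by $\Pr[\scrF]\le (1+5\sqrt\epsilon)y(\scrF)$ and summing via \eqref{constr:extension_consistency_with_ell} gives at most $(1+5\sqrt\epsilon)\cdot 2x(\ell)/\rho \le \epsilon^2 x(\ell)$, since $\rho\ge 2/\epsilon^2$.

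\textbf{Reset inclusions.} This is the main obstacle. A conditional reset at an edge $h$ happens only when $\AlgEvent(h',\bot)$ occurs for $h$ or its parent $h'$. By \cref{lem:event_usage_bounds}(1) and the almost-always-small condition, this has probability at most $(1+5\sqrt\epsilon)\,y(h',\bot)\le 2\epsilon$. Given a reset, $\ell$ is in the resampled set with probability $\sum_{S\ni\ell}\Pr_{\calD_h}[S]\le \sum_{S\ni \ell} y(h,S)/\sqrt\epsilon \le x(\ell)/\sqrt\epsilon$, by \cref{lem:event_usage_bounds}(2) and \eqref{constr:marginal_pres_const_child}.

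The difficulty is the union over resets. \cref{lem:event_usage_bounds}(3) ensures $\ell$ can be resampled only at edges pairwise correlated with its leading edge, and only within the single significant iteration. So at most $O(1)$ edges are involved: the leading edge(s) of $\ell$ as children $g$, or children $h$ of huge edges of $R(\scrF)$ lying on $P_\ell$. The main work is to argue that only the topmost reset edge on each side of the apex matters, since deeper edges inherit $\ell$ through later events rather than new inclusions. I would then bound the reset contribution by $2\cdot 2\epsilon\cdot x(\ell)/\sqrt\epsilon = 4\sqrt\epsilon\, x(\ell)$.

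Adding the three bounds gives $(1+5\sqrt\epsilon)x(\ell)+\epsilon^2x(\ell)+4\sqrt\epsilon\,x(\ell)\le(1+9\sqrt\epsilon+\epsilon^2)x(\ell)$. This misses the stated bound by the $\epsilon^2$ term, so I would tighten one of the constants to close it. A second point to check is that resets and event-based inclusions on the same side are mutually exclusive: a reset at $g$ happens only when $g\notin R'_\sm$. This exclusivity is what justifies adding the contributions rather than double-counting them.
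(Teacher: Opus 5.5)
Your treatment of correlated links matches the paper's. You use the same three inclusion types, each charged through $\Pr[\AlgEvent(\scrF)]\le(1+5\sqrt{\epsilon})\,y(\scrF)$, \cref{lem:covering_huge_edges}, and the almost-always-small bound. The uncorrelated case, however, has a genuine gap. The lemma bounds the \emph{total} mass $x'(\shadows(\ell))$. For a flagged uncorrelated link, your reduction to the expected number of inclusions cannot work: $\ell$ is included independently once per leading edge, so that expectation is about $2x(\ell)$. Your repair says each side contributes $x(\ell)$ to a \emph{different} shadow, so it suffices to prove the bound per side. That does not prove the statement. If the two sides really fed different shadows, summing over $\shadows(\ell)$ would give roughly $2x(\ell)$. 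That is incompatible with the claimed $(1+9\sqrt{\epsilon})x(\ell)$. It would also break $c(z'(L))\le(1+o_\epsilon(1))c(x(L))$, since $z'(\ell)=\max(x'(\shadows(\ell)),x(\ell))$.

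The missing idea is that the two sides are charged only once. In Step~\ref{step:shadow_assignment_flagged}, the shadow placed in $F_g$ is replaced by the shadow whose path also contains the coupled path $P_\ell(g')$ drawn from $(\calD\mid\scrF'_{g,g'})$, so it crosses the apex. By \eqref{eq:defining_x_prime}, the $x'$-value of a shadow is read off at a \emph{single} edge it covers. This is well defined because the consistency constraints hold, and for the large stars they hold precisely because of the pairing $\phi$. Accordingly, the paper proceeds as follows:
\begin{itemize}
\item It fixes one leading edge $e^*$ and restricts to $\signif^*(\ell)$, i.e.\ events in $\signif(e_u,\bot)$ or in $\Ext((e_u,S),T_{e_u}\cup\{e^*\})$.
\item It sums $\Pr[\AlgEvent(\scrF')]\,\frac{y(\scrF'')}{y(\scrF')}\,\mathbbm{1}[\ell\in L_\sm(\scrF'')]$ over $\scrF''\in\Ext(\scrF',R(\scrF')\cup\Lead(\ell))$.
\item It uses marginal consistency of $\phi$ to see that the result does not depend on the choice of $e^*$.
\end{itemize}
This bounds the event-based inclusions of \emph{both} sides together by $(1+5\sqrt{\epsilon})x(\ell)$; only the reset terms are summed over both sides.

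Two smaller points. First, your constants do not close. The paper charges resets through $\restart(\scrF)$, which has at most one edge per side, and applies the almost-always-small bound once to each of $e_u$ and the leading edges. This gives $3\sqrt{\epsilon}\,x(\ell)$, and then $(1+5\sqrt{\epsilon})(1+\epsilon^2+3\sqrt{\epsilon})\le 1+9\sqrt{\epsilon}$ for small $\epsilon$. Second, mutual exclusivity of resets and event-based inclusions is not needed. Adding the three contributions is a valid upper bound regardless.
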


\begin{proof}

Let $\ell \in L$ with $\apex(\ell) = u$. 
Each significant event~$\scrF$ for~$\ell$ determines a set of 
edges on the path~$P_\ell$ where a conditional reset is performed; 
we denote this set by~$\restart(\scrF)$, which
can be constructed as follows.

If $\scrF \in \signif(e_u, \bot)$, then
for every leading edge~$e$ of~$\ell$ 
we add to~$\restart(\scrF)$ at most one edge on the path~$P$ 
between the apex of~$\ell$ and the endpoint of~$\ell$ lying below~$e$ 
in the tree, according to the following rule:
\begin{itemize}
    \item If $P$ contains an element of~$R_\sm(\scrF)$, add no edge.
    \item Otherwise, add the first edge of~$P$ that is not contained in~$R(\scrF)$, 
    if such an edge exists.
\end{itemize}

Otherwise we do the above procedure for each leading edge $e$ of $\ell$ contained in $R(\scrF)$. Observe that if $\ell$ is correlated, $R(\scrF)$ contains both leading edges, otherwise it contains exactly one. This modification ensures that we do not double count the cost of performing conditional resets in our charging schemes. The readers is encouraged to consult \Cref{fig:restart} for an illustration. 
 
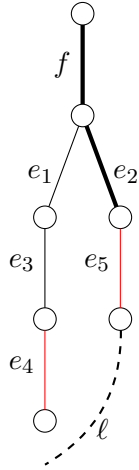
\begin{figure}[t]
    \centering
    \begin{minipage}{0.45\textwidth}
        \centering
        \begin{forest}
for tree={
circle, draw, minimum size=0.3cm, inner sep=0pt, outer sep=0pt,
l sep=30pt, s sep=20pt,
}
[{}, name=a, fill=white
    [ {}, name=r, fill=white, edge label={node[midway,left]{$f$}}, edge={ultra thick, draw=black}
        [ {}, fill=white, name=u, edge label={node[midway,left, yshift=-3pt]{$e_1$}}, edge={draw=black}
            [ {}, fill=white, name=v, edge={draw=black}, edge label={node[midway,left, yshift=+2pt]{$e_3$}}
                [ {}, fill=white, name=b, edge={draw=red}, edge label={node[midway,left, yshift=+2pt]{$e_4$}} ]
            ]
        ]
        [ {}, fill=white, name=k, edge label={node[midway,right, yshift=-3pt]{$e_2$}}, edge={ultra thick, draw=black} 
        [ {}, fill=white, name=m, edge={draw=red}, edge label={node[midway,left, yshift=+2pt]{$e_5$}} ]
        ]
    ]
]
\draw[dashed, thick, bend left] (m) to node[midway, below] {$\ell$} (-0.5, -6);
\end{forest}

    \end{minipage}
    \caption{Illustration of $\restart(\scrF)$, for a correlated link $\ell$: Note that the second endpoint of $\ell$ is below $e_4$ in the tree. We are considering an event $\scrF \in \signif(\ell)$, where the edges in $P_\ell$ contained in $R(\scrF)$ are $e_1, e_2$ and $e_3$. Among them only $e_2$ is assigned to be small. Therefore $\restart(\scrF)$ contains no edge below $e_2$. However $\restart(\scrF)$ contains $e_4$ since neither $e_1$ nor $e_2$ are in $R(\scrF)$.}
    \label{fig:restart}
    \end{figure}

In the following, we use $\Pr_{\calD_e}[S]$ to denote the probability of 
sampling the link set~$S$ from the distribution~$\calD_e$.

For simplicity, we define $y_\ell(\scrF) = 0$ whenever the link~$\ell$ 
is not relevant to the event~$\scrF$.
\newcommand{\corr}{\mathsf{corr}}
\newcommand{\uncorr}{\mathsf{uncorr}}
Let $\Lead_{\uncorr}(\ell)$ denote the subset of $\Lead(\ell)$ consisting of 
uncorrelated edges.

\paragraph{Correlated Links: }
Suppose that $\ell$ is a correlated link. 
We will establish the following bound on $x'(\shadows(\ell))$, 
where for each $\scrF \in \signif(\ell)$ we write $\scrF'$ to denote 
the restriction of~$\scrF$ to the subtree 
$R(\scrF) \setminus \Lead_{\uncorr}(\ell)$.

\begin{equation}\label{eq:correlated_link_bound}
\begin{aligned}
    x'(\shadows(\ell)) &\leq
    \sum_{\scrF \in \signif(\ell)} \Pr[\AlgEvent(\scrF)] \Big( \mathbbm{1}[\ell \in L_\sm(\scrF)] + \mathbbm{1}[\ell \notin L_\sm(\scrF'), \ell \in L_C]\frac{2 y_\ell(\scrF')}{\rho y(\scrF')}\\
    &+ \sum_{e \in \restart(\scrF)} \sum_{S \in \scov(e): \ell \in S} \Pr_{\calD_{e}}[S] 
    \Big)\\
\end{aligned}
\end{equation}

The first term on the right-hand side of \Cref{eq:correlated_link_bound} bounds the probability of an event-based inclusion.

The second term on the right-hand side of \Cref{eq:correlated_link_bound} bounds the probability of an auxiliary inclusion. To see this observe that $\ell$ can only be sampled in~$H$ if 
$\ell \notin L_\sm(\scrF')$.  
Moreover, by \Cref{lem:covering_huge_edges}, 
for any given~$\scrF'$, this probability is bounded by
\[
    \frac{2\, y_\ell(\scrF')}{\rho\, y(\scrF')}.
\]
The third term on the right-hand side accounts for the cost of reset inclusions.

To bound the first term on the right hand side of the inequality we apply \Cref{cor:event_sampling_bound} to obtain that $$\sum_{\scrF \in \signif(\ell)} \Pr[\AlgEvent(\scrF)] \mathbbm{1}[\ell \in L_\sm(\scrF)] \leq \sum_{\scrF \in \signif(\ell)} (1+5\sqrt{\epsilon}) y(\scrF).$$
By \Cref{obs:significant_events_relation_extension} and the marginal preserving constraints (\Cref{constr:marginal_pres_const_child}) this is at most $(1+5\sqrt{\epsilon})x(\ell)$. 

To bound the second term observe that in the worst case whenever $\ell \in L_\sm(\scrF)$ we have $\ell \notin L_\sm(\scrF')$. Using the same logic in our bound of the first term, extension consistency
(\Cref{constr:extension_consistency}) and our choice of $\rho$ we obtain that

 $$\sum_{\scrF \in \signif(\ell)} \Pr[\AlgEvent(\scrF)] \mathbbm{1}[\ell \notin L(\scrF'), \ell \in L_C]\frac{2 y_\ell(\scrF')}{\rho y(\scrF')} \leq (1+5 \sqrt{\epsilon}) \epsilon^2 x(\ell).$$

Since 
\( (x, y, \{y_\ell\}_{\ell \in L}) \) is an 
\( \epsilon \)-almost-always-small solution, by applying \Cref{obs:significant_events_relation_extension} to $(e, \bot)$ for any $e \in \Lead(\ell) \cup \{e_u\}$ we have:
\[
    \sum_{\scrF \in \signif(\ell) : e \notin R_\sm(\scrF)} y(\scrF) \leq \epsilon.
\]    

Applying this fact and Property \ref{marginals_resampling_not_too_big} of \Cref{lem:event_usage_bounds}, we conclude that

$$ \sum_{\scrF \in \signif(\ell)} \Pr[\AlgEvent(\scrF)] \sum_{e \in \restart(\scrF)} \sum_{S \in \scov(e): \ell \in S} \Pr_{\calD_{e}}[S] \leq 3 \sqrt{\epsilon} x(\ell)
    $$

Putting this all together we obtain: 
\begin{equation*}
    x'(\shadows(\ell)) \leq (1+5 \sqrt{\epsilon}) ((1+\epsilon^2)x(\ell) + 3 \sqrt{\epsilon} x(\ell) \leq (1+5 \sqrt{\epsilon} + 4\sqrt{\epsilon}) x(\ell) \leq (1+9 \sqrt{\epsilon}) x(\ell).
\end{equation*}

\paragraph{Uncorrelated Links: }
Now consider the case when $\ell$ is an uncorrelated link. Fix an edge $e^* \in \Lead(\ell)$ and let $\signif^*(\ell)$ be the subset of $\signif(\ell)$ such that $\scrF \in \signif(e_u, \bot)$ or $\scrF \in \Ext((e_u, S), T_{e_u} \cup \{e^*\})$ for some $S \in \scov(e_u)$.

Then we obtain the the following bound on $x'(\shadows(\ell))$, 

\begin{equation}\label{eq:uncorrelated_link_bound}
\begin{aligned}
    x'(\shadows(\ell)) &\leq
    \sum_{\scrF' \in \signif^*(\ell)} \sum_{\scrF'' \in \Ext(\scrF', R(\scrF')\cup \Lead(\ell))}\Pr[\AlgEvent(\scrF')] \cdot \frac{y(\scrF'')}{y(\scrF')} \mathbbm{1}[\ell \in L_\sm(\scrF'')] \\
    &+ \sum_{\scrF \in \signif(\ell)} \sum_{e \in \restart(\scrF)} \sum_{S \in \scov(e): \ell \in S}
    \Pr[\AlgEvent(\scrF)]
    \Pr_{\calD_{e}}[S] 
\end{aligned}
\end{equation}

We interpret the first term on the right-hand side of
\Cref{eq:uncorrelated_link_bound} as the total probability of an
event-based inclusion.

To bound this term, we invoke the pairing $\phi$ from
\Cref{subsubsec:feas_struc_frac}, which provides a consistent way to combine
the two one-extra-edge stars $E^*(e_u)\cup\{e'\}$ and $E^*(e_u)\cup\{e''\}$
into a joint event on $E^*(e_u)\cup\{e',e''\}$, where $\{e', e''\}=\Lead(\ell)$.
In particular, the definition of $y'$ via $\phi$ implies that, for any fixed
choice of $e^*\in \Lead(\ell)$, summing over extensions
$\scrF'' \in \Ext(\scrF', R(\scrF')\cup \Lead(\ell))$ yields the same total
mass assigned to events that include $\ell$.
Consequently, the value of the first term in \Cref{eq:uncorrelated_link_bound}
is independent of the choice of $e^*$.

Using this coupling view together with extension consistency
(\Cref{constr:extension_consistency}) and the event-sampling bound
\Cref{cor:event_sampling_bound}, we obtain that the first term is at most
$(1+5\sqrt{\epsilon})\,x(\ell)$.
The second term (reset inclusions) is bounded exactly as in the correlated-link
case, yielding the claimed bound.
Finally, we use our earlier remark that an uncorrelated link is never sampled
in $H$.

\end{proof}

\subsubsection{Proof of Resampling Distributions (\Cref{lem:event_usage_bounds})}\label{subsubsec:resampling_dist}
This section is dedicated to proving:
\eventusagebounds*

The procedure for constructing the resampling distributions are based on the observation that, other than the resampling events, $\Pr[\AlgEvent(e,S)]$ only depends on the ancestors of $e$ (and the LP solution).
We therefore construct the resampling distributions in increasing order of distances to the root.

Consider now the task of computing $\calD_e$ for a tree-edge $e = \{u,v\} \in T$ where $u$ is the parent of $v$. By the order in which we define the distributions, we have defined $\calD_f$ for all edges $f$ on the path $P$ from $u$ to the root ending in the dummy edge $e_r$. 
We now proceed to define the distribution $\calD_e$ and prove that it satisfies the properties of the lemma.

 We will first bound the probability of the event, and then define our distribution $\calD_e$ by adjusting the probabilities based upon how close they are to this bound.

\begin{claim} Suppose that $e=uv \in T$. If \Cref{lem:event_usage_bounds} holds for every edge $f$ on the path from $u$ to the root ending in $e_r$, then for each $S$ that either equals $\bot$ or is a small subset of links $S\subseteq L_e$ we have,   
    
    \[\Pr[\AlgEvent'(e, S)] \leq (1 + 5 \sqrt{\epsilon})\cdot  y(e,S).\]
        
\end{claim}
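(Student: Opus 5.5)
We decompose $\AlgEvent'(e,S)$ according to which significant event is sampled, and bound the probability of reaching the corresponding ancestor pair using the inductive hypothesis. By definition, $\AlgEvent'(e,S)$ occurs only if some $\scrF\in\signif(e,S)$ is selected. There are two cases. Either $\scrF$ is selected in Step~\ref{step:select_event} while processing an ancestor pair $(e',S')$ with $e\in R(\scrF)\subseteq T_{e'}$, or $\scrF$ is selected in Step~\ref{step:select_stars_plus_one_edge} with $g=e$, after an event $\scrF_0\in\Ext((e',S'),T_{e'})$ has been chosen, where $e'$ is the parent edge of $e$. The union bound then gives
\[
\Pr[\AlgEvent'(e,S)] \le \sum_{\scrF\in\signif(e,S)} \Pr[\text{$(e',S')$ is removed from $X$}]\cdot \frac{y(\scrF)}{y((e',S'))},
\]
where $(e',S')$ is the restriction of $\scrF$ to its root edge. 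This uses that the conditional probability of selecting $\scrF$ given that $(e',S')$ is processed is exactly $y(\scrF)/y((e',S'))$ by \eqref{eq:choosing_correl_event_sm_high}, or by the product of \eqref{eq:choosing_correl_event_sm_high} and \eqref{eq:choosing_ext_correl_event_sm_high}, which telescopes to the same ratio.

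Next we bound the probability that $(e',S')$ is removed from $X$. Every pair $(e',S')$ with $S'\in\scov(e')$ enters $X$ exactly when $\AlgEvent(e',S')$ occurs. Since $e'$ is a proper ancestor of $e$ lying on the path from $u$ to $e_r$, the hypothesis of the claim (Property~\ref{marginals_preserved} of \Cref{lem:event_usage_bounds} applied to $e'$) gives $\Pr[\AlgEvent(e',S')]\le(1+5\sqrt{\epsilon})\,y((e',S'))$. The dummy edge $e_r$ is handled trivially, since $(e_r,\emptyset)$ is inserted with probability $1=y(e_r,\emptyset)$. We must also check that each $\scrF\in\signif(e,S)$ is reached through exactly one root pair $(e',S')$, namely its restriction. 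This follows because $T_{e'}$ and the choice of $e'$ (the highest ancestor pairwise correlated with $e$, or the parent edge for an uncorrelated $e$) are determined by $e$ and by the shape of $R(\scrF)$. Multiplying the two bounds, the factor $y((e',S'))$ cancels, and we get
\[
\Pr[\AlgEvent'(e,S)] \le (1+5\sqrt{\epsilon})\sum_{\scrF\in\signif(e,S)} y(\scrF).
\]

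To finish, we apply \Cref{obs:significant_events_relation_extension}, which gives $\sum_{\scrF\in\signif(e,S)}y(\scrF)\le y(e,S)$. The case $S=\bot$ is identical: only Step~\ref{step:select_event} events with $e\in R\setminus R_\sm$ contribute, and these are exactly $\signif(e,\bot)$.

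The main obstacle is the bookkeeping in the first step. We must argue that the pair $(e',S')$ can be processed at most once per run, so that the union bound over significant events does not double count. For this, each edge enters $X$ at most once, because the algorithm assigns each edge exactly one covering set. We must also confirm that in the uncorrelated-child case the two-stage conditional probability really is $y(\scrF)/y((e',S'))$, which relies on the extension consistency constraint \eqref{constr:extension_consistency} to make each stage a valid distribution. A subtle point is that an ancestor may itself have been reached via a conditional reset. This is harmless here, because the hypothesis bounds $\Pr[\AlgEvent(e',S')]$, which already includes resets, and $\AlgEvent'$ for $e$ excludes resets at $e$ itself.
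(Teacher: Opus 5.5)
Your proposal is correct and follows the paper's argument. You decompose $\mathsf{ALG}'(e,S)$ over the significant events and write each one's probability as the probability that its root pair $(e',S')$ is processed times $y(\scrF)/y(e',S')$. You then apply the inductive bound on $\Pr[\mathsf{ALG}(e',S')]$ so that $y(e',S')$ cancels, and finish with $\sum_{\scrF\in\signif(e,S)}y(\scrF)\le y(e,S)$. Your extra bookkeeping (each edge enters $X$ at most once, the root pair is unique, and the two-stage sampling probability telescopes for uncorrelated children) makes explicit what the paper's one-line equality takes for granted.
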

\begin{proof}
By our previous discussion:
\begin{equation}\label{eq:prob_event_occurs_significant}
    \Pr[\AlgEvent'(e,S)] = \sum_{\scrF \in \signif(e, S)} \Pr[\AlgEvent(\scrF)]
\end{equation}

Consider each event $\scrF \in \signif(e, S)$ and suppose that $e'$ is the edge incident to the root of $R(\scrF)$ and that $S' = L_\sm(\scrF) \cap L_{e'}$. Then we know that 

\begin{equation}
   \Pr[\AlgEvent(\scrF)] = \Pr[\AlgEvent(e', S')] \frac{y(\scrF)}{y(e',S')}. 
\end{equation}

The right-hand side is at most $(1+ 5 \sqrt{\epsilon}) y(\scrF)$ by assumption.

Hence we obtain that
\begin{equation*}
  \Pr[\AlgEvent'(e,S)] \leq (1+ 5 \sqrt{\epsilon}) \sum_{\scrF \in \signif(e, S)} y(\scrF).  
\end{equation*}

By \Cref{obs:significant_events_relation_extension} this is at most $\leq (1+ 5 \sqrt{\epsilon}) y(e, S)$ as required.

\end{proof}

The probability $\AlgEvent(e,S)$ is the same as that of $\AlgEvent'(e,S)$ except that we have not counted the resampling that can occur in  Steps~\ref{step:reset_uncorrelated_children} and ~\ref{step:cond_reset_lower_child}.  We now  define $\calD_e$ so as to ensure the properties of the lemma: 
\begin{quote}
    Let $f$ be the first edge encountered by walking from $e$ towards the root (including $e_r$) such that $x(L_f \cap L_e) < \delta$, i.e., $e$ is not correlated with $f$. Further,  let $\calL(e)$ be the family containing the small link subsets $S\in \scov(e)$ such that  $L_f \cap S = \emptyset$ and $\Pr[\AlgEvent(e,S)] \leq (1+2\sqrt{\epsilon}) y(e,S)$.  The distribution $\calD_e$ is defined by sampling a subset $S \in \calL(e)$ proportional to $y(e, S)$.
\end{quote}
Note that this choice of $\calD_e$ ensures the third property of the lemma holds. We proceed to verify the remaining properties of the lemma.
\begin{claim}
    We can calculate $\calD_e$. Furthermore, $\Pr[\AlgEvent(e,S)]\leq (1+ 5\sqrt{\epsilon})\cdot y(e,S)$.
\end{claim}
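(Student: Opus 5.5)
The plan is to split $\Pr[\AlgEvent(e,S)]$ into the contribution of significant events and the contribution of a conditional reset at $e$:
\[
\Pr[\AlgEvent(e,S)] = \Pr[\AlgEvent'(e,S)] + \Pr[\text{reset at } e]\cdot \Pr_{\calD_e}[S].
\]
This uses that the decision to reset at $e$ is made before $\calD_e$ is sampled, so the two factors in the second term are independent. The three tasks are then to compute $\calD_e$, to bound $\Pr[\text{reset at } e]$, and to show that the family $\calL(e)$ carries non-negligible $y$-mass.

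\textbf{Computability.} By the top-down construction, $\Pr[\AlgEvent'(e,S)]$ and the reset probability at $e$ depend only on the ancestor distributions $\calD_f$ (already fixed) and on $y$. I would compute them by dynamic programming down the path from $e_r$ to $e$ using the identity
\[
\Pr[\AlgEvent(\scrF)] = \Pr[\AlgEvent(e',S')]\,\frac{y(\scrF)}{y(e',S')}.
\]
The recursion ranges over the polynomially many events in $\signif(\cdot,\cdot)$. Hence $\calL(e)$, and therefore $\calD_e$, can be computed in polynomial time.

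\textbf{Bounding the reset probability.} A reset at $e$ happens only in one of two situations. Either $e$ is an uncorrelated child that is huge in $\scrF_g$, which is an $\AlgEvent'(e,\bot)$-type outcome. Or the parent edge of $e$ lies in $R\setminus R_\sm$, which is contained in $\AlgEvent(e_p,\bot)$ for the parent edge $e_p$. Applying the preceding claim and the inductive hypothesis with $S=\bot$, each of these has probability at most $(1+5\sqrt\epsilon)\,y(\cdot,\bot)\le(1+5\sqrt\epsilon)\epsilon$, since every edge is $\epsilon$-almost-always-small. So $\Pr[\text{reset at } e]\le 3\epsilon$.

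\textbf{Mass of $\calL(e)$; this is the main obstacle.} Sets $S$ are excluded from $\calL(e)$ for two reasons.
\begin{itemize}
\item Sets with $S\cap L_f\neq\emptyset$ are excluded. Their total $y$-mass is at most $x(L_f\cap L_e)<\delta=\epsilon$, by the marginal-preserving constraints.
\item Sets with $\Pr[\AlgEvent'(e,S)]>(1+2\sqrt\epsilon)\,y(e,S)$ are excluded. Call their total $y$-mass $B$; it satisfies $2\sqrt\epsilon\, B<\sum_S\bigl(\Pr[\AlgEvent'(e,S)]-y(e,S)\bigr)^+$.
\end{itemize}
I would argue that this total excess is at most the probability that the relevant branch passed through a conditional reset at the nearest reset-affected ancestor. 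Outside such resets the algorithm samples exactly along the LP's extension-consistent conditionals, and the observation on significant events shows that this creates no excess. The excess is therefore $O(\epsilon)$, which gives $B=O(\sqrt\epsilon)$. Hence $y(\calL(e))\ge 1-\epsilon-y(e,\bot)-O(\sqrt\epsilon)\ge\tfrac12\ge\sqrt\epsilon$ for small $\epsilon$. This gives $\Pr_{\calD_e}[S]\le y(e,S)/y(\calL(e))\le y(e,S)/\sqrt\epsilon$, which is Property~\ref{marginals_resampling_not_too_big}. Making the excess accounting rigorous is where I expect the real work to be.

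\textbf{Conclusion.}
\begin{itemize}
\item For $S\notin\calL(e)$ we have $\Pr_{\calD_e}[S]=0$, so $\Pr[\AlgEvent(e,S)]=\Pr[\AlgEvent'(e,S)]\le(1+5\sqrt\epsilon)\,y(e,S)$ by the preceding claim.
\item For $S\in\calL(e)$ we have $\Pr[\AlgEvent(e,S)]\le(1+2\sqrt\epsilon)\,y(e,S)+3\epsilon\cdot 2y(e,S)\le(1+5\sqrt\epsilon)\,y(e,S)$, using $y(\calL(e))\ge\tfrac12$ so that $\Pr_{\calD_e}[S]\le 2y(e,S)$.
\item For $S=\bot$ there is no resampling, so the preceding claim applies directly.
\end{itemize}
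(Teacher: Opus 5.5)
There is a genuine gap, and it is the step you flagged: the mass bound for $\calL(e)$, on which your case $S\in\calL(e)$ depends. You claim $\sum_S(\Pr[\AlgEvent'(e,S)]-y(e,S))^+=O(\epsilon)$, but you give no proof, and the sketch does not support it. A root-to-$e$ branch can pass through many conditional resets. Each one injects up to about $2\epsilon$ of mass, spread in proportion to $y$ over $\calL(\cdot)$ rather than onto the sets where the LP mass is actually missing. The resulting discrepancy is then inherited by all descendants through the LP conditionals. Those conditionals need not contract it, because the LP may correlate the disjoint link sets on consecutive uncorrelated edges arbitrarily. So the excess is not confined to the nearest reset. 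The only control the induction provides is the multiplicative cap $(1+5\sqrt\epsilon)$, which allows an excess of order $\sqrt\epsilon$. With that, $2\sqrt\epsilon\,B<\text{excess}$ gives only $B=O(1)$. As a result, neither $y(\calL(e))\ge\frac12$ nor $\Pr_{\calD_e}[S]\le 2y(e,S)$ is established.

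The missing idea is that you need much less than $\frac12$, and the weaker bound follows from disjointness alone. The events $\AlgEvent'(e,S)$ for different $S$ are disjoint, so $\sum_S\Pr[\AlgEvent'(e,S)]\le 1$. In your notation this gives $(1+2\sqrt\epsilon)B<1$, hence $B\le 1-2\sqrt\epsilon+4\epsilon$, and therefore $y(\calL(e))\ge 1-\epsilon-\delta-B\ge 2\sqrt\epsilon-6\epsilon\ge\sqrt\epsilon$ for small $\epsilon$. This is the paper's argument (there phrased as a contradiction). It yields $\Pr_{\calD_e}[S]\le y(e,S)/\sqrt\epsilon$ directly, which is also Property 2 of the resampling lemma. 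Plugging this in with $p\le 2(1+5\sqrt\epsilon)\epsilon$ gives $\Pr[\AlgEvent(e,S)]\le(1+2\sqrt\epsilon)y(e,S)+(2\sqrt\epsilon+10\epsilon)y(e,S)\le(1+5\sqrt\epsilon)y(e,S)$; your cruder bound $p\le 3\epsilon$ also just suffices. The rest of your plan matches the paper: the decomposition into significant events plus resets, the reset-probability bound, computability, and the cases $S\notin\calL(e)$ and $S=\bot$.
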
 
\begin{proof}[Proof of claim]
That we can calculate $\calD_e$ follows from that we can calculate $\Pr[\AlgEvent'(e,S)]$.  This is true because to calculate $\Pr[\AlgEvent'(e,S)]$ we simply need to evaluate the right-hand-side of~\eqref{eq:prob_event_occurs_significant}, using that the probabilities of $\AlgEvent(e',S')$ are already defined for each ancestor edge $e'$ of $e$ and each $S' \in \scov^+(e)$. 


To bound $\Pr[\AlgEvent(e,S)]$ it suffices to consider only $S \in \calL(e)$. We have 
$\Pr[\AlgEvent(e,S)] = \Pr[\AlgEvent'(e,S)]  + p \cdot \Pr_{\calD_e}[S]$, where we recall that $\Pr_{\calD_e}[S]$ denotes the probability of $S$ in $\calD_e$ and  $p$ is the probability that $e$ is resampled in Step~\ref{step:reset_uncorrelated_children} or Step~\ref{step:cond_reset_lower_child}.
To prove the upper bound on $\Pr[\AlgEvent(e,S)]$ in this case we have $S \in \scov(e)$, note that $$p \leq \Pr[\AlgEvent(e, \bot)] + \Pr[\AlgEvent(e_u, \bot)] \leq (1+5\sqrt{\epsilon})\cdot 2 \epsilon,$$
where the last inequality comes from applying the fact that $(x, y, \{y_\ell\}_{\ell \in L})$ is $\epsilon$-almost-always-small, and our inductive hypothesis on events higher in the tree.
Thus,
    \begin{align*}
        \Pr[\AlgEvent(e,S)] = \Pr[\AlgEvent'(e,S)] + p \Pr_{\calD_e}[S] &\leq (1+2\sqrt{\epsilon}) y(e,S) + (1+5\sqrt{\epsilon}) 2 \epsilon y(e,S)/\sqrt{\epsilon} \\
        & = (1+4\sqrt{\epsilon} + 10 \epsilon) y(e,S) \\
        & \leq (1+5 \sqrt{\epsilon}) y(e,S)
    \end{align*}
    as required.
\end{proof}
Where the first inequality comes from the assumption that $S \in \calL(e)$, the above bound on $p$ and the subsequent claim( \Cref{claim:not_too_big}).

We now proceed to upper bound the probability of any small subset of links $S \subseteq L_e$ in $\calD_e$ (i.e., the second property of the lemma).

\begin{claim}
    For any small subset of links $S \subseteq L_e$, the probability of $S$ in $\calD_e$ is at most $y(e,S)/\sqrt{\epsilon}$.
    \label{claim:not_too_big}
\end{claim}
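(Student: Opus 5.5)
Since $\calD_e$ samples $S \in \calL(e)$ with probability $y(e,S)/\sum_{S' \in \calL(e)} y(e,S')$, it suffices to show that the normalizing mass satisfies $Y := \sum_{S' \in \calL(e)} y(e,S') \ge \sqrt{\epsilon}$. Then $\Pr_{\calD_e}[S] = y(e,S)/Y \le y(e,S)/\sqrt{\epsilon}$ for $S \in \calL(e)$, and the probability is $0$ otherwise. I will lower bound $Y$ by subtracting from the total mass $\sum_{S \in \scov(e)} y(e,S) = 1 - y(e,\bot) \ge 1-\epsilon$ the mass of the two kinds of excluded sets. This uses the coverage constraint and $\epsilon$-almost-always-smallness.

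\emph{Excluded because $S \cap L_f \neq \emptyset$.} By the choice of $f$, we have $x(L_f \cap L_e) < \delta = \epsilon$. By the marginal preserving constraints,
\[
\sum_{S \in \scov(e):\, S\cap L_f \ne \emptyset} y(e,S) \le \sum_{\ell \in L_e \cap L_f} \sum_{S \ni \ell} y(e,S) = x(L_e\cap L_f) < \epsilon .
\]
Here I use that for $S \in \scov(e)$ the event $(e,S)$ has $y_\ell = y$ on $\ell \in S$.

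\emph{Excluded because $\Pr[\AlgEvent(e,S)] > (1+2\sqrt{\epsilon})\,y(e,S)$.} Call this family $B$. I expect this to be the main obstacle, since the bound must not refer to $\calD_e$ (circularity). The plan is to argue about $\AlgEvent'$, which is determined by ancestors only. For $S \in B$, we have $\Pr[\AlgEvent(e,S)] = \Pr[\AlgEvent'(e,S)] + p\,\Pr_{\calD_e}[S]$, but $\calD_e$ is supported on $\calL(e)$, disjoint from $B$. Hence for $S \in B$, $\Pr[\AlgEvent'(e,S)] > (1+2\sqrt{\epsilon})\,y(e,S)$. Summing over $S \in B$ and using that the $\AlgEvent'(e,\cdot)$ are disjoint, with total probability at most $1$, gives
\[
(1+2\sqrt{\epsilon})\,y(B) < \sum_{S\in B}\Pr[\AlgEvent'(e,S)] \le 1 - \sum_{S\notin B}\Pr[\AlgEvent'(e,S)].
\]
I still need a lower bound on the mass outside $B$. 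By the claim just proved, every $S$ (including $\bot$) has $\Pr[\AlgEvent'(e,S)] \le (1+5\sqrt{\epsilon})\,y(e,S)$. So the cleaner route is to use the overall normalization instead. The total $\AlgEvent'$ mass equals $1 - p$ with $p \le (1+5\sqrt{\epsilon})\,2\epsilon$, and it is at most $(1+5\sqrt\epsilon)\,y(\cdot)$ termwise. This controls how much excess any set of $S$'s can carry. Concretely,
\[
1 \ge \sum_S \Pr[\AlgEvent'(e,S)] \ge (1+2\sqrt{\epsilon})\,y(B) + \sum_{S\notin B}\Pr[\AlgEvent'(e,S)],
\]
and I would pair this with the upper bound $\sum_{S} \Pr[\AlgEvent'(e,S)] \le \sum_S y(e,S) + $ (excess). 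My goal is to conclude $y(B) \le 1 - \Omega(\sqrt{\epsilon})$, or more simply $y(B) \le 1/(1+2\sqrt\epsilon) \le 1 - \sqrt{\epsilon}$ for small $\epsilon$. The latter already follows from the first inequality, since the left side is at most $1$.

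Combining the three estimates:
\[
Y \ge 1 - \epsilon - \epsilon - \frac{1}{1+2\sqrt{\epsilon}} = \frac{2\sqrt{\epsilon}}{1+2\sqrt{\epsilon}} - 2\epsilon \ge \sqrt{\epsilon}
\]
for $\epsilon$ sufficiently small. This proves the claim. The delicate point is making sure the $B$-bound uses $\AlgEvent'$ only, so that it is valid before $\calD_e$ is fixed. If the crude bound $y(B) \le 1/(1+2\sqrt\epsilon)$ turns out too lossy, I would sharpen it by noting that $\sum_{S\notin B}\Pr[\AlgEvent'(e,S)] \ge 0$ trivially, which is already what was used.
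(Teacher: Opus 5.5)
Your argument is correct and is essentially the paper's: the paper also reduces the claim to showing $\sum_{S\in\calL(e)} y(e,S)\ge\sqrt{\epsilon}$, using the same three ingredients ($y(e,\bot)\le\epsilon$, mass less than $\delta$ on sets meeting $L_f$, and disjointness of the events $\AlgEvent'(e,S)$, which forces $(1+2\sqrt{\epsilon})\,y(B)\le 1$). The paper phrases this as a contradiction rather than your direct subtraction, and you usefully make explicit why the defining inequality of $\calL(e)$ transfers to $\AlgEvent'$ on $B$. Two cosmetic fixes: $\sum_{\ell\in L_e\cap L_f}\sum_{S\ni\ell} y(e,S)$ is only $\le x(L_e\cap L_f)$, not equal, because $x(\ell)$ also receives $y_\ell$ of the event $(e,\bot)$; and the exploratory middle passage about the total $\AlgEvent'$ mass being $1-p$ is unnecessary and can be deleted.
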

\begin{proof}[Proof of Claim]
This follows by proving that $\sum_{S\in \calL(e)}y(e,S) \geq \sqrt{\epsilon}$. Indeed, suppose otherwise toward contradiction. Then  (as before the sums are over small subsets of links $S$)
    \begin{align*}
        \sum_{S \in \scov(e) : S\cap L_f = \emptyset,  S \not \in \calL(e)}& \Pr[\AlgEvent'(e,S) ] 
        \\
        & > (1+2\sqrt{\epsilon})\sum_{S \in \scov(e): S \cap L_f = \emptyset, S \not\in \calL(e)} y(e,S) \\
        & \geq (1+2\sqrt{\epsilon}) (1- \sqrt{\epsilon} - \epsilon - \delta) > 1\,,
    \end{align*}
    The inequalities use  the following facts:
    \begin{itemize}
        \item  The instance is $\epsilon$-almost-always-small so $y(e, \bot) \leq \epsilon$.
        \item By selection of $f$, we have $x(L_e \cap L_f) < \delta$ which implies that $ \sum_{S \subseteq L_e \setminus L_f} y(e, S) > 1- \delta$.
        \item The selection of the parameters implying that (for sufficiently small $\epsilon$) $\epsilon + \delta \leq \sqrt{\epsilon}/2$ and so  $(1+2 \sqrt{\epsilon}) (1-\sqrt{\epsilon} - \epsilon - \delta) > 1$.
    \end{itemize} 
    This is a contradiction as the events $\AlgEvent'(e,S)$ are disjoint and their probabilities can therefore not sum up to more than one.
\end{proof}

\subsubsection{Up-Link Solutions for Huge Edges}\label{subsubsec:up_link_soln_huge_edges} 

\begin{proof}[Proof of \Cref{lem:covering_huge_edges}]
Fix an event $\scrF = (R, R_{small}, L_{small}) \in \mathrm{events}$. If $y(\scrF) = 0$, then $y_\ell(\scrF)=0$ for all $\ell$. We can define $\mathcal{H}_{\scrF}$ to always output the empty set, satisfying the lemma trivially. Assume $y(\scrF) > 0$.

Let us first define the subproblem. 
Let $E'_{huge}$ be the set of edges that must be covered by the solution $H$:
\[
E'_{huge} = \{e \in R(\scrF) \setminus R_{small}(\scrF) \mid L_e \cap L_{small}(\scrF) = \emptyset\}.
\]
These are the edges designated as ``huge'' in the event $\scrF$ that are not already covered by the links in $L_{small}(\scrF)$.

We then construct a feasible fractional solution $z$ for the Cut LP corresponding to the requirement of covering $E'_{huge}$. We restrict the available links to $L' = L(R(\scrF)) \setminus L_{small}(\scrF)$, i.e., the links relevant to the event but not included in $L_{small}$.

For each link $\ell \in L$, we define $z(\ell)$ as follows:
\[
z(\ell) =
\begin{cases}
\frac{y_l(\scrF)}{\rho \cdot y(\scrF)} & \text{if } l \in L' \\
0 & \text{otherwise}
\end{cases}
\]
We verify that $z$ is a feasible fractional cover for $E'_{huge}$. Consider an edge $e \in E'_{huge}$. By definition, $e \in R(\scrF) \setminus R_{small}(\scrF)$. According to Constraint (13) of the Strong LP:
\[
\sum_{\ell \in L_e} y_\ell(\scrF) \geq \rho \cdot y(\scrF).
\]
Furthermore, by the definition of $E'_{huge}$, $L_e \cap L_{small}(\scrF) = \emptyset$. Thus, the summation is entirely over links in $L'$, i.e., $L_e \subseteq L'$.
The fractional coverage of $e$ by $z$ is:
\[
\sum_{\ell \in L_e} z(\ell) = \sum_{\ell \in L_e} \frac{y_\ell(\scrF)}{\rho \cdot y(\scrF)} = \frac{1}{\rho \cdot y(\scrF)} \sum_{\ell \in L_e} y_\ell(\scrF) \geq \frac{\rho \cdot y(\scrF)}{\rho \cdot y(\scrF)} = 1.
\]
Thus, $z$ is a feasible solution to the Cut LP for covering $E'_{huge}$.

Now, observe that the solution $z$ may utilize in-links or cross-links. We transform $z$ into a solution $z'$ that only uses up-links using the standard splitting procedure (Section 2.1). For every link $\ell$ in the support of $z$, we let $S(\ell)$ denote the set containing each up-link shadow $\ell_e$ for each leading edge $e$ of $\ell$. Note that $|S(\ell)| \leq 2$. We define the new fractional solution $z'$ on these shadows. For each $\ell' \in S(\ell)$, we set $z'(\ell') = z(\ell)$.

Since the union of edges covered by $S(\ell)$ is the same as the edges covered by $\ell$, $z'$ remains a feasible fractional cover for $E'_{huge}$. The total fractional value assigned to the shadows of $\ell$ is bounded:
\begin{equation}\label{eq:bound_doubling}
   \sum_{\ell' \in S(\ell)} z'(\ell') \leq 2 z(\ell). 
\end{equation}

The solution $z'$ is a feasible solution to the Cut LP on an instance consisting solely of up-links. As noted in Section 2.2, the Cut LP is integral for instances containing only up-links. Therefore, $z'$ can be expressed as a convex combination of integral solutions:
\[
z' = \sum_i \lambda_i H_i,
\]
where each $H_i$ is the characteristic vector of an integral up-link solution covering $E'_{huge}$, $\lambda_i \geq 0$, and $\sum_i \lambda_i = 1$.

We define the distribution $\mathcal{H}_{\scrF}$ by sampling the integral solution $H_i$ with probability $\lambda_i$.

We now argue that  $\mathcal{H}_{\scrF}$ satisfies the properties of the lemma. 
\begin{itemize}
    \item \textbf{Exact Coverage:} Since the input instance is assumed to be shadow-complete, we can ensure that the decomposition of $z'$ uses minimal integral solutions $H_i$. A minimal up-link cover ensures that every edge in $E'_{huge}$ is covered exactly once. This satisfies the exact coverage requirement of the lemma.

    \item \textbf{Bounding Expected Usage:} We analyze the expected value of $X_{\scrF}(\ell)$, the number of shadows of $\ell$ included in $H \sim \mathcal{H}_{\scrF}$. By the definition of the rounding procedure, this expectation is exactly the total fractional value of the shadows of $\ell$ in $z'$:
    \begin{equation}\label{eq:expectation_shadows}
        E[X_{\scrF}(\ell)] = \sum_{\ell' \in S(l)} z'(\ell').
    \end{equation}

    If $\ell \in L_{small}(\scrF)$ (or if $\ell$ is not relevant to $\scrF$), then $\ell \notin L'$, so $z(\ell)=0$. Consequently, $z'(\ell')=0$ for all $\ell' \in S(\ell)$, and $E[X_{\scrF}(\ell)] = 0$.

    If $\ell \in L'$ (relevant but not in $L_{small}(\scrF)$), we use the bound derived in \Cref{eq:bound_doubling}:
    \[
    E[X_{\scrF}(\ell)] \leq 2 z(l).
    \]
    Substituting the definition of $z(\ell)$ from \Cref{eq:expectation_shadows}:
    \[
    E[X_{\scrF}(\ell)] \leq 2 \cdot \frac{y_\ell(\scrF)}{\rho \cdot y(\scrF)}.
    \]
    This confirms the required bound and completes the proof.
\end{itemize}
    
\end{proof}

\bibliographystyle{alpha}
\bibliography{ref}
\appendix
\section{Formalizing \Cref{subsec:preprocessing_removing_correlations}}\label{apx:formal_preprocessing}
Here we formalize the pre-processing algorithm from \Cref{subsec:preprocessing_removing_correlations} and prove that the resulting solution is feasible for the Strong LP.

\paragraph{Correlation Correction Algorithm.}
\begin{enumerate}
    \item \textbf{Initialization.}  
    Set \(x' \gets x\), \(y' \gets y\), and \(y'_\ell \gets y_\ell\) for each \(\ell \in L\).

    \item \textbf{While a dangerous link remains in the support of \(x'\):}
    \begin{enumerate}
        \item \textbf{Select edge.}  
        Choose the lowest uncorrelated edge \(f = \{u,v\}\) whose parent edge \(e_u\)
        satisfies \(x'(L_f \cap L_{e_u}) > 0\).
        \label{step:selecting_lowest_edge}

        \item \textbf{Select link.}  
        Choose a link \(\ell = \{w, y\} \in L_f \cap L_{e_v}\) with \(x'(\ell) > 0\),
        where \(w\) is the descendant of \(v\) in \(\ell\).
        \label{step:link_selection}

        \item \textbf{Split the link.}  
        Create two shadows of \(\ell\):
        \begin{itemize}
            \item \(\ell_1\): obtained by moving the endpoint \(w\) to \(u\),
            \item \(\ell_2\): obtained by moving the endpoint \(y\) to \(u\), covering
            the remaining portion of \(P_\ell\).
        \end{itemize}
        \label{step:link_splits}

        \item \textbf{Compute up-link cover.}  
        Let \(H\) be the cheapest set of up-links covering each edge of \(P_{\ell_2}\)
        exactly once and no others.

        \item \textbf{Redistribute link weight.}  
        For each \(\ell' \in H \cup \{\ell_1\}\),
        \[
            x'(\ell') \gets x'(\ell') + x'(\ell).
        \]
        \label{step:increase_x_prime}

        \item \textbf{Update relevant event variables.}  
        For each event \(\scrF = (R, R_\sm, L_\sm)\) with \(y'_\ell(\scrF) > 0\):
        \begin{enumerate}
            \item Set \(\eta = y'(\scrF)\) and \(\eta_{\ell'} = y'_{\ell'}(\scrF)\)
            for all relevant links~\(\ell'\).
            \item Define \(\scrF' = (R', R'_\sm, L'_\sm)\) initially as a copy of~\(\scrF\).
            If \(\ell \in L_\sm\), remove \(\ell\) from \(L'_\sm\) and add each
            \(\ell' \in H \cup \{\ell_1\}\) whose path intersects \(R_\sm\).
            \label{step:select_replacement_event}
            \item Increase event weights:
            \[
                y'(\scrF') \gets y'(\scrF') + \eta,
                \quad
                y'_{\ell'}(\scrF') \gets y'_{\ell'}(\scrF') + \eta_{\ell'}
                \quad \forall \ell' \ne \ell.
            \]
            \label{step:transfer_val_to_replacement}
            \item For each \(\ell' \in H \cup \{\ell_1\}\) relevant to~\(\scrF\),
            increase \(y'_{\ell'}(\scrF')\) by~\(\eta_\ell\).
            \label{step:increase_ell_replacements}
            \item Decrease the original event’s values:
            \[
                y'(\scrF) \gets y'(\scrF) - \eta,
                \quad
                y'_{\ell'}(\scrF) \gets y'_{\ell'}(\scrF) - \eta_{\ell'} \text{ for all }\ell'.
            \]
            \label{step:decrease_weights_event}
        \end{enumerate}
        \label{step:alter_event_values_relevant_events}

        \item \textbf{Drop the original link.}  
        Set \(x'(\ell) \gets 0.\)
        \label{step:drop_x_prime_ell}
    \end{enumerate}

    \item \textbf{Output.}  
    Return \((x', y', \{y'_\ell\}_{\ell \in L})\).
\end{enumerate}

\paragraph{Proof of Feasibility}

First we see that $(x', y', \{y_\ell\}_{\ell \in L})$ is well-defined. Formally the event $\scrF'$ selected Step \ref{step:select_replacement_event} is always in $\events(R(\scrF))$. This follows because the coverage of $\ell$ is identical in multiplicity to that of $H  \cup \{\ell_1\}$:
To conclude the proof of \Cref{lem:removing_overlap_uncorrelated}
we argue that $(x', y', \{y'_\ell\}_{\ell \in L})$ satisfies each constraint of the Strong LP (i.e ~\Cref{constr:odd_cut_consraint}--\Cref{constr:extension_consistency_with_ell}). For clarity we will use $\scrF, \scrF'$ to denote events specifically chosen by the algorithm and $\scrE, \scrE'$ etc to denote general events.

\begin{fact}
    $x'$ satisfies the Odd Cut LP constraints (\Cref{constr:odd_cut_consraint}).
\end{fact}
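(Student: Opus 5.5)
The plan is to track the odd-cut left-hand side through one iteration of the Correlation Correction Algorithm and show it never decreases; feasibility of the final $x'$ then follows by induction on the number of iterations, starting from the feasible input $x$. The only change an iteration makes to the link vector is this: $x'(\ell)$ is moved to zero, and the same amount $\eta = x'(\ell)$ is added to every link in $H\cup\{\ell_1\}$. Here $\ell_1$ covers exactly $P_\ell \setminus P_{\ell_2}$. By construction, $H$ covers each edge of $P_{\ell_2}$ exactly once and covers nothing outside $P_{\ell_2}$. Hence, as a multiset of covered edges, $H\cup\{\ell_1\}$ covers exactly $P_\ell$, with every edge covered exactly once. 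This yields the key identity
\[
\sum_{\ell'\in H\cup\{\ell_1\}} \mathbb{I}[\ell'\in L_e] = \mathbb{I}[\ell\in L_e]\qquad\forall e\in E,
\]
so $x'(L_e)$ is unchanged for every tree edge $e$.

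Now fix $S\subseteq V$ with $|\delta_E(S)|$ odd. By the identity, the term $\sum_{e\in\delta_E(S)}x'(L_e)$ is unchanged by the iteration. It remains to check that $x'(\delta_L(S))$ does not decrease. The links $H\cup\{\ell_1\}$ form a chain of paths: their endpoints, listed along $P_\ell$, give a walk in $T$ from one endpoint of $\ell$ to the other whose consecutive segments are exactly these links. If $\ell\in\delta_L(S)$, the two ends of the walk lie on opposite sides of $S$, so some segment must itself cross $S$, i.e.\ some $\ell'\in H\cup\{\ell_1\}$ lies in $\delta_L(S)$. The loss $\eta$ from $\ell$ is therefore compensated by a gain of at least $\eta$. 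If $\ell\notin\delta_L(S)$, nothing is lost and the other values only increase. In both cases the left-hand side of \eqref{constr:odd_cut_consraint} does not decrease, and non-negativity is clearly preserved. This is the standard fact that replacing a link by a chain of shadow links joining its endpoints preserves odd-cut feasibility.

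The main obstacle is making the chain claim precise. Two points need care. First, we must confirm that the segments really do concatenate along $P_\ell$. Because $H$ covers $P_{\ell_2}$ exactly once and consists only of up-links, its links partition the path $P_{\ell_2}$ into consecutive subpaths, each with endpoints on that path. Second, $\ell_1$ and $\ell_2$ must share the split vertex $u$. I would first verify both points by ordering the links in $H$ along $P_{\ell_2}$ from $u$ downward. If some link in $H$ had an endpoint off the path, exact coverage would fail, and this gives the contradiction needed. This ordering argument then supplies the walk from one endpoint of $\ell$ to the other used above.
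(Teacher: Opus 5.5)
Your proof is correct and follows the paper's argument. Both proceed iteration by iteration, note that $x'(L_e)$ is unchanged for every tree edge (because $H\cup\{\ell_1\}$ covers $P_\ell$ exactly once), and observe that whenever $\ell\in\delta_L(S)$ some link of $H\cup\{\ell_1\}$ also crosses $S$, so $x'(\delta_L(S))$ cannot decrease. Your walk from one endpoint of $\ell$ through the split vertex $u$ to the other supplies the justification for this crossing step, which the paper asserts without proof.
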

\begin{proof}
    Select $S \subseteq V$ such that $|\delta_E(S)|$ is odd, and 
    consider the corresponding odd cut constraint:
    \begin{align*}
        x'(\delta_L(S)) + \sum_{e \in \delta_E(S)} x'(L_e) \ge |\delta_E(S)| + 1
    \end{align*}
    We argue that if the constraint is feasible before a given execution of the while loop it remains feasible afterward.

    Note that \(x'(L_e)\) does not change for any \(e \in \delta_E(S)\).  
    Moreover, if a link \(\ell \in \delta_L(S)\) is modified by the algorithm, 
    then at least one of its replacements—either \(\ell_1\) or one of the up-links in \(H\)—also crosses the cut \(\delta_L(S)\).  
    Hence, \(x'(\delta_L(S))\) cannot decrease, and the inequality continues to hold.

\end{proof}

\begin{fact}\label{fact:coverage_holds_after_corr_correct}
The solution $(x', y', \{y'_\ell\}_{\ell \in L})$ satisfies the Coverage Constraints (\Cref{constr:cov_consistency}).
\end{fact}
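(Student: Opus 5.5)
The plan is to argue by induction over the iterations of the while loop of the Correlation Correction Algorithm. We show that if $(x',y',\{y'_\ell\})$ satisfies the coverage constraints \eqref{constr:cov_consistency} before an iteration, it still satisfies them afterwards. The base case is immediate, since initially $y'=y$ and the input solution is feasible.

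Fix an iteration, with selected edge $f$, link $\ell$, shadows $\ell_1$ and the up-link cover $H$. Fix any tree edge $e$. Only events $\scrF$ with $y'_\ell(\scrF)>0$ are touched. Among the single-edge events $\events(e)$, these are exactly the $(e,S)$ with $S\in\scov^+(e)$ and $\ell$ relevant, i.e.\ $\ell\in L_e$. For each such $\scrF$ the algorithm moves the full mass $\eta=y'(\scrF)$ from $\scrF$ to a replacement $\scrF'$. Then $\sum_{\scrE\in\events(e)}y'(\scrE)$ is unchanged, provided $\scrF'$ is again an element of $\events(e)$. This requires $R(\scrF')=\{e\}$, which holds by construction since $R'=R$.

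The main step is verifying that $\scrF'$ is a legitimate event of $\events(e)$ whenever $\scrF$ is. There are two cases.

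\emph{Case $S=\bot$.} Here $L_\sm=\emptyset$, so $\scrF'=\scrF$ and nothing changes.

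\emph{Case $S\in\scov(e)$ with $\ell\in S$.} We need $S'=(S\setminus\{\ell\})\cup\{\ell'\in H\cup\{\ell_1\}: \ell'\text{ covers } e\}$ to lie in $\scov(e)$, meaning $1\le|S'|\le\rho$. The key fact is that $H\cup\{\ell_1\}$ covers each edge of $P_\ell$ exactly once and no edge outside $P_\ell$. This holds because $\ell_1$ covers the part of $P_\ell$ above $u$, and $H$ covers $P_{\ell_2}$ with multiplicity exactly one by its definition. Since $e\in P_\ell$, exactly one replacement link covers $e$, so $|S'|=|S|$.

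One subtlety is that the replacement link might already lie in $S$. Because shadows are duplicated and kept disjoint from $\shadows(\ell')$ for $\ell'\neq\ell$, and $\ell$ contributes at most one shadow per edge, this cannot cause a collision. Alternatively, if the instance keeps only one copy, $S'$ is still a set of size at most $\rho$ covering $e$, which suffices.

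I expect this well-definedness point, that $\scrF'\in\events(e)$ with the right cardinality, to be the only nontrivial part. Everything else is bookkeeping: the increase in step (c) and the decrease in step (e) of the update cancel in the sum over $\events(e)$. Distinct $\scrF$ may map to the same $\scrF'$, but this does not affect the total. Summing over all modified events then gives $\sum_{\scrE\in\events(e)}y'(\scrE)=1$ after the iteration, which completes the induction.
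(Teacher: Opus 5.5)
Your proof is correct and follows the same route as the paper. Both argue inductively that each event update moves the mass $\eta=y'(\scrF)$ from $\scrF$ to a replacement $\scrF'$ with $R(\scrF')=R(\scrF)$, so $\sum_{\scrE\in\events(e)}y'(\scrE)$ is unchanged. The well-definedness of $\scrF'$ that you verify in detail (exactly one link of $H\cup\{\ell_1\}$ covers each edge of $P_\ell$) is the same point the paper states briefly just before this fact.
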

\begin{proof}
    Let $e \in E$, and consider the coverage constraint corresponding to $e$:
        $$
        \sum_{\scrE \in \events(e)}y'(\scrE)=1.
        $$
    We argue that if the constraint is feasible before any given execution of Step \ref{step:alter_event_values_relevant_events}, then it remains feasible after. Consider an execution of Step \ref{step:alter_event_values_relevant_events} in which $\scrF$ is selected.

    If $\scrF \notin \events(e)$  all variables appearing in $e$'s coverage constraint remain unchanged. Therefore we assume that $\scrF \in \events(e)$. But then $\scrF'$ is also in $\events(e)$.
    Since $y'(\scrF)$ decreases by $\eta$ and $y'(\scrF')$ increases by $\eta$ the left hand side of the constraint remains unchanged.
\end{proof}
 
\begin{fact}\label{fact:marginal_holds_after_corr_correct}
The solution $(x', y', \{y'_\ell\}_{\ell \in L})$ satisfies the Marginal-Preserving Constraints (\Cref{constr:marginal_pres_const_child}).
\end{fact}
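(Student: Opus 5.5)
The plan is to argue inductively over the executions of the while loop of the Correlation Correction Algorithm, in the same style as \Cref{fact:coverage_holds_after_corr_correct}. Initially $(x',y',\{y'_\ell\})=(x,y,\{y_\ell\})$ satisfies \Cref{constr:marginal_pres_const_child}. I will show that if the constraint holds for every pair $(e,\ell')$ with $\ell' \in L_e$ before an iteration (which splits $\ell$ into $\ell_1$ and the up-links of $H$), then it still holds afterward.

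Fix an edge $e$ and a link $\ell'\in L_e$, and split into cases according to how $\ell'$ relates to the split link $\ell$.

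\emph{Case $\ell'\notin\{\ell,\ell_1\}\cup H$.} Here $x'(\ell')$ is unchanged. On the event side, each modified $\scrF\in\events(e)$ loses $\eta_{\ell'}$ from $y'_{\ell'}(\scrF)$ and its replacement $\scrF'\in\events(e)$ gains exactly $\eta_{\ell'}$ (Steps~\ref{step:transfer_val_to_replacement} and~\ref{step:decrease_weights_event}). So the right-hand side $\sum_{\scrF\in\events(e)}y'_{\ell'}(\scrF)$ is unchanged.

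\emph{Case $\ell'=\ell$.} After the iteration $x'(\ell)=0$. Every event with $y'_\ell(\scrF)>0$ is zeroed in $y'_\ell$, and no new mass is ever placed on $y'_\ell$ (Step~\ref{step:transfer_val_to_replacement} excludes $\ell'=\ell$). Hence both sides are $0$.

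\emph{Case $\ell'\in\{\ell_1\}\cup H$.} The left side increases by exactly the old value $x'(\ell)$. For the right side, I use the fact that the paths of $\ell_1$ and of the links of $H$ partition $P_\ell$, with each edge of $P_\ell$ covered exactly once; this is where the choice of $H$ as an exact cover matters. Consequently, for $e\in P_{\ell'}$ we have $\ell\in L_e$, and $\ell'$ is relevant to every $\scrF\in\events(e)$. Step~\ref{step:increase_ell_replacements} then increases $y'_{\ell'}(\scrF')$ by $\eta_\ell=y'_\ell(\scrF)$, summed over the modified events of $\events(e)$. By the induction hypothesis applied to $(e,\ell)$, this total is $\sum_{\scrF\in\events(e)}y'_\ell(\scrF)=x'(\ell)$. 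For $e\notin P_{\ell'}$, the link $\ell'$ does not lie in $L_e$, so there is no constraint to check.

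Two bookkeeping points need care. First, $\scrF'$ may coincide with an event already carrying weight. Second, several events can map to the same $\scrF'$. Both are handled because all updates are additive, so the sums telescope. I must also make sure the shift of $\eta_{\ell'}$ for the unaffected links is applied once per modified $\scrF$, and is not double counted if $\scrF'=\scrF$.

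The main obstacle I anticipate is the case $\ell'\in\{\ell_1\}\cup H$ when $\ell'$ already had positive value before the iteration, for instance a shadow created in an earlier split, or a link of $H$ that already carried mass. I would handle this by linearity: the pre-existing contributions satisfy the constraint by hypothesis, and the increments on both sides are equal by the argument above. This requires the increments of Step~\ref{step:increase_ell_replacements} to be added rather than overwritten, and I would verify that this is how the step reads.
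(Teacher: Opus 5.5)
Your proposal is correct and follows essentially the same route as the paper. Both argue by induction over iterations of the while loop, with the same three cases:
\begin{itemize}
\item $\ell^*=\ell$, where both sides become $0$;
\item $\ell^*$ the unique link of $H\cup\{\ell_1\}$ covering $e$, where the left side grows by the old $x'(\ell)$ and the right side by $\sum_{\scrF\in\events(e)}y'_\ell(\scrF)$, which are equal by the hypothesis for $(e,\ell)$;
\item all other links, where nothing changes.
\end{itemize}
In both, the exact cover of $P_\ell$ by $H\cup\{\ell_1\}$ is what makes the second case work. The paper organizes the split first by whether $e\in P_\ell$. It also leaves implicit the additive bookkeeping you spell out: the cancellation of the transferred $\eta_{\ell'}$, the case $\scrF'=\scrF$, and pre-existing mass on the replacement links.
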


\begin{proof}
    Let $e \in E$ and $\ell^* \in L_e$ and consider the marginal preserving constraint corresponding to $e$ and $\ell^*$:
    $$
        x'(\ell^*) = \sum_{\scrE \in \events(e)}y'_{\ell^*}(\scrE).
    $$
    We argue that if the constraint is satisfied at the beginning of an iteration of the while loop, it remains satisfied at its end.  
    Consider an iteration in which the link $\ell$ is selected in Step~\ref{step:link_selection}.
    
    There are two possible cases for $e$:  
    $e \notin P_\ell$ or $e \in P_{\ell}$. When $e \notin P_\ell$ no event in $\events(e)$ is ever selected in Step \ref{step:alter_event_values_relevant_events}, and therefore the right hand side of the constraint remains unchanged. The value $x'(\ell^*)$ remains unchanged because $x'$ only changes its value on links in $H \cup \{\ell_1, \ell\}$.

    We consider when $e \in P_\ell$. Let $\ell'$ be the unique link in $H \cup \{\ell_1\}$ covering $e$. There are three cases to consider:

    \begin{itemize}
        \item If $\ell^* = \ell$, then after the iteration of the while loop, 
    $x'(\ell^*)$ is set to zero by Step~\ref{step:drop_x_prime_ell}.  
    Note that each event $\scrE \in \events(e)$ with 
    $y'_{\ell^*}(\scrE) > 0$ at the beginning of the iteration 
    will be selected as $\scrF$ in some execution of 
    Step~\ref{step:alter_event_values_relevant_events}.  
    During that execution, $y'_{\ell^*}(\scrF)$ is set to zero in 
    Step~\ref{step:decrease_weights_event} and it is not modified elsewhere.  
    Moreover, $y'_{\ell^*}(\scrF')$ never increases in subsequent steps.  
    Therefore, upon completion of the iteration, both sides of the constraint 
    evaluate to zero.

        \item If $\ell^* = \ell'$, then $x'(\ell^*)$ increases by the value of $x'(\ell)$ at the beginning of the iteration, which we denote by $\zeta$, due to Step \ref{step:increase_x_prime}. We know at the beginning of the iteration the marginal preserving constraint for $\ell$ and $e$ holds, and hence:
        $$\zeta = \sum_{\scrE \in \events(e)}y'_\ell(\scrE).$$  We argue that the right hand side of $\ell^*$'s constraint increases by the right hand side of the above equality.
        To see this consider any execution of Step \ref{step:alter_event_values_relevant_events} where we select an event $\scrF \in \events(e)$. 
        Then if $\Delta(\scrF')$ is the change in the value of $y_{\ell^*}(\scrF')$ during this execution, we know that $\Delta(\scrF')=y_\ell'(\scrF)$ due to Steps \ref{step:transfer_val_to_replacement}, and {step:increase_ell_replacements}.

        \item If $\ell^* \in L_e \setminus \{\ell, \ell'\}$ then there is no change in the value of $x'(\ell^*)$. Furthermore, in any execution of Step \ref{step:alter_event_values_relevant_events} there is no change to the right hand side of the constraint due to Step \ref{step:increase_ell_replacements}.
    \end{itemize}

\end{proof}

We omit the proof of the following fact which can be easily derived:
\begin{fact} The solution $(x', y', \{y'_\ell\}_{\ell \in L})$ the Non-Negativity and Link Consistency Constraints (\Cref{constr:non_neg,eq:non_inclusion_links_on_huge_edges,constr:inclision_on_small,eq:non_inclusion_links_on_huge_edges,constr:huge_edges_const_child}).
\end{fact}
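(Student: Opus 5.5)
We follow the template of the preceding facts (\Cref{fact:coverage_holds_after_corr_correct} and \Cref{fact:marginal_holds_after_corr_correct}). We argue that if the Non-Negativity and Link Consistency constraints hold at the start of an iteration of the while loop, they still hold at its end. The argument goes execution by execution of Step~\ref{step:alter_event_values_relevant_events}. Fix such an execution with selected event $\scrF=(R,R_\sm,L_\sm)$ and its replacement $\scrF'=(R,R_\sm,L'_\sm)$. Only the variables of $\scrF$ and $\scrF'$ change, so only their constraints need checking.

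\textbf{Non-negativity.} For $\scrF$, Step~\ref{step:decrease_weights_event} subtracts exactly the current values $\eta$ and $\eta_{\ell'}$, so these variables become $0$. For $\scrF'$, all changes are additions of nonnegative quantities.

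\textbf{Constraints of $\scrF$.} All its variables are now $0$, so \Cref{constr:inclision_on_small,eq:non_inclusion_links_on_huge_edges,constr:huge_edges_const_child} hold trivially as $0=0$, $0=0$ and $0\ge 0$.

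\textbf{Constraints of $\scrF'$.} The key structural fact is that $H\cup\{\ell_1\}$ covers exactly the edges of $P_\ell$, each exactly once. Hence on every edge $e$ the multiplicity of coverage is unchanged when $\ell$ is replaced by the member of $H\cup\{\ell_1\}$ covering $e$. Since all constraints are linear, it suffices to show that the \emph{increments} added to $\scrF'$ satisfy each constraint; the previous values of $\scrF'$ satisfy them by the inductive hypothesis.

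\begin{itemize}
\item[(i)] \Cref{constr:inclision_on_small}. Take $\ell''\in L'_\sm$. If $\ell''\in L_\sm\setminus\{\ell\}$, its increment is $\eta_{\ell''}=\eta$ by feasibility of $\scrF$. If instead $\ell''\in H\cup\{\ell_1\}$ was added because $\ell\in L_\sm$, its increment is $\eta_\ell=\eta$.
\item[(ii)] \Cref{eq:non_inclusion_links_on_huge_edges}. Take $e\in R_\sm$ and $\ell''\in L_e\setminus L'_\sm$. If $\ell''\neq\ell$ and $\ell''\notin H\cup\{\ell_1\}$, then $\ell''\notin L_\sm$, so its increment $\eta_{\ell''}$ is $0$. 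If $\ell''$ is a replacement covering $e$ that was not added to $L'_\sm$, then $\ell\notin L_\sm$ while $\ell$ covers $e\in R_\sm$, so $\eta_\ell=0$. One must check in this case that $\ell''$ receives no additional mass, which holds since every increment it gets is either $\eta_{\ell''}=0$ or $\eta_\ell=0$. Finally, $\ell$ itself receives no increment.
\item[(iii)] \Cref{constr:huge_edges_const_child}. Take $e\in R\setminus R_\sm$. The increment to $\sum_{\ell''\in L_e}y'_{\ell''}(\scrF')$ equals $\sum_{\ell''\in L_e\setminus\{\ell\}}\eta_{\ell''}$. If $e\in P_\ell$, it additionally gets $\eta_\ell$ from the unique replacement covering $e$. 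In either case this is at least $\sum_{\ell''\in L_e}\eta_{\ell''}\ge\rho\,\eta$.
\end{itemize}

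\textbf{Main obstacle.} The delicate point is that a replacement $\ell'$ may already carry its own weight $\eta_{\ell'}$ in $\scrF$ while also receiving $\eta_\ell$. This could push $y'_{\ell'}(\scrF')$ above $y'(\scrF')$ if $\ell'\in L'_\sm$, breaking (i). We would resolve this using the shadow duplication convention, under which the shadows $\ell_1$ and $H$ are fresh copies with no prior mass in $\scrF$. If that convention is not available at this stage, we would instead argue that $\ell$ and an existing $\ell'$ cannot both lie in $L_\sm$ on a common edge without violating minimality of the cover. In that case the summed increment on a small edge is still $\eta$, and on a huge edge it only grows, which suffices for (iii).
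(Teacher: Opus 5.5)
The paper omits this proof ("can be easily derived"). Your execution-by-execution scheme (zero out $\scrF$, check that the increments credited to $\scrF'$ satisfy each homogeneous constraint, then add) is the natural argument, in the style of the coverage and marginal-preservation facts just before it, and your items (i)--(iii) are correct for those increments. The real difficulty is exactly the collision you flag, and only your first resolution works.

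The fallback via ``minimality of the cover'' is not available. The Strong LP imposes no irredundancy on $L_\sm$, and collisions arise even between links that are not redundant. For example, let $\ell=\{w,y\}$ and $\tilde\ell=\{\tilde w,y\}$ with $w,\tilde w$ below $v$, and let an event $\scrF$ on $\{e_u\}$ have $L_\sm(\scrF)=\{\ell,\tilde\ell\}$ and $y'(\scrF)=\eta>0$; both links may well be needed for their lower parts.
\begin{itemize}
\item Processing $\ell$ at $(e_u,f)$ moves this mass to an event with small set $\{\ell_1,\tilde\ell\}$, where $\ell_1=\{u,y\}$.
\item When $\tilde\ell$ is later processed at the same pair, $\tilde\ell_1=\{u,y\}=\ell_1$.
\item Steps~\ref{step:transfer_val_to_replacement} and~\ref{step:increase_ell_replacements} then credit $y'_{\ell_1}$ of the new event $(\{e_u\},\{e_u\},\{\ell_1\})$ with $\eta_{\ell_1}+\eta_{\tilde\ell}=2\eta$, but credit $y'$ with only $\eta$. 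This violates \eqref{constr:inclision_on_small}.
\end{itemize}
So if shadows are taken in place in the shadow-complete instance, as the appendix update $x'(\ell')\gets x'(\ell')+x'(\ell)$ suggests, the fact can actually fail, and no argument rescues it.

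You must therefore commit to treating $\ell_1$ and the links of $H$ as fresh parallel copies created in each iteration. This is harmless: the copies carry the same costs, and the odd-cut, coverage and marginal arguments are unaffected. It fits the paper's choice $w(\ell_1)=w(\ell)$ and its later duplication of shadow links. It is also what makes the partner event in the type analysis of \Cref{obs:event_types} unique. With fresh copies, every replacement has $\eta_{\ell'}=0$ in every event, and your (i)--(iii) go through verbatim.

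Two smaller repairs are needed.
\begin{itemize}
\item When $\ell\notin L_\sm(\scrF)$, the replacement event is $\scrF'=\scrF$ itself, so after an in-place execution the variables of $\scrF$ are not zero. Your argument is right once phrased with the bookkeeping copy $\scrF'_{\mathrm{new}}$ of \Cref{subsec:preprocessing_removing_correlations}, or by noting that the final values of $\scrF$ are then exactly the increments.
\item In (i) you should justify that no member of $L_\sm$ receives the extra $\eta_\ell$ in that case. Since $\eta_\ell>0$, \eqref{eq:non_inclusion_links_on_huge_edges} forces $P_\ell\cap R_\sm=\emptyset$, and replacements cover only edges of $P_\ell$. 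Similarly, when $\ell\in L_\sm(\scrF)$, a replacement added to $L'_\sm$ that was not in $L_\sm$ has $\eta_{\ell''}=0$ by the same constraint, so its increment is $\eta_\ell=\eta$ as you claim.
\end{itemize}
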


The following fact will aid us in proving the remaining constraints of the Strong LP hold in \Cref{fact:consistency_holds_link_replacement}.
\begin{fact}\label{obs:event_types}
Consider any iteration of the while loop in the correlation correction algorithm, where $\ell$ is selected in Step~\ref{step:link_selection}.  
Then the following enumerates all possibilities for each event $\scrE \in \events$,  
where $\Delta(\scrE)$ denotes the change in $y'(\scrE)$ between the beginning and end of the iteration.
\begin{enumerate}[label=(T\arabic*)]
    \item \label{cond:contains_ell}
    $\ell \in L(\scrE)$ and $P_{\ell} \cap R_\sm(\scrE) \neq \emptyset$.  
    In this case, $y'(\scrE) = 0$ upon completion of the iteration.
    
    \item \label{cond:no_overlap_P_ell}
    $P_{\ell} \cap R_\sm(\scrE) = \emptyset$.  
    Then $\Delta(\scrE) = 0$.

    \item \label{cond:overlap_P_ell_no_containment}
    $P_{\ell} \cap R_\sm(\scrE) \neq \emptyset$ and $\ell \notin L(\scrE)$.  
    Then either $\Delta(\scrE) = 0$, or there exists a unique 
    \ref{cond:contains_ell}-type partner event $\scrE'$ such that 
    $\Delta(\scrE) = y(\scrE')$.  
    Furthermore, $\scrE$ is obtained from $\scrE'$ by removing $\ell$ and 
    adding each link in $H \cup \{\ell_1\}$ that covers an edge of $R_\sm(\scrE)$.
\end{enumerate}
\end{fact}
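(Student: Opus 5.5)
We fix one iteration of the while loop of the Correlation Correction Algorithm, with $\ell$ chosen in Step~\ref{step:link_selection}. The plan is to classify every event $\scrE$ by two tests: whether $P_\ell$ meets $R_\sm(\scrE)$, and whether $\ell \in L(\scrE)$. The only places the iteration touches event variables are Steps~\ref{step:transfer_val_to_replacement}--\ref{step:decrease_weights_event}. These are executed exactly for the events $\scrF$ with $y'_\ell(\scrF)>0$ at the start of the iteration. So it suffices to track, for each such $\scrF$, which event is decreased (namely $\scrF$) and which is increased (namely $\scrF'$).

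\textbf{Type \ref{cond:contains_ell}.} Let $\ell \in L_\sm(\scrE)$ with $P_\ell\cap R_\sm(\scrE)\neq\emptyset$.
\begin{itemize}
\item By \Cref{constr:inclision_on_small}, $y'_\ell(\scrE)=y'(\scrE)$. Hence if $y'(\scrE)>0$, then $\scrE$ is selected as $\scrF$, and Step~\ref{step:decrease_weights_event} sets $y'(\scrE)$ to $0$.
\item Next we show $\scrE$ can never arise as some $\scrF'$. Every $\scrF'$ is built from $L_\sm(\scrF)$ by deleting $\ell$ and adding only links of $H\cup\{\ell_1\}$, so $\ell\notin L_\sm(\scrF')$.
\item The case $y'(\scrE)=0$ initially is the same: nothing increases it.
\end{itemize}

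\textbf{Type \ref{cond:no_overlap_P_ell}.} Let $P_\ell\cap R_\sm(\scrE)=\emptyset$.
\begin{itemize}
\item If $\ell$ is relevant to $\scrE$ with $y'_\ell(\scrE)>0$, then $\scrE$ is selected.
\item In that case $\ell\notin L_\sm(\scrE)$, by \Cref{eq:non_inclusion_links_on_huge_edges} together with the fact that $\ell$ covers no small edge of $\scrE$. So Step~\ref{step:select_replacement_event} produces $\scrF'=\scrE$, and the decrease and increase of $y'$ cancel, giving $\Delta(\scrE)=0$.
\item Otherwise $\scrE$ is not selected. We must also show it is not the $\scrF'$ of a different selected $\scrF$. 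If $\scrF\neq\scrF'$, then $\ell\in L_\sm(\scrF)$ and $\ell$ covers an edge of $R_\sm(\scrF)=R_\sm(\scrF')$. Since $\ell_1$ and $H$ together cover exactly $P_\ell$, some added link covers that edge, so $P_\ell\cap R_\sm(\scrF')\ne\emptyset$. This contradicts the type.
\end{itemize}

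\textbf{Type \ref{cond:overlap_P_ell_no_containment}.} Let $P_\ell\cap R_\sm(\scrE)\neq\emptyset$ and $\ell\notin L(\scrE)$.
\begin{itemize}
\item By \Cref{eq:non_inclusion_links_on_huge_edges}, $y'_\ell(\scrE)=0$, so $\scrE$ is never selected. Its value changes only when it equals $\scrF'$ for some selected $\scrF\neq \scrF'$, and such an $\scrF$ is necessarily of type \ref{cond:contains_ell}.
\item For uniqueness of this partner, we invert the replacement map.
  \begin{itemize}
  \item Coverage by $\ell$ is identical in multiplicity to coverage by $H\cup\{\ell_1\}$: each edge of $P_\ell$ is covered exactly once by them, by the choice of $H$.
  \item Hence $L_\sm(\scrF)$ is recovered as $L_\sm(\scrE)$ with the links of $H\cup\{\ell_1\}$ that meet $R_\sm(\scrE)$ removed and $\ell$ added. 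Here we use that $\scrF$ could not itself have contained those links on the edges covered by $\ell$, since each edge would then be double counted against the exact-cover structure. This is the delicate point.
  \item $R$ and $R_\sm$ are unchanged by the replacement.
  \end{itemize}
  So the partner $\scrE'$ is unique, and then $\Delta(\scrE)=y(\scrE')$, where $y(\scrE')$ is read as its value at the start of the iteration.
\end{itemize}

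\textbf{Main obstacle.} The difficult step is the injectivity just described. The argument must exclude two distinct type-\ref{cond:contains_ell} events mapping to the same $\scrE$, which could happen if $L_\sm$ already contained some link of $H\cup\{\ell_1\}$. I would handle this with the paper's convention that shadow copies are kept distinct (duplicated), so the newly created links appear in no existing event. If that is not available, I would instead weaken the statement to $\Delta(\scrE)$ equalling the sum over all partners; that weaker form is all that \Cref{fact:consistency_holds_link_replacement} needs.
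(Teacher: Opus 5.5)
Your proposal is correct and follows the paper's route: classify events by type and track which events are decremented as $\scrF$ and which are incremented as $\scrF'$. You are more careful than the paper, since you rule out that a type (T1) event is ever some $\scrF'$ and use \eqref{eq:non_inclusion_links_on_huge_edges} to show that type (T3) events are never selected. The paper just asserts uniqueness of the (T1) partner, and your ``double counting against the exact-cover structure'' sentence does not justify it, because nothing in the constraints stops a positively weighted event from containing $\ell$ together with a link of $H\cup\{\ell_1\}$ already present in $L$; it is your fresh-copy convention, which the paper's ``identical in multiplicity'' claim tacitly needs too, that carries this step.
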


\begin{proof}
If $\scrE$ satisfies Condition~\ref{cond:contains_ell}, then $y'(\scrE)$ becomes zero after the iteration of the while loop.  
Indeed, there exists an execution of Step~\ref{step:alter_event_values_relevant_events} in which $\scrF = \scrE$, 
and in that execution the event $\scrF'$ is distinct from~$\scrE$, causing $y'(\scrE)$ to be set to zero in 
Step~\ref{step:decrease_weights_event}.

If $\scrE$ satisfies Condition~\ref{cond:no_overlap_P_ell}, then $\Delta(\scrE) = 0$.  
This follows because either:
\begin{itemize}
    \item $P_\ell \cap R(\scrE) = \emptyset$, in which case $\scrE$ is never selected as $\scrF$ or $\scrF'$ in any execution of 
    Step~\ref{step:alter_event_values_relevant_events}, or
    \item $P_\ell \cap R(\scrE)$ is a non-empty subset of $R(\scrE) \setminus R_\sm(\scrE)$, in which case 
    whenever $\scrE \in \{\scrF, \scrF'\}$ we have $\scrE = \scrF = \scrF'$, and thus no change occurs.
\end{itemize}

Finally, suppose $\scrE$ satisfies Condition~\ref{cond:overlap_P_ell_no_containment}.  
Then one of two cases holds:
\begin{itemize}
    \item There exists a unique Condition~\ref{cond:contains_ell}-type event $\scrE'$ such that 
    $\scrE$ is obtained from $\scrE'$ by removing $\ell$ and adding each link in 
    $H \cup \{\ell_1\}$ that covers an edge of $R_\sm(\scrE)$.  
    In the corresponding iteration where $\scrF = \scrE'$, we have $\scrF' = \scrE$, 
    and hence $\Delta(\scrE) = y'(\scrE')$.
    \item Otherwise, $\scrE$ is never considered as $\scrF$ or $\scrF'$ during the algorithm, and therefore $\Delta(\scrE) = 0$.
\end{itemize}
\end{proof}

\begin{fact}\label{fact:consistency_holds_link_replacement} The solution $(x', y', \{y'_\ell\}_{\ell \in L})$ satisfies the Consistency Constraints Among Events Constraints (\Cref{constr:extension_consistency,constr:extension_consistency_with_ell}).
\end{fact}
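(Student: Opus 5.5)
We prove \Cref{fact:consistency_holds_link_replacement} by induction over the iterations of the while loop. Assuming \Cref{constr:extension_consistency,constr:extension_consistency_with_ell} hold at the start of an iteration in which $\ell$ is selected, we show they still hold at its end. Fix $\scrE \in \events$ and $Q \supseteq R(\scrE)$. Since both sides are linear, it suffices to show that the change $\Delta(\scrE)$ equals $\sum_{\scrE^+ \in \Ext(\scrE,Q)} \Delta(\scrE^+)$, and analogously for the $y'_{\ell'}$ variables. We classify $\scrE$ using \Cref{obs:event_types} and handle each of the three types in turn.

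The key structural observation concerns the replacement map $\scrF \mapsto \scrF'$, which removes $\ell$ and inserts the members of $H \cup \{\ell_1\}$ whose paths meet the small edges. We claim this map commutes with extension and restriction. Concretely, if $\scrE^+ \in \Ext(\scrE, Q)$, then $(\scrE^+)' \in \Ext(\scrE', Q)$. Conversely, every extension of $\scrE'$ that contains links of $H\cup\{\ell_1\}$ in the ``$\ell$-pattern'' arises as the image of a unique extension of $\scrE$. The claim should hold because the links in $H \cup \{\ell_1\}$ cover each edge of $P_\ell$ exactly once and cover nothing outside $P_\ell$. Hence, for every edge, the multiplicity of small-covering links is unchanged, and so is the set of small/huge edges. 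The extension-structure condition relative to the center $c$ depends only on $R'$ and $R'_\sm$, so it is preserved. Consistency with $\scrE$ on $R_\sm$ is preserved edge by edge, because both events lose $\ell$ and gain the same shadows restricted to the same edges.

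With this bijection in hand, the three types are checked as follows.
\begin{itemize}
\item Type \ref{cond:contains_ell}: we have $\Delta(\scrE) = -y'(\scrE)$. Every extension of $\scrE$ contains $\ell$ on some small edge of $P_\ell$, so it is also of type \ref{cond:contains_ell} and is zeroed. The old consistency constraint then gives the required equality.
\item Type \ref{cond:no_overlap_P_ell} with $\ell$ irrelevant to $\scrE$, or with $P_\ell$ meeting only huge edges: $\Delta(\scrE)=0$. The extensions, however, may have small edges on $P_\ell$. Those containing $\ell$ lose mass, and those obtained by the replacement gain exactly the same mass, by the pairing in \Cref{obs:event_types}. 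Both the losing and gaining extensions of a given partner restrict to $\scrE$, since they differ only on $P_\ell \setminus R_\sm(\scrE)$. Summing over $\Ext(\scrE,Q)$ therefore gives net change $0$.
\item Type \ref{cond:overlap_P_ell_no_containment}: if $\Delta(\scrE) = y'(\scrE_0)$ for its partner $\scrE_0$, the bijection sends $\Ext(\scrE_0,Q)$ onto the relevant subset of $\Ext(\scrE,Q)$. The gain on the extensions then equals $\sum_{\Ext(\scrE_0,Q)} y' = y'(\scrE_0)$ by the inductive hypothesis.
\end{itemize}

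For \Cref{constr:extension_consistency_with_ell} we run the same argument with $y'_{\ell'}$ in place of $y'$. For $\ell' \neq \ell$ not in $H\cup\{\ell_1\}$ the values are transferred verbatim. For $\ell'\in H\cup\{\ell_1\}$, each gains $\eta_\ell$ whenever it is relevant. Relevance is monotone under extension (larger $R$), but a shadow may become relevant only in the extension. We handle this by using the old constraint for $y'_\ell$, noting that $\ell \in L(R)$ whenever the shadow is.

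The main obstacle is making the claim that the replacement map commutes with extension fully rigorous. One subtlety is that the relevance of $\ell_1$ versus shadows in $H$ differs between $\scrE$ and its extensions. A second subtlety is that the inserted set depends on $R_\sm$, which grows under extension. I would first try to define the replacement uniformly as ``insert those elements of $H \cup\{\ell_1\}$ covering an edge of $R_\sm$,'' and verify that the restriction of such a set to a smaller $R_\sm$ is exactly the insertion set for that smaller $R_\sm$. This holds because each edge of $P_\ell$ is covered by a single replacement link.
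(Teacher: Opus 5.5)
Your proposal is correct and takes essentially the same route as the paper: induction over the while-loop iterations and a case split on the types of \Cref{obs:event_types}. Type \ref{cond:contains_ell} events are zeroed on both sides, the losing and gaining extensions of a type \ref{cond:no_overlap_P_ell} event both lie in $\Ext(\scrE,Q)$, and the gaining extensions of a type \ref{cond:overlap_P_ell_no_containment} event are matched with $\Ext(\scrG,Q)$ for its partner $\scrG$. Your claim that the replacement map commutes with extension and restriction is exactly what the paper uses implicitly when it pairs each event of $\Ext(\scrE,Q)$ with its partner in $\Ext(\scrG,Q)$, and your use of the old constraint for $y'_\ell$ fills in the $y'_{\ell'}$ case that the paper dismisses as ``analogous.''
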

\begin{proof}
    Consider an event $\scrE \in \events$ and some $Q \in \subtrees$ for which $R(\scrE) \subseteq Q$, and its corresponding \Cref{constr:extension_consistency}:
    $$
        y(\scrE) = \sum_{\scrE' \in \Ext(\scrE, Q)} y(\scrE').
    $$
    Using the same approach as in \Cref{fact:marginal_holds_after_corr_correct} we argue that if the constraint is feasible before any given execution of the while loop it remains feasible after.

    We consider the following scenarios:
    \begin{itemize}
        \item If $\scrE$ is a \ref{cond:contains_ell} event, then all elements of $\Ext(\scrE, Q)$ satisfy the same property, meaning both sides of the equation are zero after the iteration of the while loop.
        \item If $\scrE$ is a \ref{cond:no_overlap_P_ell} event, then if $\scrE' \in \Ext(\scrE, Q)$ is of type \ref{cond:contains_ell} or \ref{cond:overlap_P_ell_no_containment}, one can observe that its partner is also in $\Ext(\scrE, Q)$.
        \item Suppose $\scrE$ is a Condition~\ref{cond:overlap_P_ell_no_containment} event.  
Observe that every element of $\Ext(\scrE, Q)$ is also a 
Condition~\ref{cond:overlap_P_ell_no_containment} event.  
Moreover, if $\scrF$ is selected in an execution of 
Step~\ref{step:alter_event_values_relevant_events}, then the same holds for 
all events in $\Ext(\scrF, Q)$ whenever this set is well-defined.  

Therefore, if $\scrE$ has a partner $\scrG$ of 
type~\ref{cond:contains_ell}, then all elements of $\Ext(\scrG, Q)$ are 
likewise selected as $\scrF$ at some point during the execution of the algorithm.  
Since
\[
    y'(\scrG) = \sum_{\scrG' \in \Ext(\scrG, Q)} y'(\scrG')
\]
at the beginning of the iteration, the total increase on the right-hand side 
of the constraint corresponding to $(\scrE, Q)$ is exactly $y'(\scrG)$.  
This follows by pairing each event in $\Ext(\scrE, Q)$ with its unique partner 
in $\Ext(\scrG, Q)$, ensuring that the aggregated contribution of all such 
pairs preserves the equality required by the extension consistency constraint.

    \end{itemize}
     
An analogous argument can be made for \Cref{constr:extension_consistency_with_ell}.   
    
\end{proof}

Combining the above facts we obtain that $(x', y', \{y_\ell\}_{\ell \in L})$ is indeed feasible for the Strong LP.

\section{Formal Construction of the Pairing Function}\label{appendix:construct_phi}

Next we describe how to formally construct the pairing distribution $\phi$ we outlined in \Cref{subsubsec:feas_struc_frac}.
Consider an event~$\scrF$ that could be sampled in
Step~\ref{step:select_event} for which either $e=f$, or $e \in R(\scrF) \setminus R_\sm(\scrF)$.
We first describe how~$\phi$ is defined \emph{conditionally on a fixed}~$\scrF$;
the overall pairing distribution is then obtained by taking the expectation of this
conditional definition over the random choice of~$\scrF$ in the Structured-Sampling Algorithm. In the following, for simplicity we will use $\phi(\scrE', \scrE'')$ to denote $\phi(\scrE', \scrE'' | \scrF)$.

\newcommand{\agree}{\mathsf{agree}}
\newcommand{\pagree}{\mathsf{pagree}}

\paragraph{Case 1: $e = f$.}

.

For each event $\scrE' \in \events(E')$ and each event $\scrE'' \in \events(E'')$ we define $\phi(\scrE', \scrE'' | \scrF)$ as follows:

If assigning $\phi(\scrE', \scrE'')>0$ would violate the Structural Consistency requirement, we set $$\phi(\scrE', \scrE'')=0.$$ Otherwise, Structural Consistency implies the restriction of $\scrE'$ onto $E^*(e)$ equals the restriction of $\scrE''$ onto $E^*(e)$; we denote
this common restriction by $\hat{\scrE}$.

\newcommand{\corr}{\mathsf{corr}}
\newcommand{\uncorr}{\mathsf{uncorr}}

For a link $\ell$, we partition its set of leading edges as
\[
\Lead(\ell)
\;=\;
\Lead_{\corr}(\ell)
\;\cup\;
\Lead_{\uncorr}(\ell),
\]
where $\Lead_{\corr}(\ell)$ denotes the set of correlated leading edges of $\ell$,
and $\Lead_{\uncorr}(\ell)$ denotes the set of uncorrelated leading edges.

Next we define some events that will aid us in setting the value for $\phi(\scrE', \scrE'')$ in the remaining case.
In the following we will consider an event $\scrE \in \primeevents$, a link-set assignment
$F=\{F_e\}_{e \in E}$, an edge $e \in E(\scrE)$, and a link $\ell \in L_e$.

We write $\agree(\scrE, F, \ell, e)$ to denote the event that the path
$Q_\ell(e,F)$ coincides exactly with the shadow of $\ell$ contained in $L(\scrE)$.
If $Q_\ell(e,F)$ is the empty path, we interpret this as requiring that no shadow
of $\ell$ is contained in $L(\scrE)$.

Similarly, when $e$ is a leading edge of $\ell$ we write $\pagree(\scrE, F, \ell, e)$ to denote the event that the path
$P_\ell(e,F)$ coincides exactly with the shadow of $\ell$ contained in $L(\scrE)$.
As before, if $P_\ell(e,F)$ is the empty path, we interpret this as requiring that no shadow
of $\ell$ is contained in $L(\scrE)$.  Consult \Cref{fig:consist_pconsist} for a visualization.

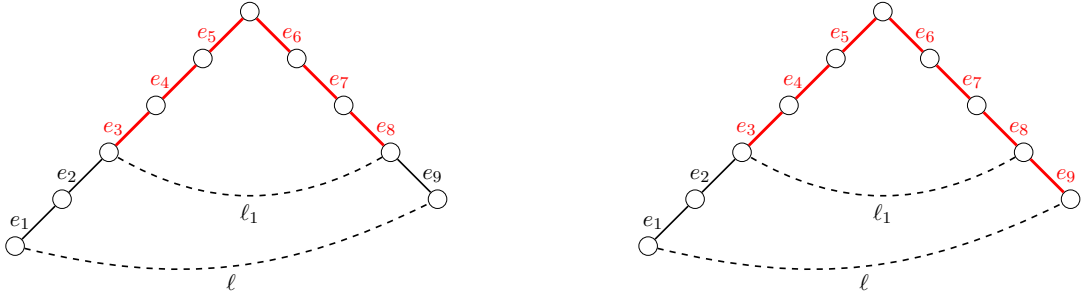
\begin{figure}[t]
\centering

\begin{minipage}[t]{0.49\linewidth}
\centering
\scalebox{0.78}{%
\begin{tikzpicture}[
    node/.style={circle, draw, fill=white, minimum size=9pt, inner sep=2pt},
    every label/.style={font=\small}
]
\def\dx{0.8}
\def\dy{0.8}
\node[node] (r) at (0,0) {};
\foreach \i in {1,...,5}{
    \pgfmathsetmacro{\x}{-\dx*\i}
    \pgfmathsetmacro{\y}{-\dy*\i}
    \node[node] (L\i) at (\x,\y) {};
}
\foreach \i in {1,...,4}{
    \pgfmathsetmacro{\x}{\dx*\i}
    \pgfmathsetmacro{\y}{-\dy*\i}
    \node[node] (R\i) at (\x,\y) {};
}
\draw[thick] (L5) -- node[midway, left] {$e_1$} (L4);
\draw[thick] (L4) -- node[midway, left] {$e_2$} (L3);
\draw[line width=1.4pt, red] (L3) -- node[midway, left] {$e_3$} (L2);
\draw[line width=1.4pt, red] (L2) -- node[midway, left] {$e_4$} (L1);
\draw[line width=1.4pt, red] (L1) -- node[midway, left] {$e_5$} (r);
\draw[line width=1.4pt, red] (r) -- node[midway, right] {$e_6$} (R1);
\draw[line width=1.4pt, red] (R1) -- node[midway, right] {$e_7$} (R2);
\draw[line width=1.4pt, red] (R2) -- node[midway, right] {$e_8$} (R3);
\draw[thick] (R3) -- node[midway, right] {$e_9$} (R4);
\draw[dashed, thick, bend right=20] (L5) to node[below] {$\ell$} (R4);
\draw[dashed, thick, bend left=30]  (R3) to node[below] {$\ell_1$} (L3);
\end{tikzpicture}%
}
\end{minipage}%
\hfill%
\begin{minipage}[t]{0.49\linewidth}
\centering
\scalebox{0.78}{%
\begin{tikzpicture}[
    node/.style={circle, draw, fill=white, minimum size=9pt, inner sep=2pt},
    every label/.style={font=\small}
]
\def\dx{0.8}
\def\dy{0.8}
\node[node] (r2) at (0,0) {};
\foreach \i in {1,...,5}{
    \pgfmathsetmacro{\x}{-\dx*\i}
    \pgfmathsetmacro{\y}{-\dy*\i}
    \node[node] (L\i2) at (\x,\y) {};
}
\foreach \i in {1,...,4}{
    \pgfmathsetmacro{\x}{\dx*\i}
    \pgfmathsetmacro{\y}{-\dy*\i}
    \node[node] (R\i2) at (\x,\y) {};
}
\draw[thick] (L52) -- node[midway, left] {$e_1$} (L42);
\draw[thick] (L42) -- node[midway, left] {$e_2$} (L32);
\draw[line width=1.4pt, red] (L32) -- node[midway, left] {$e_3$} (L22);
\draw[line width=1.4pt, red] (L22) -- node[midway, left] {$e_4$} (L12);
\draw[line width=1.4pt, red] (L12) -- node[midway, left] {$e_5$} (r2);
\draw[line width=1.4pt, red] (r2) -- node[midway, right] {$e_6$} (R12);
\draw[line width=1.4pt, red] (R12) -- node[midway, right] {$e_7$} (R22);
\draw[line width=1.4pt, red] (R22) -- node[midway, right] {$e_8$} (R32);
\draw[line width=1.4pt, red] (R32) -- node[midway, right] {$e_9$} (R42);
\draw[dashed, thick, bend right=20] (L52) to node[below] {$\ell$} (R42);
\draw[dashed, thick, bend left=30]  (R32) to node[below] {$\ell_1$} (L32);
\end{tikzpicture}%
}
\end{minipage}

\caption{
Illustration of \(\agree(\scrE, F, \ell, e)\) and \(\pagree(\scrE, F, \ell, e)\).
The link \(\ell_1\) is the unique shadow of \(\ell\) contained in \(\scrE\). In both figures the red edges represent the set of edges $e$ such that $\ell$ is assigned to $F_e$.
Consider $e \in \{e_3,\dots, e_8\}$. In the left image $Q_\ell(e, F)$ contains the edges $e_3, \dots, e_8$ and
\(\agree(\scrE, F, \ell, e)\) is true for each such edge. Similarly for $e \in \{e_1, e_2, e_9\}$ $Q_\ell(e, F)$ contains no edges and
\(\agree(\scrE, F, \ell, e)\) is true for each such edge.
However in the right image $Q_\ell(e, F)$ contains the edges $e_3, \dots, e_9$ and \(\agree(\scrE, F, \ell, e)\) is false.
In the right image $P_\ell(e_5, F)$ contains $e_3, e_4$ and $e_5$ and 
\(\pagree(\scrE, F, \ell, e_5)\) is true. However, $P_\ell(e_6, F)$ contains $e_6, \dots, e_9$ and \(\pagree(\scrE, F, \ell, e_6)\) is false.
}
\label{fig:consist_pconsist}
\end{figure}

Whenever $\scrE \in \events(E^*(e))$, we say that $F$ \emph{partially agrees} with $\scrE$ on the pair $(\ell,g)$ if the corresponding $\pagree(\scrE,F,\ell,g)$
event holds and \emph{agrees} with $\scrE$ on 
the pair $(\ell,g)$ if $\agree(\scrE,F,\ell,g)$
holds.

Next we build up further our notation. Let
\[
\hat{E} = E^*(e)\cup\{e',e''\},
\qquad
\hat{L} = L(\hat{E}).
\] We define $\calG(\scrE,F)$ to be the event that
$F$ agrees  with all pairs of links $(\ell, g)$ where $\ell \in \hat{L} \cap L_e$ and $g=e$ and partially agrees with $\scrE$ on all pairs of links $(\ell, g)$ with $\ell \in \hat{L} \setminus L_e$ and $g \in \Lead_\corr(\ell)$.
Formally,
\[
\calG(\scrE, F)
\;=\;
\Bigl(\bigwedge_{\ell \in \hat{L} \cap L_e}\agree(\scrE, F, \ell, e)\Bigr)
\;\land\;
\Bigl(\bigwedge_{\ell \in \hat{L} \setminus L_e}
        \;\bigwedge_{g \in \Lead_{\corr}(\ell)}
        \pagree(\scrE, F, \ell, g)\Bigr).
\]

When $\calG(\scrE, F)$ holds we say that $F$ is in \emph{base agreement} with $\scrE$.

For any uncorrelated child $\bar e$ of $e$ and $\bar{\scrE} \in \events(E^*(e)\cup\{\bar e\})$,
we define $\calG_{\bar e}(\bar{\scrE},F)$ to be the event that $F$ partially agrees with $\bar \scrE$ on all pairs $(\ell,\bar e)$ for
links $\ell \in \hat{L} \cap L_{\bar e}$. Formally,
\[
\calG_{\bar e}(\bar{\scrE},F)
\;:=\;
\bigwedge_{\ell \in \hat{L} \cap L_{\bar e}}
\pagree(\bar{\scrE},F,\ell,\bar e).
\]
When $\calG_{\bar e}(\bar{\scrE},F)$ holds we say that
$F$ \emph{locally agrees} with $\bar{\scrE}$ at $\bar e$.

Recall that $L_e \cap L_{\bar{e}}$ contains no links in the support of the Strong LP solution and hence we can assume they do not exist.

In addition, for an uncorrelated child $\tilde e \neq \bar e$,
we let $\calG_{\bar e,\tilde e}(\bar{\scrE},F)$ be the event that $F$ partially agrees with $\bar \scrE$ on all pairs $(\ell,\tilde e)$ for
links $\ell \in \hat L \cap L_{\bar e} \cap L_{\tilde e}$. Formally,
\[
\calG_{\bar e, \tilde{e}}(\bar{\scrE},F)
\;:=\;
\bigwedge_{\ell \in \hat{L} \cap L_{\bar e} \cap L_{\tilde{e}}}
\pagree(\bar{\scrE},F,\ell,\tilde e).
\]

When $\calG_{\bar e,\tilde e}(\bar{\scrE},F)$ holds we say that
$F$ \emph{cross-agrees} with $\bar{\scrE}$ at $\tilde e$.

Before proceeding further, we note one observation about our distribution $\calD$ that will be helpful in simplifying our notation.

\begin{observation}[Conditional independence of descendant assignments]
\label{obs:independence_descendants}
Let $\scrF \in \events(T_e)$ be an event that can be sampled in
Step~\ref{step:select_event} when $f=e$ in the Structured-Sampling Algorithm.
Fix distinct edges $\bar e,\tilde e \in E(e)\setminus E^*(e)$ and an event
$\bar{\scrE} \in \primeevents(E^*(e)\cup\{\bar e\})$.
Let $\scrF_{\bar e} \in \Ext(\scrF, T_e\cup\{\bar e\})$ and
$\scrF_{\tilde e} \in \Ext(\scrF, T_e\cup\{\tilde e\})$.

Then
\[
\Pr_{F \sim (\calD \mid \scrF_{\bar e})}
\!\Bigl[\calG_{\bar e}(\bar{\scrE}, F)\ \Bigm|\ 
\calG\!\bigl(\Restr'(\bar{\scrE}, T_e), F\bigr)\Bigr]
\;=\;
\Pr_{F \sim (\calD \mid \scrF_{\bar e})}
\!\bigl[\calG_{\bar e}(\bar{\scrE}, F)\bigr],
\]
and
\[
\Pr_{F \sim (\calD \mid \scrF_{\tilde e})}
\!\Bigl[\calG_{\bar e, \tilde e}(\bar{\scrE}, F)\ \Bigm|\ 
\calG\!\bigl(\Restr'(\bar{\scrE}, T_e), F\bigr)\Bigr]
\;=\;
\Pr_{F \sim (\calD \mid \scrF_{\tilde e})}
\!\bigl[\calG_{\bar e, \tilde e}(\bar{\scrE}, F)\bigr].
\]
\end{observation}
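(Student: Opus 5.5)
The plan is to prove both identities by exhibiting a common source of randomness that determines the conditioning event $\calG(\Restr'(\bar{\scrE}, T_e), F)$, and then showing that this source is independent of the randomness determining $\calG_{\bar e}(\bar{\scrE},F)$ (respectively $\calG_{\bar e,\tilde e}(\bar{\scrE},F)$) under $F\sim(\calD\mid\scrF_{\bar e})$ (respectively $\calD\mid\scrF_{\tilde e}$).

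\textbf{Step 1: which random choices decide each event.} Conditioning on $\scrF_{\bar e}$ fixes $\scrF$, which is its restriction to $T_e$. After this, the Structured-Sampling Algorithm processes the pairs it placed in $X$ in this iteration. These are the pairs for edges of $R(\scrF)\setminus\{e\}$, the conditional resets below huge edges of $R(\scrF)$, and the pairs for the uncorrelated children. The subsequent recursive calls on the subtrees hanging off these pairs use fresh, mutually independent random choices. The shadow-replacement steps of the Marking Algorithm only apply deterministic, local functions to these outputs. I will check the following.
\begin{itemize}
\item $\calG(\Restr'(\bar{\scrE},T_e),F)$ depends only on the sets $F_d$ for edges $d$ on paths of links $\ell\in\hat L$ descending through correlated leading edges, or through $e$ itself. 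These sets are determined by $\scrF$ together with the recursive samples below the correlated children and below the huge edges of $R(\scrF)$.
\item $\calG_{\bar e}(\bar{\scrE},F)$ concerns only $P_\ell(\bar e,F)$ for $\ell\in\hat L\cap L_{\bar e}$. These paths lie in $T_{\bar e}^+$ and its descendants. They are determined by $\scrF_{\bar e}$ (fixed by the conditioning), the pair $(\bar e,\cdot)$ it inserts into $X$, and the recursion below $\bar e$.
\end{itemize}

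\textbf{Step 2: independence.} The recursion below $\bar e$ and the recursion below the edges of $R(\scrF)$ act on edge-disjoint subtrees. The two lines of reasoning I will rely on are:
\begin{itemize}
\item Conditional resets use the fixed distributions $\calD_h$, and Property~\ref{non_correlation_resampling} of \Cref{lem:event_usage_bounds} ensures their samples never involve links crossing to a non-correlated ancestor.
\item Every sampling step below a pair $(f',S')$ depends only on $(f',S')$.
\end{itemize}
Given $\scrF_{\bar e}$, the two families of random variables are therefore independent, so the conditional probability equals the unconditional one. The second identity is proved the same way. Under $\calD\mid\scrF_{\tilde e}$, the event $\calG_{\bar e,\tilde e}$ concerns $P_\ell(\tilde e,F)$, which is determined by the recursion below $\tilde e$. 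That recursion is again disjoint from the correlated part.

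\textbf{Main obstacle.} The main obstacle is a link $\ell\in\hat L$ with one correlated leading edge and one uncorrelated leading edge $\bar e$. Its path could in principle couple the two sides. I plan to rule this out using the preprocessing hypothesis $x(L_{e}\cap L_{\bar e})=0$ from \Cref{lem:removing_overlap_uncorrelated}. By \Cref{fact:leading-uncorr-edges}, every link of $\hat L$ covering $\bar e$ has apex at the bottom vertex of $e$. Its correlated-side assignment is fixed inside $\scrF_{\bar e}$, and this is exactly the piece that $\pagree$ on $g\in\Lead_{\corr}(\ell)$ inspects. So the only randomness that is shared is already conditioned away.

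Finally, I will verify that the algorithm's processing order in the queue $X$ does not create a dependence. To do this I will view the run as a product of independent per-pair samplers; the order in which pairs are processed affects nothing in that product.
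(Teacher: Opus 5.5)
Your proposal is correct and follows the paper's argument: once $\scrF_{\bar e}$ (and hence $\scrF$) is fixed, the processing of $T^+_{\bar e}$ uses fresh randomness that depends only on what $\scrF_{\bar e}$ hands down to $\bar e$ (or on an independent draw from $\calD_{\bar e}$). It is therefore independent of everything that determines $\calG(\Restr'(\bar{\scrE},T_e),F)$, and the same holds for $\tilde e$.

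One correction to your obstacle paragraph: the correlated-side path $P_\ell(g,F)$ is not fixed by $\scrF_{\bar e}$, since it also depends on the later recursion below $g$ and possibly on $H$ (just as $Q_\ell(e,F)$ depends on choices made above $e$). All of these randomness sources belong to the family your Step 2 independence argument already covers, so the conclusion stands.
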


\begin{proof}
Condition on the events $\scrF$ and $\scrF_{\bar e}$.
Since $\bar e$ is an uncorrelated child of $e$, all random choices that determine
the assignments $F_d$ for edges $d \in E(T_{\bar e}^+)$ are made independently
during the recursive processing of the subtree $T_{\bar e}^+$, and are independent
of the random choices determining the assignments involved in
$\calG(\Restr'(\bar{\scrE},T_e),F)$.
The first equality follows. The second equality follows by the same argument,
with $\scrF_{\tilde e}$ and $\calG_{\bar e,\tilde e}$ in place of
$\scrF_{\bar e}$ and $\calG_{\bar e}$.
\end{proof}

By \Cref{obs:independence_descendants}, all conditional probabilities above are
independent of $\calG(\hat{\scrE},F)$, and we henceforth omit this conditioning. To simplify notation we introduce the following:

\newcommand{\Deltae}{E(e)\setminus\bigl(E^*(e)\cup\{e',e''\}\bigr)}

For $\scrF' \in \events(T_e \cup \{e', e''\})$, $\bar e \in \{e', e''\}$,
$\tilde e \in \Deltae$, and $\bar{\scrE} \in \events(T_e \cup \{\bar e\})$, we define $\Lambda(\scrF', \bar e, \tilde e, \bar{\scrE})$ to be the probability of the following experiment:

\begin{itemize}
    \item Restrict $\scrF'$ to $T_e \cup \{\bar e\}$.
    \item Extend this restriction to an event on $T_e \cup \{\bar e,\tilde e\}$ according to the Strong LP extension probabilities.
    \item Restrict the resulting event to $T_e \cup \{\tilde e\}$.
    \item Run the Structured-Sampling Algorithm conditioned on the restriction obtained in the previous step.

\end{itemize}

We then take the probability that the resulting assignment $F$
cross-agrees with $\bar{\scrE}$ at $\tilde e$.

Then,

\begin{equation}\label{eq:sigma_definition}
\begin{aligned}
\Lambda(\scrF', \bar e, \tilde e, \bar{\scrE})
\;=\;&
\sum_{\scrF_{\bar e,\tilde e} \in
      \Ext\!\bigl(\Restr(\scrF', T_e \cup \{\bar e\}),\, T_e \cup \{\bar e,\tilde e\}\bigr)}
\frac{y(\scrF_{\bar e,\tilde e})}
     {y(\Restr(\scrF', T_e \cup \{\bar e\}))}
\\
&\quad\cdot
\Pr_{F \sim (\calD \mid \Restr(\scrF_{\bar e,\tilde e}, T_e \cup \{\tilde e\}))}
\Bigl[\calG_{\bar e,\tilde e}(\bar{\scrE}, F)\Bigr].
\end{aligned}
\end{equation}

Now we define $\Psi(\scrF', \bar e, \bar{\scrE})$ to equal the probability of locally agreeing
with $\bar{\scrE}$ at $\bar e$ and cross-agreeing with $\bar{\scrE}$ at every
$\tilde e \in \Deltae$, where the experiment defining $\Lambda(\scrF',\bar e,\tilde e,\bar{\scrE})$
is performed independently for each $\tilde e$. Then,

\begin{equation}\label{eq:psi_definition}
\begin{aligned}
\Psi(\scrF', \bar e, \bar{\scrE})
\;=\;&
\Pr_{F \sim (\calD \mid \Restr(\scrF', T_e \cup \{\bar e\}))}
\Bigl[\calG_{\bar e}(\bar{\scrE}, F)\Bigr]
\cdot
\prod_{\tilde e \in \Deltae}
\Lambda(\scrF', \bar e, \tilde e, \bar{\scrE}) .
\end{aligned}
\end{equation}

We are now in a position to define $\phi(\scrE', \scrE'')$.
Recall that
$\hat{\scrE}=\primeRestr(\scrE', T_e)=\primeRestr(\scrE'', T_e)$ and that the Structured-Sampling Algorithm samples the extension events
$\scrF_{e'}$ and $\scrF_{e''}$ independently in Step~\ref{step:select_stars_plus_one_edge}.
To define $\phi(\scrE',\scrE'')$, we instead couple these choices by first sampling an
event $\scrF' \in \Ext(\scrF, T_e\cup\{e',e''\})$, and then using its restrictions to
$T_e\cup\{e'\}$ and $T_e\cup\{e''\}$ to simulate
Step~\ref{step:shadow_assignment_flagged} for $g=e'$ and $g=e''$ (considering only
$g' \in E(e)\setminus(E^*(e)\cup\{e',e''\})$).
We define $\phi(\scrE',\scrE'')$ to be the probability that this coupled experiment
marks $\scrE'$ and $\scrE''$.
For each $\bar e \in \{e',e''\}$, we write $\bar{\scrE}$ to denote $\scrE'$ if
$\bar e = e'$ and $\scrE''$ if $\bar e = e''$.

With the help of our simplified notation we can express $\phi(\scrE', \scrE'')$ as follows:

\begin{equation}\label{eq:phi_definition_simplified}
\begin{aligned}
\phi(\scrE', \scrE'')
\;=\;&
\Pr_{F \sim (\calD \mid \scrF)}
\bigl[\calG(\hat{\scrE}, F)\bigr]\Biggl(
\sum_{\scrF' \in \Ext(\scrF, T_e \cup \{e', e''\})}
\frac{y(\scrF')}{y(\scrF)}
\\
&\quad\cdot
\Psi(\scrF', e', \scrE') \cdot \Psi(\scrF', e'', \scrE'')
\Biggr).
\end{aligned}
\end{equation}


By construction $\phi$ satisfies the Structural Consistency Property and therefore to prove that it is a pairing distribution it remains to show that the marginal consistency property holds. We show that the property holds for each event $\scrE' \in \events(E')$ as the proof for events $\scrE'' \in \events(E'')$ is identical.

\newcommand{\Struct}{\mathsf{Struct}}
We will use $\Struct(\scrE')$ to denote the set of events $\scrE'' \in \events(E'')$ that are structurally consistent with $\scrE'$.
In the following we write $\hat{\scrE} = \primeRestr(\scrE', E^*(e))$. As always if $\scrE'' \in \Struct(\scrE')$, then $\primeRestr(\scrE'', E^*(e))=\hat{\scrE}$.

\begin{equation}\label{eq:proof_marginal_consistency}
    \begin{aligned}
        \sum_{\scrE'' \in \events(E'')} \phi(\scrE', \scrE'')
        =& 
        \sum_{\scrE'' \in \Struct(\scrE')} \phi(\scrE', \scrE'')\\
        =\;&
        \sum_{\scrE'' \in \Struct(\scrE')}
        \Biggl(
        \Pr_{F \sim (\calD \mid \scrF)}
        \bigl[\calG(\hat{\scrE}, F)\bigr]\Biggl(
        \sum_{\scrF' \in \Ext(\scrF, T_e \cup \{e', e''\})}
        \frac{y(\scrF')}{y(\scrF)}
        \\
        &\quad\cdot
        \Psi(\scrF', e ', \scrE') \cdot \Psi(\scrF', e '', \scrE'')
        \Biggr) \Biggr)\\
        =\;&
        \Pr_{F \sim (\calD \mid \scrF)}
        \bigl[\calG(\hat{\scrE}, F)\bigr]\Biggl(
        \sum_{\scrF' \in \Ext(\scrF, T_e \cup \{e', e''\})}
        \frac{y(\scrF')}{y(\scrF)}
        \\
        &\quad\cdot
        \Psi(\scrF', e ', \scrE') \cdot \sum_{\scrE'' \in \Struct(\scrE')}\Psi(\scrF', e '', \scrE'')
        \Biggr)\\
        =\;&
        \Pr_{F \sim (\calD \mid \scrF)}
        \bigl[\calG(\hat{\scrE}, F)\bigr]\Biggl(
        \sum_{\scrF' \in \Ext(\scrF, T_e \cup \{e', e''\})}
        \frac{y(\scrF')}{y(\scrF)}
        \\
        &\quad\cdot
        \Psi(\scrF', e ', \scrE') \cdot \Pr_{F \sim (\calD | \Restr(\scrF', T_e \cup \{e''\}))} \left[\calG_{e'', e'}(\scrE', F) \right]
        \Biggr)\\
        =\;&
        \Pr_{F \sim (\calD \mid \scrF)}
        \bigl[\calG(\hat{\scrE}, F)\bigr]\Biggl(
        \sum_{\scrF_{e'} \in \Ext(\scrF, T_e \cup \{e'\})} \frac{y(\scrF_{e'})}{y(\scrF)}
        \Biggl( 
        \sum_{\scrF_{e', e''} \in \Ext(\scrF, T_e \cup \{e', e''\})} \frac{y(\scrF_{e', e''})}{y(\scrF_{e'})}    \\
        &\quad\cdot
        \Psi(\scrF_{e', e''}, e ', \scrE') \cdot \Pr_{F \sim (\calD | \Restr(\scrF_{e', e''}, T_e \cup \{e''\}))} \left[\calG_{e'', e'}(\scrE', F) \right]
        \Biggr) \Biggr)\\ 
        =\;&
    \Pr_{F \sim (\mathcal{D} \mid \scrF)}
      \bigl[\calG(\hat{\scrE}, F)\bigr]
    \\
    & \cdot
    \sum_{\scrF_{e'} \in \Ext(\scrF, T_e \cup \{e'\})}
         \frac{y(\scrF_{e'})}{y(\scrF)}
        \Pr_{F \sim (\calD \mid \scrF_e')} 
        \!\Bigl[\calG_{e'}(\scrE', F)\Bigr]
        \\
        &\cdot
    \prod_{\tilde{e} \in E(e) \setminus (E^*(e) \cup \{ e'\})} 
    \left(
    \sum_{\scrF_{e',\tilde{e}} \in \Ext(\scrF_{e'}, T_e \cup \{e', \tilde{e}\})}
    \frac{y(\scrF_{e',\tilde{e}})}{y(\scrF_{e'})}
     \cdot
    \Pr_{F' \sim (\mathcal{D} \mid \Restr(\scrF_{e',\tilde{e}},\, T_e \cup \{\tilde{e}\}))}[
    \calG_{e',\tilde e}(\scrE', F')] \right)\\
    &= y'(\scrE')
    \end{aligned}
\end{equation}
The first equality holds since 
$\phi(\scrE', \scrE'') = 0$ whenever 
$\scrE'' \notin \Struct(\scrE')$.
The second equality follows directly from 
\Cref{eq:phi_definition_simplified}.
The third equality is obtained by rearranging the order of summation.
The fourth equality comes from explicitly evaluating
\[
    \sum_{\scrE'' \in \Struct(\scrE')}
    \Psi(\scrF', e'', \scrE'').
\]
The fifth equality uses the extension consistency constraint 
of the Strong LP.
The penultimate equality is obtained by expanding each term
\[
    \Psi(\scrF_{e',e''},\, e',\, \scrE'').
\]
Finally, the last equality follows because this quantity equals
the probability that $\scrE'$ is marked, which by definition
is $y'(\scrE')$.

\paragraph{Case 2: $e \in R(\scrF) \setminus R_\sm(\scrF)$.}
Here, $e$ is a huge edge correlated with the root $f$. We define the conditional
pairing distribution $\phi(\scrE', \scrE'' \mid \scrF)$ to be the joint probability
that the marked events corresponding to $\scrE'$ and $\scrE''$ are both produced
by the algorithm in the final marking Step~\ref{step:mark}.

This definition naturally ensures marginal consistency.
Structural consistency is then automatically achieved due to the link localization
enforced in Step~\ref{step:shadow_replacement_step}, which guarantees the link sets
of $\scrE'$ and $\scrE''$ agree on their shared edges $E^*(e)$.

\end{document}